\titleclass{\subsubsubsection}{straight}[\subsection]
\newcounter{subsubsubsection}[subsubsection]
\renewcommand\thesubsubsubsection{\thesubsubsection.\arabic{subsubsubsection}}
\renewcommand\paragraph{\@startsection{paragraph}{5}{\z@}%
  {3.25ex \@plus1ex \@minus.2ex}%
  {-1em}%
  {\normalfont\normalsize\bfseries}}
\renewcommand\subparagraph{\@startsection{subparagraph}{6}{\parindent}%
  {3.25ex \@plus1ex \@minus .2ex}%
  {-1em}%
  {\normalfont\normalsize\bfseries}}
\def\toclevel@subsubsubsection{4}
\def\toclevel@paragraph{5}
\def\toclevel@paragraph{6}
\def\l@subsubsubsection{\@dottedtocline{4}{7em}{4em}}
\def\l@paragraph{\@dottedtocline{5}{10em}{5em}}
\def\l@subparagraph{\@dottedtocline{6}{14em}{6em}}
\definecolor{light-gray}{gray}{0.8}
\newtheoremstyle{myplain}
  {9pt}
  {9pt}
  {\itshape}
  {\parindent}
  {\scshape}
  {:}
  {.5em}
  {}
\newtheoremstyle{mydefinition}
  {9pt}
  {9pt}
  {\itshape}
  {\parindent}
  {\scshape}
  {:}
  {.5em}
  {}
\newtheoremstyle{myremark}
  {9pt}
  {9pt}
  {}
  {\parindent}
  {\scshape}
  {:}
  {.5em}
  {}
\theoremstyle{plain}
\newtheorem{theorem}{Theorem}[section]
\theoremstyle{mydefinition}
\newtheorem{definition}{Definition}[section]
\newtheorem{SIR}{Theorem SIR-\ignorespaces}[section]
\newtheorem{OR}{Theorem OR-\ignorespaces}[section]
\newtheorem{BI}{Key Insight}[section]
\theoremstyle{myremark}
\newtheorem{IP}{Identification Problem}[section]
\renewcommand{\cite}{\citet}
\newcommand{\edis}{\stackrel{d}{=}}
\def\argmin{\mathop{\rm arg\,min}}
\newcommand{\M}{\mathbb{M}}
\newcommand{\T}{\mathbb{T}}
\newcommand{\R}{\mathbb{R}}
\newcommand{\E}{\mathbb{E}}
\newcommand{\bU}{\mathfrak{U}}    
\newcommand{\bu}{\mathfrak{u}}
\newcommand{\cB}{\mathcal{B}}   
\newcommand{\cC}{\mathcal{C}} 
\newcommand{\cE}{\mathcal{E}} 
\newcommand{\cF}{\mathcal{F}}
\newcommand{\cG}{\mathcal{G}}
\newcommand{\cI}{\mathcal{I}}
\newcommand{\cJ}{\mathcal{J}}
\newcommand{\cK}{\mathcal{K}}  
\newcommand{\cM}{\mathcal{M}}
\newcommand{\cP}{\mathcal{P}}
\newcommand{\cT}{\mathcal{T}}
\newcommand{\cV}{\mathcal{V}}
\newcommand{\cW}{\mathcal{W}}  
\newcommand{\cX}{\mathcal{X}}  
\newcommand{\cY}{\mathcal{Y}}  
\newcommand{\cZ}{\mathcal{Z}}  
\newcommand{\sF}{\mathsf{F}}   
\newcommand{\sM}{\mathsf{M}}   
\newcommand{\sG}{\mathsf{G}}  
\newcommand{\sT}{\mathsf{T}}  
\newcommand{\sB}{\mathsf{B}}  
\newcommand{\sC}{\mathsf{C}}   
\newcommand{\sP}{\mathsf{P}}  
\newcommand{\sQ}{\mathsf{Q}}  
\newcommand{\sq}{\mathsf{q}}  
\newcommand{\sR}{\mathsf{R}}  
\newcommand{\sd}{\mathsf{d}}  
\newcommand{\cp}{\mathsf{p}}
\newcommand{\cf}{\mathsf{f}}
\newcommand{\eU}{{\boldsymbol{U}}}
\newcommand{\eb}{{\boldsymbol{b}}}
\newcommand{\ed}{{\boldsymbol{d}}}
\newcommand{\eu}{{\boldsymbol{u}}}
\newcommand{\ew}{{\boldsymbol{w}}}
\newcommand{\eX}{{\boldsymbol{X}}}
\newcommand{\ex}{{\boldsymbol{x}}}
\newcommand{\eY}{{\boldsymbol{Y}}}
\newcommand{\eB}{{\boldsymbol{B}}}
\newcommand{\eC}{{\boldsymbol{C}}}
\newcommand{\eQ}{{\boldsymbol{Q}}}
\newcommand{\eS}{{\boldsymbol{S}}}
\newcommand{\eT}{{\boldsymbol{T}}}
\newcommand{\eH}{{\boldsymbol{H}}}
\newcommand{\ey}{{\boldsymbol{y}}}
\newcommand{\eZ}{{\boldsymbol{Z}}}
\newcommand{\eG}{{\boldsymbol{G}}}
\newcommand{\ez}{{\boldsymbol{z}}}
\newcommand{\es}{{\boldsymbol{s}}}
\newcommand{\et}{{\boldsymbol{t}}}
\newcommand{\ev}{{\boldsymbol{v}}}
\newcommand{\eq}{{\boldsymbol{q}}}
\newcommand{\eps}{\varepsilon}
\newcommand{\Eps}{\mathcal{E}}
\newcommand{\carrier}{{\mathfrak{X}}}
\newcommand{\Ball}{{\mathbb{B}}^{d}}
\newcommand{\Sphere}{{\mathbb{S}}^{d-1}}
\newcommand{\salg}{\mathfrak{F}}
\newcommand{\ssalg}{\mathfrak{B}}
\newcommand{\one}{\mathbf{1}}
\renewcommand{\P}{\mathbb{P}}
\newcommand{\Prob}[1]{\P\{#1\}}
\newcommand{\yL}{\ey_{\mathrm{L}}}
\newcommand{\yU}{\ey_{\mathrm{U}}}
\newcommand{\yLi}{\ey_{\mathrm{L}i}}
\newcommand{\yUi}{\ey_{\mathrm{U}i}}
\newcommand{\xL}{\ex_{\mathrm{L}}}
\newcommand{\xU}{\ex_{\mathrm{U}}}
\newcommand{\dist}{\mathbf{d}}
\newcommand{\rhoH}{\dist_{\mathrm{H}}}
\newcommand{\crit}{q}
\newcommand{\CS}{CS_n}
\newcommand{\CI}{CI_n}
\DeclareMathOperator{\conv}{conv}
\DeclareMathOperator{\cov}{Cov}
\DeclareMathOperator{\Sel}{Sel}
\DeclareMathOperator{\cl}{cl}
\newcommand{\idr}[1]{\mathcal{H}_\sP[#1]}
\newcommand{\outr}[1]{\mathcal{O}_\sP[#1]}
\newcommand{\idrn}[1]{\hat{\mathcal{H}}_{\sP_n}[#1]}
\numberwithin{equation}{section}
\numberwithin{figure}{section}
\numberwithin{table}{section}
\newcommand{\possessivecite}[1]{\citeauthor{#1}'s \citeyear{#1}}
\newcommand\xqed[1]{%
  \leavevmode\unskip\penalty9999 \hbox{}\nobreak\hfill
  \quad\hbox{#1}}
\newcommand\qedex{\xqed{$\triangle$}}
\tikzstyle{lemma} = [rectangle, minimum width=3cm, minimum height=1cm,text centered, draw=black, fill=white!100]
\tikzstyle{comment} = [rectangle, rounded corners, minimum width=3cm, minimum height=1cm,text centered, draw=black, fill=gray!30]
\tikzstyle{line} = [draw, -latex']
\newtheoremstyle{exampstyle}
  {\topsep} 
  {\topsep} 
  {} 
  {} 
  {\bfseries} 
  {.} 
  {.5em} 
  {} 
\theoremstyle{exampstyle} \newtheorem{examp}[theorem]{Example}
\newcommand\independent{\protect\mathpalette{\protect\independenT}{\perp}}
\def\independenT#1#2{\mathrel{\rlap{$#1#2$}\mkern2mu{#1#2}}}
\begin{document}

\title{Microeconometrics with Partial Identification}
\date{March 12, 2020}
\author{%
\begin{tabular}{c}  Francesca Molinari \\ Cornell University \\ Department of Economics \\ \url{fm72@cornell.edu}\thanks{
This manuscript was prepared for the \href{https://www.elsevier.com/books/handbook-of-econometrics/durlauf/978-0-444-63649-2}{Handbook of Econometrics, Volume 7A} \textcopyright North Holland, 2019.
I thank Don Andrews, Isaiah Andrews, Levon Barseghyan, Federico Bugni, Ivan Canay, Joachim Freyberger, Hiroaki Kaido, Toru Kitagawa, Chuck Manski, Rosa Matzkin, Ilya Molchanov, \'{A}ureo de Paula, Jack Porter, Seth Richards-Shubik, Adam Rosen, Shuyang Sheng, J\"{o}rg Stoye, Elie Tamer, Matthew Thirkettle, and participants to the 2017 Handbook of Econometrics Conference, for helpful comments, and the National Science Foundation for financial support through grants SES-1824375 and SES-1824448.
I am grateful to Louis Liu and Yibo Sun for research assistance supported by the Robert S. Hatfield Fund for Economic Education at Cornell University.
Part of this research was carried out during my sabbatical leave at the Department of Economics at Duke University, whose hospitality
I gratefully acknowledge.} \\ \\ 
\end{tabular} }

\maketitle
\begin{abstract}
This chapter reviews the microeconometrics literature on \emph{partial identification}, focusing on the developments of the last thirty years. 
The topics presented illustrate that the available data combined with credible maintained assumptions may yield much information about a parameter of interest, even if they do not reveal it exactly.
Special attention is devoted to discussing the challenges associated with, and some of the solutions put forward to,
(1) obtain a tractable characterization of the values for the parameters of interest which are observationally equivalent, given the available data and maintained assumptions; (2) estimate this set of values; (3) conduct test of hypotheses and make confidence statements. 
The chapter reviews advances in partial identification analysis both as applied to learning (functionals of) probability distributions that are well-defined in the absence of models, as well as to learning parameters that are well-defined only in the context of particular models.
A simple organizing principle is highlighted: the source of the identification problem can often be traced to a collection of random variables that are consistent with the available data and maintained assumptions.
This collection may be part of the observed data or be a model implication.
In either case, it can be formalized as a \emph{random set}.
Random set theory is then used as a mathematical framework to unify a number of special results and produce a general methodology to carry out partial identification analysis.
\end{abstract}
\vfill
\thispagestyle{empty}
\pagebreak
\onehalfspacing

\pagenumbering{arabic}
\tableofcontents
\pagebreak

\section{Introduction}
\label{sec:intro}
\subsection{Why Partial Identification?}
Knowing the population distribution that data are drawn from, what can one learn about a parameter of interest?
It has long been understood that assumptions about the data generating process (DGP) play a crucial role in answering this \emph{identification question} at the core of all empirical research.
Inevitably, assumptions brought to bear enjoy a varying degree of credibility.
Some are rooted in economic theory (e.g., optimizing behavior) or in information available to the researcher on the DGP (e.g., randomization mechanisms).
These assumptions can be argued to be highly credible.
Others are driven by concerns for tractability and the desire to answer the identification question with a certain level of precision (e.g., functional form and distributional assumptions).
These are arguably less credible.

%
Early on, \cite{koo:rei50} highlighted the importance of imposing restrictions based on prior knowledge of the phenomenon under analysis and some criteria of simplicity, but not for the purpose of identifiability of a parameter that the researcher happens to be interested in, stating (p.~169): 
``One might regard problems of identifiability as a necessary part of the specification problem.
We would consider such a classification acceptable, provided the temptation to specify models in such a way as to produce identifiability of relevant characteristics is resisted."

Much work, spanning multiple fields, has been devoted to putting forward strategies to carry out empirical research while relaxing distributional, functional form, or behavioral assumptions. 
One example, embodied in the research program on semiparameteric and nonparametric methods, is to characterize sufficient sets of assumptions, that exclude many suspect ones --sometimes as many as possible-- to guarantee that point identification of specific economically interesting parameters attains.
This literature is reviewed in, e.g., \cite{mat07,mat13}, and is not discussed here.

Another example, embodied in the research program on Bayesian model uncertainty, is to specify multiple models (i.e., multiple sets of assumptions), put a prior on the parameters of each model and on each model, embed the various separate models within one large hierarchical mixture model, and obtain model posterior probabilities which can be used for a variety of inferences and decisions.
This literature is reviewed in, e.g., \cite{was00} and \cite{cly:geo04}, and is not discussed here.

The approach considered here fixes a set of assumptions and a parameter of interest a priori, in the spirit of \cite{koo:rei50}, and asks what can be learned about that parameter given the available data, recognizing that even partial information can be illuminating for empirical research, while enjoying wider credibility thanks to the weaker assumptions imposed.
The bounding methods at the core of this approach appeared in the literature nearly a century ago.
Arguably, the first exemplar that leverages economic reasoning is given by the work of \cite{mar:and44}.
They provided bounds on Cobb-Douglas production functions in models of supply and demand, building on optimization principles and restrictions from microeconomic theory.
\cite{lea81} revisited their analysis to obtain bounds on the elasticities of demand and supply in a linear simultaneous equations system with uncorrelated errors.
The first exemplars that do not rely on specific economic models appear in \cite{gin21}, \cite{fri34}, and \cite{rei41}, who bounded the coefficient of a simple linear regression in the presence of measurement error.
These results were extended to the general linear regression model with errors in all variables by \cite{kle:lea84} and \cite{lea87}.

This chapter surveys some of the methods proposed over the last thirty years in the microeconometrics literature to further this approach.
These methods belong to the systematic program on \emph{partial identification} analysis started with \cite{man89,man90,man95,man03,man07a,man13book} and developed by several authors since the early 1990s. 
Within this program, the focus shifts from points to sets: the researcher aims to learn what is the set of values for the parameters of interest that can generate the same distribution of observables as the one in the data, for some DGP consistent with the maintained assumptions.
In other words, the focus is on the \emph{set of observationally equivalent} values, which henceforth I refer to as the parameters'  \emph{sharp identification region}.
In the partial identification paradigm, empirical analysis begins with characterizing this set using the data alone.
This is a nonparametric approach that dispenses with all assumptions, except basic restrictions on the sampling process such that the distribution of the observable variables can be learned as data accumulate.
In subsequent steps, one incorporates additional assumptions into the analysis, reporting how each assumption (or set of assumptions) affects what one can learn about the parameters of interest, i.e., how it modifies and possibly shrinks the sharp identification region.
Point identification may result from the process of increasingly strengthening the maintained assumptions, but it is not the goal in itself.
Rather, the objective is to make transparent the relative role played by the data and the assumptions in shaping the inference that one draws.\medskip

There are several strands of independent, but thematically related literatures that are not discussed in this chapter.
As a consequence, many relevant contributions are left out of the presentation and the references.
One example is the literature in finance.
\cite{han:jag91} developed nonparametric bounds for the admissible set for means and standard deviations of intertemporal marginal rates of substitution (IMRS) of consumers.  
The bounds were developed exploiting the condition, satisfied in many finance models, that the equilibrium price of any traded security equals the expectation (conditioned on current information) of the product's future payoff and the IMRS of any consumer.\footnote{
\cite{han:jag91} deduce a duality relation with the mean variance theory of \cite{mar52} and \cite{fam96}, but the relation does not apply to the sharp bounds they derive.  
In the Arbitrage Pricing Model \citep{ros76}, bounds on extensions of existing pricing functions, consistent with the absence of arbitrage opportunities, were considered by \cite{har:kre79} and \cite{kre81}.
}  
\cite{lut96} extended the analysis to economies with frictions.  
\cite{han:hea:lut95} developed econometric tools to estimate the regions, to assess asset pricing models, and to provide nonparametric characterizations of asset pricing anomalies.  
Earlier on, the existence of volatility bounds on IMRSs were noted by \cite{shi82} and \cite{han82comment}.  
The bounding arguments that build on the minimum-volatility frontier for stochastic discount factors proposed by \cite{han:jag91} have become a litmus test to detect anomalies in asset pricing models \citep[see, e.g.][p. 89]{shi03}. 
I refer to the textbook presentations in \cite[Chapter 13]{lju:sar04} and \cite[Chapters 5 and 21]{coc05}, and the review articles by \cite{fer03} and \cite{cam14}, for a careful presentation of this literature.


In macroeconomics, \cite{fau98}, \cite{can:den02}, and \cite{uhl05} proposed bounds for impulse response functions in sign-restricted structural vector autoregression models, and carried out Bayesian inference with a non-informative prior for the non-identified parameters. 
I refer to \cite[Chapter 13]{kil:lut17} for a careful presentation of this literature. 

In microeconomic theory, bounds were derived from inequalities resulting as necessary and sufficient conditions that data on an individual's choice need to satisfy in order to be consistent with optimizing behavior, as in the research pioneered by \cite{sam38} and advanced early on by \cite{hou50} and \cite{ric66}.
\cite{afr67} and \cite{var82} extended this research program to revealed preference extrapolation.
Notably, in this work no stochastic terms enter the analysis. 
\cite{blo:mar60}, \cite{mar60}, \cite{hal73}, \cite{mcf75}, \cite{fal78}, and \cite{mcf:ric91}, extended revealed preference arguments to random utility models, and obtained bounds on the distributions of preferences.
I refer to the survey articles by \cite{cra:der14} and \cite[Chapter XXX in this Volume]{blu19} for a careful presentation of this literature.

A complementary approach to partial identification is given by sensitivity analysis, advocated for in different ways by, e.g., \cite{gil:lea83}, \cite{ros:rub83}, \cite{lea85}, \cite{ros95}, \cite{imb03}, and others.
Within this approach, the analysis begins with a fully parametric model that point identifies the parameter of interest.
One then reports the set of values for this parameter that result when the more suspicious assumptions are relaxed.

Related literatures, not discussed in this chapter, abound also outside Economics.
For example, in probability theory, \cite{hoe40} and \cite{fre51} put forward bounds on the joint distributions of random variables, and \cite{mak81}, \cite{rus82}, and \cite{fra:nel:sch87} on the sum of random variables, when only marginal distributions are observed. 
The literature on probability bounds is discussed in the textbook by \cite[Appendix A]{sho:wel09}.
Addressing problems faced in economics, sociology, epidemiology, geography, history, political science, and more, \cite{dun:dav53} derived bounds on correlations among variables measured at the individual level based on observable correlations among variables measured at the aggregate level.
The so called ecological inference problem they studied, and the associated literature, is discussed in the survey article by \cite{cho:man09} and references therein.

\subsection{Goals and Structure of this Chapter}
To carry out econometric analysis with partial identification, one needs: (1) computationally feasible characterizations of the parameters' sharp identification region; (2) methods to estimate this region; and (3) methods to test hypotheses and construct confidence sets.
The goal of this chapter is to provide insights into the challenges posed by each of these desiderata, and into some of their solutions.
In order to discuss the partial identification literature in microeconometrics with some level of detail while keeping this chapter to a manageable length, I focus on a selection of papers and not on a complete survey of the literature.
As a consequence, many relevant contributions are left out of the presentation and the references.
I also do not discuss the important but
separate topic of statistical decisions in the presence of partial identification, for which I refer to the textbook treatments in \cite{man05,man07a} and to the review by \cite[Chapter XXX in this Volume]{hir:por19}.

The presumption in identification analysis that the distribution from which the data are drawn is known allows one to keep separate the identification question from the distinct question of statistical inference from a finite sample.
I use the same separation in this chapter.
I assume solid knowledge of the topics covered in first year Economics PhD courses in econometrics and microeconomic theory.

I begin in Section \ref{sec:prob:distr} with the analysis of what can be learned about features of probability distributions that are well defined in the absence of an economic model, such as moments, quantiles, cumulative distribution functions, etc., when one faces measurement problems.
Specifically, I focus on cases where the \emph{data is incomplete}, either due to selectively observed data or to interval measurements.
I refer to \cite{man95,man03,man07a} for textbook treatments of many other cases.
I lay out formally the maintained assumptions for several examples, and then discuss in detail what is the source of the identification problem.
I conclude with providing tractable characterizations of what can be learned about the parameters of interest, with formal proofs.
I show that even in simple problems, great care may be needed to obtain the sharp identification region.
It is often easier to characterize an \emph{outer region}, i.e., a collection of values for the parameter of interest that contains the sharp one but may contain also additional values. 
Outer regions are useful because of their simplicity and because in certain applications they may suffice to answer questions of great interest, e.g., whether a policy intervention has a nonnegative effect.
However, compared to the sharp identification region they may afford the researcher less useful predictions, and a lower ability to test for misspecification, because they do not harness all the information in the observed data and maintained assumptions.

In Section \ref{sec:structural} I use the same approach to study what can be learned about features of parameters of structural econometric models when the \emph{model is incomplete} \citep{tam03,hai:tam03,cil:tam09}.
Specifically, I discuss single agent discrete choice models under a variety of challenging situations (interval measured as well as endogenous explanatory variables; unobserved as well as counterfactual choice sets); finite discrete games with multiple equilibria; auction models under weak assumptions on bidding behavior; and network formation models.
Again I formally derive sharp identification regions for several examples.

I conclude each of these sections with a brief discussion of further theoretical advances and empirical applications that is meant to give a sense of the breadth of the approach, but not to be exhaustive.
I refer to the recent survey by \cite{ho:ros17} for a thorough discussion of empirical applications of partial identification methods.

In Section \ref{sec:inference} I discuss finite sample inference.
I limit myself to highlighting the challenges that one faces for consistent estimation when the identified object is a set, and several coverage notions and requirements that have been proposed over the last 20 years.
I refer to the recent survey by \cite{can:sha17} for a thorough discussion of methods to tests hypotheses and build confidence sets in moment inequality models.

In Section \ref{sec:misspec} I discuss the distinction between refutable and non-refutable assumptions, and how model misspecification may be detectable in the presence of the former, even within the partial identification paradigm.
I then highlight certain challenges that model misspecification presents for the interpretation of sharp identification (as well as outer) regions, and for the construction of confidence sets.

In Section \ref{sec:computations} I highlight that while most of the sharp identification regions characterized in Section \ref{sec:prob:distr} can be easily computed, many of the ones in Section \ref{sec:structural} are more challenging.
This is because the latter are obtained as level sets of criterion functions in moderately dimensional spaces, and tracing out these level sets or their boundaries is a non-trivial computational problem. 
In Section \ref{sec:future} I conclude providing some considerations on what I view as open questions for future research.

I refer to \cite{tam10} for an earlier review of this literature, and to \cite{lew18} for a careful presentation of the many notions of identification that are used across the econometrics literature, including an important historical account of how these notions developed over time.

\subsection{Random Set Theory as a Tool for Partial Identification Analysis}
Throughout Sections \ref{sec:prob:distr} and \ref{sec:structural}, a simple organizing principle for much of partial identification analysis emerges.
The cause of the identification problems discussed can be traced back to a collection of random variables that are consistent with the available data and maintained assumptions.
For the problems studied in Section \ref{sec:prob:distr}, this set is often a simple function of the observed variables.
The incompleteness of the data stems from the fact that instead of observing the singleton variables of interest, one observes set-valued variables to which these belong, but one has no information on their exact value within the sets.
For the problems studied in Section \ref{sec:structural}, the collection of random variables consistent with the maintained assumptions comprises what the model predicts for the endogenous variable(s).
The incompleteness of the model stems from the fact that instead of making a singleton prediction for the variable(s) of interest, the model makes multiple predictions but does not specify how one is chosen.

The central role of set-valued objects, both stochastic and nonstochastic, in partial identification renders \emph{random set theory} a natural toolkit to aid the analysis.\footnote{The first idea of a general random set in the form of a region that depends on chance appears in \cite{kol50}, originally published in 1933. 
For another early example where confidence regions are explicitly described as random sets, see \cite[p. 67]{haa44}.
The role of random sets in this chapter is different.}
This theory originates in the seminal contributions of \cite{cho53}, \cite{aum65}, and \cite{deb67}, with the first self contained treatment of the theory given by \cite{mat75}.  
I refer to \cite{mo1} for a textbook presentation, and to \cite{mol:mol14,mol:mol18} for a treatment focusing on its applications in econometrics.

\cite{ber:mol08} introduce the use of random set theory in econometrics to carry out identification analysis and statistical inference with incomplete data.
\cite{ber:mol:mol11,ber:mol:mol12} propose it to characterize sharp identification regions both with incomplete data and with incomplete models.
\cite{gal:hen11} propose the use of optimal transportation methods that in some applications deliver the same characterizations as the random set methods.
I do not discuss optimal transportation methods in this chapter, but refer to \cite{gal16} for a thorough treatment.

Over the last ten years, random set methods have been used to unify a number of specific results in partial identification, and to produce a general methodology for identification analysis that dispenses completely with case-by-case distinctions.
In particular, as I show throughout the chapter, the methods allow for simple and tractable characterizations of sharp identification regions.
The collection of these results establishes that indeed this is a useful tool to carry out econometrics with partial identification, as exemplified by its prominent role both in this chapter and in Chapter XXX in this Volume by \cite{che:ros19}, which focuses on general classes of instrumental variable models.
The random sets approach complements the more traditional one, based on mathematical tools for (single valued) random vectors, that proved extremely productive since the beginning of the research program in partial identification.

This chapter shows that to fruitfully apply random set theory for identification and inference, the econometrician needs to carry out three fundamental steps. 
First, she needs to define the random closed set that is relevant for the problem under consideration using all information given by the available data and maintained assumptions. 
This is a delicate task, but one that is typically carried out in identification analysis regardless of whether random set theory is applied.
Indeed, throughout the chapter I highlight how relevant random closed sets were characterized in partial identification analysis since the early 1990s, albeit the connection to the theory of random sets was not made.
As a second step, the econometrician needs to determine how the observable random variables relate to the random closed set. 
Often, one of two cases occurs: either the observable variables determine a random set to which the unobservable variable of interest belongs with probability one, as in incomplete data scenarios; or the (expectation of the) (un)observable variable belongs to (the expectation of) a random set determined by the model, as in incomplete model scenarios. 
Finally, the econometrician needs to determine which tool from random set theory should be utilized. 
To date, new applications of random set theory to econometrics have fruitfully exploited (Aumann) expectations and their support functions, (Choquet) capacity functionals, and laws of large numbers and central limit theorems for random sets.
Appendix \ref{app:RCS} reports basic definitions from random set theory of these concepts, as well as some useful theorems.
The chapter explains in detail through applications to important identification problems how these steps can be carried out.

\begin{table}
\begin{center}
\caption{Notation Used} \label{tab:notation}
\resizebox{\textwidth}{!}{\begin{tabular}{ll}
\hline 
$(\Omega,\salg,\P)$ & Nonatomic probability space\\
$\R^d,\|\cdot\|$ & Euclidean space equipped with the Euclidean norm \\
$\cF,\cG,\cK$ & Collection of closed, open, and compact subsets of $\R^d$ (respectively) \\ 
$\Sphere = \{x \in \R^d: \|x\| = 1\}$ & Unit sphere in $\R^d$ \\ 
$\Ball = \{x \in \R^d: \|x\| \leq 1\}$ & Unit ball in $\R^d$ \\
$\conv(A),\cl(A),|B|$ & Convex hull and closure of a set $A\subset\R^d$ (respectively), and cardinality of a finite set $B\subset\R^d$ \\
\hline 
$\ex,\ey,\ez,\dots$ & Random vectors \\ 
$x,y,z,\dots$ & Realizations of random vectors or deterministic vectors \\ 
$\eX,\eY,\eZ,\dots$ & Random sets \\ 
$X,Y,Z,\dots$ & Realizations of random sets or deterministic sets\\ 
$\epsilon,\eps,\nu,\zeta$ & Unobserved random variables (heterogeneity) \\ 
$\Theta,\theta,\vartheta$ & Parameter space, data generating value for the  parameter vector, and a generic element of $\Theta$  \\ 
\hline 
$\sR$ & Joint distribution of all variables (observable and unobservable) \\
$\sP$ & Joint distribution of the observable variables \\
$\sQ$ & Joint distribution whose features one wants to learn \\
$\sM$ & A joint distribution of observed variables implied by the model \\
$\sq_{\tau}(\alpha)$ & Quantile function at level $\alpha \in (0,1)$ for a random variable distributed $\tau\in\{\sR,\sP,\sQ\}$ \\
$\E_\tau$ & Expectation operator associated with distribution $\tau\in\{\sR,\sP,\sQ\}$ \\
$\sT_\eX(K)=\Prob{\eX\cap K\neq\emptyset},\, K\in\cK$ & Capacity functional of random set $\eX$ \\
$\sC_\eX(F)=\Prob{\eX\subset F},\, F\in\cF$ & Containment functional of random set $\eX$ \\
\hline
$\stackrel{p}{\rightarrow},~\stackrel{\text{a.s.}}{\rightarrow},~\Rightarrow$ & Convergence in probability, convergence almost surely, and weak convergence (respectively) \\
$\ex\edis\ey$ & $\ex$ and $\ey$ have the same distribution \\
$\ex\independent\ey$ & Statistical independence between random variables $\ex$ and $\ey$ \\
$x^\top y$ & Inner product between vectors $x$ and $y$, $x,y\in\R^d$ \\
\hline
$\bU,\bu$ & Family of utility functions and one of its elements \\
$\crit_\sP$ & Criterion function that aggregates violations of the population moment inequalities \\
$\crit_n$ & Criterion function that aggregates violations of the sample moment inequalities \\
\hline
$\idr{\cdot}$ & Sharp identification region of the functional in square brackets (a function of $\sP$)\\
$\outr{\cdot}$ & An outer region of the functional in square brackets (a function of $\sP$) \\
\hline
\hline
\end{tabular}}
\end{center} 
\end{table}

\subsection{Notation}
\label{subsec:notation}
This chapter employs consistent notation that is summarized in Table \ref{tab:notation}. 
Some important conventions are as follows: 
$\ey$ denotes outcome variables, $(\ex,\ew)$ denote explanatory variables, and $\ez$ denotes instrumental variables (i.e., variables that satisfy some form of independence with the outcome or with the unobservable variables, possibly conditional on $\ex,\ew$).

I denote by $\sP$ the joint distribution of all observable variables.
Identification analysis is carried out using the information contained in this distribution, and finite sample inference is carried out under the presumption that one draws a random sample of size $n$ from $\sP$.
I denote by $\sQ$ the joint distribution whose features the researcher wants to learn.
If $\sQ$ were identified given the observed data (e.g., if it were a marginal of $\sP$), point identification of the parameter or functional of interest would attain.
I denote by $\sR$ the joint distribution of all variables, observable and unobservable ones; both $\sP$ and $\sQ$ can be obtained from it.
In the context of structural models, I denote by $\sM$ a distribution for the observable variables that is consistent with the model.
I note that model incompleteness typically implies that $\sM$ is not unique.
I let $\idr{\cdot}$ denote the sharp identification region of the functional in square brackets, and $\outr{\cdot}$ an outer region.
In both cases, the regions are indexed by $\sP$, because they depend on the distribution of the observed data.

\section{Partial Identification of Probability Distributions}
\label{sec:prob:distr}
The literature reviewed in this chapter starts with the analysis of what can be learned about functionals of probability distributions that are well-defined in the absence of a model.
The approach is nonparametric, and it is typically \emph{constructive}, in the sense that it leads to ``plug-in" formulae for the bounds on the functionals of interest.

\subsection{Selectively Observed Data}
\label{subsec:missing_data}
As in \cite{man89}, suppose that a researcher is interested in learning the probability that an individual who is homeless at a given date has a home six months later.
Here the population of interest is the people who are homeless at the initial date, and the outcome of interest $\ey$ is an indicator of whether the individual has a home six months later (so that $\ey=1$) or remains homeless (so that $\ey=0$).
A random sample of homeless individuals is interviewed at the initial date, so that individual background attributes $\ex$ are observed, but six months later only a subset of the individuals originally sampled can be located.
In other words, attrition from the sample creates a \emph{selection problem} whereby $\ey$ is observed only for a subset of the population.
Let $\ed$ be an indicator of whether the individual can be located (hence $\ed=1$) or not (hence $\ed=0$).
The question is what can the researcher learn about $\E_\sQ(\ey|\ex=x)$, with $\sQ$ the distribution of $(\ey,\ex)$?
\cite{man89} showed that $\E_\sQ(\ey|\ex=x)$ is not point identified in the absence of additional assumptions, but informative nonparametric bounds on this quantity can be obtained.
In this section I review his approach, and discuss several important extensions of his original idea.\medskip

Throughout the chapter, I formally state the structure of the problem under study as an ``Identification Problem", and then provide a solution, either in the form of a sharp identification region, or of an outer region.
To set the stage, and at the cost of some repetition, I do the same here, slightly generalizing the question stated in the previous paragraph.
\begin{IP}[Conditional Expectation of Selectively Observed Data] 
\label{IP:bounds:mean:md}
Let $\ey \in \mathcal{Y}\subset \R$ and $\ex \in \mathcal{X}\subset \R^d$ be, respectively, an outcome variable and a vector of covariates with support $\cY$ and $\cX$ respectively, with $\cY$ a compact set. 
Let $\ed \in \{0,1\}$. 
Suppose that the researcher observes a random sample of realizations of $(\ex,\ed)$ and, in addition, observes the realization of $\ey$ when $\ed=1$.
Hence, the observed data is $(\ey\ed,\ed,\ex)\sim \sP$.
Let $g:\cY\mapsto\R$ be a measurable function that attains its lower and upper bounds $g_0=\min_{y\in\cY}g(y)$ and $g_1=\max_{y\in\cY}g(y)$, and assume that $-\infty < g_0 < g_1 < \infty$.
Let $y_{j}\in\cY$ be such that $g(y_j)=g_j$, $j=0,1$.\footnote{The bounds $g_0,g_1$ and the values $y_0,y_1$ at which they are attained may differ for different functions $g(\cdot)$.}
In the absence of additional information, what can the researcher learn about $\E_\sQ(g(\ey)|\ex=x)$, with $\sQ$ the distribution of $(\ey,\ex)$?
	\qedex
\end{IP}
\citeauthor{man89}'s analysis of this problem begins with a simple application of the law of total probability, that yields
\begin{align}
\sQ(\ey|\ex=x) = \sP(\ey|\ex=x,\ed=1)\sP(\ed=1|\ex=x)+\sR(\ey|\ex=x,\ed=0)\sP(\ed=0|\ex=x).\label{eq:LTP_md}
\end{align}
Equation \eqref{eq:LTP_md} lends a simple but powerful anatomy of the selection problem. 
While $\sP(\ey|\ex=x,\ed=1)$ and $\sP(\ed|\ex=x)$ can be learned from the observable distribution $\sP(\ey\ed,\ed,\ex)$, under the maintained assumptions the sampling process reveals nothing about $\sR(\ey|\ex=x,\ed=0)$. 
Hence, $\sQ(\ey|\ex=x)$ is not point identified.

If one were to assume \emph{exogenous selection} (or data missing at random conditional on $\ex$), i.e., $\sR(\ey|\ex,\ed=0)=\sP(\ey|\ex,\ed=1)$, point identification would obtain. 
However, that assumption is non-refutable and it is well known that it may fail in applications.\footnote{Section \ref{sec:misspec} discusses the consequences of model misspecification (with respect to refutable assumptions).}
Let $\cT$ denote the space of all probability measures with support in $\cY$. The unknown functional vector is $\{\tau(x),\upsilon(x)\}\equiv \{\sQ(\ey|\ex=x),\sR(\ey|\ex=x,\ed=0)\}$.
What the researcher can learn, in the absence of additional restrictions on $\sR(\ey|\ex=x,\ed=0)$, is the region of \emph{observationally equivalent} distributions for $\ey|\ex=x$, and the associated set of expectations taken with respect to these distributions. 
\begin{SIR}[Conditional Expectations of Selectively Observed Data]
\label{SIR:prob:E:md}
Under the assumptions in Identification Problem \ref{IP:bounds:mean:md},
		\begin{multline}
	\idr{\E_\sQ(g(\ey)|\ex=x)} = \Big[\E_\sP(g(\ey)|\ex=x,\ed=1)\sP(\ed=1|\ex=x)+ g_0P(\ed=0|\ex=x),\\
	\E_\sP(g(\ey)|\ex=x,\ed=1)\sP(\ed=1|\ex=x)+ g_1\sP(\ed=0|\ex=x)\Big]\label{eq:bounds:mean:md}
	\end{multline}
is the sharp identification region for $\E_\sQ(g(\ey)|\ex=x)$.
\end{SIR}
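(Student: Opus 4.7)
The plan is to start from equation \eqref{eq:LTP_md}, which the paper has already derived, and integrate $g(\cdot)$ against both sides to obtain
$$\E_\sQ(g(\ey)|\ex=x) = \E_\sP(g(\ey)|\ex=x,\ed=1)\sP(\ed=1|\ex=x) + \E_\sR(g(\ey)|\ex=x,\ed=0)\sP(\ed=0|\ex=x).$$
The first summand and the selection probabilities are functionals of $\sP$ and hence identified. All indeterminacy therefore sits in the single unknown scalar $m(x)\equiv\E_\sR(g(\ey)|\ex=x,\ed=0)$, whose admissible range is dictated by the unknown conditional law $\upsilon(x)=\sR(\ey|\ex=x,\ed=0)\in\cT$. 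This reduces the problem to characterizing $\{m(x):\upsilon(x)\in\cT\}$ and then translating that range, via the affine map above, into an interval for $\E_\sQ(g(\ey)|\ex=x)$.

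Next I would establish the outer bound. Since $g_0\le g(y)\le g_1$ for every $y\in\cY$, integrating against any probability measure $\upsilon(x)\in\cT$ gives $g_0\le m(x)\le g_1$, which plugged into the decomposition yields exactly the interval on the right-hand side of \eqref{eq:bounds:mean:md}. This shows $\E_\sQ(g(\ey)|\ex=x)$ must lie in the stated interval.

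The main content is sharpness, i.e., showing that every value in the interval is attainable by some DGP that is observationally equivalent to $\sP$. For an arbitrary target $t\in[g_0,g_1]$, write $t=\lambda g_1+(1-\lambda)g_0$ with $\lambda=(t-g_0)/(g_1-g_0)\in[0,1]$, and take $\upsilon(x)$ to be the two-point mass $\lambda\delta_{y_1}+(1-\lambda)\delta_{y_0}$ (using the existence of the maximizers/minimizers $y_0,y_1\in\cY$ guaranteed by the compactness of $\cY$ and the hypothesis on $g$). Using this $\upsilon(x)$ in \eqref{eq:LTP_md} defines a joint distribution $\tilde\sR$ for $(\ey,\ex,\ed)$ whose marginal on $(\ey\ed,\ed,\ex)$ coincides with $\sP$ by construction, and under which $\E_{\tilde\sR}(g(\ey)|\ex=x)$ equals precisely the desired endpoint of \eqref{eq:bounds:mean:md} corresponding to $t$. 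Hence every point in the interval is observationally equivalent to $\sP$ and the interval is sharp.

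The only subtlety worth care is the sharpness step: one must verify that the constructed $\upsilon(x)$ indeed pastes together with the identified pieces into a well-defined joint distribution on $(\ey,\ex,\ed)$ consistent with $\sP$, which follows from the fact that $\upsilon(x)$ is an arbitrary element of $\cT$ and the law of total probability imposes no further restriction linking $\upsilon(x)$ to the identified conditionals. Convexity of the feasible set of $m(x)$ (and hence of the image interval) could alternatively be argued by noting that $\cT$ is convex and $m$ is linear in $\upsilon(x)$, so together with the endpoint constructions one obtains the full interval without having to exhibit a witness for every interior $t$ separately.
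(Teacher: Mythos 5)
Your proposal is correct and follows essentially the same route as the paper's proof: decompose $\E_\sQ(g(\ey)|\ex=x)$ via the law of total probability in \eqref{eq:LTP_md}, observe that the only unidentified piece is the expectation of $g$ under $\upsilon(x)\in\cT$, and attain the endpoints by placing point mass at $y_0$ and $y_1$. You are somewhat more explicit than the paper about the outer bound and about realizing interior points via two-point mixtures (the paper leaves the convexity/intermediate-value step implicit), but the underlying argument is the same.
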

\begin{proof}
Due to the discussion following equation \eqref{eq:LTP_md}, the collection of observationally equivalent distribution functions for $\ey|\ex=x$ is
	\begin{multline}
	\idr{\sQ(\ey|\ex=x)}=\Big\{ \tau(x) \in \cT: \tau(x) = \sP(\ey|\ex=x,\ed=1)\sP(\ed=1|\ex=x)\\
	+\upsilon(x)\sP(\ed=0|\ex=x),~\text{for some } \upsilon(x)\in\cT\Big\}.\label{eq:Tau_md}
	\end{multline}
Next, observe that the lower bound in equation \eqref{eq:bounds:mean:md} is achieved by integrating $g(\ey)$ against the distribution $\tau(x)$ that results when $\upsilon(x)$ places probability one on $y_0$. The upper bound is achieved by integrating $g(\ey)$ against the distribution $\tau(x)$ that results when $\upsilon(x)$ places probability one on $y_1$.
	Both are contained in the set $\idr{\sQ(\ey|\ex=x)}$ in equation \eqref{eq:Tau_md}. 
\end{proof}	
These are the \emph{worst case bounds}, so called because assumptions free and therefore representing the widest possible range of values for the parameter of interest that are consistent with the observed data.
A simple ``plug-in" estimator for $\idr{\E_\sQ(g(\ey)|\ex=x)}$ replaces all unknown quantities in \eqref{eq:bounds:mean:md} with consistent estimators, obtained, e.g., by kernel or sieve regression. 
I return to consistent estimation of partially identified parameters in Section \ref{sec:inference}.
Here I emphasize that identification problems are fundamentally distinct from finite sample inference problems.
The latter are typically reduced as sample size increase (because, e.g., the variance of the estimator becomes smaller).
The former do not improve, unless a different and better type of data is collected, e.g. with a smaller prevalence of missing data \citep[see][for a discussion]{dom:man17}.

	\cite[Section 1.3]{man03} shows that the proof of Theorem SIR-\ref{SIR:prob:E:md} can be extended to obtain the smallest and largest points in the sharp identification region of any parameter that respects stochastic dominance.\footnote{
Recall that a probability distribution $\sF\in\cT$ stochastically dominates $\sF^\prime\in\cT$ if $\sF(-\infty,t]\le \sF^\prime(-\infty,t]$ for all $t\in\R$. A real-valued functional $\sd:\cT\to\R$ respects stochastic dominance if $\sd(\sF)\ge \sd(\sF^\prime)$ whenever $\sF$ stochastically dominates $\sF^\prime$.}
This is especially useful to bound the quantiles of $\ey|\ex=x$. 
For any given $\alpha \in (0,1)$, let $\sq_{\sP}^{g(\ey)}(\alpha,1,x)\equiv \left\{\min t:\sP(g(\ey)\le t|\ed=1,\ex=x)\ge \alpha\right\}$.
Then the smallest and largest  admissible values for the $\alpha$-quantile of $g(\ey)|\ex=x$ are, respectively,
\begin{align*}
r(\alpha,x)&\equiv 
\begin{cases}
\sq_{\sP}^{g(\ey)}\left(\left[1-\frac{(1-\alpha)}{\sP(\ed=1|\ex=x)}\right],1,x\right)
 & \text{if } \sP(\ed=1|\ex=x)>1-\alpha,\\
g_0 & \text{otherwise};
\end{cases}\\
s(\alpha,x)&\equiv 
\begin{cases}
\sq_{\sP}^{g(\ey)}\left(\left[\frac{\alpha}{\sP(\ed=1|\ex=x)}\right],1,x\right) & \text{if } \sP(\ed=1|\ex=x)\ge\alpha,\\
g_1 & \text{otherwise}.
\end{cases}
\end{align*}
The lower bound on $\E_\sQ(g(\ey)|\ex=x)$ is informative only if $g_0>-\infty$, and the upper bound is informative only if $g_1<\infty$. 
By comparison, for any value of $\alpha$, $r(\alpha,x)$ and $s(\alpha,x)$ are
generically informative if, respectively, $\sP(\ed=1|\ex=x) > 1-\alpha$ and $\sP(\ed=1|\ex=x) \ge \alpha$, regardless of the range of $g$.

\cite{sto10} further extends partial identification analysis to the study of spread parameters in the presence of missing data (as well as interval data, data combinations, and other applications).
These parameters include ones that respect second order stochastic dominance, such as the variance, the Gini coefficient, and other inequality measures, as well as other measures of dispersion which do not respect second order stochastic dominance, such as interquartile range and ratio.\footnote{
Earlier related work includes, e.g., \cite{gas72} and \cite{cow91}, who obtain worst case bounds on the sample Gini coefficient under the assumption that one knows the income bracket but not the exact income of every household.}
\citeauthor{sto10} shows that the sharp identification region for these parameters can be obtained by fixing the mean or quantile of the variable of interest at a specific value within its sharp identification region, and deriving a distribution consistent with this value which is ``compressed" with respect to the ones which bound the cumulative distribution function (CDF) of the variable of interest, and one which is ``dispersed" with respect to them.  
Heuristically, the compressed distribution minimizes spread, while the dispersed one maximizes it (the sense in which this optimization occurs is formally defined in the paper).  
The intuition for this is that a compressed CDF is first below and then above any non-compressed one; a dispersed CDF is first above and then below any non-dispersed one.  
Second-stage optimization over the possible values of the mean or the quantile delivers unconstrained bounds.  
The main results of the paper are sharp identification regions for the expectation and variance, for the median and interquartile ratio, and for many other combinations of parameters.

\begin{BI}[Identification is not a binary event]
\label{big_idea:id_not_binary}
Identification Problem \ref{IP:bounds:mean:md} is mathematically simple, but it puts forward a new approach to empirical research.
The traditional approach aims at finding a sufficient (possibly minimal) set of assumptions guaranteeing point identification of parameters, viewing identification as an ``all or nothing" notion, where either the functional of interest can be learned exactly or nothing of value can be learned. 
The partial identification approach pioneered by \cite{man89} points out that much can be learned from combination of data and assumptions that restrict the functionals of interest to a set of observationally equivalent values, even if this set is not a singleton. 
Along the way, \cite{man89} points out that in Identification Problem \ref{IP:bounds:mean:md} the observed outcome is the singleton $\ey$ when $\ed=1$, and the set $\cY$ when $\ed=0$.
This is a random closed set, see Definition \ref{def:rcs}. 
I return to this connection in Section \ref{subsec:interval_data}.
\end{BI}

Despite how transparent the framework in Identification Problem  \ref{IP:bounds:mean:md} is, important subtleties arise even in this seemingly simple context.
For a given $t\in\R$, consider the function $g(\ey)=\one(\ey\le t)$, with $\one(A)$ the indicator function taking the value one if the logical condition in parentheses holds and zero otherwise.
Then equation \eqref{eq:bounds:mean:md} yields \emph{pointwise-sharp} bounds on the CDF of $\ey$ at any fixed $t\in\R$:
\begin{multline}
\label{eq:pointwise_bounds_F_md}
	\idr{\sQ(\ey\le t|\ex=x)} = \left[\sP(\ey\le t|\ex=x,\ed=1)\sP(\ed=1|\ex=x)\right.,\\
	\left.\sP(\ey\le t|\ex=x,\ed=1)\sP(\ed=1|\ex=x)+ \sP(\ed=0|\ex=x)\right].
\end{multline}
Yet, the collection of CDFs that belong to the band defined by \eqref{eq:pointwise_bounds_F_md} is \emph{not} the sharp identification region for the CDF of $\ey|\ex=x$. Rather, it constitutes an \emph{outer region}, as originally pointed out by \cite[p. 149 and note 2]{man94}.
\begin{OR}[Cumulative Distribution Function of Selectively Observed Data]
\label{OR:CDF_md}
Let $\cC$ denote the collection of cumulative distribution functions on $\cY$. 
Then, under the assumptions in Identification Problem \ref{IP:bounds:mean:md},
\begin{multline}
\label{eq:outer_cdf_md}
\outr{\sF(\ey|\ex=x)}=\left\{\sF\in\cC:~\sP(\ey\le t|\ex=x,\ed=1)\sP(\ed=1|\ex=x)\le\sF(t|x)\le \right. \\
	\left.\sP(\ey\le t|\ex=x,\ed=1)\sP(\ed=1|\ex=x)+ \sP(\ed=0|\ex=x)~\forall t\in\R \right\}
\end{multline} 
is an outer region for the CDF of $\ey|\ex=x$.
\end{OR}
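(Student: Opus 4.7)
The plan is to establish the containment $\idr{\sF(\ey|\ex=x)} \subseteq \outr{\sF(\ey|\ex=x)}$, which is exactly what it means for \eqref{eq:outer_cdf_md} to define an outer region. The strategy is a direct pointwise application of the decomposition \eqref{eq:LTP_md} underlying Theorem SIR-\ref{SIR:prob:E:md}, now specialized to the event $\{\ey \le t\}$ for each $t \in \R$ separately.

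First, I would take an arbitrary CDF $\sF(\cdot|x)$ belonging to $\idr{\sF(\ey|\ex=x)}$. By the characterization \eqref{eq:Tau_md} of the sharp identification region for $\sQ(\ey|\ex=x)$, there exists some $\upsilon(x) \in \cT$ such that, for every $t \in \R$,
\begin{equation*}
\sF(t|x) = \sP(\ey \le t|\ex=x,\ed=1)\sP(\ed=1|\ex=x) + \upsilon(\ey \le t|x)\sP(\ed=0|\ex=x).
\end{equation*}
Since $\upsilon(\ey \le t|x) \in [0,1]$ for every $t$, substituting $\upsilon(\ey \le t|x)=0$ and $\upsilon(\ey \le t|x)=1$ yields exactly the lower and upper pointwise endpoints of the interval in \eqref{eq:pointwise_bounds_F_md}. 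Hence $\sF$ lies in the set on the right-hand side of \eqref{eq:outer_cdf_md}, which is what was to be shown.

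The main obstacle is not the containment itself, which is routine, but recognizing why \eqref{eq:outer_cdf_md} is only an outer region and not the sharp identification region. The pointwise bounds treat each $t$ in isolation, whereas a CDF in the sharp region must arise from a \emph{single} distribution $\upsilon(x) \in \cT$ governing all $t$ simultaneously. Writing the implied residual $\tilde{\upsilon}(t|x) \equiv [\sF(t|x) - \sP(\ey\le t|\ex=x,\ed=1)\sP(\ed=1|\ex=x)]/\sP(\ed=0|\ex=x)$, membership in the sharp region additionally requires $\tilde\upsilon(\cdot|x)$ to be a valid CDF on $\cY$, and in particular nondecreasing in $t$. A function $\sF$ that, say, tracks the upper bound for small $t$ and the lower bound for large $t$ satisfies \eqref{eq:outer_cdf_md} yet would force $\tilde\upsilon$ to be non-monotone, so it cannot belong to $\idr{\sF(\ey|\ex=x)}$. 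This cross-$t$ monotonicity restriction is what \eqref{eq:outer_cdf_md} discards, and the proof proposal therefore stops at the containment without claiming sharpness.
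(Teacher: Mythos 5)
Your proof is correct and takes essentially the same route as the paper: containment follows by applying the decomposition \eqref{eq:Tau_md} to the events $\{\ey\le t\}$ and bounding $\upsilon(\ey\le t|x)$ between $0$ and $1$, and the reason the region is only outer is the discarded cross-$t$ restriction. Your monotonicity condition on the implied residual $\tilde\upsilon$ is equivalent to the paper's interval-probability restriction \eqref{eq:CDF_md_Kinterval}, which the paper additionally illustrates with the explicit counterexample in Example \ref{example:CDF_md}.
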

\begin{proof}
Any admissible CDF for $\ey|\ex=x$ belongs to the family of functions in equation \eqref{eq:outer_cdf_md}. However, the bound in equation \eqref{eq:outer_cdf_md} does not impose the restriction that for any $t_0\le t_1$,
\begin{align}
\label{eq:CDF_md_Kinterval}
\sQ(t_0\le\ey\le t_1|\ex=x)\ge \sP(t_0\le\ey\le t_1|\ex=x,\ed=1)\sP(\ed=1|\ex=x).
\end{align}
This restriction is implied by the maintained assumptions, but is not necessarily satisfied by all CDFs in $\outr{\sF(\ey|\ex=x)}$, as illustrated in the following simple example.
\end{proof}
\begin{figure}[tp]
\centering
\includegraphics[scale=1]{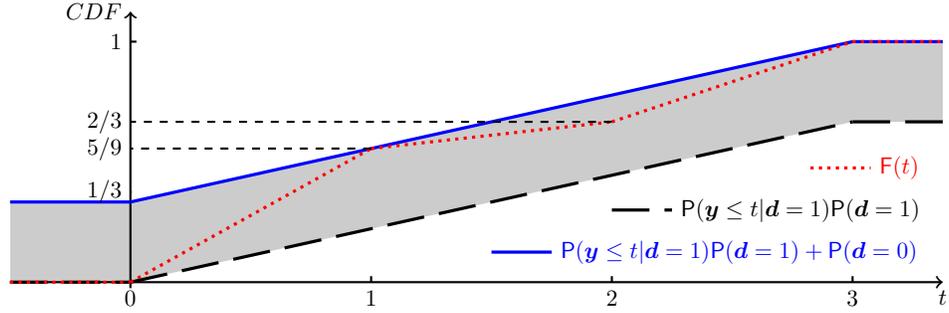}
\caption{\small{The tube defined by inequalities \eqref{eq:pointwise_bounds_F_md} in the set-up of Example \ref{example:CDF_md}, and the CDF in \eqref{eq:CDF_counterexample_md}.}}
	\label{fig:boundsCDF:md}
		\end{figure}

\begin{examp}
\label{example:CDF_md}
Omit $\ex$ for simplicity, let $\sP(\ed=1)=\frac{2}{3}$, and let
\begin{align*}
\sP(\ey\le t|\ed=1)\left\{
\begin{tabular}{ll}
$0$ & if $t<0$,\\
$\frac{1}{3}t$ & if $0\le t<3$,\\
$1$ & if $t\ge 3$.
\end{tabular}
\right.
\end{align*}
The bounding functions and associated tube from the inequalities in \eqref{eq:pointwise_bounds_F_md} are depicted in Figure \ref{fig:boundsCDF:md}.
Consider the cumulative distribution function
\begin{align}
\label{eq:CDF_counterexample_md}
\sF(t)=
\left\{
\begin{tabular}{lll}
$0$ & if & $t<0$,\\
$\frac{5}{9}t$ & if & $0\le t <1$,\\
$\frac{1}{9}t+\frac{4}{9}$ & if & $1\le t <2$,\\
$\frac{1}{3}t$ & if & $2\le t <3$,\\
$1$ & if & $t\ge 3$.
\end{tabular}
\right.
\end{align}
For each $t\in\R$, $\sF(t)$ lies in the tube defined by equation \eqref{eq:pointwise_bounds_F_md}. 
However, it cannot be the CDF of $\ey$, because $\sF(2)-\sF(1)=\frac{1}{9}<\sP(1\le\ey\le 2|\ed=1)\sP(\ed=1)$, directly contradicting equation \eqref{eq:CDF_md_Kinterval}.
	\qedex
	\end{examp}

How can one characterize the sharp identification region for the CDF of $\ey|\ex=x$ under the assumptions in Identification Problem \ref{IP:bounds:mean:md}? 
In general, there is not a single answer to this question: different methodologies can be used. 
Here I use results in \cite[Corollary 1.3.1]{man03} and \cite[Theorem 2.25]{mol:mol18}, which yield an alternative characterization of $\idr{\sQ(\ey|\ex=x)}$ that translates directly into a characterization of $\idr{\sF(\ey|\ex=x)}$.\footnote{Whereas \cite{man94} is  very clear that the collection of CDFs in \eqref{eq:pointwise_bounds_F_md} is an outer region for the CDF of $\ey|\ex=x$, and  \cite{man03} provides the sharp characterization in \eqref{eq:sharp_id_P_md_Manski}, \cite[p. 39]{man07a} does not state all the requirements that characterize $\idr{\sF(\ey|\ex=x)}$.}
\begin{SIR}[Conditional Distribution and CDF of Selectively Observed Data]
\label{SIR:CDF_md}
Given $\tau\in\cT$, let $\tau_K(x)$ denote the probability that distribution $\tau$ assigns to set $K$ conditional on $\ex=x$, with $\tau_y(x)\equiv\tau_{\{y\}}(x)$.
Under the assumptions in Identification Problem \ref{IP:bounds:mean:md},
\begin{align}
	\idr{\sQ(\ey|\ex=x)}&=\Big\{\tau(x) \in \cT:  \tau_K(x) \ge \sP(\ey\in K|\ex=x,\ed=1)\sP(\ed=1|\ex=x),\,\forall K\subset \cY \Big\},\label{eq:sharp_id_P_md_Manski}
\end{align}
where $K$ is measurable.
If $\cY$ is countable,
\begin{align}
	\idr{\sQ(\ey|\ex=x)}&=\Big\{\tau(x) \in \cT:  \tau_y(x) \ge \sP(\ey=y|\ex=x,\ed=1)\sP(\ed=1|\ex=x),\,\forall y\in \cY \Big\}.\label{eq:sharp_id_P_md_discrete}
\end{align}
If $\cY$ is a bounded interval,
\begin{multline}
	\idr{\sQ(\ey|\ex=x)}=\Big\{\tau(x) \in \cT:  \tau_{[t_0,t_1]}(x) \ge \\
	\sP(t_0\le\ey\le t_1|\ex=x,\ed=1)\sP(\ed=1|\ex=x),\,\forall t_0\le t_1, t_0,t_1\in\cY \Big\}.\label{eq:sharp_id_P_md_interval}
\end{multline}
\end{SIR}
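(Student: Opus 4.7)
The plan is to recognize the sharp identification region as the family of distributions of measurable selections of a naturally associated random closed set. Conditional on $(\ex,\ed)$, define
\[
\eY = \begin{cases} \{\ey\} & \text{if } \ed=1, \\ \cY & \text{if } \ed=0. \end{cases}
\]
The first step is to show that $\idr{\sQ(\ey|\ex=x)}$ coincides with the collection of conditional laws given $\ex=x$ of measurable selections of $\eY$. In one direction, for any $\sQ$ satisfying \eqref{eq:LTP_md} with some $\sR(\ey|\ex=x,\ed=0)\in\cT$, the variable $\ey^\star$ equal to $\ey$ on $\{\ed=1\}$ and drawn from $\sR(\ey|\ex=x,\ed=0)$ on $\{\ed=0\}$ is a selection of $\eY$ whose conditional law given $\ex=x$ is precisely $\sQ(\ey|\ex=x)$; conversely, for any selection $\ey^\star$ of $\eY$, taking $\sR(\ey|\ex=x,\ed=0)$ to be the conditional law of $\ey^\star$ given $\{\ed=0,\ex=x\}$ yields a valid element of $\idr{\sQ(\ey|\ex=x)}$ via \eqref{eq:LTP_md}.

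The second step is to invoke Artstein's characterization of selectionable distributions (see Appendix \ref{app:RCS}): $\tau(x)\in\cT$ is the conditional law of a selection of $\eY$ given $\ex=x$ if and only if
\[
\tau_F(x)\ge \sC_\eY(F|\ex=x)\quad\text{for every closed }F\subset\cY.
\]
Evaluating the containment functional on the two branches defining $\eY$ gives, for every $F\subsetneq\cY$,
\[
\sC_\eY(F|\ex=x)=\sP(\ey\in F,\ed=1|\ex=x)=\sP(\ey\in F|\ex=x,\ed=1)\sP(\ed=1|\ex=x),
\]
while $F=\cY$ only produces the trivial inequality $1\ge 1$. To pass from closed sets to arbitrary measurable $K\subset\cY$, I note that both $\tau_K(x)$ and $\mu(K)\equiv\sP(\ey\in K|\ex=x,\ed=1)\sP(\ed=1|\ex=x)$ are $\sigma$-additive in $K$, so the signed Borel measure $\lambda=\tau(x)-\mu$ is nonnegative on a $\pi$-system of closed sets generating the Borel $\sigma$-algebra, which forces $\lambda$ to be a positive measure via monotonicity of its distribution function. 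This delivers \eqref{eq:sharp_id_P_md_Manski}.

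The specializations \eqref{eq:sharp_id_P_md_discrete} and \eqref{eq:sharp_id_P_md_interval} will be obtained by shrinking the test class further. For countable $\cY$, $\sigma$-additivity yields $\tau_K(x)-\mu(K)=\sum_{y\in K}(\tau_y(x)-\mu(\{y\}))$, so the singleton inequalities imply the inequality on every $K$. For a bounded interval $\cY$, closed subintervals form a $\pi$-system generating the Borel $\sigma$-algebra; nonnegativity of $\lambda$ on every $[t_0,t_1]$ is equivalent to the nondecreasing monotonicity of its distribution function, equivalently to $\lambda$ being a positive measure. The main obstacle I anticipate is precisely this last reduction from Artstein's closed-set inequalities to the simpler singleton or interval families: the usual $\pi$-$\lambda$ and monotone-class arguments are designed for equalities between measures rather than one-sided inequalities, so the extension has to be carried out via the signed-measure / distribution-function monotonicity argument sketched above rather than as a black-box application of a standard extension theorem.
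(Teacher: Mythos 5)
Your proof is correct, but it takes a genuinely different route from the paper's for the main characterization \eqref{eq:sharp_id_P_md_Manski}. The paper argues directly from the decomposition in \eqref{eq:Tau_md}: given $\tau(x)$ satisfying the inequalities for all measurable $K$, it defines $\upsilon(x)=\bigl(\tau(x) - \sP(\ey|\ex=x,\ed=1)\sP(\ed=1|\ex=x)\bigr)/\sP(\ed=0|\ex=x)$ and observes that the inequalities make $\upsilon(x)$ a nonnegative (automatically normalized) measure, so no random set machinery and no closed-to-Borel extension step are needed for \eqref{eq:sharp_id_P_md_Manski}; the countable case is then handled by summation exactly as you do, and the interval case \eqref{eq:sharp_id_P_md_interval} is delegated to a random-set-theoretic result (Theorem 2.25 of Molchanov and Molinari), which is essentially the argument you give. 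Your approach instead unifies all three displays under Artstein's inequality for the random set $\eY$ and then reduces the test class, which buys a single template that also explains why compact sets, singletons, or closed intervals suffice in the respective cases — the paper itself remarks that random set theory yields the reduction to compact $K$ in \eqref{eq:sharp_id_P_md_Manski} — but it costs you the regularity argument passing from closed sets to arbitrary measurable $K$, a step the paper's direct construction avoids entirely. Two small points to tighten: you should invoke the \emph{conditional} version of Artstein's theorem (the paper cites Theorem 2.33 of Molchanov and Molinari for this elsewhere), since Theorem \ref{thr:artstein} as stated is unconditional; and your phrase that nonnegativity on a generating $\pi$-system ``forces $\lambda$ to be a positive measure'' is not true as a general principle — what saves you is that $\cY\subset\R$, so nonnegativity on closed intervals gives a nondecreasing distribution function for $\lambda$, whence $\lambda$ coincides with the associated Lebesgue--Stieltjes (positive) measure; it is worth spelling that out rather than leaving it as a sketch.
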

\begin{proof}
The characterization in \eqref{eq:sharp_id_P_md_Manski} follows from equation \eqref{eq:Tau_md}, observing that if $\tau(x)\in\idr{\sQ(\ey|\ex=x)}$ as defined in equation \eqref{eq:Tau_md}, then there exists a distribution $\upsilon(x)\in\cT$ such that $\tau(x) = \sP(\ey|\ex=x,\ed=1)\sP(\ed=1|\ex=x)+\upsilon(x)\sP(\ed=0|\ex=x)$.
Hence, by construction $\tau_K(x) \ge \sP(\ey\in K|\ex=x,\ed=1)\sP(\ed=1|\ex=x)$, $\forall K\subset \cY$. Conversely, if one has $\tau_K(x) \ge \sP(\ey\in K|\ex=x,\ed=1)\sP(\ed=1|\ex=x)$, $\forall K\subset \cY$, one can define $\upsilon(x)=\frac{\tau(x) - \sP(\ey|\ex=x,\ed=1)\sP(\ed=1|\ex=x)}{\sP(\ed=0|\ex=x)}$. 
The resulting $\upsilon(x)$ is a probability measure, and hence $\tau(x)\in\idr{\sQ(\ey|\ex=x)}$ as defined in equation \eqref{eq:Tau_md}.
When $\cY$ is countable, if $\tau_y(x) \ge \sP(\ey=y|\ex=x,\ed=1)\sP(\ed=1|\ex=x)$ it follows that for any $K\subset\cY$,
\begin{multline}
\tau_K(x)=\sum_{y\in K}\tau_y(x) \ge \sum_{y\in K}\sP(\ey=y|\ex=x,\ed=1)\sP(\ed=1|\ex=x)\\
=\sP(\ey\in K|\ex=x,\ed=1)\sP(\ed=1|\ex=x).\notag
\end{multline}
The result in equation \eqref{eq:sharp_id_P_md_interval} is proven in \cite[Theorem 2.25]{mol:mol18} using elements of random set theory, to which I return in Section \ref{subsec:interval_data}. 
Using elements of random set theory it is also possible to show that the characterization in \eqref{eq:sharp_id_P_md_Manski} requires only to check the inequalities for $K$ the compact subsets of $\cY$.
\end{proof}

This section provides sharp identification regions and outer regions for a variety of functionals of interest.
The computational complexity of these characterizations varies widely.
Sharp bounds on parameters that respect stochastic dominance only require computing the parameters with respect to two probability distributions.
An outer region on the CDF can be obtained by evaluating all tail probabilities of a certain distribution.
A sharp identification region on the CDF requires evaluating the probability  that a certain distribution assigns to all intervals.
I return to computational challenges in partial identification in Section \ref{sec:computations}.

\subsection{Treatment Effects with and without Instrumental Variables}
\label{subsec:programme:eval}
The discussion of partial identification of probability distributions of selectively observed data naturally leads to the question of its implications for program evaluation.
The literature on program evaluation is vast.
The purpose of this section is exclusively to show how the ideas presented in Section \ref{subsec:missing_data} can be applied to learn features of treatment effects of interest, when no assumptions are imposed on treatment selection and outcomes.
I also provide examples of assumptions that can be used to tighten the bounds. 
To keep this chapter to a manageable length, I discuss only partial identification of the average response to a treatment and of the average treatment effect (ATE).
There are many different parameters that received much interest in the literature.
Examples include the \emph{local average treatment effect} of \cite{imb:ang94} and the \emph{marginal treatment effect} of 
\cite{hec:vyt99,hec:vyt01,hec:vyt05}. 
For thorough discussions of the literature on program evaluation, I refer to the textbook treatments in \cite{man95,man03,man07a} and \cite{imb:rub15}, to the Handbook chapters by \cite{hec:vyt07I,hec:vyt07II} and \cite{abb:hec07}, and to the review articles by \cite{imb:woo09} and \cite{mog:tor18}.\medskip

Using standard notation \citep[e.g.,][]{ney23}, let $\ey:\T \mapsto \cY$ be an individual-specific response function, with $\T=\{0,1,\dots,T\}$ a finite set of mutually exclusive and exhaustive treatments, and let $\es$ denote the individual's received treatment (taking its realizations in $\T$).\footnote{Here the treatment response is a function only of the (scalar) treatment received by the given individual, an assumption known as \emph{stable unit treatment value assumption} \citep{rub78}.}
The researcher observes data $(\ey,\es,\ex)\sim\sP$, with $\ey\equiv\ey(\es)$ the outcome corresponding to the received treatment $\es$, and $\ex$ a vector of covariates. 
The outcome $\ey(t)$ for $\es\neq t$ is counterfactual, and hence can be conceptualized as missing. 
Therefore, we are in the framework of Identification Problem \ref{IP:bounds:mean:md} and all the results from Section \ref{subsec:missing_data} apply in this context too, subject to adjustments in notation.\footnote{\cite{ber:mol:mol12} and \cite[Section 2.5]{mol:mol18} provide a characterization of the sharp identification region for the joint distribution of $[\ey(t),t\in\T]$.}
For example, using Theorem SIR-\ref{SIR:prob:E:md},
\begin{align}
\idr{\E_\sQ(\ey(t)|\ex=x)}= \Big[\E_\sP&(\ey|\ex=x,\es=t)\sP(\es=t|\ex=x)+ y_0P(\es\neq t|\ex=x),\notag\\
	&\E_\sP(\ey|\ex=x,\es=t)\sP(\es=t|\ex=x)+ y_1P(\es\neq t|\ex=x)\Big],\label{eq:WCB:treat}
\end{align}
where $y_0\equiv\inf_{y\in\cY}y$, $y_1\equiv\sup_{y\in\cY}y$.
If $y_0<\infty$ and/or $y_1<\infty$, these \emph{worst case bounds} are informative.
When both are infinite, the data is uninformative in the absence of additional restrictions.

If the researcher is interested in an Average Treatment Effect (ATE), e.g.
\begin{multline*}
\E_\sQ(\ey(t_1)|\ex=x)-\E_\sQ(\ey(t_0)|\ex=x)=\\
\E_\sP(\ey|\ex=x,\es=t_1)\sP(\es=t_1|\ex=x)+\E_\sQ(\ey(t_1)|\ex=x,\es\neq t_1)\sP(\es\neq t_1|\ex=x)\\
-\E_\sP(\ey|\ex=x,\es=t_0)\sP(\es=t_0|\ex=x)-\E_\sQ(\ey(t_0)|\ex=x,\es\neq t_0)\sP(\es\neq t_0|\ex=x),
\end{multline*} 
with $t_0,t_1\in\T$, sharp worst case bounds on this quantity can be obtained as follows.
First, observe that the empirical evidence reveals $\E_\sP(\ey|\ex=x,\es=t_j)$ and $\sP(\es|\ex=x)$, but is uninformative about $\E_\sQ(\ey(t_j)|\ex=x,\es\neq t_j)$, $j=0,1$.
Each of the latter quantities (the expectations of $\ey(t_0)$ and $\ey(t_1)$ conditional on different realizations of $\es$ and $\ex=x$) can take any value in $[y_0,y_1]$.
Hence, the sharp lower bound on the ATE is obtained by subtracting the upper bound on $\E_\sQ(\ey(t_0)|\ex=x)$ from the lower bound on $\E_\sQ(\ey(t_1)|\ex=x)$.
The sharp upper bound on the ATE is obtained by subtracting the lower bound on $\E_\sQ(\ey(t_0)|\ex=x)$ from the upper bound on $\E_\sQ(\ey(t_1)|\ex=x)$.
The resulting bounds have width equal to $(y_1-y_0)[2-\sP(\es=t_1|\ex=x)-\sP(\es=t_0|\ex=x)]\in[(y_1-y_0),2(y_1-y_0)]$, and hence are informative only if both $y_0>-\infty$ and $y_1<\infty$.
As the largest logically possible value for the ATE (in the absence of information from data) cannot be larger than $(y_1-y_0)$, and the smallest cannot be smaller than $-(y_1-y_0)$, the sharp bounds on the ATE always cover zero.

\begin{BI}
How should one think about the finding on the size of the worst case bounds on the ATE?
On the one hand, if both $y_0<\infty$ and $y_1<\infty$ the bounds are informative, because they are a strict subset of the ATE's possible realizations.
On the other hand, they reveal that the data alone are silent on the sign of the ATE.
This means that assumptions play a crucial role in delivering stronger conclusions about this policy relevant parameter.
The partial identification approach to empirical research recommends that as assumptions are added to the analysis, one systematically reports how each contributes to shrinking the bounds, making transparent their role in shaping inference. 
\end{BI}
What assumptions may researchers bring to bear to learn more about treatment effects of interest?
The literature has provided a wide array of well motivated and useful restrictions.
Here I consider two examples.
The first one entails \emph{shape restrictions} on the treatment response function, leaving selection unrestricted.
\cite{man97:monotone} obtains bounds on treatment effects under the assumption that the response functions are monotone, semi-monotone, or concave-monotone.
These restrictions are motivated by economic theory, where it is commonly presumed, e.g., that demand functions are downward sloping and supply functions are upward sloping.
Let the set $\T$ be ordered in terms of degree of intensity. 
Then \citeauthor{man97:monotone}'s \emph{monotone treatment response} assumption requires that
\begin{align*}
t_1\ge t_0 \Rightarrow \sQ(\ey(t_1)\ge\ey(t_0))=1~~\forall t_0,t_1\in\T.
\end{align*}
Under this assumption, one has a sharp characterization of what can be learned about $\ey(t)$: 
\begin{align}
  \ey(t)\in
  \begin{cases}
    (-\infty,\ey]\cap\cY & \text{if } t<\es,\\
    \{\ey\} & \text{if } t=\es,\\
    [\ey,\infty)\cap\cY & \text{if } t>\es.
  \end{cases}\label{eq:RCS:MTR}
\end{align}
Hence, the sharp bounds on $\E_\sQ(\ey(t)|\ex=x)$ are \citep[Proposition M1]{man97:monotone}
\begin{align}
\idr{\E_\sQ(\ey(t)|\ex=x)}= \Big[\E_\sP&(\ey|\ex=x,\es\le t)\sP(\es\le t|\ex=x)+ y_0P(\es> t|\ex=x),\notag\\
	&\E_\sP(\ey|\ex=x,\es\ge t)\sP(\es \ge t|\ex=x)+ y_1P(\es< t|\ex=x)\Big].\label{eq:MTR:treat}
\end{align}
This finding highlights some important facts.
Under the monotone treatment response assumption, the bounds on $\E_\sQ(\ey(t)|\ex=x)$ are obtained using information from all $(\ey,\es)$ pairs (given $\ex=x$), while the bounds in \eqref{eq:WCB:treat} only use the information provided by $(\ey,\es)$ pairs for which $\es=t$ (given $\ex=x$).
As a consequence, the bounds in \eqref{eq:MTR:treat} are informative even if $\sP(\es= t|\ex=x)=0$, whereas the worst case bounds are not.

Concerning the ATE with $t_1>t_0$, under monotone treatment response its lower bound is zero, and its upper bound is obtained by subtracting the lower bound on $\E_\sQ(\ey(t_0)|\ex=x)$ from the upper bound on $\E_\sQ(\ey(t_1)|\ex=x)$, where both bounds are obtained as in \eqref{eq:MTR:treat} \citep[Proposition M2]{man97:monotone}.

The second example of assumptions used to tighten worst case bounds is that of \emph{exclusion restrictions}, as in, e.g., \cite{man90}.
Suppose the researcher observes a random variable $\ez$, taking its realizations in $\cZ$, such that\footnote{Stronger exclusion restrictions include statistical independence of the response function at each $t$ with $\ez$: $\sQ(\ey(t)|\ez,\ex)=\sQ(\ey(t)|\ex)~\forall t \in\T,~\ex$-a.s.; and statistical independence of the entire response function with $\ez$: $\sQ([\ey(t),t \in\T]|\ez,\ex)=\sQ([\ey(t),t \in\T]|\ex),~\ex$-a.s.
Examples of partial identification analysis under these conditions can be found in \cite{bal:pea97}, \cite{man03}, \cite{kit09}, \cite{ber:mol:mol12}, \cite{mac:sha:vyt18}, and many others.}
\begin{align}
\E_\sQ(\ey(t)|\ez,\ex)=\E_\sQ(\ey(t)|\ex)~~\forall t \in\T,~\ex\text{-a.s.}.\label{eq:ass:MI}
\end{align}
This assumption is treatment-specific, and requires that the treatment response to $t$ is mean independent with $\ez$.
It is easy to show that under the assumption in \eqref{eq:ass:MI}, the bounds on $\E_\sQ(\ey(t)|\ex=x)$ become
\begin{multline}
\idr{\E_\sQ(\ey(t)|\ex=x)}=\Big[\mathrm{ess}\sup_\ez\E_\sP(\ey|\ex=x,\es=t,\ez)\sP(\es=t|\ex=x,\ez)+ y_0P(\es\neq t|\ex=x,\ez),\\
	\mathrm{ess}\inf_\ez\E_\sP(\ey|\ex=x,\es=t,\ez)\sP(\es=t|\ex=x,\ez)+ y_1P(\es\neq t|\ex=x,\ez)\Big].\label{eq:intersection:bounds}
\end{multline}
These are called \emph{intersection bounds} because they are obtained as follows.
Given $\ex$ and $\ez$, one uses \eqref{eq:WCB:treat} to obtain sharp bounds on $\E_\sQ(\ey(t)|\ez=z,\ex=x)$.
Due to the mean independence assumption in \eqref{eq:ass:MI}, $\E_\sQ(\ey(t)|\ex=x)$ must belong to each of these bounds $\ez$-a.s., hence to their intersection.
The expression in \eqref{eq:intersection:bounds} follows.
If the instrument affects the probability of being selected into treatment, or the average outcome for the subpopulation receiving treatment $t$, the bounds on $\E_\sQ(\ey(t)|\ex=x)$ shrink.
If the bounds are empty, the mean independence assumption can be refuted (see Section \ref{sec:misspec} for a discussion of misspecification in partial identification).
\cite{man:pep00,man:pep09} generalize the notion of instrumental variable to \emph{monotone} instrumental variable, and show how these can be used to obtain tighter bounds on treatment effect parameters.\footnote{See \cite[Chapter XXX in this Volume]{che:ros19} for further discussion.}
They also show how shape restrictions and exclusion restrictions can jointly further tighten the bounds.
\cite{man13social} generalizes these findings to the case where treatment response may have social interactions -- that is, each individual's outcome depends on the treatment received by all other individuals.

\subsection{Interval Data}
\label{subsec:interval_data}
Identification Problem \ref{IP:bounds:mean:md}, as well as the treatment evaluation problem in Section \ref{subsec:programme:eval}, is an instance of the more general question of what can be learned about (functionals of) probability distributions of interest, in the presence of interval valued outcome and/or covariate data.
Such data have become commonplace in Economics. 
For example, since the early 1990s the Health and Retirement Study collects income data from survey respondents in the form of brackets, with degenerate (singleton) intervals for individuals who opt to fully reveal their income \citep[see, e.g.,][]{jus:suz95}.
Due to concerns for privacy, public use tax data are recorded as the number of tax payers which belong to each of a finite number of cells \citep[see, e.g.,][]{pic05}.
The Occupational Employment Statistics (OES) program at the Bureau of Labor Statistics \citep{BLS} collects wage data from employers as intervals, and uses these data to construct estimates
for wage and salary workers in more than 800 detailed occupations.
\cite{man:mol10} and \cite{giu:man:mol19round} document the extensive prevalence of rounding in survey responses to probabilistic expectation questions, and propose to use a person's response pattern across different  questions to infer his rounding practice, the result being interpretation of reported numerical values as interval data.
Other instances abound.
Here I focus first on the case of interval outcome data.

\begin{IP}[Interval Outcome Data] 
\label{IP:interval_outcome} Assume that in addition to being compact,  either $\cY$ is countable or $\cY=[y_0,y_1]$, with $y_0=\min_{y\in\cY}y$ and $y_1=\max_{y\in\cY}y$. 
Let $(\yL,\yU,\ex)\sim\sP$ be observable random variables and $\ey$ be an unobservable random variable whose distribution (or features thereof) is of interest, with $\yL,\yU,\ey\in\cY$. 
Suppose that $(\yL,\yU,\ey)$ are such that $\sR(\yL\le\ey\le\yU)=1$.\footnote{\label{fn:missing_special_case_interval}In Identification Problem \ref{IP:bounds:mean:md} the observable variables are $(\ey\ed,\ed,\ex)$, and $(\yL,\yU)$ are determined as follows: $\yL=\ey\ed+y_0(1-\ed)$, $\yU=\ey\ed+y_1(1-\ed)$. For the analysis in Section \ref{subsec:programme:eval}, the data is $(\ey,\es,\ex)$ and $\yL=\ey\one(\es=t)+y_0\one(\es\ne t)$, $\yU=\ey\one(\es=t)+y_1\one(\es\ne t)$.
Hence, $\sP(\yL\le\ey\le\yU)=1$ by construction.}
In the absence of additional information, what can the researcher learn about features of $\sQ(\ey|\ex=x)$, the conditional distribution of $\ey$ given $\ex=x$?
	\qedex
\end{IP}
It is immediate to obtain the sharp identification region
\begin{align*}
\idr{\E_\sQ(\ey|\ex=x)} = \left[\E_\sP(\yL|\ex=x),\E_\sP(\yU|\ex=x)\right].
\end{align*}
As in the previous section, it is also easy to obtain sharp bounds on parameters that respect stochastic dominance, and pointwise-sharp bounds on the CDF of $\ey$ at any fixed $t\in\R$:
\begin{align}
\label{eq:pointwise_bounds_F}
\sP(\yU\le t|\ex=x)\le\sQ(\ey\le t|\ex=x)\le\sP(\yL\le t|\ex=x).
\end{align}
In this case too, however, as in Theorem OR-\ref{OR:CDF_md}, the tube of CDFs satisfying equation \eqref{eq:pointwise_bounds_F} for all $t\in\R$ is an outer region for the CDF of $\ey|\ex=x$, rather than its sharp identification region.
Indeed, also in this context it is easy to construct examples similar to Example \ref{example:CDF_md}.

How can one characterize the sharp identification region for the probability distribution of $\ey|\ex$ when one observes $(\yL,\yU,\ex)$ and assumes $\sR(\yL\le\ey\le\yU)=1$? 
Again, there is not a single answer to this question.
Depending on the specific problem at hand, e.g., the specifics of the interval data and whether $\ey$ is assumed discrete or continuous, different methods can be applied. 
I use \emph{random set theory} to provide a characterization of $\idr{\sQ(\ey|\ex=x)}$. 
Let
\begin{align*}
\eY\equiv [\yL,\yU]\cap\cY.
\end{align*}
Then $\eY$ is a random closed set according to Definition \ref{def:rcs}.\footnote{For a proof of this statement, see \cite[Example 1.11]{mol:mol18}.} The requirement $\sR(\yL\le\ey\le\yU)=1$ can be equivalently expressed as
\begin{align}
\label{eq:y_in_Y}
\ey\in\eY~~\text{almost surely.}
\end{align}
Equation \eqref{eq:y_in_Y}, together with knowledge of $\sP$, exhausts all the information in the data and maintained assumptions.
In order to harness such information to characterize the set of observationally equivalent probability distributions for $\ey$, one can leverage a result due to \cite{art83} \citep[and][]{nor92}, reported in Theorem \ref{thr:artstein} in Appendix \ref{app:RCS}, which allows one to translate \eqref{eq:y_in_Y} into a collection of conditional moment inequalities. 
Specifically, let $\cT$ denote the space of all probability measures with support in $\cY$. 
\begin{SIR}[Conditional Distribution of Interval-Observed Outcome Data]
\label{SIR:CDF_id}
Given $\tau\in\cT$, let $\tau_K(x)$ denote the probability that distribution $\tau$ assigns to set $K$ conditional on $\ex=x$.
Under the assumptions in Identification Problem \ref{IP:interval_outcome}, the sharp identification region for $\sQ(\ey|\ex=x)$ is
	\begin{align}
	\idr{\sQ(\ey|\ex=x)}=\Big\{\tau(x) \in \cT: \tau_K(x) \ge \sP(\eY\subset K|\ex=x),\,\forall K\subset \cY,\,K\text{ compact} \Big\}\label{eq:sharp_id_P_interval_1}
	\end{align}
	When $\cY=[y_0,y_1]$, equation \eqref{eq:sharp_id_P_interval_1} becomes
	\begin{align}
	\idr{\sQ(\ey|\ex=x)}=\Big\{\tau(x) \in \cT: \tau_{[t_0,t_1]}(x) \ge \sP(\yL\ge t_0,\yU\le t_1|\ex=x),\,\forall t_0\le t_1,\, t_0,t_1 \in \cY \Big\}.\label{eq:sharp_id_P_interval_2}
	\end{align}
	\end{SIR}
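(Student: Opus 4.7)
The plan is to invoke Artstein's characterization of the distributions of measurable selections of a random closed set (Theorem \ref{thr:artstein} in Appendix \ref{app:RCS}), applied conditionally on $\ex=x$. The starting observation, which encodes the full content of the maintained assumptions, is that a conditional distribution $\tau(x)$ lies in $\idr{\sQ(\ey|\ex=x)}$ if and only if there exists a random variable $\ey$ with $\ey|\ex=x\sim\tau(x)$ satisfying $\sR(\yL\le\ey\le\yU)=1$; equivalently, $\ey$ is a measurable selection of the random closed set $\eY=[\yL,\yU]\cap\cY$.

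For the necessity direction of the first characterization, I would note that any selection $\ey$ of $\eY$ satisfies $\{\eY\subset K\}\subset\{\ey\in K\}$ pointwise for every compact $K\subset\cY$, so conditioning on $\ex=x$ immediately yields $\tau_K(x)\ge\sP(\eY\subset K\mid\ex=x)$. For the sufficiency direction, Artstein's theorem, applied to the conditional law of $\eY$ given $\ex=x$ via a regular conditional probability, produces, from any $\tau(x)$ satisfying the full family of inequalities, an actual selection of $\eY$ with that conditional distribution. This delivers the general characterization \eqref{eq:sharp_id_P_interval_1}.

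For the interval-case characterization in \eqref{eq:sharp_id_P_interval_2}, I would reduce the family of test sets from arbitrary compact $K\subset\cY$ to closed subintervals $[t_0,t_1]$ by exploiting the connectedness of $\eY$ when $\cY=[y_0,y_1]$. Since $[\yL,\yU]$ is connected, it is contained in $K$ if and only if it lies within a single connected component of $K$. Writing $K$ as the disjoint union of its connected components $\{[s_i,r_i]\}$ (each a closed interval, possibly degenerate), the events $\{\eY\subset[s_i,r_i]\}$ are pairwise disjoint, so $\sP(\eY\subset K\mid\ex=x)=\sum_i\sP(\yL\ge s_i,\yU\le r_i\mid\ex=x)$ and $\tau_K(x)=\sum_i\tau_{[s_i,r_i]}(x)$. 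Summing the interval inequalities over $i$ recovers the general inequality, so the interval family already implies the compact family and the two characterizations coincide. Handling compact sets with uncountably many components (e.g., a Cantor set) requires mild extra care, exploiting that only countably many components can carry positive mass under the joint law of $(\yL,\yU)$.

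The main obstacle is the sufficiency direction in the first step: producing, from an abstract inequality between $\tau(x)$ and the containment functional of $\eY$, an honest selection of $\eY$ whose conditional distribution is $\tau(x)$. This is exactly the content of Artstein's theorem, so the difficulty is offloaded to Appendix \ref{app:RCS}; the remaining effort is the additivity argument for the interval reduction, which is clean once connectedness is invoked.
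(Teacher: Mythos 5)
Your overall architecture coincides with the paper's: the characterization in \eqref{eq:sharp_id_P_interval_1} is obtained by applying Artstein's inequality (Theorem \ref{thr:artstein}) to the conditional law of $\eY=[\yL,\yU]\cap\cY$ given $\ex=x$, which is exactly what the published proof does, and your necessity/sufficiency split there is fine. The difference lies in the passage to \eqref{eq:sharp_id_P_interval_2}: the paper simply cites \cite[Theorem 2.25]{mol:mol18} for the reduction from compact sets to intervals, whereas you attempt a self-contained argument via connected components, and that argument has a genuine gap.

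The problematic step is the identity $\sP(\eY\subset K\mid\ex=x)=\sum_i\sP(\eY\subset C_i\mid\ex=x)$, where the $C_i$ are the connected components of $K$. The events $\{\eY\subset C_i\}$ are indeed pairwise disjoint with union $\{\eY\subset K\}$, but when $K$ has uncountably many components the probability of the union can strictly exceed the countable sum of the components' probabilities; countable additivity only gives ``$\ge$''. Concretely, take $\yL=\yU$ with a continuous distribution and let $K$ be a fat Cantor set of positive mass: then $\sP(\eY\subset K\mid\ex=x)>0$ while every component is a null singleton, so $\sum_i\sP(\eY\subset C_i\mid\ex=x)=0$ and your chain of inequalities establishes only the vacuous bound $\tau_K(x)\ge 0$. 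The patch you propose (``only countably many components carry positive mass'') does not close this, because the failure is not that too many components are charged but that the containment probability of $K$ is not recovered by summing over components at all. The implication you are after is nonetheless true, and your strategy can be repaired: run the disjoint-component argument only for $K$ a \emph{finite} union of closed intervals (where the additivity is unproblematic), then approximate an arbitrary compact $K\subset[y_0,y_1]$ from outside by the envelopes $K^{1/n}=\{y:\dist(y,K)\le 1/n\}$, each of which is a finite union of disjoint closed intervals decreasing to $K$; since $\{\eY\subset K\}=\bigcap_n\{\eY\subset K^{1/n}\}$, continuity from above of both $\tau(\cdot\mid x)$ and the containment functional delivers the inequality for all compact $K$. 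As written, however, the reduction step is incomplete.
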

	\begin{proof}
Theorem \ref{thr:artstein} yields \eqref{eq:sharp_id_P_interval_1}.
	If $\cY=[y_0,y_1]$, \cite[Theorem 2.25]{mol:mol18} show that it suffices to verify the inequalities in \eqref{eq:sharp_id_P_interval_2} for sets $K$ that are intervals.
	\end{proof}
Compare equation \eqref{eq:sharp_id_P_interval_1} with equation \eqref{eq:sharp_id_P_md_Manski}. 
Under the set-up of Identification Problem \ref{IP:bounds:mean:md}, when $\ed=1$ we have $\eY=\{\ey\}$ and when $\ed=0$ we have $\eY=\cY$.
Hence, for any $K \subsetneq \cY$, $\sP(\eY \subset K|\ex=x)=\sP(\ey\in K|\ex=x,\ed=1)\sP(\ed=1)$.\footnote{For $K = \cY$, both \eqref{eq:sharp_id_P_interval_1} and \eqref{eq:sharp_id_P_md_Manski} hold trivially.}  
It follows that the characterizations in \eqref{eq:sharp_id_P_interval_1} and \eqref{eq:sharp_id_P_md_Manski} are equivalent. 
If $\cY$ is countable, it is easy to show that \eqref{eq:sharp_id_P_interval_1} simplifies to \eqref{eq:sharp_id_P_md_Manski} \citep[see, e.g.,][Proposition 2.2]{ber:mol:mol12}. 
\begin{BI}[Random set theory and partial identification]
\label{big_idea:pi_and_rs}
The mathematical framework for the analysis of random closed sets embodied in random set theory is naturally suited to conduct identification analysis and statistical inference in partially identified models.
This is because, as argued by \cite{ber:mol08} and \cite{ber:mol:mol11,ber:mol:mol12}, lack of point identification can often be traced back to a collection of random variables that are consistent with the available data and maintained assumptions. 
In turn, this collection of random variables is equal to the family of selections of a properly specified random closed set, so that random set theory applies. 
The interval data case is a simple example that illustrates this point.
More examples are given throughout this chapter. 
As mentioned in the Introduction, the exercise of defining the random closed set that is relevant for the problem under consideration is routinely carried out in partial identification analysis, even when random set theory is not applied.
For example, in the case of treatment effect analysis with monotone response function, \cite{man97:monotone} derived the set in the right-hand-side of \eqref{eq:RCS:MTR}, which satisfies Definition \eqref{def:rcs}.
\end{BI}
An attractive feature of the characterization in \eqref{eq:sharp_id_P_interval_1} is that it holds regardless of the specific assumptions on $\yL,\,\yU$, and $\cY$.
Later sections in this chapter illustrate how Theorem \ref{thr:artstein} delivers the sharp identification region in other more complex instances of partial identification of probability distributions, as well as in structural models.
In Chapter \textbf{XXX} in this Volume, \cite{che:ros19} apply Theorem \ref{thr:artstein} to obtain sharp identification regions for functionals of interest in the important class of \emph{generalized instrumental variable models}. To avoid repetitions, I do not systematically discuss that class of models in this chapter.

When addressing questions about features of $\sQ(\ey|\ex=x)$ in the presence of interval outcome data, an alternative approach \citep[e.g.][]{tam10,pon:tam11} looks at all (random) mixtures of $\yL,\yU$.
The approach is based on a random variable $\eu$ (a \emph{selection mechanism} that picks an element of $\eY$) with values in $[0,1]$, whose distribution conditional on $\yL,\yU$ is left completely unspecified. 
Using this random variable, one defines
  \begin{align}
    \ey_\eu=\eu\yL+(1-\eu)\yU.\label{eq:y_s}
  \end{align}
The sharp identification region in Theorem SIR-\ref{SIR:CDF_id} can be characterized as the collection of conditional distributions of all possible random variables $\ey_\eu$ as defined in \eqref{eq:y_s}, given $\ex=x$.
This is because each $\ey_\eu$ is a (stochastic) convex combination of $\yL,\yU$, hence each of these random variables satisfies $\sR(\yL\le\ey_\eu\le\yU)=1$.
While such characterization is sharp, it can be of difficult implementation in practice, because it requires working with all possible random variables $\ey_\eu$ built using all possible random variables $\eu$ with support in $[0,1]$.
Theorem \ref{thr:artstein} allows one to bypass the use of $\eu$, and obtain directly a characterization of the sharp identification region for $\sQ(\ey|\ex=x)$ based on conditional moment inequalities.\footnote{It can be shown that the collection of random variables $\ey_\eu$ equals the collection of \emph{measurable selections} of the random closed set $\eY\equiv [\yL,\yU]$ (see Definition \ref{def:selection}); see \cite[Lemma 2.1]{ber:mol:mol11}.
Theorem \ref{thr:artstein} provides a characterization of the distribution of any $\ey_\eu$ that satisfies $\ey_\eu \in \eY$ a.s., based on a dominance condition that relates the distribution of $\ey_\eu$ to the distribution of the random set $\eY$. 
Such dominance condition is given by the inequalities in \eqref{eq:sharp_id_P_interval_1}.
}

\cite{hor:man98,hor:man00} study nonparametric conditional prediction problems with missing outcome and/or missing covariate data.
Their analysis shows that this problem is considerably more pernicious than the case where only outcome data are missing.
For the case of interval covariate data, \cite{man:tam02} provide a set of sufficient conditions under which simple and elegant sharp bounds on functionals of $\sQ(\ey|\ex)$ can be obtained, even in this substantially harder identification problem.
Their assumptions are listed in Identification Problem \ref{IP:interval_covariate}, and their result (with proof) in Theorem SIR-\ref{SIR:man:tam:nonpar}.
\begin{IP}[Interval Covariate Data] 
\label{IP:interval_covariate} 
Let $(\ey,\xL,\xU)\sim\sP$ be observable random variables in $\R\times\R\times\R$ and $\ex\in\R$ be an unobservable random variable.
Suppose that $\sR$, the joint distribution of $(\ey,\ex,\xL,\xU)$, is such that: (I) $\sR(\xL\le\ex\le\xU)=1$; (M) $\E_\sQ(\ey|\ex=x)$ is weakly increasing in $x$; and (MI) $\E_{\sR}(\ey|\ex,\xL,\xU)=\E_\sQ(\ey|\ex)$.
In the absence of additional information, what can the researcher learn about $\E_\sQ(\ey|\ex=x)$ for given $x\in\cX$?
	\qedex
\end{IP}
Compared to the earlier discussion for the interval outcome case, here there are two additional assumptions. 
The monotonicity condition (M) is a simple shape restrictions, which however requires some prior knowledge about the joint distribution of $(\ey,\ex)$.
The mean independence restriction (MI) requires that if $\ex$ were observed, knowledge of $(\xL,\xU)$ would not affect the conditional expectation of $\ey|\ex$.
The assumption is not innocuous, as pointed out by the authors.
For example, it may fail if censoring is endogenous.\footnote{\label{foot:auc:bug:hot17}For the case of missing covariate data, which is a special case of interval covariate data similarly to arguments in footnote \ref{fn:missing_special_case_interval}, \cite{auc:bug:hot17} show that the MI restriction implies the assumption that data is missing at random.}
\begin{SIR}[Conditional Expectation with Interval-Observed Covariate Data]
\label{SIR:man:tam:nonpar}
Under the assumptions of Identification Problem \ref{IP:interval_covariate}, the sharp identification region for $\E_\sQ(\ey|\ex=x)$ for given $x\in\cX$ is
\begin{align}
\label{eq:man:tam:nonpar}
\idr{\E_\sQ(\ey|\ex=x)}=\left[\sup_{\xU\le x}\E_\sP(\ey|\xL,\xU),\inf_{\xL \ge x}\E_\sP(\ey|\xL,\xU)\right].
\end{align}
\end{SIR}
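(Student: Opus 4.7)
The plan is to prove the two inclusions separately: first that the right-hand side of \eqref{eq:man:tam:nonpar} is an outer region (every admissible value of $h(x)\equiv \E_\sQ(\ey|\ex=x)$ lies in it), and then that it is sharp (every value in it is delivered by some admissible joint distribution $\sR$).

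For the outer region direction, I would use the law of iterated expectations together with assumptions (MI) and (M). Specifically, for any $(a,b)$ in the support of $(\xL,\xU)$,
\begin{align*}
\E_\sP(\ey|\xL=a,\xU=b) \;=\; \E_\sR\!\left(\E_\sR(\ey|\ex,\xL,\xU)\,\big|\,\xL=a,\xU=b\right) \;=\; \E_\sR\!\left(h(\ex)\,\big|\,\xL=a,\xU=b\right),
\end{align*}
where the last equality uses (MI). By (I), $\ex\le \xU$ a.s., so when $b\le x$ monotonicity (M) gives $h(\ex)\le h(x)$ on the conditioning event, hence $\E_\sP(\ey|\xL=a,\xU=b)\le h(x)$. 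Taking the essential supremum over $(a,b)$ with $b\le x$ yields the lower bound. The symmetric argument on $\{\xL\ge x\}$ yields the upper bound, showing that $h(x)$ belongs to the interval in \eqref{eq:man:tam:nonpar}.

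For sharpness, I would fix an arbitrary $v$ in the interval and exhibit a joint distribution $\sR^\ast$ of $(\ey,\ex,\xL,\xU)$ whose $(\ey,\xL,\xU)$-marginal equals $\sP$, that satisfies (I), (M), (MI), and under which $\E_{\sQ^\ast}(\ey|\ex=x)=v$. The construction proceeds by specifying a weakly increasing function $h^\ast:\cX\to\R$ with $h^\ast(x)=v$ and a conditional law of $\ex$ given $(\xL,\xU)$ supported on $[\xL,\xU]$ such that $\E_{\sR^\ast}(h^\ast(\ex)|\xL,\xU)=\E_\sP(\ey|\xL,\xU)$ almost surely; one can then take $\sR^\ast$ to be the law that makes $\ey=h^\ast(\ex)+\varepsilon$ with a mean-zero shock whose conditional distribution given $(\ex,\xL,\xU)$ reproduces the observed conditional law of $\ey$ given $(\xL,\xU)$. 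A natural template is to build $h^\ast$ piecewise from the bounding functions $L(w)\equiv \sup_{\xU\le w}\E_\sP(\ey|\xL,\xU)$ and $U(w)\equiv\inf_{\xL\ge w}\E_\sP(\ey|\xL,\xU)$, which are themselves weakly increasing: set $h^\ast(w)=L(w)\wedge v$ for $w<x$, $h^\ast(x)=v$, and $h^\ast(w)=U(w)\vee v$ for $w>x$, and let $\ex$ be a two-point selection from $\{\xL,\xU\}$ with mixing probability chosen interval-by-interval to match $\E_\sP(\ey|\xL,\xU)$.

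The main obstacle is the sharpness step: one must verify simultaneously that the constructed $h^\ast$ is weakly increasing (which uses monotonicity of $L,U$ and the fact that $L(x)\le v\le U(x)$ by the choice of $v$), that the mixing weights on $\{\xL,\xU\}$ lie in $[0,1]$ so that the implied conditional law of $\ex$ given $(\xL,\xU)$ is a bona fide probability (which follows because $\E_\sP(\ey|\xL,\xU)$ is sandwiched between $h^\ast(\xL)$ and $h^\ast(\xU)$), and that (MI) holds under $\sR^\ast$ (which is immediate because by construction $\E_{\sR^\ast}(\ey|\ex=w,\xL,\xU)$ depends only on $w$). The remaining verification that the $(\ey,\xL,\xU)$-marginal of $\sR^\ast$ equals $\sP$ is routine given the freedom in choosing the residual noise.
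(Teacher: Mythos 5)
Your outer-bound step is the same as the paper's: both apply the law of iterated expectations with (MI) to write $\E_\sP(\ey|\xL,\xU)=\int \E_\sQ(\ey|\ex)\,d\sR(\ex|\xL,\xU)$, then use (I) and (M) to sandwich this integral between $\E_\sQ(\ey|\ex=\xL)$ and $\E_\sQ(\ey|\ex=\xU)$, and finally take the supremum over $\{\xU\le x\}$ and the infimum over $\{\xL\ge x\}$. Where you diverge is on sharpness. The paper's argument is a one-liner: the bounding function $L(w)=\sup_{\xU\le w}\E_\sP(\ey|\xL,\xU)$ is itself weakly increasing in $w$, hence is an admissible candidate for $\E_\sQ(\ey|\ex=\cdot)$ under (M), so the lower endpoint is attained (and symmetrically for the upper). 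You instead exhibit, for an arbitrary interior value $v$, a monotone $h^\ast$ spliced from $L\wedge v$ and $U\vee v$ together with a conditional law for $\ex$ on $\{\xL,\xU\}$ whose mixing weight matches $\E_\sP(\ey|\xL,\xU)$; this is a more complete sharpness argument in that it shows every point of the interval, not just the endpoints, is attainable, and your verification that the weights lie in $[0,1]$ (via $h^\ast(\xL)\le\E_\sP(\ey|\xL,\xU)\le h^\ast(\xU)$) is correct. The one loose end is your final claim that matching the full conditional law of $\ey$ given $(\xL,\xU)$ is "routine": with $\ex$ supported on the two points $\{\xL,\xU\}$, you need to split the observed conditional distribution of $\ey$ into two components with prescribed weights and prescribed means $h^\ast(\xL)\neq h^\ast(\xU)$, which is infeasible when that conditional distribution is too concentrated (e.g., degenerate at its mean). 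This can be repaired—for instance by instead placing $\ex$ at a single point of $[\xL,\xU]$ where $h^\ast$ crosses $\E_\sP(\ey|\xL,\xU)$, modifying $h^\ast$ off the relevant support if needed—but as written it is asserted rather than proved; to be fair, the paper's own proof does not address this construction at all.
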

\begin{proof}
The law of iterated expectations and the independence assumption yield $\E_\sP(\ey|\xL,\xU)=\int \E_\sQ(\ey|\ex)d\sR(\ex|\xL,\xU)$.
For all $\underline{x}\le \bar{x}$, the monotonicity assumption and the fact that $\ex\in[\xL,\xU]$-a.s. yield $\E_\sQ(\ey|\ex=\underline{x})\le \int \E_\sQ(\ey|\ex)d\sR(\ex|\xL=\underline{x},\xU=\bar{x}) \le \E_\sQ(\ey|\ex=\bar{x})$.
Putting this together with the previous result, $\E_\sQ(\ey|\ex=\underline{x})\le \E_\sP(\ey|\xL=\underline{x},\xU=\bar{x}) \le \E_\sQ(\ey|\ex=\bar{x})$.
Then (using again the monotonicity assumption) for any $x\ge \bar{x}$, $\E_{\sP}(\ey|\xL=\underline{x},\xU=\bar{x}) \le \E_\sQ(\ey|\ex=x)$ so that the lower bound holds.
The bound is weakly increasing as a function of $x$, so that the monotonicity assumption on $\E_\sQ(\ey|\ex=x)$ holds and the bound is sharp.
The argument for the upper bound can be concluded similarly.
\end{proof}

Learning about functionals of $\sQ(\ey|\ex=x)$ naturally implies learning about predictors of $\ey|\ex=x$.
For example, $\idr{\E_\sQ(\ey|\ex=x)}$ yields the collection of values for the best predictor under square loss;
$\idr{\M_\sQ(\ey|\ex=x)}$, with $\M_\sQ$ the median with respect to distribution $\sQ$, yields the collection of values for the best predictor under absolute loss.
And so on.
A related but distinct problem is that of \emph{parametric} conditional prediction.
Often researchers specify not only a loss function for the prediction problem, but also a parametric family of predictor functions, and wish to learn the member of this family that minimizes expected loss.
To avoid confusion, let me clarify that here I am not referring to a parametric assumption on the best predictor, e.g., that $\E_\sQ(\ey|\ex)$ is a linear function of $\ex$.
I return to such assumptions at the end of this section.
For now, in the example of linearity and square loss, I am referring to best linear prediction, i.e., best linear approximation to $\E_\sQ(\ey|\ex)$.
\cite[pp. 56-58]{man03} discusses what can be learned about the best linear predictor of $\ey$ conditional on $\ex$, when only interval data on $(\ey,\ex)$ is available.

I treat first the case of interval outcome and perfectly observed covariates.
\begin{IP}[Parametric Prediction with Interval Outcome Data] 
\label{IP:param_pred_interval}
Maintain the same assumptions as in Identification Problem \ref{IP:interval_outcome}.
Let $(\yL,\yU,\ex)\sim\sP$ be observable random variables and $\ey$ be an unobservable random variable, with $\sR(\yL\le\ey\le\yU)=1$.
In the absence of additional information, what can the researcher learn about the best linear predictor of $\ey$ given $\ex=x$?
	\qedex
\end{IP}
For simplicity suppose that $\ex$ is a scalar, and let $\theta=[\theta_0~\theta_1]^\top\in\Theta\subset\R^2$ denote the parameter vector of the best linear predictor of $\ey|\ex$. 
Assume that $Var(\ex)>0$.
Combining the definition of best linear predictor with a characterization of the sharp identification region for the joint distribution of $(\ey,\ex)$, we have that
\begin{equation}
\idr{\theta}=\left\{ \vartheta
=\arg \min \int \left( y-\theta_0-\theta_1x\right)^2 d\eta
,~\eta \in \idr{\sQ(\ey,\ex)}\right\} ,  \label{eq:manski_blp}
\end{equation}
where, using an argument similar to the one in Theorem SIR-\ref{SIR:CDF_id},
\begin{multline}
\idr{\sQ(\ey,\ex)}= \Big\{\eta : \eta_{([t_0,t_1],(-\infty,s])}\ge 
\sP(\yL\ge t_0,\yU\le t_1,\ex \le s)\\
\forall t_0\le t_1,t_0,t_1\in\R,\forall s\in \R\Big\}.
\label{eq:Qyx}
\end{multline}
\cite[Proposition 4.1]{ber:mol08} show that \eqref{eq:manski_blp} can be re-written in an intuitive way that generalizes the well-known formula for the best linear predictor that arises when $\ey$ is perfectly observed.
Define the random segment $\eG$ and the matrix $\Sigma_\sP$ as
\begin{align}
  \eG=\left\{
    \begin{pmatrix}
      \ey\\ \ey\ex 
    \end{pmatrix}
    :\; \ey \in \Sel(\eY)\right\}\subset\R^2,
   ~~\text{and}~~  \Sigma_\sP=\E_\sP 
  \begin{pmatrix}
    1 & \ex\\ \ex & \ex^2
  \end{pmatrix},\label{eq:G_and_Sigma}
\end{align}
where $\Sel(\eY)$ is the set of all measurable selections from $\eY$, see Definition \ref{def:selection}. Then,
\begin{SIR}[Best Linear Predictor with Interval Outcome Data]
\label{SIR:BLP_intervalY}
Under the assumptions of Identification Problem \ref{IP:param_pred_interval}, the sharp identification region for the parameters of the best linear predictor of $\ey|\ex$ is
\begin{equation}
  \label{eq:ThetaI_BLP}
  \idr{\theta}=  \Sigma_\sP^{-1} \E_\sP\eG,
\end{equation}
with $\E_\sP\eG$ the Aumann (or selection) expectation of $\eG$ as in Definition \ref{def:sel-exp}.
\end{SIR}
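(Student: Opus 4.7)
The plan is to reduce the sharp identification region to the classical BLP formula on each admissible joint distribution, and then repackage the resulting set using the Aumann expectation. The starting point is \eqref{eq:manski_blp}, where the minimization is over distributions $\eta \in \idr{\sQ(\ey,\ex)}$ characterized in \eqref{eq:Qyx}.

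First I would exploit that, although the joint law $\sQ(\ey,\ex)$ is only partially identified, its $\ex$-marginal is point identified from $\sP$ because $\ex$ is fully observed; in the constraints of \eqref{eq:Qyx}, taking $[t_0,t_1]=\cY$ forces $\eta_\ex=\sP_\ex$. Consequently $\Sigma_\eta = \Sigma_\sP$ for every $\eta \in \idr{\sQ(\ey,\ex)}$, and $\Sigma_\sP$ is invertible because $\var(\ex)>0$. The standard BLP first-order conditions then give, for each such $\eta$,
\begin{equation*}
\vartheta_\eta \;=\; \Sigma_\sP^{-1}\, \E_\eta\bigl(\ey, \ey\ex\bigr)^\top,
\end{equation*}
so pulling the fixed matrix $\Sigma_\sP^{-1}$ outside of \eqref{eq:manski_blp} reduces matters to showing
\begin{equation*}
\bigl\{\E_\eta(\ey, \ey\ex)^\top : \eta \in \idr{\sQ(\ey,\ex)}\bigr\} \;=\; \E_\sP \eG.
\end{equation*}

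Next I would identify the left-hand set with the set of expectations of measurable selections of $\eG$. Invoking the selection representation recalled at the end of Section \ref{subsec:interval_data}, every $\eta \in \idr{\sQ(\ey,\ex)}$ arises as the joint law of $(\ey_\eu, \ex)$ for some $\ey_\eu = \eu\yL + (1-\eu)\yU \in \Sel(\eY)$. The vector $(\ey_\eu, \ey_\eu \ex)^\top$ is then a measurable selection of $\eG$ whose expectation equals $\E_\eta(\ey, \ey\ex)^\top$. Conversely, any $\es \in \Sel(\eG)$ has first coordinate $\ey \in \Sel(\eY)$, so by the joint-distribution analogue of Theorem SIR-\ref{SIR:CDF_id} the joint law of $(\ey, \ex)$ lies in $\idr{\sQ(\ey,\ex)}$. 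This gives the two-sided inclusion between the sets displayed above.

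Finally I would invoke the fundamental identity for Aumann (selection) expectations on nonatomic probability spaces: if $\eG$ is an integrably bounded random closed set, the set of expectations of its measurable selections coincides with $\E_\sP \eG$ (Definition \ref{def:sel-exp} and the accompanying results in Appendix \ref{app:RCS}). Compactness of $\cY$ makes $\eG$ integrably bounded, and nonatomicity of $(\Omega,\salg,\P)$ is assumed throughout. Combining the three steps delivers \eqref{eq:ThetaI_BLP}. The main obstacle is the selection-theoretic second step: one must verify carefully that the correspondence between $\idr{\sQ(\ey,\ex)}$ and $\Sel(\eG)$ is tight in both directions — the existence of a mixing variable $\eu$ generating a prescribed admissible $\eta$ is the nontrivial ingredient, but it is exactly what the sharpness characterization in Theorem SIR-\ref{SIR:CDF_id} (and Artstein's theorem underpinning it, Theorem \ref{thr:artstein}) supplies.
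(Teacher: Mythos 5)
Your proposal is correct and follows essentially the same route as the paper's proof: Artstein's theorem (Theorem \ref{thr:artstein}) identifies the observationally equivalent joint laws in $\idr{\sQ(\ey,\ex)}$ with the distributions of selections of $\eY\times\ex$, and the definition of the Aumann expectation then converts the collection of implied BLP coefficients $\Sigma_\sP^{-1}\E_\eta(\ey,\ey\ex)^\top$ into $\Sigma_\sP^{-1}\E_\sP\eG$. You are simply making explicit the steps the paper compresses into ``the result follows'' (including the observation that the $\ex$-marginal, and hence $\Sigma_\sP$, is common to all admissible $\eta$ because any selection of $\eY\times\ex$ has second coordinate equal to $\ex$).
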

\begin{proof}
By Theorem \ref{thr:artstein}, $(\tilde\ey,\tilde\ex)\in(\eY\times\ex)$ (up to an ordered coupling as discussed in Appendix \ref{app:RCS}), if and only if the distribution of $(\tilde\ey,\tilde\ex)$ belongs to $\idr{\sQ(\ey,\ex)}$.
The result follows.
\end{proof}
In either representation \eqref{eq:manski_blp} or \eqref{eq:ThetaI_BLP}, $\idr{\theta}$ is the collection of best linear predictors for each selection of $\eY$.\footnote{Under our assumption that $\cY$ is a bounded interval, all the selections of $\eY$ are integrable. \cite{ber:mol08} consider the more general case where $\cY$ is not required to be bounded.}
Why should one bother with the representation in \eqref{eq:ThetaI_BLP}?
The reason is that $\idr{\theta}$ is a convex set, as it can be evinced from representation \eqref{eq:ThetaI_BLP}: $\eG$ has almost surely convex realizations that are segments and the Aumann expectation of a convex set is convex.\footnote{In $\R^2$ in our example, in $\R^d$ if $\ex$ is a $d-1$ vector and the predictor includes an intercept.}
Hence, it can be equivalently represented through its \emph{support function} $h_{\idr{\theta}}$, see Definition \ref{def:sup-fun} and equation \eqref{eq:rocka}. In particular, in this example,
\begin{align}
\label{eq:supfun:BLP}
h_{\idr{\theta}}(u)=\E_\sP[(\yL\one(f(\ex,u)<0)+\yU\one(f(\ex,u)\ge 0))f(\ex,u)],~~u\in\mathbb{S},
\end{align}
where $f(\ex,u)\equiv [1~\ex]\Sigma_\sP^{-1}u$.\footnote{See \cite[p. 808]{ber:mol08} and \cite[p. 1136]{bon:mag:mau12}.}
The characterization in \eqref{eq:supfun:BLP} results from Theorem \ref{thr:exp-supp}, which yields $h_{\idr{\theta}}(u)=h_{\Sigma_\sP^{-1} \E_\sP\eG}(u)=\E_\sP h_{\Sigma_\sP^{-1} \eG}(u)$, and the fact that $\E_\sP h_{\Sigma_\sP^{-1} \eG}(u)$ equals the expression in \eqref{eq:supfun:BLP}.
As I discuss in Section \ref{sec:inference} below, because the support function fully characterizes the boundary of $\idr{\theta}$, \eqref{eq:supfun:BLP} allows for a simple sample analog estimator, and for inference procedures with desirable properties.
It also immediately yields sharp bounds on linear combinations of $\theta$ by judicious choice of $u$.\footnote{For example, in the case that $\ex$ is a scalar, sharp bounds on $\theta_1$ can be obtained by choosing $u=[0~1]^\top$ and $u=[0~-1]^\top$, which yield $\theta_1\in[\theta_{1L},\theta_{1U}]$ with $\theta_{1L}=\min_{\ey\in[\yL,\yU]}\frac{Cov(\ex,\ey)}{Var(\ex)}=\frac{\E_\sP[(\ex-\E_\sP\ex)(\yL\one(\ex >\E_\sP\ex)+\yU\one(\ex\le\E\ex))]}{\E_\sP\ex^2-(\E_\sP\ex)^2}$ and $\theta_{1U}=\max_{\ey\in[\yL,\yU]}\frac{Cov(\ex,\ey)}{Var(\ex)}=\frac{\E_\sP[(\ex-\E_\sP\ex)(\yL\one(\ex <\E_\sP\ex)+\yU\one(\ex\ge\E\ex))]}{\E_\sP\ex^2-(\E_\sP\ex)^2}$.}
\cite{sto07} and \cite{mag:mau08} provide the same characterization as in \eqref{eq:supfun:BLP} using, respectively, direct optimization and the Frisch-Waugh-Lovell theorem. \medskip

A natural generalization of Identification Problem \ref{IP:param_pred_interval} allows for both outcome and covariate data to be interval valued.
\begin{IP}[Parametric Prediction with Interval Outcome and Covariate Data]
\label{IP:param_pred_interval_out_cov}
Maintain the same assumptions as in Identification Problem \ref{IP:param_pred_interval}, but with $\ex\in\cX\subset\R$ unobservable.
Let the researcher observe $(\yL,\yU,\xL,\xU)$ such that $\sR(\yL \leq \ey \leq \yU , \xL \leq \ex \leq \xU)=1$. 
Let $\eX\equiv [\xL,\xU]$ and let $\cX$ be bounded.
In the absence of additional information, what can the researcher learn about the best linear predictor of $\ey$ given $\ex=x$?
	\qedex
\end{IP}
Abstractly, $\idr{\theta}$ is as given in \eqref{eq:manski_blp}, with 
\begin{align*}
\idr{\sQ(\ey,\ex)}= \left\{\eta : \eta_K\ge 
\sP((\eY\times\eX)\subset K) ~\forall \text{ compact } K\subset \cY\times\cX\right\}
\end{align*}
replacing \eqref{eq:Qyx} by an application of Theorem \ref{thr:artstein}.
While this characterization is sharp, it is cumbersome to apply in practice, see \cite{hor:man:pon:sto03}.

On the other hand, when both $\ey$ and $\ex$ are perfectly observed, the best linear predictor is simply equal to the parameter vector that yields a mean zero prediction error that is uncorrelated with $\ex$.
How can this basic observation help in the case of interval data?
The idea is that one can use the same insight applied to the set-valued data, and obtain $\idr{\theta}$ as the collection of $\theta$'s for which there exists a selection $(\tilde{\ey},\tilde{\ex}) \in \Sel(\eY \times \eX)$, and associated prediction error $\eps_\theta=\tilde{\ey}-\theta_0-\theta_1 \tilde{\ex}$, satisfying $\E_\sP \eps_\theta=0$ and $\E_\sP (\eps_\theta \tilde{\ex})=0$ \citep[as shown by][]{ber:mol:mol11}.\footnote{Here for simplicity I suppose that both $\xL$ and $\xU$ have bounded support.
\cite{ber:mol:mol11} do not make this simplifying assumption.} 
To obtain the formal result, define the $\theta$-dependent set\footnote{Note that while $\eG$ is a convex set, $\Eps_\theta$ is not.}
  \[\Eps_\theta = \left\lbrace \begin{pmatrix}
  \tilde{\ey}-\theta_0-\theta_1 \tilde{\ex} \\ 
  (\tilde{\ey}-\theta_0-\theta_1 \tilde{\ex})\tilde{\ex}
  \end{pmatrix} \: : \, (\tilde{\ey},\tilde{\ex}) \in \Sel(\eY \times\eX) \right\rbrace. \]
\begin{SIR}[Best Linear Predictor with Interval Outcome and Covariate Data]
\label{SIR:blp_intervalYX}
Under the assumptions of Identification Problem \ref{IP:param_pred_interval_out_cov}, the sharp identification region for the parameters of the best linear predictor of $\ey|\ex$ is
\begin{align}
  \label{eq:ThetaI:BLP}
  \idr{\theta} = \{\theta\in\Theta:\mathbf{0}\in \E_\sP\Eps_\theta\}
  = \left\{\theta\in\Theta:~
    \min_{u \in \Ball}\E_\sP h_{\Eps_\theta}(u) = 0 \right\},
\end{align}
where $h_{\Eps_\theta}(u) = \max_{y\in\eY,x\in\eX}  [u_1(y-\theta_0-\theta_1 x)+ u_2(yx-\theta_0 x-\theta_1 x^2)]$ is the support function of the set $\Eps_\theta$ in direction $u\in\Sphere$, see Definition \ref{def:sup-fun}.
\end{SIR}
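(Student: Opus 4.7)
The plan is to establish the two equalities in the stated sharp identification region sequentially, and then to reduce the second to the first via a standard convex-analysis argument applied to the Aumann expectation.

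First, I would argue that $\theta\in\idr{\theta}$ if and only if there exists a measurable selection $(\tilde\ey,\tilde\ex)\in\Sel(\eY\times\eX)$ for which $\theta$ is the BLP coefficient vector of $\tilde\ey$ on $\tilde\ex$. This uses representation \eqref{eq:manski_blp} together with Theorem \ref{thr:artstein}: the set of distributions $\eta$ satisfying $\eta_K\ge\sP((\eY\times\eX)\subset K)$ for all compact $K\subset\cY\times\cX$ coincides, under nonatomicity of $(\Omega,\salg,\P)$ and boundedness of $\cY$ and $\cX$, with the collection of distributions of selections from $\eY\times\eX$. By the OLS normal equations, $\theta$ is the BLP coefficient vector for such a selection if and only if the associated prediction error $\eps_\theta=\tilde\ey-\theta_0-\theta_1\tilde\ex$ satisfies $\E_\sP\eps_\theta=0$ and $\E_\sP[\eps_\theta\tilde\ex]=0$. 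Stacking these two orthogonality conditions into a vector and recalling the definition of $\Eps_\theta$, existence of such a selection is exactly the statement $\mathbf{0}\in\E_\sP\Eps_\theta$, where the Aumann expectation is the set of expectations of integrable selections from $\Eps_\theta$. This yields the first equality.

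For the second equality, I would combine three ingredients: (a) Aumann's convexity theorem for nonatomic probability spaces, which guarantees that $\E_\sP\Eps_\theta$ is convex, with boundedness of $\cY,\cX$ upgrading this to compactness; (b) the standard fact that for any closed convex $K\subset\R^2$, $\mathbf{0}\in K$ if and only if $h_K(u)\ge 0$ for every $u\in\Ball$ (by separating hyperplanes); (c) the commutativity of expectation and support function, $h_{\E_\sP\Eps_\theta}(u)=\E_\sP h_{\Eps_\theta}(u)$, from the appendix on random set theory. Chaining these, $\mathbf{0}\in\E_\sP\Eps_\theta$ is equivalent to $\E_\sP h_{\Eps_\theta}(u)\ge 0$ for all $u\in\Ball$. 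Since $h_K(\mathbf{0})\equiv 0$, the infimum of $\E_\sP h_{\Eps_\theta}(\cdot)$ over $\Ball$ is never strictly positive and is attained, so nonnegativity is equivalent to $\min_{u\in\Ball}\E_\sP h_{\Eps_\theta}(u)=0$, which is the desired form.

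The main obstacle is the tension flagged in the footnote that $\Eps_\theta$ is \emph{not} itself convex, so Theorem \ref{thr:artstein} cannot be applied directly to $\Eps_\theta$; it must be applied at the level of $\eY\times\eX$ and then the nonlinear map $(y,x)\mapsto\bigl(y-\theta_0-\theta_1 x,\,(y-\theta_0-\theta_1 x)x\bigr)$ must be pushed through. The convexification happens only at the level of the Aumann expectation, thanks to nonatomicity, and this is precisely what makes the support-function characterization in \eqref{eq:ThetaI:BLP} legitimate. A secondary point to verify carefully is the integrability of the support function $h_{\Eps_\theta}(u)$, which follows from the quadratic bound $|h_{\Eps_\theta}(u)|\le C(1+\|\ex\|+\|\ex\|^2)$ under the boundedness of $\cX$ and $\cY$, ensuring that the interchange in (c) is valid uniformly in $u\in\Ball$.
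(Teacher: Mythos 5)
Your proposal is correct and follows essentially the same route as the paper's proof: Artstein's theorem applied to $\eY\times\eX$ to characterize selections, the orthogonality conditions translated into $\mathbf{0}\in\E_\sP\Eps_\theta$, and then convexity of the Aumann expectation together with the support-function domination condition \eqref{eq:dom_Aumann} and the interchange \eqref{eq:supf} to obtain the $\min$ formulation. Your added remarks — that Artstein must be applied to $\eY\times\eX$ rather than to the non-convex $\Eps_\theta$, and that the minimum over $\Ball$ is attained at $u=\mathbf{0}$ where the support function vanishes — are correct elaborations of steps the paper leaves implicit.
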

\begin{proof}
By Theorem \ref{thr:artstein}, $(\tilde\ey,\tilde\ex)\in(\eY\times\eX)$ (up to an ordered coupling as discussed in Appendix \ref{app:RCS}), if and only if the distribution of $(\tilde\ey,\tilde\ex)$ belongs to $\idr{\sQ(\ey,\ex)}$.
For given $\theta$, one can find $(\tilde\ey,\tilde\ex)\in(\eY\times\eX)$ such that $\E_\sP \eps_\theta=0$ and $\E_\sP (\eps_\theta \tilde{\ex})=0$ with $\eps_\theta\in\Eps_\theta$ if and only if the zero vector belongs to $\E_\sP \Eps_\theta$. 
By Theorem \ref{thr:exp-supp}, $\E_\sP \Eps_\theta$ is a convex set and by \eqref{eq:dom_Aumann}, $\mathbf{0} \in \E_\sP \Eps_\theta$ if and only if $0 \leq h_{\E_\sP \Eps_\theta}(u) \,\forall \, u \in \Ball$. 
The final characterization follows from \eqref{eq:supf}.
\end{proof}
The support function $h_{\Eps_\theta}(u)$ is an easy to calculate convex sublinear function of $u$, regardless of whether the variables involved are continuous or discrete.
The optimization problem in (\ref{eq:ThetaI:BLP}), determining whether $\theta \in \idr{\theta}$, is a convex program, hence easy to solve. 
See for example the CVX software by \cite{gra:boy10}. 
It should be noted, however, that the set $\idr{\theta}$ itself is not necessarily convex. 
Hence, tracing out its boundary is non-trivial.
I discuss computational challenges in partial identification in Section \ref{sec:computations}.

I conclude this section by discussing parametric regression.
\cite{man:tam02} study identification of parametric regression models under the assumptions in Identification Problem \ref{IP:man:tam02_param}; Theorem SIR-\ref{SIR:man:tam02_param} below reports the result.
The proof is omitted because it follows immediately from the proof of Theorem SIR-\ref{SIR:man:tam:nonpar}.
\begin{IP}[Parametric Regression with Interval Covariate Data]
\label{IP:man:tam02_param}
Let $(\ey,\xL,\xU,\ew)\sim\sP$ be observable random variables in $\R\times\R\times\R\times\R^d$, $d<\infty$, and let $\ex\in\R$ be an unobservable random variable.
Assume that the joint distribution $\sR$ of $(\ey,\ex,\xL,\xU)$ is such that $\sR(\xL\le\ex\le\xU)=1$ and $\E_{\sR}(\ey|\ew,\ex,\xL,\xU)=\E_\sQ(\ey|\ew,\ex)$. 
Suppose that $\E_\sQ(\ey|\ew,\ex)=f(\ew,\ex;\theta)$, with $f:\R^d\times\R\times\Theta \mapsto \R$ a known function such that for each $w\in\R$ and $\theta\in\Theta$, $f(w,x;\theta)$ is weakly increasing in $x$.
In the absence of additional information, what can the researcher learn about $\theta$?
	\qedex
\end{IP}
\begin{SIR}[Parametric Regression with Interval Covariate Data]
\label{SIR:man:tam02_param}
	Under the Assumptions of Identification Problem \ref{IP:man:tam02_param}, the sharp identification region for $\theta$ is
	\begin{multline}
	\idr{\theta}=\big\{\vartheta\in \Theta: f(\ew,\xL;\vartheta)\le \E_\sP(\ey|\ew,\xL,\xU) \le f(\ew,\xU;\vartheta),~(\ew,\xL,\xU)\text{-a.s.} \big\}.\label{eq:ThetaI_man:tam02_param}
	\end{multline}
\end{SIR}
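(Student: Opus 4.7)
The plan is to prove the two inclusions in \eqref{eq:ThetaI_man:tam02_param} separately, closely following the structure of the proof of Theorem SIR-\ref{SIR:man:tam:nonpar}.

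For necessity, I would suppose $\vartheta\in\idr{\theta}$, so that some law $\sR$ delivers the observed $\sP$ on $(\ey,\ew,\xL,\xU)$ and satisfies $\sR(\xL\le\ex\le\xU)=1$ together with the mean independence restriction $\E_\sR(\ey|\ew,\ex,\xL,\xU)=\E_\sQ(\ey|\ew,\ex)=f(\ew,\ex;\vartheta)$. The law of iterated expectations combined with mean independence then yields
\[
\E_\sP(\ey|\ew,\xL,\xU)=\int f(\ew,\ex;\vartheta)\,d\sR(\ex|\ew,\xL,\xU).
\]
Because $\ex\in[\xL,\xU]$ almost surely and $f(\ew,\cdot;\vartheta)$ is weakly increasing, the right-hand side is squeezed between $f(\ew,\xL;\vartheta)$ and $f(\ew,\xU;\vartheta)$, which is precisely the inequality appearing in \eqref{eq:ThetaI_man:tam02_param}.

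For sufficiency I would fix any $\vartheta$ satisfying \eqref{eq:ThetaI_man:tam02_param} and explicitly construct a valid $\sR$. The construction keeps the marginal of $\sP$ on $(\ey,\ew,\xL,\xU)$ intact and, at each realization of $(\ew,\xL,\xU)$, assigns $\ex=\ex^\ast(\ew,\xL,\xU)\in[\xL,\xU]$ to any value satisfying $f(\ew,\ex^\ast;\vartheta)=\E_\sP(\ey|\ew,\xL,\xU)$, with $\ex$ conditionally independent of $\ey$ given $(\ew,\xL,\xU)$. Under mild regularity (e.g.\ continuity of $f$ in $x$), the intermediate value theorem applied to the inequalities in \eqref{eq:ThetaI_man:tam02_param} guarantees such an $\ex^\ast$ exists, and a standard measurable selection argument delivers measurability of $(\ew,\xL,\xU)\mapsto\ex^\ast$. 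Then $\sR(\xL\le\ex\le\xU)=1$ by construction, conditional independence gives $\E_\sR(\ey|\ew,\ex,\xL,\xU)=\E_\sP(\ey|\ew,\xL,\xU)=f(\ew,\ex;\vartheta)$ so that mean independence is satisfied, and the implied $\sQ$ satisfies $\E_\sQ(\ey|\ew,\ex)=f(\ew,\ex;\vartheta)$ by iterated expectations. Hence $\vartheta\in\idr{\theta}$.

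The main obstacle will be the sufficiency construction when $f(\ew,\cdot;\vartheta)$ is only weakly monotone and possibly discontinuous in $x$, so that no $\ex^\ast$ solves $f(\ew,\ex^\ast;\vartheta)=\E_\sP(\ey|\ew,\xL,\xU)$ exactly. I would handle this by replacing the deterministic choice with a two-point mixture placing mass $\lambda$ at $\xL$ and $1-\lambda$ at $\xU$, with $\lambda=[f(\ew,\xU;\vartheta)-\E_\sP(\ey|\ew,\xL,\xU)]/[f(\ew,\xU;\vartheta)-f(\ew,\xL;\vartheta)]$ whenever the denominator is nonzero; the inequality in \eqref{eq:ThetaI_man:tam02_param} ensures $\lambda\in[0,1]$, the degenerate case $f(\ew,\xL;\vartheta)=f(\ew,\xU;\vartheta)$ is immediate, and the remaining verification of the conditions goes through as before, with measurability of the kernel again secured by a standard measurable selection theorem.
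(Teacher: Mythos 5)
Your necessity argument is exactly the one the paper relies on (the proof is omitted there because it is the first half of the proof of Theorem SIR-\ref{SIR:man:tam:nonpar}, transplanted with $\E_\sQ(\ey|\ew,\ex)=f(\ew,\ex;\vartheta)$): the law of iterated expectations plus MI express $\E_\sP(\ey|\ew,\xL,\xU)$ as a mixture of $f(\ew,\ex;\vartheta)$ over $\ex\in[\xL,\xU]$, and weak monotonicity sandwiches that mixture. Your sufficiency argument is more explicit than anything in the paper, and your first construction is correct when $f(\ew,\cdot;\vartheta)$ is continuous: taking $\ex^*=\inf\{x\in[\xL,\xU]:f(\ew,x;\vartheta)\ge\E_\sP(\ey|\ew,\xL,\xU)\}$ gives a measurable, degenerate kernel, and since $\sigma(\ew,\ex,\xL,\xU)=\sigma(\ew,\xL,\xU)$ for this $\ex$, both MI and the parametric restriction hold by construction.

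The fallback for discontinuous $f$ does not work, however. With the two-point mixture and $\ex$ conditionally independent of $\ey$ given $(\ew,\xL,\xU)$, you get $\E_\sR(\ey|\ew,\ex,\xL,\xU)=\E_\sP(\ey|\ew,\xL,\xU)$, a single number, whereas MI together with the model requires this conditional expectation to equal $f(\ew,\xL;\vartheta)$ on the event $\{\ex=\xL\}$ and $f(\ew,\xU;\vartheta)$ on $\{\ex=\xU\}$, both of which occur with positive probability. Your choice of $\lambda$ only equates the $\ex$-average of $f(\ew,\ex;\vartheta)$ to $\E_\sP(\ey|\ew,\xL,\xU)$, which is the (necessary) iterated-expectations consistency condition, not the pointwise-in-$\ex$ MI condition. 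Repairing the construction would require splitting the conditional law of $\ey$ into a $\lambda$ versus $(1-\lambda)$ mixture whose components have means $f(\ew,\xL;\vartheta)$ and $f(\ew,\xU;\vartheta)$ and coupling those components to $\ex=\xL$ and $\ex=\xU$; this is impossible in general (take $\ey$ degenerate at $1/2$, $\xL=0$, $\xU=1$, $f(w,x;\vartheta)=\one(x\ge 1/2)$: the inequalities in \eqref{eq:ThetaI_man:tam02_param} hold, yet no admissible $\sR$ exists because every conditional mean of $\ey$ equals $1/2$). So either continuity of $f$ in $x$ (or surjectivity of $x\mapsto f(\ew,x;\vartheta)$ onto $[f(\ew,\xL;\vartheta),f(\ew,\xU;\vartheta)]$) must be treated as a maintained assumption, in which case your first construction already completes the proof and the patch is unnecessary, or the right-hand side of \eqref{eq:ThetaI_man:tam02_param} is only an outer region. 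Your instinct that discontinuity is the main obstacle is exactly right; the two-point patch just does not clear it.
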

\cite{auc:bug:hot17} study Identification Problem \ref{IP:man:tam02_param} for the case of missing covariate data \emph{without} imposing the mean independence restriction of \cite{man:tam02} (Assumption MI in Identification Problem \ref{IP:interval_covariate}).
As discussed in footnote \ref{foot:auc:bug:hot17}, restriction MI is undesirable in this context because it implies the assumption that data are missing at random.
\cite{auc:bug:hot17} characterize $\idr{\theta}$ under the weaker assumptions, but face the problem that this characterization is usually too  complex to compute or to use for inference.  
They therefore provide outer regions that are easier to compute, and they show that these regions are informative and relatively easy to use.\medskip

\subsection{Measurement Error and Data Combination}
\label{subsec:meas_error}
One of the first examples of bounding analysis appears in \cite{fri34}, to assess the impact in linear regression of covariate measurement error. 
This analysis was substantially extended in \cite{gil:lea83}, \cite{kle:lea84}, and \cite{lea87}.
The more recent literature in partial identification has provided important advances to learn features of probability distributions when the observed variables are error-ridden measures of the variables of interest.
Here I briefly mention some of the papers in this literature, and refer to Chapter \textbf{XXX} in this Volume by \cite{sch19} for a thorough treatment of identification and inference with mismeasured and unobserved variables.
In an influential paper, \cite{hor:man95} study what can be learned about features of the distribution of $\ey|\ex$ in the presence of contaminated or corrupted outcome data.  
Whereas a contaminated sampling model assumes that data errors are statistically independent of sample realizations from the population of interest, the corrupted sampling model does not.
These models are regularly used in the important literature on robust estimation \citep[e.g.,][]{hub64,hub04,ham:ron:rou:sta11}.
However, the goal of that literature is to characterize how point estimators of population parameters behave when data errors are generated in specified ways.
As such, the inference problem is approached ex-ante: before collecting the data, one looks for point estimators that are not greatly affected by error.
The question addressed by \cite{hor:man95} is conceptually distinct.
It asks what can be learned about specific population parameters ex-post, that is, after the data has been collected.
For example, whereas the mean is well known not to be a robust estimator in the presence of contaminated data, \cite{hor:man95} show that it can be (non-trivially) bounded provided the probability of contamination is strictly less than one.
\cite{dom:she04,dom:she05} and \cite{kre:pep07,kre:pep08} extend the results of \citeauthor{hor:man95} to allow for (partial) verification of the distribution from which the data are drawn.
They apply the resulting sharp bounds to learn about school performance when the observed test scores may not be valid for all students. 
\cite{mol08} provides sharp bounds on the distribution of a misclassified outcome variable under an array of different assumptions on the extent and type of misclassification.

A completely different problem is that of data combination.
Applied economists often face the problem that no single data set contains all the variables that are necessary to conduct inference on a population of interest. 
When this is the case, they need to integrate the information contained in different samples; for example, they might need to combine survey data with administrative data \citep[see][for a
survey of the econometrics of data combination]{rid:mof07}.
From a methodological perspective, the problem is that while the samples being combined might contain some common variables, other variables belong only to one of the samples. 
When the data is collected at the same aggregation level (e.g., individual level, household level, etc.), if the common variables include a unique and correctly recorded identifier of the units constituting each sample, and there is a substantial overlap of units across all samples, then exact matching of the data sets is relatively straightforward, and the combined data set provides all the relevant information to identify features of the population of interest. 
However, it is rather common that there is a limited overlap in the units constituting each sample, or that variables that allow identification of units are not available in one or more of the input files, or that one sample provides information at the individual or household level (e.g., survey data) while the second sample provides information at a more aggregate level (e.g., administrative data providing information at the precinct or district level).
Formally, the problem is that one observes data that identify the joint distributions $\sP(\ey,\ex)$ and $\sP(\ex,\ew)$, but not data that identifies the joint distribution $\sQ(\ey,\ex,\ew)$ whose features one wants to learn.
The literature on \textit{statistical matching} has aimed at using the
common variable(s) $\ex$ as a bridge to create synthetic records containing $(\ey,\ex,\ew)$ \citep[see, e.g.,][for an early contribution]{okn72}. 
As \cite{sim72} points out, the inherent assumption at the base of
statistical matching is that conditional on $\ex$, $\ey$ and $\ew$ are
independent. 
This conditional independence assumption is strong and
untestable. 
While it does guarantee point identification of features of the
conditional distributions $\sQ(\ey|\ex,\ew)$, it often finds very little justification in practice. 
Early on, \cite{dun:dav53} provided numerical illustrations on how one can bound the object of interest, when both $\ey$ and $\ew$ are binary variables.
\cite{cro:man02} provide a general analysis of the problem.
They obtain bounds on the long regression $\E_\sQ(\ey|\ex,\ew)$, under the assumption that $\ew$ has finite support.
They show that sharp bounds on $\E_\sQ(\ey|\ex,\ew=w)$ can be obtained using the results in \cite{hor:man95}, thereby establishing a connection with the analysis of contaminated data.
They then derive sharp identification regions for $[\E_\sQ(\ey|\ex=x,\ew=w),x\in\cX,w\in\cW]$.  
They show that these bounds are sharp when $\ey$ has finite support, and \cite{mol:pes06} establish sharpness without this restriction.
\cite{fan:she:shu14} address the question of what can be learned about counterfactual distributions and treatment effects under the data scenario just described, but with $\ex$ replaced by $\es$, a binary indicator for the received treatment (using the notation of the previous section).
In this case, the exogenous selection assumption (conditional on $\ew$) does not suffice for point identification of the objects of interest.
The authors derive, however, sharp bounds on these quantities using monotone rearrangement inequalities.
\cite{pac17} provides partial identification results for the coefficients in the linear projection of $\ey$ on $(\ex,\ew)$.
 
\subsection{Further Theoretical Advances and Empirical Applications}
\label{subsec:applications_PIPD}
In order to discuss the partial identification approach to learning features of probability distributions in some level of detail while keeping this chapter to a manageable length, I have focused on a selection of papers.
In this section I briefly mention several other excellent theoretical contributions that could be discussed more closely, as well as several papers that have applied partial identification analysis to answer important empirical questions.\medskip

While selectively observed data are commonplace in observational studies, in randomized experiments subjects are randomly placed in designated treatment groups conditional on $\ex$, so that the assumption of exogenous selection is satisfied with respect to the assigned treatment.
Yet, identification of some highly policy relevant parameters can remain elusive in the absence of strong assumptions.
One challenge results from noncompliance, where individuals' received treatments differs from the randomly assigned ones.
\cite{bal:pea97} derive sharp bounds on the ATE in this context, when $\cY=\T=\{0,1\}$.
Even if one is interested in the intention-to-treat parameter, selectively observed data may continue to be a problem.
For example, \cite{lee09} studies the wage effects of the Job Corps training program, which randomly assigns eligibility to participate in the program.
Individuals randomized to be eligible were not compelled to receive treatment, hence \cite{lee09} focuses on the intention-to-treat effect.
Because wages are only observable when individuals are employed, a selection problem persists despite the random assignment of eligibility to treatment, as employment status may be affected by the training program.
\citeauthor{lee09} obtains sharp bounds on the intention-to-treat effect, through a trimming procedure that leverages results in \cite{hor:man95}.
\cite{mol08MT} analyzes the problem of identification of the ATE and other treatment effects, when the received treatment is unobserved for a subset of the population. 
Missing treatment data may be due to item or survey nonresponse in observational studies, or noncompliance with randomly assigned treatments that are not directly monitored. 
She derives sharp worst case bounds leveraging results in \cite{hor:man95}, and she shows that these are a function of the available prior information on the distribution of missing treatments. 
If the response function is assumed monotone as in \eqref{eq:MTR:treat}, she obtains informative bounds without restrictions on the distribution of missing treatments. 

Even randomly assigned treatments and perfect compliance with no missing data may not suffice for point identification of all policy relevant parameters.
Important examples are given by \cite{hec:smi:cle97} and \cite{man97:mixing}.
\citeauthor{hec:smi:cle97} show that features of the joint distribution of the potential outcomes of treatment and control, including the distribution of treatment effects impacts, cannot be point identified in the absence of strong restrictions.
This is because although subjects are randomized to treatment and control, nobody's outcome is observed under both states.
Nonetheless, the authors obtain bounds for the functionals of interest.
\cite{mul18} derives related bounds on the probability that the potential outcome of one treatment is larger than that of the other treatment, and applies these results to health economics problems.
\citeauthor{man97:mixing} shows that features of outcome distributions under treatment rules in which treatment may vary within groups cannot be point identified in the absence of strong restrictions.
This is because data resulting from randomized experiments with perfect compliance allow for point identification of the outcome distributions under treatment rules that assign all persons with the same $\ex$ to the same treatment group.
However, such data only allow for partial identification of outcome distributions under rules in which treatment may vary within groups.
\citeauthor{man97:mixing} derives sharp bounds for functionals of these distributions. 

Analyses of data resulting from natural experiments also face identification challenges.
\cite{hot:mul:san97} study what can be learned about treatment effects when one uses a contaminated instrumental variable, i.e. when a mean-independence assumption holds in a population of interest, but the observed population is a mixture of the population of interest and one in which the assumption doesn't hold.
They extend the results of \cite{hor:man95} to learn about the causal effect of teenage childbearing on a teen mother's subsequent outcomes, using the natural experiment of miscarriages to form an instrumental variable for teen births. 
This instrument is contaminated because miscarriges may not occur randomly for a subset of the population (e.g., higher miscarriage rates are associated with smoking and drinking, and these behaviors may be correlated with the outcomes of interest).

Of course, analyses of selectively observed data present many challenges, including but not limited to the ones described in Section \ref{subsec:missing_data}.
\cite{ath:imb06} generalize the difference-in-difference (DID) design to a \emph{changes-in-changes} (CIC) model, where the distribution of the unobservables is allowed to vary across groups, but not overtime within groups, and the additivity and linearity assumptions of the DID are dispensed with.
For the case that the outcomes have a continuous distribution, \citeauthor{ath:imb06} provide conditions for point identification of the entire counterfactual distribution of effects of the treatment on the treatment group as well as the distribution of effects of the treatment on the control group, without restricting how these distributions differ from each other.
For the case that the outcome variables are discrete, they provide partial identification results, as well as additional conditions compared to their baseline model under which point identification attains.

Motivated by the question of whether the age-adjusted mortality rate from cancer in 2000 was lower than that in the early 1970s, \cite{hon:lle06} study partial identification of competing risk models \citep[see][for earlier partial identification results]{pet76}.
To answer this question, they need to contend with the fact that mortality rate from cardiovascular disease declined substantially over the same period of time, so that individuals that in the early 1970s might have died from cardiovascular disease before being diagnosed with cancer, do not in 2000.
In this context, it is important to carry out the analysis without assuming that the underlying risks are independent.
\citeauthor{hon:lle06} show that bounds for the parameters of interest can be obtained as the solution to linear programming problems.
The estimated bounds suggest much larger improvements in cancer mortality rates than previously estimated.

\cite{blu:gos:ich:meg07} use UK data to study changes over time in the distribution of male and female wages, and in wage inequality.
Because the composition of the workforce changes over time, it is difficult to disentangle that effect from changes in the distribution of wages, given that the latter are observed only for people in the workforce.
\citeauthor{blu:gos:ich:meg07} begin their empirical analysis by reporting worst case bounds \citep[as in][]{man94} on the CDF of wages conditional on covariates.
They then consider various restrictions on treatment selection, e.g., a first order stochastic dominance assumption according to which  people with higher wages are more likely to work, and derive tighter bounds under this assumption (and under weaker ones).
Finally, they bring to bear shape restrictions.
At each step of the analysis, they report the resulting bounds, thereby illuminating the role played by each assumption in shaping the inference.
\cite{cha:che:mol:sch18} provide best linear approximations to the identification region for the quantile gender wage gap using Current Population Survey repeated cross-sections data from 1975-2001, using treatment selection assumptions in the spirit of \cite{blu:gos:ich:meg07} as well as exclusion restrictions.

\cite{bha:sha:vyt12} study the effect of Swan-Ganz catheterization on subsequent mortality.\footnote{The Swan-Ganz catheter is a device placed in patients in the intensive care unit to guide therapy.}
Previous research had shown, using propensity score matching (assuming that there are no unobserved differences between catheterized and non catheterized patients) that Swan-Ganz catheterization increases the probability that patients die within 180 days from admission to the intensive care unit.
\citeauthor{bha:sha:vyt12} re-analyze the data using (and extending) bounds results obtained by \cite{sha:vyt11}.
These results are based on exclusion restrictions combined with a threshold crossing structure for both the treatment and the outcome variables in problems where $\cY=\cT=\{0,1\}$.
\citeauthor{bha:sha:vyt12} use as instrument for Swan-Ganz catheterization the day of the week that the patient was admitted to the intensive care unit.
The reasoning is that patients are less likely to be catheterized on the weekend, but the admission day to the intensive care unit is plausibly uncorrelated with subsequent mortality.
Their results confirm that for some diagnoses, Swan-Ganz catheterization increases mortality at 30 days after catheterization and beyond.

\cite{man:pep18} use data from Maryland, Virginia and Illinois to learn about the impact of laws allowing individuals to carry concealed handguns (right-to-carry laws) on violent and property crimes.
Point identification of these treatment effects is possible under invariance assumptions that certain features of treatment response are constant across states and years.
\citeauthor{man:pep18} propose the use of weaker but more credible restrictions according to which these features exhibit bounded variation -- the invariance case being the limit where the bound on variation equals zero.
They carry out their analysis under different combinations of the bounded variation assumptions, and at each step they report the resulting bounds, thereby illuminating the role played by each assumption in shaping the inference. 

\cite{mou:hen:mea18} provide sharp bounds on the joint distribution of potential (binary) outcomes in a Roy model with sector specific unobserved heterogeneity and self selection based on potential outcomes.
The key maintained assumption is that the researcher has access to data that includes a stochastically monotone instrumental variable.
This is a selection shifter that is restricted to affect potential outcomes monotonically.
An example is parental education, which may not be independent from potential wages, but plausibly does not negatively affect future wages.
Under this assumption, \citeauthor{mou:hen:mea18} show that all observable implications of the model are embodied in the stochastic monotonicity of observed outcomes in the instrument, hence Roy selection behavior can be tested by checking this stochastic monotonicity.
They apply the method to estimate a Roy model of college major choice in Canada and Germany, with special interest in the under-representation of women in STEM.

\cite{mog:san:tor18} provide a general method to obtain sharp bounds on a certain class of treatment effects parameters.
This class is comprised of parameters that can be expressed as weighted averages of marginal treatment effects \citep{hec:vyt99,hec:vyt01,hec:vyt05}.
\cite{tor19pies} provides a general method, based on copulas, to obtain sharp bounds on treatment effect parameters in semiparametric binary models.
A notable feature of both \cite{mog:san:tor18} and \cite{tor19pies} is that the bounds are obtained as solutions to convex (even linear) optimization problems, rendering them computationally attractive.
\cite{fre:hor14} provide partial identification results and inference methods for a linear functional $\ell(g)$ when $g:\cX\mapsto\R$ is such that $\ey=g(\ex)+\epsilon$ and $\E(\ey|\ez)=0$.
The instrumental variable $\ez$ and regressor $\ex$ have discrete distributions, and $\ez$ has fewer points of support than $\ex$, so that $\ell(g)$ can only be partially identified.
They impose shape restrictions on $g$ (e.g., monotonicity or convexity) to achieve interval identification of $\ell(g)$, and they show that the lower and upper points of the interval can be obtained by solving linear programming problems. 
They also show that the bootstrap can be used to carry out inference.

\section{Partial Identification of Structural Models}
\label{sec:structural}
In this section I focus on the literature concerned with learning features of \emph{structural econometric models}.
These are models where economic theory is used to postulate relationships among observable outcomes $\ey$, observable covariates $\ex$, and unobservable variables $\nu$.
For example, economic theory may guide assumptions on economic behavior (e.g., utility maximization) and equilibrium that yield a mapping from $(\ex,\nu)$ to $\ey$.
The researcher is interested in learning features of these relationships (e.g., utility function, distribution of preferences), and to this end may supplement the data and economic theory with functional form assumptions on the mapping of interest and distributional assumptions on the observable and unobservable variables.

The earlier literature on partial identification of features of structural models includes important examples of nonparametric analysis of random utility models and revealed preference extrapolation, e.g. \cite{blo:mar60}, \cite{mar60}, \cite{hal73}, \cite{mcf75}, \cite{fal78}, \cite{mcf:ric91}, and others.
The earlier literature also addresses semiparametric analysis, where the underlying models are specified up to parameters that are finite dimensional (e.g., preference parameters) and parameters that are infinite dimensional (e.g., distribution functions); important examples include \cite{mar:and44}, \cite{mar52}, \cite[Section 2.10]{fis66}, \cite{har:kre79}, \cite{kre81}, \cite{lea81}, \cite{man88}, \cite{jov89}, \cite{phi89}, \cite{han:jag91}, \cite{han:hea:lut95}, \cite{lut96}, and others.
Contrary to the nonparametric bounds results discussed in Section \ref{sec:prob:distr}, and especially in the case of semiparametric models, structural partial identification often yields an identification region that is \emph{not} constructive.\footnote{Of course, this is not always the case, as exemplified by the bounds in \cite{han:jag91}.}
Indeed, the boundary of the set is not obtained in closed form as a functional of the distribution of the observable data.
Rather, the identification region can often be characterized as a \emph{level set} of a properly specified criterion function.

The recent spark of interest in partial identification of structural microeconometric models was fueled by the work of \cite{man:tam02}, \cite{tam03} and \cite{cil:tam09}, and \cite{hai:tam03}.
Each of these papers has advanced the literature in fundamental ways, studying conceptually very distinct problems.
\cite{man:tam02} are concerned with partial identification of the decision process yielding binary outcomes in a semiparametric model, when one of the explanatory variables is interval valued.
Hence, the root cause of the identification problem they study is that the \emph{data is incomplete}.\footnote{\cite{man:tam02} study also partial identification (and estimation) of nonparametric, semiparametric, and parametric conditional expectation functions that are well defined in the absence of a structural model, when one of the conditioning variables is interval valued. I refer to Section \ref{sec:prob:distr} for a discussion.}

\cite{tam03} and \cite{cil:tam09} are concerned with identification (and estimation) of simultaneous equation models with dummy endogeneous variables which are representations of two-player entry games with multiple equilibria.\footnote{\cite{cil:tam09} consider more general multi-player entry games.}
\cite{hai:tam03} are concerned with nonparametric identification and estimation of the distribution of valuations in a model of English auctions under weak assumptions on bidders' behavior.
In both cases, the root cause of the identification problem is that the \emph{structural model is incomplete}.
This is because the model makes multiple predictions for the observed outcome variables (respectively: the players' actions; and the bidders' bids), but does not specify how one of them is selected to yield the observed data.
 
\emph{Set-valued predictions} for the observable outcome (endogenous variables) are a key feature of partially identified structural models.
The goal of this section is to explain how they result in a wide array of theoretical frameworks, and how sharp identification regions can be characterized using a unified approach based on random set theory.
Although the work of \cite{man:tam02}, \cite{tam03} and \cite{cil:tam09}, and \cite{hai:tam03} has spurred many of the developments discussed in this section, for pedagogical reasons I organize the presentation based on application topic rather than chronologically.
The work of \cite{pak10} and \cite{pak:por:ho:ish15} further stimulated a large empirical literature that applies partial identification methods to a wide array of questions of substantive economic importance, to which I return in Section \ref{subsec:applications:struct}.

\subsection{Discrete Choice in Single Agent Random Utility Models}
\label{subsec:single:ag:RUM}
Let $\cI$ denote a population of decision makers and $\cY=\{c_1,\dots,c_{|\cY|}\}$ a finite universe of potential alternatives (\emph{feasible set} henceforth).
Let $\bU$ be a family of real valued functions defined over the elements of $\cY$. 
Let $\in^* $ denote ``is chosen from."
Then observed choice is consistent with a \emph{random utility model} if there exists a function $\bu_i$ drawn from $\bU$ according to some probability distribution, such that $\P(c \in^* C)=\P(\bu_i(c) \ge \bu_i(b)~\forall b \in C)$ for all $c\in C$, all non empty sets $C \subset \cY$, and all $i\in\cI$ \citep{blo:mar60}.
See \cite[Chapter 13]{man07a} for a textbook presentation of this class of models, and \cite{mat07} for a review of sufficient conditions for point identification of nonparametric and semiparametric limited dependent variables models.

As in the seminal work of \cite{mcf73}, assume that the decision makers and alternatives are characterized by observable and unobservable vectors of real valued attributes.
Denote the observable attributes by $\ex_i \equiv \{\ex_i^1,(\ex_{ic}^2,c\in\cY)\},i\in\cI$.
These include attribute vectors $\ex_i^1$ that are specific to the decision maker, as well as attribute vectors $\ex_{ic}^2$ that include components that are specific to the alternative and components that are indexed by both.
Denote the unobservable attributes (preferences) by $\nu_i\equiv(\zeta_i,\{\epsilon_{ic},~c\in\cY\}),i\in\cI$.
These are idiosyncratic to the decision maker and similarly may include alternative and decision maker specific terms. 
Denote $\cX,\cV$ the supports of $\ex,\nu$, respectively. 

In what follows, I label ``standard" a random utility model that maintains some form of exogeneity for $\ex_i$ (e.g., mean or quantile or statistical independence with $\nu_i$) and presupposes observation of data that include $\{(\eC_i,\ey_i,\ex_i):\ey_i \in^* \eC_i\}, i=1,\dots,n$, with $\eC_i$ the choice set faced by decision maker $i$ and $|\eC_i|\ge 2$ \citep[e.g.,][Assumption 1]{man75}.
Often it is also assumed that all members of the population face the same choice set, $\eC_i=D$ for all $i\in\cI$ and some known $D\subseteq\cY$, although this requirement is not critical to identification analysis.

\subsubsection{Semiparametric Binary Choice Models with Interval Valued Covariates} 
\label{subsubsec:man:tam02}
\cite{man:tam02} provide inference methods for nonparametric, semiparametric, and parametric conditional expectation functions when one of the conditioning variables is interval valued.
I have discussed their nonparametric and parametric sharp bounds on conditional expectations with interval valued covariates in Identification Problems \ref{IP:interval_covariate} and \ref{IP:man:tam02_param}, and Theorems SIR-\ref{SIR:man:tam:nonpar} and SIR-\ref{SIR:man:tam02_param}, respectively.
Here I focus on their analysis of semiparametric binary choice models.
Compared to the generic notation set forth at the beginning of Section \ref{subsec:single:ag:RUM}, I let $\eC_i=\cY=\{0,1\}$ for all $i\in\cI$, and with some abuse of notation I denote the vector of observed covariates $(\xL,\xU,\ew)$.
\begin{IP}[Semiparametric Binary Regression with Interval Covariate Data]
\label{IP:man:tam02_binary}
Let $(\ey,\xL,\xU,\ew)\sim\sP$ be observable random variables in $\{0,1\}\times\R\times\R\times\R^d$, $d<\infty$, and let $\ex\in\R$ be an unobservable random variable.
Let $\ey=\one(\ew\theta + \delta\ex +\epsilon>0)$.
Assume $\delta>0$, and further normalize $\delta=1$ because the threshold-crossing condition is invariant to the scale of the parameters.
Here $\epsilon$ is an unobserved heterogeneity term with continuous distribution conditional on $(\ew,\ex,\xL,\xU)$, $(\ew,\ex,\xL,\xU)$-a.s., and $\theta\in\Theta\subset\R^d$ is a parameter vector representing decision makers' preferences, with compact parameter space $\Theta$.
Assume that $\sR$, the joint distribution of $(\ey,\ex,\xL,\xU,\ew,\epsilon)$, is such that $\sR(\xL\le\ex\le\xU)=1$; $
\sR(\epsilon |\ew,\ex,\xL,\xU)=\sR(\epsilon|\ew,\ex)$; and for a specified $\alpha \in (0,1)$, $\sq_{\sR}^\epsilon(\alpha,\ew,\ex)=0$ and $\sR(\epsilon \le 0|\ew,\ex)=\alpha$, $(\ew,\ex)$-a.s..
In the absence of additional information, what can the researcher learn about $\theta$?
	\qedex
\end{IP}
Compared to Identification Problem \ref{IP:interval_covariate} (see p.~\pageref{IP:interval_covariate}), here one continues to impose $\ex\in[\xL,\xU]$ a.s.
The sign restriction on $\delta$ replaces the monotonicity restriction (M) in Identification Problem \ref{IP:interval_covariate}, but does not imply it unless the distribution of $\epsilon$ is independent of $\ex$ conditional on $\ew$.
The quantile independence restriction is inspired by \cite{man85}.

For given $\theta\in\Theta$, this model yields set valued predictions because $\ey=1$ can occur whenever $\epsilon> -\ew\theta-\xU$, whereas $\ey=0$ can occur whenever $\epsilon\le -\ew\theta-\xL$, and $-\ew\theta-\xU \le -\ew\theta-\xL$.
Conversely, observation of $\ey=1$ allows one to conclude that $\epsilon\in(-\ew\theta-\xU,+\infty)$, whereas observation of $\ey=0$ allows one to conclude that $\epsilon\in(-\infty,-\ew\theta-\xL]$, and these regions of possible realizations of $\epsilon$ overlap.
In contrast, when $\ex$ is observed the prediction is unique because the value $-\ew\theta-\ex$ partitions the space of realizations of $\epsilon$ in two disjoint sets, one associated with $\ey=1$ and the other with $\ey=0$.
Figure \ref{fig:set_valued_pred:man:tam:binary} depicts the model's set-valued predictions for $\ey$ given $(\ew,\xL,\xU)$ as a function of $\epsilon$, and the model's set valued predictions for $\epsilon$ given $(\ew,\xL,\xU)$ as a function of $\ey$.\footnote{Figure \ref{fig:set_valued_pred:man:tam:binary} is based on Figure 1 in \cite{man:tam02}. See \cite[Chapter XXX in this Volume]{che:ros19} for an extensive discussion of the duality between the model's set valued predictions for $\ey$ as a function of $\epsilon$ and for $\epsilon$ as a function of $\ey$, in both cases given the observed covariates.}

Why does this set-valued prediction hinder point identification?
The reason is that the distribution of the observable data relates to the model structure in an \emph{incomplete} manner.
The model predicts $\sM(\ey=1|\ew,\xL,\xU)=\int \sR(\ey=1|\ew,\ex,\xL,\xU)d\sR(\ex|\ew,\xL,\xU)=\int \sR(\epsilon>-\ew\theta-\ex|\ew,\ex)d\sR(\ex|\ew,\xL,\xU),~(\ew,\xL,\xU)$-a.s. 
Because the distribution $\sR(\ex|\ew,\xL,\xU)$ is left completely unspecified, one can find multiple values for $(\theta,\sR(\ex|\ew,\xL,\xU),\sR(\epsilon|\ew,\ex))$, satisfying the assumptions in Identification Problem \ref{IP:man:tam02_binary}, such that $\sM(\ey=1|\ew,\xL,\xU)=\sP(\ey=1|\ew,\xL,\xU),~(\ew,\xL,\xU)$-a.s. 
Nonetheless, in general, not all values of $\theta\in\Theta$ can be paired with some $\sR(\ex|\ew,\xL,\xU)$ and $\sR(\epsilon|\ew,\ex)$ so that they are compatible with $\sP(\ey=1|\ew,\xL,\xU),~(\ew,\xL,\xU)$-a.s. and with the maintained assumptions.
Hence, $\theta$ can be partially identified using the information in the model and observed data.
\begin{figure}[tp]	
\centering
\includegraphics[scale=1]{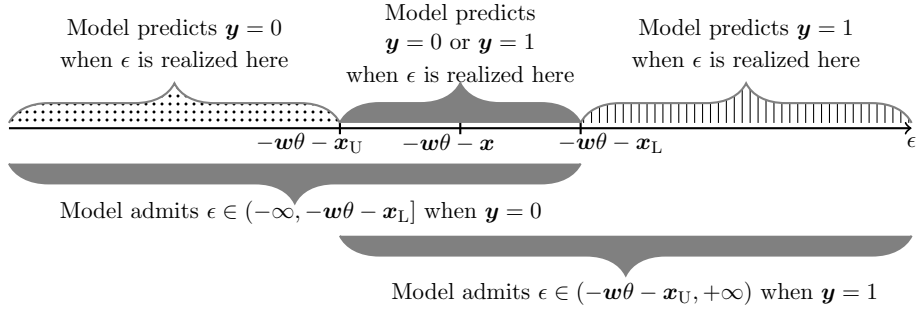}
\caption{\small{Predicted value of $\ey$ as a function of $\epsilon$, and admissible values of $\epsilon$ for each realization of $\ey$, in Identification Problem \ref{IP:man:tam02_binary}, conditional on $(\ew,\xL,\xU)$.}}
\label{fig:set_valued_pred:man:tam:binary}
\end{figure}
\begin{SIR}[Semiparametric Binary Regression with Interval Covariate Data]
\label{SIR:man:tam02_binary}
	Under the Assumptions of Identification Problem \ref{IP:man:tam02_binary}, the sharp identification region for $\theta$ is
	\begin{multline}
	\idr{\theta}=\Big\{\vartheta\in \Theta: \sP\Big((\ew,\xL,\xU):\, \{0\le\ew\vartheta+\xL\cap \sP(\ey=1|\ew,\xL,\xU)\le 1-\alpha\}\\
	\cup \{\ew\vartheta+\xU\le 0\cap \sP(\ey=1|\ew,\xL,\xU)\ge 1-\alpha\}\Big) = 0 \Big\}.\label{eq:ThetaI_man:tam02_binary}
	\end{multline}
\end{SIR}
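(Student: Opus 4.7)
The plan is to prove the characterization by the usual two inclusions, with the central intuition being that the threshold-crossing model plus the quantile restriction $\sR(\epsilon\le 0|\ew,\ex)=\alpha$ delivers a set-valued prediction for $\sP(\ey=1|\ew,\xL,\xU)$: on realizations where $\ew\vartheta+\ex\ge 0$ for every admissible $\ex\in[\xL,\xU]$ the predicted probability must lie in $[1-\alpha,1]$; when $\ew\vartheta+\ex\le 0$ for every admissible $\ex$ it must lie in $[0,1-\alpha]$; otherwise it can be anything in $[0,1]$. Translating the first two cases into conditions on $\xL$ and $\xU$ yields exactly the two clauses in the ``bad event'' whose $\sP$-measure is required to vanish in \eqref{eq:ThetaI_man:tam02_binary}.

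For the necessity direction, suppose $\vartheta=\theta$ is the true parameter and consider realizations of $(\ew,\xL,\xU)$ with $0\le\ew\vartheta+\xL$. Since $\sR(\xL\le\ex\le\xU)=1$, we have $\ew\vartheta+\ex\ge 0$ a.s., so $-\ew\vartheta-\ex\le 0$. Monotonicity of probability combined with the quantile normalization gives $\sR(\ey=1|\ew,\ex)=\sR(\epsilon>-\ew\vartheta-\ex|\ew,\ex)\ge\sR(\epsilon>0|\ew,\ex)=1-\alpha$. The conditional independence restriction $\sR(\epsilon|\ew,\ex,\xL,\xU)=\sR(\epsilon|\ew,\ex)$ and the law of iterated expectations then yield $\sP(\ey=1|\ew,\xL,\xU)=\E_\sR[\sR(\ey=1|\ew,\ex)|\ew,\xL,\xU]\ge 1-\alpha$, contradicting the first clause of the bad set on any subset of positive measure on which $\sP(\ey=1|\ew,\xL,\xU)<1-\alpha$ would hold. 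An entirely parallel argument, with the roles of $\xL$ and $\xU$ interchanged, handles the second clause, so the bad set must be $\sP$-null.

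For sufficiency, I would exhibit, for any $\vartheta$ satisfying the zero-measure condition, a joint distribution $\sR$ consistent with all the maintained assumptions under which the model-implied $\sM(\ey=1|\ew,\xL,\xU)$ coincides $\sP$-a.s.\ with $\sP(\ey=1|\ew,\xL,\xU)$. The construction I have in mind places a degenerate conditional law of $\ex$ given $(\ew,\xL,\xU)$ at a measurable point $\ex^\ast(\ew,\xL,\xU)\in[\xL,\xU]$ selected so that the sign of $\ew\vartheta+\ex^\ast$ is aligned with the position of $\sP(\ey=1|\ew,\xL,\xU)$ relative to $1-\alpha$ (for instance, take $\ex^\ast=\xU$ when $\sP(\ey=1|\ew,\xL,\xU)\ge 1-\alpha$ and $\ex^\ast=\xL$ otherwise). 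A continuous conditional distribution for $\epsilon$ given $(\ew,\ex^\ast)$ with $\alpha$-quantile equal to zero and with mass $\sP(\ey=1|\ew,\xL,\xU)$ on the half-line $(-\ew\vartheta-\ex^\ast,\infty)$ can then be built explicitly, e.g.\ as a piecewise-uniform law. The zero-measure condition is precisely what guarantees that the required placement of $\sP(\ey=1|\ew,\xL,\xU)$ relative to $1-\alpha$ is compatible with the chosen sign of $\ew\vartheta+\ex^\ast$, so the construction is feasible off of a $\sP$-null set.

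The main obstacle is the bookkeeping for the sufficiency construction: making the selection $\ex^\ast(\ew,\xL,\xU)$ jointly measurable, verifying that the explicit $\epsilon$-law is indeed continuous with $\alpha$-quantile zero, and handling the boundary case $\sP(\ey=1|\ew,\xL,\xU)=1-\alpha$ on any atoms of positive $\sP$-measure. None of these affect sharpness, but producing a single coherent recipe that works simultaneously for all $(\ew,\xL,\xU)$ outside a null set is the delicate technical step. Once this is in place, the two inclusions combine to deliver \eqref{eq:ThetaI_man:tam02_binary}.
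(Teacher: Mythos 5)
Your route is genuinely different from the paper's. The paper defines the random closed set $\Eps_\vartheta(\ey,\ew,\xL,\xU)$ of admissible values of $\epsilon$ (two half-lines indexed by $\ey$), invokes Artstein's characterization of selections to reduce observational equivalence to the dominance inequalities $\sR(\epsilon\in C\mid\ew,\xL,\xU)\ge\sP(\Eps_\vartheta(\ey)\subset C\mid\ew,\xL,\xU)$ over closed sets $C$, and then argues that only the half-lines anchored at zero bind, because the $\alpha$-quantile normalization is the sole restriction on the law of $\epsilon$. Your necessity argument is the elementary content of those two binding inequalities—monotonicity of $t\mapsto\sR(\epsilon>t\mid\ew,\ex)$, the normalization $\sR(\epsilon\le 0\mid\ew,\ex)=\alpha$, the exclusion restriction, and iterated expectations—and it is correct. (Both your argument and the paper's closing sentence really establish the condition with strict inequalities $\sP(\ey=1\mid\ew,\xL,\xU)<1-\alpha$ and $>1-\alpha$; the weak inequalities in the displayed bad event of \eqref{eq:ThetaI_man:tam02_binary} are a wrinkle you inherit from the statement rather than introduce.)

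The genuine gap is in sufficiency, and it is not among the items you flag as delicate (measurability, continuity of the constructed law, the knife-edge $\sP(\ey=1\mid\ew,\xL,\xU)=1-\alpha$). Your construction makes $\sR(\ex\mid\ew,\xL,\xU)$ degenerate at $\ex^\ast(\ew,\xL,\xU)\in\{\xL,\xU\}$ and then tailors the conditional law of $\epsilon$ so that $\sR(\epsilon>-\ew\vartheta-\ex^\ast\mid\ew,\ex^\ast)$ equals the observed $\sP(\ey=1\mid\ew,\xL,\xU)$. But the maintained assumption $\sR(\epsilon\mid\ew,\ex,\xL,\xU)=\sR(\epsilon\mid\ew,\ex)$ forces that conditional law to be a single function of $(\ew,\ex)$ alone. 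Whenever two distinct realizations of $(\xL,\xU)$ with the same $\ew$ are mapped to the same $\ex^\ast$ but carry different observed choice probabilities—e.g., $(\xL,\xU)=(1,2)$ with $\sP(\ey=1\mid\cdot)=0.3$ and $(\xL,\xU)=(1,3)$ with $\sP(\ey=1\mid\cdot)=0.5$, both below $1-\alpha=0.6$ and hence both assigned $\ex^\ast=\xL=1$—your recipe requires $\sR(\epsilon>-\ew\vartheta-1\mid\ew,\ex=1)$ to take two different values on events of positive probability, which is impossible. So the exhibited $\sR$ does not satisfy all the maintained assumptions, and the second inclusion is not established. Repairing this requires either a non-degenerate $\sR(\ex\mid\ew,\xL,\xU)$ together with a globally consistent choice of the map $(\ew,x)\mapsto\sR(\epsilon>-\ew\vartheta-x\mid\ew,\ex=x)$ (mixing mass across points on both sides of $-\ew\vartheta$ when $-\ew\vartheta\in[\xL,\xU]$, and exploiting the freedom of the $\epsilon$-law away from its $\alpha$-quantile otherwise), or bypassing the explicit construction via Artstein's theorem as the paper does.
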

\begin{proof}
For any $\vartheta\in\Theta$, define the set of possible values for the unobservable associated with the possible realizations of $(\ey,\ew,\xL,\xU)$, illustrated in Figure \ref{fig:set_valued_pred:man:tam:binary}, as\footnote{In the definition of $\Eps_\vartheta(1,\ew,\xL,\xU)$ I exploit the fact that under the maintained assumptions $\P(\epsilon=-\ew\vartheta-\xU|\ew,\ex,\xL,\xU)=0$ to enforce its closedness.}
\begin{align}
\Eps_\vartheta(\ey,\ew,\xL,\xU) =\left \{ 
\begin{tabular}{ll}
$(-\infty,-\ew\vartheta-\xL]$ & if $\ey=0$,\\
$[-\ew\vartheta-\xU,+\infty)$ & if $\ey=1$.
\end{tabular}
\right.\label{eq:def_Epsilon:man:tam}
\end{align}
Then $\Eps_\vartheta(\ey,\ew,\xL,\xU)$ is a random closed set as per Definition \ref{def:rcs}.
To simplify notation, let $\Eps_\vartheta(\ey)\equiv\Eps_\vartheta(\ey,\ew,\xL,\xU)$ suppressing the dependence on $(\ew,\xL,\xU)$.
Let $(\Eps_\vartheta(\ey),\ew,\xL,\xU)=\Eps_\vartheta(\ey)\times(\ew,\xL,\xU)=\{(\mathbf{e},\ew,\xL,\xU):\mathbf{e}\in\Eps_\vartheta(\ey)\}$.
If the model is correctly specified, for the data generating value $\theta$, $(\epsilon,\ew,\xL,\xU) \in (\Eps_\theta(\ey),\ew,\xL,\xU)$ a.s.
By Theorem \ref{thr:artstein} and Theorem 2.33 in \cite{mol:mol18}, this occurs if and only if
\begin{align}
\sR(\epsilon\in C|\ew,\xL,\xU)&\ge \sP(\Eps_\theta(\ey)\subset C|\ew,\xL,\xU),~(\ew,\xL,\xU)\text{-a.s.}~\forall C\in\cF,\label{eq:Artstein_on_man:tam}
\end{align}
where $\cF$ here denotes the collection of closed subsets of $\R$.

We then have that $\vartheta$ is observationally equivalent to $\theta$ if and only if \eqref{eq:Artstein_on_man:tam} holds for $\Eps_\vartheta(\ey)$ as defined in \eqref{eq:def_Epsilon:man:tam}.
The condition can be rewritten as
\begin{align*}
\int \sR(\epsilon\in C|\ew,\ex,\xL,\xU)d\sR(\ex|\ew,\xL,\xU)&\ge \sP(\Eps_\vartheta(\ey)\subset C|\ew,\xL,\xU),~(\ew,\xL,\xU)\text{-a.s.}~\forall C\in\cF.
\end{align*}
The assumption that $\sR(\epsilon|\ew,\ex,\xL,\xU)=\sR(\epsilon|\ew,\ex)$ yields that the above system of inequalities reduces to
\begin{align*}
\int \sR(\epsilon\in C|\ew,\ex)d\sR(\ex|\ew,\xL,\xU)&\ge \sP(\Eps_\vartheta(\ey)\subset C|\ew,\xL,\xU),~(\ew,\xL,\xU)\text{-a.s.}~\forall C\in\cF.
\end{align*}
Next, note that given the possible realizations of $\Eps_\vartheta(\ey)$, the above inequality is trivially satisfied unless $C=(-\infty,t]$ or $C=[t,\infty)$ for some $t\in\R$.
Finally, the only restriction on the distribution of $\epsilon$ is the quantile independence condition, hence it suffices to consider $t=0$.
To see why this is the case, let for example $t>0$ and fix a realization $(w,x_L,x_U)$ for $(\ew,\xL,\xU)$.\footnote{There are no $(\ew,\xL,\xU)$-cross restrictions.}
Then for the inequality not to be trivially satisfied it must be that either $w\vartheta+x_L\ge -t$ or $w\vartheta+x_U\le -t$ (both are not possible because $w\vartheta+x_L\le w\vartheta+x_U$).
If $w\vartheta+x_U\le -t$, it must be that $t\in(0,-w\vartheta-x_U]$ and $-w\vartheta-x_U>0$.
Then a distribution $\sR$ such that $\int \sR(\epsilon\in [0,t)|\ew=w,\ex)d\sR(\ex|\ew=w,\xL=x_L,\xU=x_U)=0$ is always feasible for $t\in(0,-w\vartheta-x_U]$. 
A similar argument holds if $w\vartheta+x_L\ge -t$; and also if $t<0$.
We then have that if the inequalities are satisfied for $t=0$, they are satisfied also for $t\neq 0$.

Finally, using the definition of $\Eps_\vartheta(\ey)$, for $t=0$ we have
\begin{align}
1-\alpha &\ge \sP(\ey=1|\ew,\xL,\xU)~~\text{for all}~(\ew,\xL,\xU)~\text{such that } \ew\vartheta+\xU\le 0,\label{eq:key_sharp:man:tam02_1}\\
1-\alpha &  \le \sP(\ey=1|\ew,\xL,\xU)~~\text{for all}~(\ew,\xL,\xU)~\text{such that } \ew\vartheta+\xL  \ge 0.\label{eq:key_sharp:man:tam02_2}
\end{align} 
Any given $\vartheta\in\Theta$, $\vartheta\neq\theta$, violates the above conditions if and only if $\sP\big((\ew,\xL,\xU):\, \{0\le\ew\vartheta+\xL\cap \sP(\ey=1|\ew,\xL,\xU)\le 1-\alpha\}\cup \{\ew\vartheta+\xU\le 0\cap \sP(\ey=1|\ew,\xL,\xU)\ge 1-\alpha\}\big) > 0$.
\end{proof}

\begin{BI}
The analysis in \cite{man:tam02} systematically studies what can be learned under increasingly strong sets of assumptions.
These include both assumptions that constrain the model from fully nonparametric to semiparametric to parametric, as well as assumptions that constrain the distribution of the observable covariates.
For example, \cite[Corollary to Proposition 2]{man:tam02} provide sufficient conditions on the joint distribution of $(\ew,\xL,\xU)$ that allow for identification of the sign of components of $\theta$, as well as for point identification of $\theta$.\footnote{This Corollary is related in spirit to the analysis in \cite{man88}.}
The careful analysis of the identifying power of increasingly stronger assumptions is the pillar of the partial identification approach to empirical research proposed by Manski, as illustrated in Section \ref{sec:prob:distr}.
The work of \cite{man:tam02} was the first example of this kind in semiparametric structural models.
\end{BI}
Revisiting \possessivecite{man:tam02} study of Identification Problem \ref{IP:man:tam02_binary} nearly 20 years later yields important insights on the differences between point and partial identification analysis.
It is instructive to take as a point of departure the analysis of \cite{man85}, which under the additional assumption that $(\ey,\ew,\ex)$
is observed yields
\begin{align*}
\ew\theta+\ex>0 \Leftrightarrow \sP(\ey=1|\ew,\ex)>1-\alpha.
\end{align*} 
In this case, $\theta$ is identified relative to $\vartheta\in\Theta$ if
\begin{align}
\sP\left((\ew,\ex):\, \{\ew\theta+\ex\le 0<\ew\vartheta+\ex\}
	\cup \{\ew\vartheta+\ex\le 0<\ew\theta+\ex\}\right) > 0.\label{eq:manski85}
	\end{align}
\citeauthor{man:tam02} extend this reasoning to the case that $\ex$ is unobserved, but known to satisfy $\ex\in [\xL,\xU]$ a.s.
The first part of their analysis, collected in their Proposition 2, characterizes the collection of values that cannot be distinguished from $\theta$ on the basis of $\sP(\ew,\xL,\xU)$ alone, through a clear generalization of \eqref{eq:manski85}:
\begin{align}
\{\vartheta\in \Theta: \sP\left((\ew,\xL,\xU):\, \{\ew\theta+\xU\le 0<\ew\vartheta+\xL\}
	\cup \{\ew\vartheta+\xU\le 0<\ew\theta+\xL\}\right) = 0\}.\label{eq:region:man:tam02:potential}
\end{align}
It is worth emphasizing that the characterization in \eqref{eq:region:man:tam02:potential} depends on $\theta$, and makes no use of the information in $\sP(\ey|\ew,\xL,\xU)$.
The Corollary to Proposition 2 yields conditions on $\sP(\ew,\xL,\xU)$ under which either the sign of components of $\theta$, or $\theta$ itself, can be identified, regardless of the distribution of $\ey|\ew,\xL,\xU$.

\cite[Lemma 1]{man:tam02} provide a second characterization, which presupposes knowledge of $\sP(\ey,\ew,\xL,\xU)$, yields a set smaller than the one in \eqref{eq:region:man:tam02:potential}, and coincides with the result in Theorem SIR-\ref{SIR:man:tam02_binary}. 
\cite{man:tam02} use the same notation for the two sets, although the sets are conceptually and mathematically distinct.\footnote{This was confirmed in personal communication with Chuck Manski and Elie Tamer.}
The result in Theorem SIR-\ref{SIR:man:tam02_binary} is due to \cite[Lemma 1]{man:tam02}, but the proof provided here is new, as is the use of random set theory in this application.\footnote{The proof closes a gap in the argument in \cite{man:tam02} connecting their Proposition 2 and Lemma 1, due to the fact that for a given $\vartheta$ the sets $\{(\ew,\xL,\xU):\, \{\ew\theta+\xU\le 0<\ew\vartheta+\xL\}	\cup \{\ew\vartheta+\xU\le 0<\ew\theta+\xL\}\}$ and $\{(\ew,\xL,\xU):\, \{0<\ew\vartheta+\xL\cap \sP(\ey=1|\ew,\xL,\xU)\le 1-\alpha\}\cup \{\ew\vartheta+\xU\le 0\cap \sP(\ey=1|\ew,\xL,\xU)> 1-\alpha\}\}$ need not coincide, with the former being a subset of the latter due to part (c) of the proof of Proposition 2 in \cite{man:tam02}.}
\begin{BI}\label{remark:man:tam02:che:ros}
The preceding discussion allows me to draw a novel connection between the two characterizations in \cite{man:tam02}, and the distinction put forward by \cite{che:ros17} and \cite[Chapter XXX in this Volume, Definition 2]{che:ros19} in partial identification between \emph{potential observational equivalence} and \emph{observational equivalence}.\footnote{This distinction echos the distinction drawn by \cite[Section 1.1.1]{man88book} between \emph{point identification} and \emph{uniform point identification}.
\citeauthor{man88book} considers a scenario where a parameter vector of interest $\theta$ is defined as the solution to an equation of the form $\crit_\sP(\theta)=0$ for some criterion function $\crit_\sP:\Theta\mapsto\R_+$.
Then $\theta$ is point identified relative to $(\sP,\Theta)$ if it is the unique solution to $\crit_\sP(\theta)=0$.
It is \emph{uniformly} point identified relative to $(\cP,\Theta)$, with $\cP$ a space of probability distributions to which $\sP$ belongs, if for every $\tilde\sP\in\cP$, $\crit_{\tilde\sP}(\vartheta)=0$ has a unique solution.}
Applying \citeauthor{che:ros17}'s definition, parameter vectors $\theta$ and $\vartheta$ are \emph{potentially} observationally equivalent if there exists \emph{some} distribution of $\ey|\ew,\xL,\xU$ for which conditions \eqref{eq:key_sharp:man:tam02_1}-\eqref{eq:key_sharp:man:tam02_2} hold.
Simple algebra confirms that this yields the region in \eqref{eq:region:man:tam02:potential}.
This notion of potential observational equivalence parallels one of the notions used to obtain sufficient conditions for point identification in the semiparametric literature \citep[as in, e.g.][]{man85}.
Both notions, as explained in \cite[Section 4.1]{che:ros19}, make no reference to the conditional distribution of outcomes given covariates delivered by the process being studied. 
To obtain that parameters $\theta$ and $\vartheta$ \emph{are} observationally equivalent one requires instead that conditions \eqref{eq:key_sharp:man:tam02_1}-\eqref{eq:key_sharp:man:tam02_2} hold for the \emph{observed} distribution $\sP(\ey=1|\ew,\xL,\xU)$ (as opposed to ``for some distribution" as in the case of potential observational equivalence).
This yields the sharp identification region in \eqref{eq:ThetaI_man:tam02_binary}.
\end{BI}
\cite{man10} studies random \emph{expected} utility models, where agents choose the alternative that maximizes their expected utility.
The core difference with standard models is that \citeauthor{man10} does not fully specify the subjective beliefs that agents use to form their expectations, but only a \emph{set} of such beliefs.
\citeauthor{man10} shows that the resulting, partially identified, discrete choice model can be formulated similarly to how \cite{man:tam02} treat interval valued covariates, and leverages their results to obtain bounds on preference parameters.\footnote{\cite[Supplementary Appendix F]{ber:mol:mol11} extend the analysis of \cite{man:tam02} to multinomial choice models with interval covariates.}\medskip

\cite{mag:mau08} consider a different but closely related model to the semiparametric binary response model studied by \citeauthor{man:tam02}.
They assume that an instrumental variable $\ez$ is available, that $\epsilon$ is independent of $\ex$ conditional on $(\ew,\ez)$, and that $Corr(\ez,\epsilon)=0$.
They assume that the distribution of $\ex$ is absolutely continuous with support $[v_1,v_k]$, and that $\ex$ is not a deterministic linear function of $(\ew,\ez)$. 
They consider the case that $\ex$ is unobserved but known to belong to one of the fixed (and known) intervals $[v_i,v_{i+1})$, $i=1,\dots,k-1$, with $\sR[\ex\in[v_i,v_{i+1})|\ew,\ez]>0$ almost surely for all $i$. 
Finally, they assume that $(-\ew\theta-\epsilon)\in [v_1,v_k]$ with probability one.
They do not, however, make quantile independence assumptions.

Their point of departure is the fact that under these conditions, if $\ex$ were observed, one could employ a transformation proposed by \cite{lew00} for the binary outcome $\ey$, such that $\theta$ can be identified through a simple linear moment condition. 
Specifically, let
\begin{align*}
\tilde{\ey}=\frac{\ey - \one_{\ex>0}}{f_\ex(\ex|\ew,\ez)},
\end{align*}
where $f_\ex(\cdot|\ew,\ez)$ is the conditional density function of $\ex$. 
Then, using the assumption that $\ez$ and $\epsilon$ are uncorrelated, one has
\begin{align}
\E_\sP(\ez \tilde{\ey})-\E_\sP(\ez \ew^\top) \theta = 0.\label{eq:sem-bin}
\end{align}

With interval valued $\ex$, \cite{mag:mau08} denote by $\ex^*$ the random variable that takes value $i\in\{1,\dots,k-1\}$ if $\ex\in[v_i,v_{i+1})$, so that the observed data are draws from the joint distribution of $(\ey,\ew,\ez,\ex^*)$.
They let $\delta(\ex^*)=v_{\ex^*+1}-v_{\ex^*}$ denote the length of the $\ex^*$-th interval, and define the transformed outcome variable:
\begin{displaymath}
  \ey^*=\frac{\delta(\ex^*)}{\sP(\ex^*=i|\ew,\ez)}\ey-v_k.
\end{displaymath}
The assumptions on $\ex$ yield that, given $\ez$ and $\ew$, $\epsilon$ does not depend on $\ex^*$. 
Moreover, $\sP(\ey=1|\ex^*,\ew,\ez)$ is non-decreasing in $\ex^*$ and $\sF_\epsilon(\cdot|\ez,\ew,\ex,\ex^*)=\sF_\epsilon(\cdot|\ez,\ew)$.
\cite{mag:mau08} show that the sharp identification region for $\theta$ is 
\begin{align}
  \idr{\theta}=\E_\sP(\ez \ew^\top)^{-1}\E_\sP(\ez \ey^* + \ez \eU),\label{eq:SIR:mag:mau}
\end{align}
where $\E_\sP(\ez \ey^* + \ez \eU)$ is the Aumann (or selection) expectation of the random interval $\ez \ey^* + \ez \eU$, see Definition \ref{def:sel-exp}, with
\begin{align*}
  \eU=\left[-\sum_{i=1}^{k-1}(r_i(\ew,\ez)-r_{i-1}(\ew,\ez))(v_{i+1}-v_i),
  \sum_{i=1}^{k-1}(r_{i+1}(\ew,\ez)-r_i(\ew,\ez))(v_{i+1}-v_i) \right].
\end{align*}
In this expression, $r_{\ex^*}(\ew,\ez)\equiv\sP(\ey=1|\ex^*,\ew,\ez)$ and by convention $r_0(\ew,\ez)=0$ and $r_K(\ew,\ez)=1$, see \cite[Theorem 4]{mag:mau08}. 
If $r_i(\ew,\ez),i=0,\dots,k$, were observed, this characterization would be very similar to the one provided by \cite{ber:mol08} for Identification Problem \ref{IP:param_pred_interval}, see equation \eqref{eq:ThetaI_BLP}.
However, these random functions need to be estimated. 
While the first-stage estimation of $r_i(\ew,\ez),i=0,\dots,k$, does not affect the identification arguments, it does complicate inference, see \cite{cha:che:mol:sch18} and the discussion in Section \ref{sec:inference}.

\subsubsection{Endogenous Explanatory Variables}
\label{subsubsec:CRS}
Whereas the standard random utility model presumes some form of exogeneity for $\ex$, in practice often some explanatory variables are endogenous.
This problem has been addressed in the literature to obtain point identification of the model through a combination of several assumptions, including large support conditions, special regressors, control function restrictions, and more \citep[see, e.g.,][]{mat93,ber:lev:pak95,lew00,pet:tra10}.
\cite{hon:tam03} analyze the distinct but related problem of identification in a censored regression model with endogeneous explanatory variables, and provide sufficient conditions for point identification.\footnote{The estimator that they propose extends the minimum distance estimator put forward by \cite{man:tam02}, see Section \ref{subsec:consistent}, so that if the conditions required for point identification do not hold, it estimates the parameter's identification region (under regularity conditions).
\cite{hon:tam03letters} carry out a similar analysis for the binary choice model with endogenous explanatory variables.} 

Here I discuss how to carry out identification analysis in the absence of such assumptions when instrumental variables $\ez$ are available, as proposed by \cite{che:ros:smo13}.
They consider a more general case than I do here, with utility function that is not parametrically specified and not restricted to be separable in the unobservables. 
Even in that more general case, the identification analysis follows through similar steps as reported here.
\begin{IP}[Discrete Choice with Endogenous Explanatory Variables]\label{IP:discrete:choice:endogenous}
Let $(\ey,\ex,\ez)\sim\sP$ be observable random variables in $\cY\times\cX\times\cZ$.
Let all members of the population face the same choice set $\cY$.
Suppose that each alternative has one unobservable attribute $\epsilon_c,c\in\cY$ and let $\nu\equiv(\epsilon_{c_1},\dots,\epsilon_{c_{|\cY|}})$.\footnote{Compared to the general model put forward in Section \ref{subsec:single:ag:RUM}, in this model there are no preference heterogeneity terms $\zeta$ (random coefficients) that vary only across decision makers.}
Let $\nu\sim\sQ$ and assume that $\nu\independent\ez$.
Suppose $\sQ$ belongs to a nonparametric family of distributions $\cT$, and that the conditional distribution of $\nu|\ex,\ez$, denoted $\sR(\nu|\ex,\ez)$, is absolutely continuous with respect to Lebesgue measure with everywhere positive density on its support, $(\ex,\ez)$-a.s.
Suppose utility is separable in unobservables and has a functional form known up to finite dimensional parameter vector $\theta\in\Theta\subset\R^m$, so that $\bu_i(c)=g(\ex_c;\theta)+\epsilon_c$, $(\ex_c,\epsilon_c)$-a.s., for all $c\in\cY$.
Maintain the normalizations $g(\ex_{c_{|\cY|}};\theta)=0$ for all $\theta\in\Theta$ and all $\ex\in\cX$, and $g(x_c^0;\theta)=\bar{g}$ for known $(x_c^0,\bar{g})$ for all $\theta\in\Theta$ and $c\in\cY$.\footnote{Of course, under these conditions one can work directly with utility differences. To try and economize on notation, I do not explicitly do so here.}
Given $(\ex,\ez,\nu)$, suppose $\ey$ is the utility maximizing choice in $\cY$.
In the absence of additional information, what can the researcher learn about $(\theta,\sQ)$?	
\qedex
\end{IP}
The key challenge to identification here results because the distribution of $\nu$ can vary across different values of $\ex$, both conditional and unconditional on $\ez$.
Why does this fact hinder point identification?
For a given $\vartheta\in\Theta$ and for any $c\in\cY$ and $x\in\cX$, the model yields that $c$ is optimal, and hence chosen, if and only if $\nu$ realizes in the set
\begin{align}
\cE_\vartheta(c,x)=\{e\in\cV:g(x_c;\vartheta)+e_c\ge g(x_d;\vartheta)+e_d~\forall d\in\cY\}.\label{eq:che:ros:E}
\end{align}
Figure \ref{fig:discrete:choice:endogenous} plots the set $\cE_\vartheta(\ey,\ex)$ in a stylized example with $\cY=\{1,2,3\}$ and $\cX=\{x^1,x^2\}$, as a function of $(\epsilon_1-\epsilon_3,\epsilon_2-\epsilon_3)$.\footnote{This figure is based on Figures 1-3 in \cite{che:ros:smo13}.}
Consider the model implied distribution, denoted $\sM$ below, of the optimal choice. 
Then, recalling the restriction $\ez\independent\nu$, we have
\begin{align}
\sM(c|\ex\in R_x,\ez;\vartheta)&=\int_{x\in R_x}\sR(\cE_\vartheta(c,\ex)|\ex=x,\ez)d\sP(x|z),~\forall~R_x\subseteq\cX,~\ez\text{-a.s.}\label{eq:che:ros:model:distrib}\\
\sQ(F)&=\int_{x\in\cX}\sR(F|\ex=x,\ez)d\sP(x|z),~~\forall~F\subseteq\cV,~\ez\text{-a.s.},\label{eq:che:ros:instrument}
\end{align}
Because the joint distribution of $(\ex,\nu)$ conditional on $\ez$ is left completely unrestricted (other than \eqref{eq:che:ros:instrument}), one can find multiple triplets $(\vartheta,\sQ,\sR(\nu|\ex,\ez))$ satisfying the maintained assumptions and with $\sM(c|\ex\in R_x,\ez;\vartheta)=\sP(c|\ex\in R_x,\ez)$ for all $c\in\cY$ and $R_x\subseteq\cX$, $\ez$-a.s.
\begin{figure}[tp]
\centering
\includegraphics[scale=1.1]{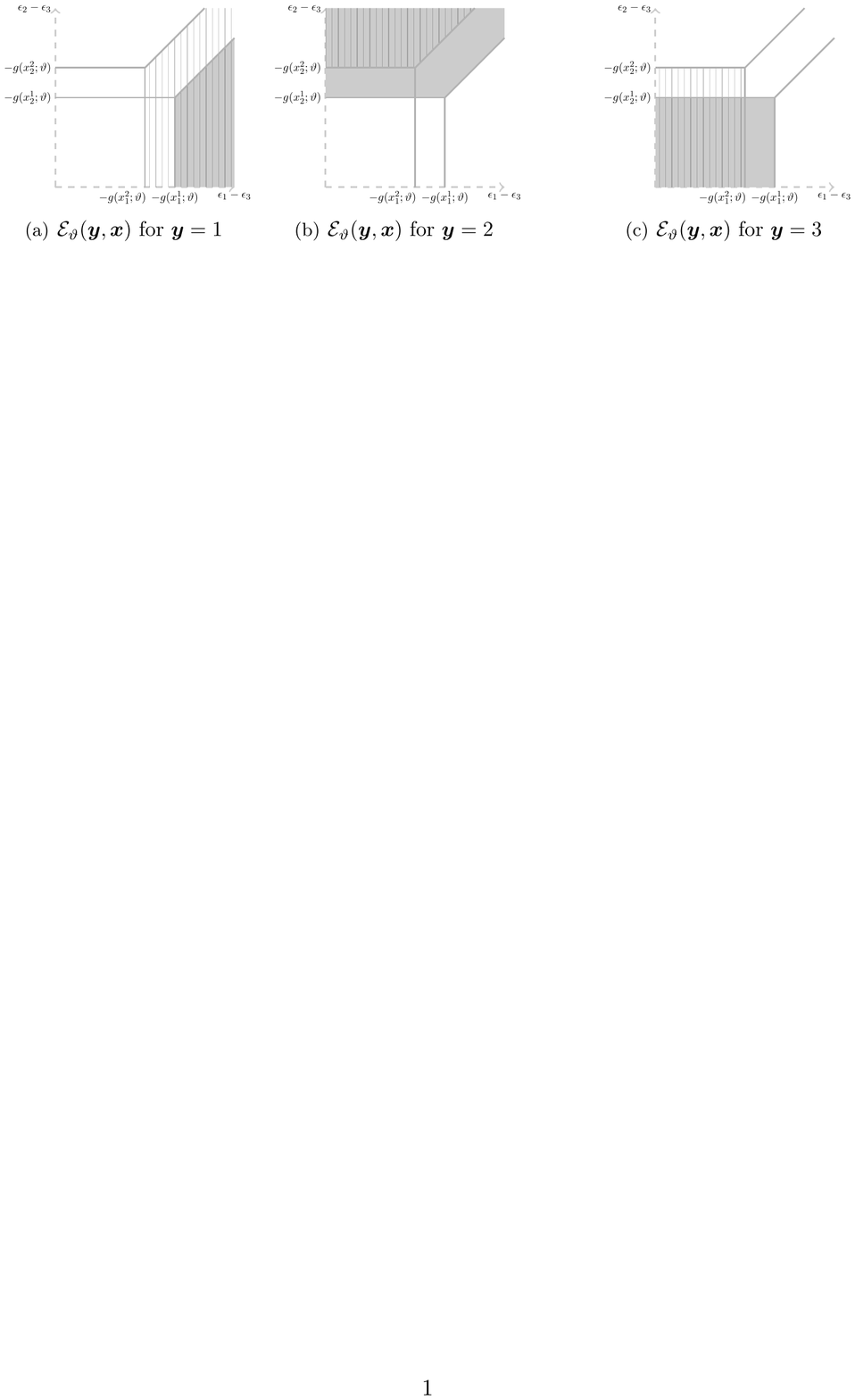}\smallskip
\caption{\small{The set $\cE_\vartheta$ in equation \eqref{eq:che:ros:E} and the corresponding admissible values for $(\ey,\ex)$ as a function of $(\epsilon_1-\epsilon_3,\epsilon_2-\epsilon_3)$ under the simplifying assumption that $\cX=\{x^1,x^2\}$ and $\cY=\{1,2,3\}$. 
The admissible values for $(\ey,\ex)$ are $\{(c,x^1)\}$ in the gray area, and $\{(c,x^2)\}$ in the area with vertical lines.
Because the two areas overlap, the model has set-valued predictions for $(\ey,\ex)$.}}
\label{fig:discrete:choice:endogenous}
\end{figure}

It is instructive to compare \eqref{eq:che:ros:model:distrib}-\eqref{eq:che:ros:instrument} with \possessivecite{mcf73} conditional logit.
Under the standard assumptions, $\ex\independent\nu$ so that no instrumental variables are needed.
This yields $\sQ(\nu)=\sR(\nu|\ex)$ $\ex$-a.s., and in addition $\sQ$ is typically known, with corresponding simplifications in \eqref{eq:che:ros:model:distrib}.
The resulting system of equalities can be inverted under standard order and rank conditions to yield point identification of $\theta$.

Further insights can be gained by looking at Figure \ref{fig:discrete:choice:endogenous}.
As the value of $\ex$ changes from $x^1$ to $x^2$, the region of values where, say, alternative 1 is optimal changes.
When $\ex$ is exogenous, say independent of $\nu$, this yields a system of equalities relating $(\theta,\sQ)$ to the observed distribution $\sP(\ey,\ex)$ which, as stated above, can be inverted to obtain point identification.
When $\ex$ is endogenous, this reasoning breaks down because the conditional distribution $\sR(\nu|\ex,\ez)$ may change across realizations of $\ex$.
Figure \ref{fig:discrete:choice:endogenous} also offers an instructive way to connect Identification Problem \ref{IP:discrete:choice:endogenous} with the identification problem studied in Section \ref{subsubsec:man:tam02} (as well as with those in Sections \ref{subsec:multiple:eq}-\ref{subsec:auctions} below).
In the latter, the model has set-valued predictions for the \emph{outcome variable} given realizations of the covariates and unobserved heterogeneity terms, which overlap across realizations of the unobserved heterogeneity terms.
In the problem studied here, the model has singleton-valued predictions for the outcome variable of interest $\ey$ as a function of the observable explanatory variables $\ex$ and unobservables $\nu$.
However, for given realization of $\nu$, the model admits \emph{sets} of values for the \emph{endogenous variables} $(\ey,\ex)$, which overlap across realizations of $\nu$.
Because the model is silent on the joint distribution of $(\ex,\nu)$ (except for requiring that the marginal distribution of $\nu$ does not depend on $\ez$), partial identification results.

It is possible to couple the maintained assumptions with the observed data to learn features of $(\theta,\sQ)$. 
Because the observed choice $\ey$ is assumed to maximize utility, for the data generating $(\theta,\sQ)$ the model yields
\begin{align}
\nu\in \cE_\theta(\ey,\ex)\text{-a.s.},\label{eq:che:ros:e_in_E}
\end{align}
with $\cE_\theta(\ey,\ex)$ a random closed set as per Definition \ref{def:rcs}.
Equation \eqref{eq:che:ros:e_in_E} exhausts the modeling content of Identification Problem \ref{IP:discrete:choice:endogenous}.
Theorem \ref{thr:artstein} (as expressed in \eqref{eq:dom-c}) can then be leveraged to extract its empirical content from the observed distribution $\sP(\ey,\ex,\ez)$.
As a preparation for doing so, note that for given $F\in\cF$ (with $\cF$ the collection of closed subsets of $\cV$) and $\vartheta\in\Theta$, we have
\begin{align*}
\sP(\cE_\vartheta(\ey,\ex)\subseteq F|\ez)=\int_{x\in\cX}\sum_{c\in\cY}\one(\cE_\vartheta(c,x)\subseteq F)\sP(\ey=c|\ex=x,\ez)d\sP(x|\ez),
\end{align*}
so that this probability can be learned from the observed data.\begin{SIR}[Discrete Choice with Endogenous Explanatory Variables]\label{SIR:discrete:choice:endogenous}
Under the assumptions of Identification Problem \ref{IP:discrete:choice:endogenous}, the sharp identification region for $(\theta,\sQ)$ is
\begin{align}
\idr{\theta,\sQ}=\left\{\vartheta\in\Theta,\tilde\sQ\in\cT:\tilde\sQ(F)\ge \sP(\cE_\vartheta(\ey,\ex)\subseteq F|\ez),~\forall F\in\cF,~\ez\text{-a.s.}\right\}.\label{eq:SIR:discrete:choice:endogenous}
\end{align}
\end{SIR}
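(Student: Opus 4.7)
The plan is to apply Theorem \ref{thr:artstein} to the random closed set $\cE_\vartheta(\ey,\ex)$ defined in \eqref{eq:che:ros:E}, using the observation, already made in \eqref{eq:che:ros:e_in_E}, that the entire empirical content of Identification Problem \ref{IP:discrete:choice:endogenous} is the requirement that $\nu$ be a measurable selection of $\cE_\theta(\ey,\ex)$ whose marginal distribution $\sQ$ does not depend on $\ez$. The two directions of the argument mirror the two directions of Artstein's dominance characterization.

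For necessity, I would suppose $(\vartheta,\tilde\sQ)$ is observationally equivalent to the data generating pair, so that there is a joint law $\tilde\sR$ on $(\ey,\ex,\ez,\nu)$ with $\tilde\sR$-marginal $\sP$ on $(\ey,\ex,\ez)$, $\tilde\sR$-marginal $\tilde\sQ$ on $\nu$, $\nu\independent\ez$ under $\tilde\sR$, $\tilde\sR(\nu|\ex,\ez)$ absolutely continuous with positive density, and $\ey$ utility-maximizing at parameter $\vartheta$. The last condition is exactly $\nu\in\cE_\vartheta(\ey,\ex)$ $\tilde\sR$-a.s. Conditioning on $\ez$ and using the containment functional version of Artstein's theorem (the inequality displayed as \eqref{eq:dom-c} in Appendix \ref{app:RCS}) yields
\begin{align*}
\tilde\sQ(F)=\tilde\sR(\nu\in F\mid\ez)\ge \sP(\cE_\vartheta(\ey,\ex)\subseteq F\mid \ez),\quad \forall F\in\cF,~\ez\text{-a.s.},
\end{align*}
where the first equality uses $\nu\independent\ez$. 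Hence $(\vartheta,\tilde\sQ)$ lies in the right-hand side of \eqref{eq:SIR:discrete:choice:endogenous}.

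For sufficiency, I would take a candidate $(\vartheta,\tilde\sQ)$ satisfying the stated inequalities and use Theorem \ref{thr:artstein} in the opposite direction: for $\ez$-a.e.\ $z$, there exists, on a suitably enriched probability space, a random vector $\tilde\nu$ whose conditional distribution given $\ez=z$ equals $\tilde\sQ$ and which satisfies $\tilde\nu\in\cE_\vartheta(\ey,\ex)$ conditional on $\ez=z$. Because $\tilde\sQ$ does not depend on $z$, the conditional law of $\tilde\nu$ given $\ez$ is constant, which delivers $\tilde\nu\independent\ez$. Defining $\tilde\sR$ as the induced joint law of $(\ey,\ex,\ez,\tilde\nu)$, the relation $\tilde\nu\in\cE_\vartheta(\ey,\ex)$ is exactly the statement that $\ey$ is utility-maximizing in $\cY$ at $(\ex,\tilde\nu)$ under parameter $\vartheta$, so that $\tilde\sR$ is consistent with the model and generates $\sP(\ey,\ex,\ez)$; this establishes observational equivalence.

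The main obstacle is the technical matching between Artstein's abstract selection and the additional regularity built into the identification problem, namely that $\sR(\nu\mid\ex,\ez)$ is absolutely continuous with everywhere positive density. Artstein's theorem by itself delivers the existence of a selection but not one with a prescribed conditional density; to close this gap I would exploit that each $\cE_\vartheta(c,x)$ has non-empty interior in $\cV$ and that $\tilde\sQ$ may be taken absolutely continuous, then smooth the raw selection by a convolution with a small absolutely-continuous kernel supported inside the interior of $\cE_\vartheta(\ey,\ex)$, which preserves both the marginal $\tilde\sQ$ up to an arbitrarily small perturbation and the inclusion $\tilde\nu\in\cE_\vartheta(\ey,\ex)$, then pass to the limit. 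A secondary measurability subtlety is that the $\ez$-null set in the quantifier ``$\forall F\in\cF$, $\ez$-a.s." may a priori depend on $F$; this is handled by reducing to a countable determining subclass of compact sets, exactly as invoked in the proof of Theorem SIR-\ref{SIR:CDF_id}.
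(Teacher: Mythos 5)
Your proof is correct and follows essentially the same route as the paper: both directions rest on Theorem \ref{thr:artstein} applied to the random closed set $\cE_\vartheta(\ey,\ex)$ conditional on the instrument, with the independence restriction $\nu\independent\ez$ used to replace the conditional law of $\nu$ by $\tilde\sQ$. The paper conditions on $(\ex,\ez)$ and then integrates out $\ex$ rather than conditioning on $\ez$ directly, and it does not dwell on the absolute-continuity regularity point or the $F$-dependence of the null set that you flag — these are differences of presentation and added care, not of substance.
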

\begin{proof}
To simplify notation, I write $\cE_\vartheta\equiv\cE_\vartheta(\ey,\ex)$.
Let $(\cE_\vartheta,\ex,\ez)=\{(\mathbf{e},\ex,\ez):\mathbf{e}\in\cE_\vartheta\}$.
If the model is correctly specified, $(\nu,\ex,\ez)\in(\cE_\theta,\ex,\ez)$-a.s. for the data generating value of $(\theta,\sQ)$.
Using Theorem \ref{thr:artstein} and Theorem 2.33 in \cite{mol:mol18}, it follows that $(\vartheta,\tilde\sQ)$ is observationally equivalent to $(\theta,\sQ)$ if and only if
\begin{align*}
\tilde\sQ(F|\ex,\ez)\ge \sP(\cE_\vartheta(\ey,\ex)\subseteq F|\ex,\ez),~\forall F\in\cF,~(\ex,\ez)\text{-a.s.}
\end{align*}
As the distribution of $\nu$ is only restricted so that $\nu\independent\ez$, one can integrate both sides of the inequality with respect to $\ex$.
The final result follows because $\tilde\sQ$ does not depend on $\ez$.
\end{proof}
While Theorem SIR-\ref{SIR:discrete:choice:endogenous} relies on checking inequality \eqref{eq:SIR:discrete:choice:endogenous} for all $F\in\cF$, the results in \cite[Theorem 2]{che:ros:smo13} and \cite[Chapter 2]{mol:mol18} can be used to obtain a smaller collection of sets over which to verify it.
In particular, if $\ex$ has a discrete distribution, it suffices to use a finite collection of sets.
For example, in the case depicted in Figure \ref{fig:discrete:choice:endogenous} with $\cX=\{x^1,x^2\}$, \cite[Section 3.3 of the 2011 CeMMAP working paper version CWP39/11]{che:ros:smo13} show that $\idr{\theta,\sQ}$ is obtained by checking at most twelve inequalities in \eqref{eq:SIR:discrete:choice:endogenous}.
The left hand side of these inequalities is a linear function of six values that the distribution $\tilde\sQ$ assigns to each of the component regions depicted in Figure \ref{fig:discrete:choice:endogenous} (the one where $\cE_\vartheta(1,x^1)\cap\cE_\vartheta(1,x^2)$ realizes; the one where $\cE_\vartheta(1,x^1)\cap\cE_\vartheta(3,x^2)$ realizes; etc.)
Hence, in this example, $(\vartheta,\tilde\sQ)\in\idr{\theta,\sQ}$ if and only if $\tilde\sQ$ assigns to these six regions a probability mass such that for $\vartheta$ the twelve inequalities characterized by \citeauthor{che:ros:smo13} hold.
\begin{BI}
A conceptual contribution of \cite{che:ros:smo13} is to show that one can frame models with endogenous explanatory variables as \emph{incomplete} models.
Incompleteness here results from the fact that the model does not specify how the endogenous variables $\ex$ are determined. 
One can then think of these as models with set-valued predictions for the endogeneous variables ($\ey$ and $\ex$ in this application), even though the outcome of the model ($\ey$) is uniquely predicted by the realization of the observed explanatory variables ($\ex$) and the unobserved heterogeneity terms ($\nu$).
Random set theory can again be leveraged to characterize sharp identification regions.
\end{BI}
\cite[Chapter XXX in this Volume]{che:ros19} discuss related generalized instrumental variables models where random set methods are used to obtain characterizations of sharp identification regions in the presence of endogenous explanatory variables.

\subsubsection{Unobserved Heterogeneity in Choice Sets and/or Consideration Sets}
\label{subsubsec:BCMT}
Compared to the general framework set forth at the beginning of Section \ref{subsec:single:ag:RUM}, as pointed out in \cite{man77}, often the researcher observes $(\ey_i,\ex_i)$ but not $\eC_i$, $i=1,\dots,n$.
Even when $\eC_i$ is observable, the researcher may be unaware of which of its elements the decision maker actually evaluates before selecting one.
In what follows, to shorten expressions, I refer to both the measurement problem of unobserved choice sets and the (cognitive) problem of limited consideration as ``unobserved heterogeneity in choice sets."

Learning features of preferences using discrete choice data in the presence of unobserved heterogeneity in choice sets is a formidable task. 
When a decision maker chooses an alternative, this may be because her choice set equals the feasible set and the chosen alternative is the one yielding the highest utility.
Then observed choice reveals  preferences.
But it can also be that the decision maker has access to/considers only the chosen alternative \citep[e.g.,][p. 99]{blo:mar60}.
Then observed choice is driven entirely by choice set composition, and is silent about preferences.
A plethora of scenarios between these extremes is possible, but the researcher does not know which has generated the observed data.
This fundamental identification problem calls either for restrictions on the random utility model and consideration set formation process, or for collection of richer data that eliminates unobserved heterogeneity in $\eC_i$ or allows for enhanced modeling of it \citep[see, e.g.,][]{cap16}.

A sizable literature spanning behavioral economics, econometrics, experimental economics, marketing, microeconomics, and psychology, has put forward different models to formalize the complex process that leads to the formation of the set of alternatives that the agent considers or can choose from \citep[see, e.g.,][for early contributions]{sim59, how63, tve72}.
\cite{man77} proposes both a general econometric model where decision makers draw choice sets from an unknown distribution, as well as a specific model of choice set formation, independent from preferences, and studies their implications for the distributional structure of random utility models.\footnote{The specific model in \cite[Section II-A]{man77} is often used in applications.
It posits that each alternative $c\in\cY$ enters the decision maker's choice set with probability $\phi_c$, independently of the other alternatives.
The probability $\phi_c$ may depend on observable individual characteristics, and $\phi_c=1$ for at least one option $c\in\cY$ (the ``default" good).}

However, assumptions about the choice set formation process are often rooted in a desire to achieve point identification rather than in information contained in the model or observed data.\footnote{These assumptions are akin to assumptions about selection mechanisms in models with multiple equilibria.
The latter are discussed further below in Section \ref{subsubsec:tam03:cil:tam09}, along with their criticisms.}
It is then important to ask what can be learned about decision maker's preferences under minimal assumptions on the choice set formation process.
Allowing for unrestricted dependence between choice sets and preferences, while challenging for identification analysis, is especially relevant.
Indeed, decision makers' unobserved attributes may determine both their preferences and which items in the feasible set they pay attention to or are available to them (e.g., through unobserved liquidity constraints, unobserved characteristics such as religious preferences in the context of school choice, or behavioral phenomena such as aversion to extremes, salience, etc.).
Here I use the framework put forward by \cite{bar:cou:mol:tei18} to study identification of discrete choice models with unobserved heterogeneity in choice sets and preferences.

\begin{IP}[Discrete Choice with Unobserved Heterogeneity in Choice Sets and Preferences]\label{IP:BCMT}
Let $(\ey,\ex)\sim \sP$ be observable random variables in $\cY\times\cX$.
Assume that there exists a real valued function $g$, which for simplicity I posit known up to parameter $\delta\in\Delta\subset\R^m$ and continuous in its second argument, such that $\bu_i(c)=g(\ex_{ic},\nu_i;)$, $(\ex_{ic},\nu_i)$-a.s., for all $c\in\cY,i\in\cI$, where $\ex_{ic}$ denotes the vectors of attributes relevant to alternative $c$, and includes attributes that are alternative invariant and ones that are alternative specific (respectively, $\ex_i^1$ and $\ex_{ic}^2$ in the general notation laid out in Section \ref{subsec:single:ag:RUM}).
Suppose that $\ey=\arg\max_{c\in \eC}g(\ex_c,\nu;\delta)$, where ties are assumed to occur with probability zero and $\eC$ is an unobservable choice set drawn from the subsets of $\cY$ according to some unknown probability distribution.
Suppose $\sR(|\eC|\ge\kappa)=1$ for some known constant $\kappa\ge 2$.
Let $\sQ$ denote the distribution of $\nu$, and assume that it is known up to a finite dimensional parameter $\gamma\in\Gamma\subset\R^k$.
For simplicity, assume that $\nu\independent\ex$.\footnote{This assumption can be relaxed as discussed in \cite{mat07}. The procedure proposed here can also be adapted to allow for endogenous explanatory variables as in Section \ref{subsubsec:CRS} by combining the results in \cite{bar:cou:mol:tei18} with those in \cite{che:ros:smo13}.}
In the absence of additional information, what can the researcher learn about $\theta\equiv[\delta;\gamma]$?
	\qedex
	\end{IP}
\begin{figure}[tp]
\begin{center}
\includegraphics[scale=1.1]{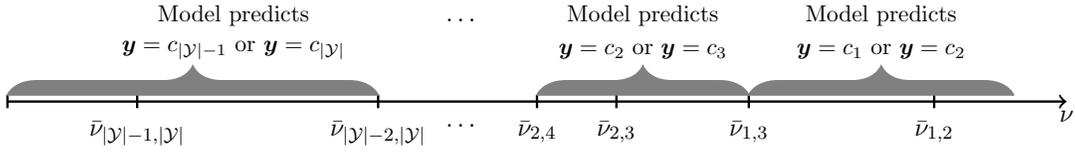}
\caption{\small{Predicted value of $\ey$ in Identification Problem \ref{IP:BCMT} as a function of $\nu$ for $\kappa=|\cY|-1$. In this case, $\eC=\cY\setminus\{c\}$ for some $c\in\cY$, and the model predicts either the first or the second best alternative in $\cY$.}}
	\label{fig:set_valued_BCMT}
\end{center}
\end{figure}

The model just laid out has set valued predictions for the decision maker's optimal choice, because different alternatives might be optimal depending on which choice set the decision maker draws.
Figure \ref{fig:set_valued_BCMT}, which is based on the analysis in \cite{bar:cou:mol:tei18}, illustrates the set valued predictions in a stylized example.
In the figure $\nu$ is assumed to be a scalar; $\bar{\nu}_{j,m}$ denotes the threshold value of $\nu$ above which $c_j$ yields higher utility than $c_m$ and below which $c_m$ yields higher utility than $c_j$ (the threshold's dependence on $(\ex;\delta)$ is suppressed for notational convenience). 
Consider the case that $\nu\in[\bar{\nu}_{2,3},\bar{\nu}_{1,2}]$, so that $c_2$ is the option yielding the highest utility among all options in $\cY$. 
When $\kappa=|\cY|-1$, the agent may draw a choice set that does not include one of the alternatives in $\cY$.
If the excluded alternative is not $c_2$ (or if $\eC$ realizes equal to $\cY$), the model predicts that the decision maker chooses $c_2$.
If $\eC$ realizes equal to $\cY\setminus\{c_2\}$, the model predicts that the decision maker chooses the second best: $c_1$ if $\nu\in[\bar{\nu}_{1,3},\bar{\nu}_{1,2}]$, and $c_3$ if $\nu\in[\bar{\nu}_{2,3},\bar{\nu}_{1,3}]$.
Conversely, observation of $\ey=c_1$ allows one to conclude that $\nu\ge\bar\nu_{1,3}$, and $\ey=c_2$ that $\nu\ge\bar\nu_{2,4}$, with $\bar\nu_{2,4}\le\bar\nu_{1,3}$, and these regions of possible realizations of $\nu$  overlap.

Why does this set valued prediction hinder point identification?
The reason is similar to the explanation given for Identification Problem \ref{IP:man:tam02_binary}:
the distribution of the observable data relates to the model structure in an \emph{incomplete} manner, because the distribution of the (unobserved) choice sets is left completely unspecified.
\cite{bar:cou:mol:tei18} show that one can find multiple candidate distributions for $\eC$ and parameter vectors $\vartheta$, such that together they yield a model implied distribution for $\ey|\ex$ that matches $\sP(\ey|\ex)$, $\ex$-a.s.

\citeauthor{bar:cou:mol:tei18} propose to work directly with the set of model implied optimal choices given $(\ex,\nu)$ associated with each possible realization of $\eC$, which is depicted in Figure \ref{fig:set_valued_BCMT} for a specific example.
The key idea is that, according to the model, the observed choice maximizes utility among the alternatives in $\eC$.
Hence, for the data generating value of $\theta$, it belongs to the set of model implied optimal choices.
With this, the authors are able to characterize $\idr{\theta}$ through Theorem \ref{thr:artstein} as the collection of parameter vectors that satisfy a finite number of conditional moment inequalities.
\begin{BI}
\cite{bar:cou:mol:tei18} show that working directly with the set of model implied optimal choices given $(\ex,\nu)$ allows one to dispense with considering all possible distributions of choice sets that are allowed for in Identification Problem \ref{IP:BCMT} to complete the model.
Such distributions may depend on $\nu$ even after conditioning on observables and may constitute an infinite dimensional nuisance parameter, which creates great difficulties for the computation of $\idr{\theta}$ and for inference.
\end{BI}

Identification Problem \ref{IP:BCMT} sets up a structure where preferences include idiosyncratic components $\nu$ that are decision maker specific and can depend on $\eC$, and where heterogeneity in $\eC$ can be driven either by a measurement problem, or by the decision maker's limited attention to the options available to her.
However, for computational and finite sample inference reasons, it restricts the family of utility functions to be known up to a finite dimensional parameter vector $\delta$.

A rich literature in decision theory has analyzed a different framework, where the decision maker's choice set is observable to the researcher, but the decision maker does not consider all alternatives in it \citep[for recent contributions see, e.g.,][]{mas:nak:ozb12,man:mar14}.
In this literature, the utility function is left completely unspecified, so that interest focuses on identification of preference orderings of the available options.
Unobserved heterogeneity in preferences is assumed away, so that heterogeneous choice is driven by randomness in consideration sets.
If the consideration set formation process is left unspecified or is subject only to weak restrictions, point identification of the preference orderings is not possible even if preferences are homogeneous and the researcher observes a representative agent facing multiple distinct choice problems with varying choice sets.
\cite{cat:ma:mas:sul17} propose a general model for the consideration set formation process where the only restriction is a weak and intuitive monotonicity condition: the probability that any particular consideration set is drawn does not decrease when the number of possible consideration sets decreases.
Within this framework, they provide revealed preference theory and testable implications for observable choice probabilities.
\begin{IP}[Homogeneous Preference Orderings in Random Attention Models]\label{IP:RAM}
Let $(\ey,\eC)\sim\sP$ be a pair of observable random variable and random set in $\cY\times\mathfrak{D}$, where $\mathfrak{D}=\{D:D\subseteq\cY\}\setminus\emptyset$.\footnote{
Here I omit observable covariates $\ex$ for simplicity.}
Let $\mu:\mathfrak{D}\times\mathfrak{D}\to[0,1]$ denote an \emph{attention rule} such that $\mu(A|G)\ge 0$ for all $A\subseteq G$, $\mu(A|G)=0$ for all $A\nsubseteq G$, and $\sum_{A\subset G}\mu(A|G)=1$, $A,G\in\mathfrak{D}$.
Assume that for any $b\in G\setminus A$,
\begin{align}
\label{eq:RAM:monotonicity}
\mu(A|G)\le\mu(A|G\setminus\{b\}),
\end{align}
and that the decision maker has a strict preference ordering $\succ$ on $\cY$.\footnote{
Specifically, $\succ$ is an asymmetric, transitive and complete binary relation.}
In the absence of additional information, what can the researcher learn about $\succ$?
	\qedex
\end{IP}
\cite{cat:ma:mas:sul17} posit that an observed distribution of choice $\sP(\ey|\eC)$ has a random attention representation, and hence they name it a \emph{random attention model}, if there exists a preference ordering $\succ$ over $\cY$ and a monotonic attention rule $\mu$ such that 
\begin{align}
\cp(c|G)\equiv\sP(\ey=c|\eC=G)=\sum_{A\subseteq G}\one(c\text{ is }\succ\text{-best in }A)\mu(A|G),~~\forall c\in G,~\forall G\in\mathfrak{D}.\label{eq:RAM}
\end{align}
The sharp identification region for the preference ordering, denoted $\idr{\succ}$ henceforth, is given by the collection of preference orderings for which one can find a monotonic attention rule to pair it with, so that \eqref{eq:RAM} holds.

Of course, an observed distribution of choice can be represented by multiple preference orderings and attention rules.
The authors, however, show in their Lemma 1 that if for \emph{some} $G\in\mathfrak{D}$ with $\{b,c\}\in G$,
\begin{align}
\cp(c|G)>\cp(c|G\setminus \{b\}),\label{eq:RAM_violation_reg}
\end{align}
then $c \succ b$ for any $\succ$ for which one can find a monotonic attention rule $\mu$ such that \eqref{eq:RAM} holds.
Because of preference transitivity, one can also learn $a\succ b$ if in addition to the above condition one has $\cp(a|G^\prime)>\cp(a|G^\prime\setminus \{c\})$ for some $c\in G^\prime$ and $G^\prime\in\mathfrak{D}$.
The authors further show in their Theorem 1 that the collection of preference relations associated with all possible instances of \eqref{eq:RAM_violation_reg} for all $c\in G$ and $G\in\mathfrak{D}$ yield all information about preferences given the observed choice probabilities.
This yields a system of linear inequalities in $\cp(c|G)$ that fully characterize $\idr{\succ}$.
Let $\vec{\cp}$ denote the vector with elements $[\cp(c|G):c\in G,G\in\mathfrak{D}]$ and $\Pi_\succ$ denote a conformable matrix collecting the constraints on $\sP(\ey|\eC)$ embodied in \eqref{eq:RAM_violation_reg} and its generalizations based on transitive closure. Then
\begin{align}
\idr{\succ}=\{\succ: \Pi_\succ \vec{\cp}\le 0\}.\label{eq:SIR:RAM}
\end{align}
The authors show that for any given preference ordering $\succ$, the matrix $\Pi_\succ$ characterizing whether $\succ \in \idr{\succ}$ through the system of linear inequalities in \eqref{eq:SIR:RAM} is unique, and they provide a simple algorithm to compute it. 
They also show that mild additional assumptions, such as, for example, that decision makers facing binary choice sets pay attention to both alternatives frequently enough, can substantially increase the informational content of the data (i.e., substantially tighten $\idr{\succ}$).
\begin{BI}
\cite{cat:ma:mas:sul17} show that learning features of preference orderings in Identification Problem \ref{IP:RAM} requires the existence in the data of choice problems where the choice probabilities satisfy \eqref{eq:RAM_violation_reg}.
The latter is a violation of the principle of ``regularity" \citep{luc:sup65} according to which the probability of choosing an alternative from any set is at least as large as the probability of choosing it from any of its supersets.
Regularity is a monotonicity property of choice probabilities, and it is implied by a wide array of models of decision making.
The monotonicity of attention rules in \eqref{eq:RAM:monotonicity} can be viewed as regularity of the process that chooses a consideration set from the subsets of the choice set.
\cite{cat:ma:mas:sul17} show that it is implied by various models of limited attention.
While the violation required in \eqref{eq:RAM_violation_reg} is weak in that it needs only to occur for some $G$, it sheds a different light on the severity of the identification problem described at the beginning of this section.
Regularity of choice probabilities and (partial) identification of preference orderings can co-exist only under restrictions on the consideration set formation process that are stronger than the regularity of attention rules in \eqref{eq:RAM:monotonicity}.
\end{BI}

%
\cite{aba:ada18} and \cite{bar:mol:thi19} provide different sets of sufficient conditions for point identification of models of limited consideration.
In both cases, the authors posit specific models of consideration set formation and provide sufficient conditions for point identification under exclusion and large support assumptions.
\cite{aba:ada18} assume that unobserved heterogeneity in preferences and in consideration sets are independent.
They exploit violations of Slutsky symmetry that result from inattention, assuming that for each alternative there is an observable characteristic with large support that does not affect the consideration probability of the other options.
\cite{bar:mol:thi19} provide a thorough analysis of the extent of dependency between consideration and preferences under which semi-nonparametric point identification of the distribution of preferences and consideration attains.
They exploit a requirement of standard economic theory --the Spence-Mirrlees single crossing property of utility functions-- coupled with a mild strengthening of the classic conditions for semi-nonparametric identification of discrete choice models with full consideration and identical choice sets \citep[see, e.g.,][]{mat07}, assuming that there is at least one decision maker-specific characteristic with large support that affects utility but not consideration.

\subsubsection{Prediction of Choice Behavior with Counterfactual Choice Sets}
\label{subsubsec:counterfactual:choice:set}
Building on \cite{mar60}, \cite{man07b} studies a question related but distinct from those in Identification Problems \ref{IP:BCMT}-\ref{IP:RAM}. 
He is concerned with prediction of choice behavior when decision makers face counterfactual choice sets.
\citeauthor{man07b} frames this question as one of predicting treatment response (see Section \ref{subsec:programme:eval}).
Here the collection of potential treatments is given by $\mathfrak{D}$, the nonempty subsets of the universe of feasible alternatives $\cY$, and the response function specifies the alternative chosen by a decision maker when facing choice set $G\in\mathfrak{D}$.
\citeauthor{man07b} assumes that the researcher observes realized choice sets and chosen alternatives, $(\ey,\eC)\sim\sP$.\footnote{Here I suppress covariates for simplicity.}
Under the standard assumptions laid out at the beginning of Section \ref{subsec:single:ag:RUM}, specifically if utility functions are (say) linear in $\epsilon_{ic}$ and the distribution of $\epsilon_{ic}$ is (say) Type I extreme value or multivariate normal, prediction of choice behavior with counterfactual choice sets is immediate (and point identified).
\citeauthor{man07b}, however, leaves utility functions completely unspecified, and in fact works directly with preference orderings, which he labels decision maker's \emph{types}.
He places no restriction on the distribution of preference types, except requiring that they are independent of the observed choice sets.
\citeauthor{man07b} shows that under these rather weak assumptions, the distribution of predicted choices from counterfactual choice sets can be partially identified, and characterized as the solution to linear programs.

Specifically, let $\ey^*(G)$ denote the decision maker's optimal choice when facing choice set $G\in\mathfrak{D}$.
Assume $\ey^*(\cdot)\independent\eC$, and let $y_k$ denote the choice function for a decision maker of type $k$ --that is, a decision maker with a specific preference ordering labeled $k$.
One example of such preference ordering might be $c_1\succ c_2\succ\dots\succ c_{|\cY|}$.
If a decision maker of this type faces, say, choice set $G=\{c_2,c_3,c_4\}$, then she chooses alternative $c_2$.
Let $K$ denote the set of logically possible types, and $\theta_k$ the probability that a decision maker in the population is of type $k$.
Suppose that the researcher posits a behavioral model specifying $K$, $\{y_k,k=1,\dots,K\}$, and restrictions that constrain $\theta$ to lie in some specified set of distributions.
Let $\Theta$ denote the values of $\vartheta$ that satisfy these requirements plus the conditions $\vartheta_k\ge 0$ for all $k\in K$ and $\sum_{k\in K}\vartheta_k=1$.
Then for any $c\in\cY$ and $\vartheta\in\Theta$, the model predicts
\begin{align*}
\sQ(\ey^*(G)=c)=\sum_{k\in K}\one(y_k(G)=c)\vartheta_k.
\end{align*}
How can one partially identify this probability based on the observed data?
Suppose $\eC$ is observed to take realizations $D_1,\dots,D_m$.
Then the data reveal
\begin{align*}
\sP(\ey(D_j)=d_j)=\sum_{k\in K}\one(y_k(D_j)=d_j)\theta_k~\forall d_j\in D_j,j=1,\dots,m.
\end{align*}
This yields that the sharp identification region for $\theta$ is
\begin{align*}
\idr{\theta}=\{\vartheta\in\Theta:~\sP(\ey(D_j)=d_j)=\sum_{k\in K}\one(y_k(D_j)=d_j)\vartheta_k~\forall d_j\in D_j,j=1,\dots,m\}.
\end{align*}
If the behavioral model is correctly specified, $\idr{\theta}$ is non-empty.
In turn, the sharp identification region for each choice probability is
\begin{align*}
\idr{\sQ(\ey^*(G)=c)}=\left\{\sum_{k\in K}\one(y_k(G)=c)\vartheta_k:\vartheta\in\idr{\theta}\right\},
\end{align*}
and its extreme points can be obtained by solving linear programs.

\cite{kit:sto19} provide closely related sharp bounds on features of counterfactual choices in the nonparametric random utility model of demand, where observable choices are repeated cross-sections and one allows for unrestricted, unobserved heterogeneity. 
Their approach builds on the work of \cite{kit:sto18}, who test weather agents' behavior is consistent with the Axiom of Revealed Stochastic Preference (SARP) in a random utility model in which the utility function of each consumer over commodity bundles is assumed to satisfy only the basic restriction that ``more is better" with no satiation.  
Because the testing exercise is to be carried out using repeated cross-sections data, the authors maintain the assumption that multiple populations of consumers who face distinct choice sets have the same distribution of preferences.  
With this structure in place, de facto the task is to test the full implications of rationality without functional form restrictions.
\citeauthor{kit:sto18}'s approach is based on several novel ideas.  
As a first step, they leverage an earlier insight of \cite{mcf05} to discretize the data without loss of information, so that they can define a large but finite set of rational preferences types.  
As a second step, they show that this implies that rationality can be tested by checking whether observed behavior lies in a cone corresponding to positive linear combinations of preference types.  
While the problem is discrete, its dimension is at first sight  prohibitive.  
Nonetheless, Kitamura and Stoye are able to develop novel computational methods that render the problem tractable.  
They apply their method to the U.K. Household Expenditure Survey, adapting to their framework results on nonparametric instrumental variable analysis by \cite{imb:new09} so that they can handle price endogeneity.

\cite{kam18} builds on \cite{man07b} to learn program effects when agents are randomly assigned to control or treatment.
The treatment group is provided access to the program, while the control group is not.
However, members of the control group may receive access to the program from outside the experiment, leading to noncompliance with the randomly assigned treatment.
The researcher wants to learn about the average effect of program access on the decision to participate in the program and on the subsequent outcome.
While sufficiently rich data may allow the researcher to learn these effects, \citeauthor{kam18} is concerned with the identification problem that arises when the researcher only observes the treatment assignment status, the program participation decision, and the outcome, but not the receipt of program access for every agent. 
\citeauthor{kam18} formalizes this problem as one where the received treatment is selected from a choice set that depends on the assigned treatment and is unobservable to the researcher, and the agents optimally choose whether to participate in the program by maximizing their utility function over their choice set.
Importantly, the utility functions are not subject to parametric restrictions, similarly to \cite{man07b}.
But while \citeauthor{man07b} assumed independence of choice sets and preference types, \citeauthor{kam18} allows them to be arbitrarily dependent on each other, as in \cite{bar:cou:mol:tei18}.
\possessivecite{kam18} approach leverages specific assumptions on random assignment of treatments and on compliance (or lack thereof) of participants to obtain nonparametric bounds on the treatment effects of interest that can be characterized using tractable linear programs.

\subsection{Static, Simultaneous-Move Finite Games with Multiple Equilibria}
\label{subsec:multiple:eq}

\subsubsection{An Inference Approach Robust to the Presence of Multiple Equilibria}
\label{subsubsec:tam03:cil:tam09}
\cite{tam03} and \cite{cil:tam09} substantially enlarge the scope of partial identification analysis of structural models by showing how to apply it to learn features of payoff functions in static, simultaneous-move finite games of complete information with multiple equilibria.
\cite{ber:tam06} extend the approach and considerations that follow to games of incomplete information.
To start, here I focus on two-player entry games with complete information.\footnote{Completeness of information is motivated by the idea that firms in the industry have settled in a long-run equilibrium, and have detailed knowledge of both their own and their rivals' profit functions.}
\begin{IP}[Complete Information Two Player Entry Game]
\label{IP:entry_game}
Let $(\ey_1,\ey_2,\ex_1,\ex_2)\sim\sP$ be observable random variables in $\{0,1\}\times\{0,1\}\times\R^d\times\R^d$, $d<\infty$.
Suppose that $(\ey_1,\ey_2)$ result from simultaneous move, pure strategy Nash play (PSNE) in a game where the payoffs are $\bu_j(\ey_j,\ey_{3-j},\ex_j;\beta_j,\delta_j)\equiv \ey_j(\ex_j\beta_j+\delta_j\ey_{3-j}+\eps_j)$, $j=1,2$ and the strategies are ``enter" ($\ey_j=1$) or ``stay out" ($\ey_j=0$).
Here $(\ex_1,\ex_2)$ are observable payoff shifters, $(\eps_1,\eps_2)$ are payoff shifters observable to the players but not to the econometrician, $\delta_1\le 0,\delta_2\le 0$ are interaction effect parameters, and $\beta_1,\beta_2$ are parameter vectors in $B\subset\R^d$ reflecting the effect of the observable covariates on payoffs. 
Each player enters the market if and only if entering yields non-negative payoff, so that $\ey_j=\one(\ex_j\beta_j+\delta_j\ey_{3-j}+\eps_j\ge 0)$.
For simplicity, assume that $\eps\equiv(\eps_1,\eps_2)$ is independent of $\ex\equiv(\ex_1,\ex_2)$ and has bivariate Normal distribution with mean vector zero, variances equal to one (a normalization required by the threshold crossing nature of the model), and correlation $\rho\in [-1,1]$.
In the absence of additional information, what can the researcher learn about $\theta=[\delta_1~\delta_2~\beta_1~\beta_2~\rho]$?
	\qedex
\end{IP}
From the econometric perspective, this is a generalization of a standard discrete choice model to a bivariate simultaneous response model which yields a stochastic representation of equilibria in a two player, two action game.
Generically, for a given value of $\theta$ and realization of the payoff shifters, the model just laid out admits multiple equilibria (existence of PSNE is guaranteed because the interaction parameters are non-positive).
In other words, it yields set valued predictions as depicted in Figure \ref{fig:set_valued_pred:tam03}.\footnote{This figure is based on Figure 1 in \cite{tam03}.}

Why does this set valued prediction hinder point identification?
Intuitively, the challenge can be traced back to the fact that for different values of $\theta\in\Theta$, one may find different ways to assign the probability mass in $[-\ex_1\beta_1,-\ex_1\beta_1-\delta_1)\times [-\ex_2\beta_2,-\ex_2\beta_2-\delta_2)$ to $(0,1)$ and $(1,0)$, so as to match the observed distribution $\sP(\ey_1,\ey_2|\ex_1,\ex_2)$.
More formally, for fixed $\vartheta\in\Theta$ and given $(\ex,\eps)$ and $(y_1,y_2)\in\{0,1\}\times\{0,1\}$, let
\begin{align*}
\cE_\vartheta[(1,0),(0,1);\ex]&\equiv[-\ex_1\beta_1,-\ex_1\beta_1-\delta_1)\times [-\ex_2\beta_2,-\ex_2\beta_2-\delta_2),\\
\cE_\vartheta[(y_1,y_2);\ex]&\equiv\{(\eps_1,\eps_2):(y_1,y_2)~\text{is the unique equilibrium}\},
\end{align*} 
so that in Figure \ref{fig:set_valued_pred:tam03} $\cE_\vartheta[(1,0),(0,1);\ex]$ is the gray region, $\cE_\vartheta[(0,1);\ex]$ is the dotted region, etc.
Let $\sR(y_1,y_2|\ex,\eps)$ be a \emph{selection mechanism} that assigns to each possible outcome of the game $(y_1,y_2)\in\{0,1\}\times\{0,1\}$ the probability that it is played conditional on observable and unobservable payoff shifters.
In order to be \emph{admissible}, $\sR(y_1,y_2|\ex,\eps)$ must be such that $\sR(y_1,y_2|\ex,\eps)\ge 0$ for all $(y_1,y_2)\in\{0,1\}\times\{0,1\}$, $\sum_{(y_1,y_2)\in\{0,1\}\times\{0,1\}}\sR(y_1,y_2|\ex,\eps)=1$, and
\begin{align}
\forall \eps\in\cE_\vartheta[(1,0),(0,1);\ex],~&\sR(0,0|\ex,\eps)=\sR(1,1|\ex,\eps)=0 \label{eq:games:sel:mec:1}\\
\forall\eps\in\cE_\vartheta[(y_1,y_2);\ex],~&\sR(\tilde y_1,\tilde y_2|\ex,\eps)=0 ~\forall(\tilde y_1,\tilde y_2)\in\{0,1\}\times\{0,1\}~\text{s.t. }(\tilde y_1,\tilde y_2)\neq(y_1,y_2).\label{eq:games:sel:mec:2}
\end{align} 
Let $\Phi_r$ denote the probability distribution of a bivariate Normal random variable with zero means, unit variances, and correlation $r\in[-1,1]$.
Let $\sM(y_1,y_2|\ex)$ denote the model predicted probability that the outcome of the game realizes equal to $(y_1,y_2)$.
Then the model yields
\begin{align}
\sM(y_1,y_2|\ex)&=\int\sR(y_1,y_2|\ex,\eps)d\Phi_r\notag\\
&=\int_{(\eps_1,\eps_2)\in\cE_\vartheta[(y_1,y_2);\ex]}d\Phi_r+\int_{\eps_1,\eps_2\in\cE_\vartheta[(1,0),(0,1);\ex]}\sR(y_1,y_2|\ex,\eps)d\Phi_r.\label{eq:games_model:pred}
\end{align}
Because $\sR(\cdot|\ex,\eps)$ is left completely unspecified, other than the basic restrictions listed above that render it an admissible selection mechanism, one can find multiple values for $(\vartheta,\sR(\cdot|\ex,\eps))$ such that $\sM(y_1,y_2|\ex)=\sP(y_1,y_2|\ex)$ for all $(y_1,y_2)\in\{0,1\}\times\{0,1\}$ $\ex$-a.s.
\medskip
\begin{figure}[tp]
\centering
\includegraphics[scale=1]{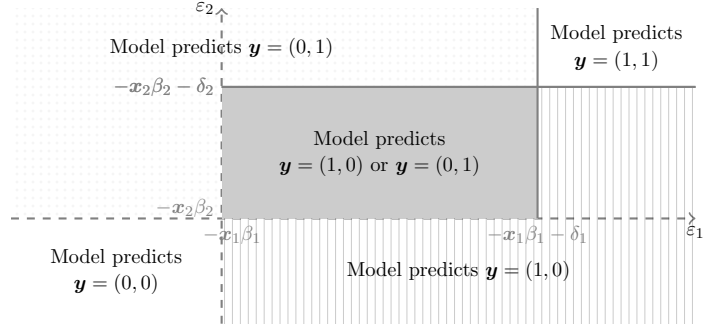}	
		\caption{\small{PSNE outcomes of the game in Identification Problem \ref{IP:entry_game} as a function of $(\eps_1,\eps_2)$.}}
\label{fig:set_valued_pred:tam03}
\end{figure}

Multiplicity of equilibria implies that the mapping from the model's exogenous variables $(\ex_1,\ex_2,\eps_1,\eps_2)$ to outcomes $(\ey_1,\ey_2)$ is a correspondence rather than a function. 
This violates the classical \textquotedblleft principal assumptions\textquotedblright\ or \textquotedblleft coherency conditions\textquotedblright\ for simultaneous discrete response models discussed extensively in the econometrics literature \citep[e.g.,][]{hec78,gou80,sch81,mad83,blu:smi94}.
Such coherency conditions require the existence of a unique reduced form, mapping the model's exogenous variables and parameters to a
unique realization of the endogenous variable; hence, they constrain the model to be recursive or triangular in nature.
As pointed out by \cite{bjo:vuo84}, however, the coherency conditions shut down exactly the social interaction effect of interest by requiring, e.g., that $\delta_1\delta_2=0$, so that at least one player's action has no impact on the other player's payoff.

The desire to learn about interaction effects coupled with the difficulties generated by multiplicity of equilibria prompted the earlier literature to provide at least two different ways to achieve point identification.
The first one relies on imposing simplifying assumptions that shift focus to outcome features that are common across equilibria. 
For example, \cite{bre:rei88,bre:rei90,bre:rei91} and \cite{ber92} study entry games where the number, though not the identities, of entrants is uniquely predicted by the model in equilibrium.
Unfortunately, however, these simplifying assumptions substantially constrain the amount of heterogeneity in player's payoffs that the model allows for.
The second approach relies on explicitly modeling a selection mechanism which specifies the equilibrium played in the regions of multiplicity.
For example, \cite{bjo:vuo84} assume it to be a constant; \cite{baj:hon:rya10} assume a more flexible, covariate dependent parametrization; and \cite{ber92} considers two possible selection mechanism specifications, one where the incumbent moves first, and the other where the most profitable player moves first.
Unfortunately, however, the chosen selection mechanism can have non-trivial effects on inference, and the data and theory might be silent on which is more appropriate. 
A nice example of this appears in \cite[Table VII]{ber92}.
\cite{ber:tam06} review and extend a number of results on the identification of entry models extensively used in the empirical
literature.
\cite{jov89} discusses the observable implications of models with multiple equilibria, and within the analysis of a model with homogeneous preferences shows that partial identification is possible \citep[see][p. 1435]{jov89}.
I refer to \cite{pau13} for a review of the literature on econometric analysis of games with multiple equilibria.

\cite{cil:tam09} show, on the other hand, that it is possible to partially identify entry models that allow for rich heterogeneity in payoffs and for any possible selection mechanism (even ones that are arbitrarily dependent on the unobservable payoff shifters after conditioning on the observed payoff shifters).
In addition, \cite{tam03} provides sufficient conditions for point identification based on exclusion restrictions and large support assumptions.
\cite{kli:tam12} analyze partial identification of nonparametric models of entry in a two-player model, drawing connections with the program evaluation literature.

\begin{BI}
An important conceptual contribution of \cite{tam03} is to clarify the distinction between a model which is \emph{incoherent}, so that no reduced form exists, and a model which is \emph{incomplete}, so that multiple reduced forms may exist. 
Models with multiple equilibria belong to the latter category.
Whereas the earlier literature in partial identification had been motivated by \emph{measurement problems}, e.g., missing or interval data, the work of \cite{tam03} and \cite{cil:tam09} is motivated by the fact that economic theory often does not specify how an equilibrium is selected in the regions of the exogenous variables which admit multiple equilibria.
This is a conceptually completely distinct identification problem.
\end{BI}

\cite{cil:tam09} propose to use simple and tractable implications of the model to learn features of the structural parameters of interest.
Specifically, they point out that the probability of observing any outcome of the game cannot be smaller than the model's implied probability that such outcome is the \emph{unique} equilibrium of the game, and cannot be larger than the model's implied probability that such outcome is \emph{one of the possible} equilibria of the game.
Looking at Figure \ref{fig:set_valued_pred:tam03} this means, for example, that the observed $\sP((\ey_1,\ey_2)=(0,1)|\ex_1,\ex_2)$ cannot be smaller than the probability that $(\eps_1,\eps_2)$ realizes in the dotted region, and cannot be larger than the probability that it realizes either in the dotted region or in the gray region.
Compared to the model predicted distribution in \eqref{eq:games_model:pred}, this means that $\sP((\ey_1,\ey_2)=(0,1)|\ex_1,\ex_2)$ cannot be smaller than the expression obtained setting, for $\eps\in\Eps_\vartheta[(1,0);(0,1);\ex]$, $\sR(0,1|\ex,\eps)=0$, and cannot be larger than that obtained with $\sR(0,1|\ex,\eps)=1$.
Denote by $\Phi(A_1,A_2;\rho)$ the probability that the bivariate normal with mean vector zero, variances equal to one, and correlation $\rho$ assigns to the event $\{\eps_1\in A_1,\eps_2\in A_2\}$.
Then \cite{cil:tam09} show that any $\vartheta=[d_1,d_2,b_1,b_2,r]$ that is observationally equivalent to the data generating value $\theta$ satisfies, $(\ex_1,\ex_2)$-a.s.,
\begin{align}
\sP((\ey_1,\ey_2)=(0,0)|\ex_1,\ex_2)&=\Phi((-\infty,-\ex_1b_1),(-\infty,-\ex_2b_2);r)\label{eq:CT_00}\\
\sP((\ey_1,\ey_2)=(1,1)|\ex_1,\ex_2)&=\Phi([-\ex_1b_1-d_1,\infty),[-\ex_2b_2-d_2,\infty);r)\label{eq:CT_11}\\
\sP((\ey_1,\ey_2)=(0,1)|\ex_1,\ex_2)&\le\Phi((-\infty,-\ex_1b_1-d_1),(-\ex_2b_2,\infty);r)\label{eq:CT_01U}\\
\sP((\ey_1,\ey_2)=(0,1)|\ex_1,\ex_2)&\ge\Big\{\Phi((-\infty,-\ex_1b_1-d_1),(-\ex_2b_2,\infty);r)\notag\\
&\quad\quad-\Phi((-\ex_1b_1,-\ex_1b_1-d_1),(-\ex_2b_2,-\ex_2b_2-d_2);r)\Big\}\label{eq:CT_01L}
\end{align}
While the approach of \cite{cil:tam09} is summarized here for a two player entry game, it extends without difficulty to any finite number of players and actions and to solution concepts other than pure strategy Nash equilibrium.

\cite{ara:tam08} build on the insights of \cite{cil:tam09} to study what is the identification power of equilibrium in games.
To do so, they compare the set-valued model predictions and what can be learned about $\theta$ when one assumes only level-$k$ rationality as opposed to Nash play.
In static entry games of complete information, they find that the model's predictions when $k\ge 2$ are similar to those obtained with Nash behavior and allowing for multiple equilibria and mixed strategies.
\cite{mol:ros08} extend the analysis of \cite{ara:tam08} to the class of supermodular games.
\medskip

The collections of parameter vectors satisfying (in)equalities \eqref{eq:CT_00}-\eqref{eq:CT_01L} yields the sharp identification region $\idr{\theta}$ in the case of two player entry games with pure strategy Nash equilibrium as solution concept, as shown by \cite[Supplementary Appendix D, Corollary D.4]{ber:mol:mol11}.
When there are more than two players or more than two actions \citep[or with different solutions concepts, such as, e.g., mixed strategy Nash equilibrium; correlated equilibrium; or rationality of level $k$ as in][]{ara:tam08}, the characterization in \cite{cil:tam09} obtained by extending the reasoning just laid out yields an outer region.
\cite{ber:mol:mol11} use elements of random set theory to provide a general and computationally tractable characterization of the identification region that is sharp, regardless of the number of players and actions, or the solution concept adopted.
For the case of PSNE with any finite number of players or actions, \cite{gal:hen11} provide a computationally tractable sharp characterization of the identification region using elements of optimal transportation theory.

\subsubsection{Characterization of Sharpness through Random Set Theory}
\label{subsubsec:sharp:games}
\cite{ber:mol:mol11} provide a general approach based on random set theory that delivers sharp identification regions on parameters of structural semiparametric models with set valued predictions.
Here I summarize it for the case of static, simultaneous move finite games of complete information, first with PSNE as solution concept and then with mixed strategy Nash equilibrium.
Then I discuss games of incomplete information.

For a given $\vartheta\in\Theta$, denote the set of pure strategy Nash equilibria (depicted in Figure \ref{fig:set_valued_pred:tam03}) as $\eY_\vartheta(\ex,\eps)$.
It is easy to show that $\eY_\vartheta(\ex,\eps)$ is a random closed set as in Definition \ref{def:rcs}.
Under the assumption in Identification Problem \ref{IP:entry_game} that $\ey$ results from simultaneous move, pure strategy Nash play, at the true DGP value of $\theta\in\Theta$, one has
\begin{align}
\ey\in\eY_\theta~\text{a.s.}\label{eq:y_in_Y_games}
\end{align}
Equation \eqref{eq:y_in_Y_games} exhausts the modeling content of Identification Problem \ref{IP:entry_game}.
Theorem \ref{thr:artstein} can be leveraged to extract its empirical content from the observed distribution $\sP(\ey,\ex)$.
For a given $\vartheta\in\Theta$ and $K\subset\cY$, let $\sT_{\eY_{\vartheta}(\ex,\eps)}(K;\Phi_r)$ denote the probability of the event $\{\eY_\vartheta(\ex,\eps)\cap K\neq \emptyset\}$ implied when $\eps\sim\Phi_r$, $\ex$-a.s.
\begin{SIR}[Structural Parameters in Static, Simultaneous Move Finite Games of Complete Information with PSNE]
\label{SIR:entry_game}
Under the assumptions of Identification Problem \ref{IP:entry_game}, the sharp identification region for $\theta$ is
\begin{align}
\idr{\theta}=\{\vartheta\in\Theta:\sP(\ey\in K|\ex)\le \sT_{\eY_{\vartheta}(\ex,\eps)}(K;\Phi_r)\,\forall K\subset\cY, \, \ex\text{-a.s.}\}.\label{eq:SIR:entry_game}
\end{align} 
\end{SIR}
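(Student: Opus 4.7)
The plan is to express observational equivalence of a candidate $\vartheta$ with the data generating $\theta$ as the existence of a measurable selection of the random closed set $\eY_\vartheta(\ex,\eps)$ whose conditional distribution given $\ex$ matches $\sP(\ey|\ex)$, and then to apply Theorem \ref{thr:artstein} conditionally on $\ex$ to translate this selection-existence statement into the capacity functional dominance inequality.

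First, I would rewrite the modeling content. Equation \eqref{eq:y_in_Y_games} says that for the true $\theta$, $\ey$ is a measurable selection of $\eY_\theta(\ex,\eps)$. A candidate $\vartheta\in\Theta$ is observationally equivalent to $\theta$ if and only if there exists an admissible selection mechanism $\sR(\cdot|\ex,\eps)$, satisfying analogs of \eqref{eq:games:sel:mec:1}--\eqref{eq:games:sel:mec:2} relative to $\eY_\vartheta(\ex,\eps)$, such that the model-implied distribution $\sM(\ey|\ex)$ obtained by integrating $\sR(\cdot|\ex,\eps)$ against $\Phi_r$ (using that $\eps\independent\ex$) equals $\sP(\ey|\ex)$, $\ex$-a.s. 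I would then argue that this is equivalent to the existence, on a suitably enriched probability space, of a random vector $\tilde\ey$ with $\tilde\ey|\ex \edis \ey|\ex$ and $\tilde\ey \in \eY_\vartheta(\ex,\eps)$ almost surely: given such a $\tilde\ey$, the regular conditional distribution of $\tilde\ey$ given $(\ex,\eps)$ defines an admissible $\sR$, and conversely any admissible $\sR$ defines such a $\tilde\ey$.

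Second, I would apply Theorem \ref{thr:artstein}, together with the ordered coupling result (Theorem 2.33 in \cite{mol:mol18}) to accommodate the fact that $\ey$ and $\eY_\vartheta(\ex,\eps)$ need not be defined on a common space. Conditionally on $\ex$, the random closed set $\eY_\vartheta(\ex,\eps)$ takes values in the finite set $\cY=\{0,1\}\times\{0,1\}$, so all subsets are closed and one only needs to check a finite family of sets. Artstein's theorem then yields that the desired selection $\tilde\ey$ exists if and only if
\begin{align*}
\sP(\ey\in K|\ex)\;\le\;\sT_{\eY_{\vartheta}(\ex,\eps)}(K;\Phi_r)\qquad\forall K\subset\cY,\ \ex\text{-a.s.},
\end{align*}
which is exactly the inequality defining the set on the right-hand side of \eqref{eq:SIR:entry_game}. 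Combining the two steps gives the sharp characterization.

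The main obstacle I anticipate is the first step: rigorously equating the collection of model-admissible conditional distributions for $\ey|\ex$ (arising from equilibrium selection mechanisms that may depend on $\eps$ in arbitrary measurable ways) with the collection of conditional distributions of measurable selections of $\eY_\vartheta(\ex,\eps)$. This requires verifying that $\eY_\vartheta(\ex,\eps)$ is genuinely a random closed set in the sense of Definition \ref{def:rcs}, handling the coupling between $\ey$ (defined on the data probability space) and $\eps$ (a model primitive), and ruling out extraneous selections that are not generated by equilibrium play. Once this correspondence is established, the application of Theorem \ref{thr:artstein} and the finiteness of $\cY$ make the rest of the argument routine.
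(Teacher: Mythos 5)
Your proposal is correct and follows essentially the same route as the paper's proof: both establish that $\vartheta$ is observationally equivalent to $\theta$ if and only if some admissible selection mechanism reproduces $\sP(\ey|\ex)$, identify this with the existence of a selection of $\eY_\vartheta(\ex,\eps)$ whose conditional distribution given $\ex$ matches the data (using the ordered-coupling form of Theorem \ref{thr:artstein} in one direction, and completing the model with a degenerate selection mechanism that picks the coupled selection with probability one in the other), and then read off the capacity-functional inequalities. The "obstacle" you flag is resolved exactly as you anticipate: since $\eY_\vartheta$ is by construction the set of PSNE outcomes, every measurable selection is generated by some admissible equilibrium selection mechanism, so no extraneous selections arise.
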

\begin{proof}
To simplify notation, let $\eY_\vartheta\equiv \eY_\vartheta(\ex,\eps)$.
In order to establish sharpness, it suffices to show that $\vartheta\in \idr{\theta}$ if and only if one can complete the model with an admissible selection mechanism $\sR(y_1,y_2|\ex,\eps)$ such that $\sR(y_1,y_2|\ex,\eps)\ge 0$ for all $(y_1,y_2)\in\{0,1\}\times\{0,1\}$, $\sum_{(y_1,y_2)\in\{0,1\}\times\{0,1\}}\sR(y_1,y_2|\ex,\eps)=1$, and satisfying \eqref{eq:games:sel:mec:1}-\eqref{eq:games:sel:mec:2}, so that $\sM(y_1,y_2|\ex)=\sP(y_1,y_2|\ex)$ for all $(y_1,y_2)\in\{0,1\}\times\{0,1\}$ $\ex$-a.s., with $\sM(y_1,y_2|\ex)$ defined in \eqref{eq:games_model:pred}. 
Suppose first that $\vartheta$ is such that a selection mechanism with these properties is available. 
Then there exists a selection of $\eY_\vartheta$ which is equal to the prediction selected by the selection mechanism and whose conditional distribution is equal to $\sP(\ey|\ex)$, $\ex$-a.s., and therefore $\vartheta \in \idr{\theta}$.
Next take $\vartheta \in \idr{\theta}$. 
Then by Theorem~\ref{thr:artstein}, $\ey$ and $\eY_\vartheta$ can be realized on the same probability space as random elements $\ey'$ and $\eY'_\vartheta$, so that $\ey'$ and $\eY'_\vartheta$ have the same distributions, respectively, as $\ey$ and $\eY_\vartheta$, and $\ey' \in \Sel(\eY'_\vartheta)$, where $\Sel(\eY'_\vartheta)$ is the set of all measurable selections from $\eY'_\vartheta$, see Definition \ref{def:selection}. 
One can then complete the model with a selection mechanism that picks $\ey'$ with probability 1, and the result follows.
\end{proof}
The characterization provided in Theorem SIR-\ref{SIR:entry_game} for games with multiple PSNE, taken from \cite[Supplementary Appendix D]{ber:mol:mol11}, is equivalent to the one in \cite{gal:hen11}.
When $J=2$ and $\cY=\{0,1\}\times\{0,1\}$, the inequalities in \eqref{eq:SIR:entry_game} reduce to \eqref{eq:CT_00}-\eqref{eq:CT_01L}.
With more players and/or more actions, the inequalities in \eqref{eq:SIR:entry_game} are a superset of those in \eqref{eq:CT_00}-\eqref{eq:CT_01L}, with the latter comprised of the ones in \eqref{eq:SIR:entry_game} for $K=\{k\}$ and $k=\cY\setminus\{k\}$, for all $k\in\cY$.
Hence, the inequalities in \eqref{eq:SIR:entry_game} are more informative.
Of course, the computational cost incurred to characterize $\idr{\theta}$ may grow with the number of inequalities involved.
I discuss computational challenges in partial identification in Section \ref{sec:computations}.

\begin{BI}(Random set theory and partial identification -- continued)
In Identification Problem \ref{IP:entry_game} lack of point identification can be traced back to the set valued predictions delivered by the model, which in turn derive from the model incompleteness defined by \cite{tam03}.
As stated in the Introduction, constructing the (random) set of model predictions delivered by the maintained assumptions is an exercise typically carried out in identification analysis, regardless of whether random set theory is applied.
Indeed, for the problem studied in this section, \cite[Figure 1]{tam03} put forward the set of admissible outcomes of the game.
\cite{ber:mol:mol11} propose to work directly with this random set to characterize $\idr{\theta}$.
The fundamental advantage of this approach is that it dispenses with considering the possible selection mechanisms that may complete the model.
Selection mechanisms may depend on the model's unobservables even after conditioning on observables and may constitute an infinite dimensional nuisance parameter, which creates great difficulties for the computation of $\idr{\theta}$ and for inference.
\end{BI}

Next, I discuss the case that the outcome of the game results from simultaneous move, mixed strategy Nash play.\footnote{The same reasoning given here applies if instead of mixed strategy Nash the solution concept is correlated equilibrium, by replacing the set of MSNE below with the set of correlated equilibria.}
When mixed strategies are allowed for, the model predicts multiple mixed strategy Nash equilibria (MSNE).
But whereas when only pure strategies are allowed for, if the model is correctly specified, the observed outcome of the game is one of the predicted PSNE, with mixed strategy it is only the result of a random mixing draw from one of the predicted MSNE.
Hence, the identification problem is more complex, and in order to obtain a tractable characterization of $\theta$'s sharp identification region one needs to use different tools from random set theory.

To keep the treatment simple here I continue to consider the case of two players with two strategies, as in Identification Problem \ref{IP:entry_game}, with mixed strategies allowed for, and refer to \cite[Section 3.4]{mol:mol18} for the general case.
Fix $\vartheta\in\Theta$.
Let $\sigma_j:\{0,1\}\to [0,1]$ denote the probability that player $j$ enters the market, with $1-\sigma_j$ the probability that she stays out.
With some abuse of notation, let $\bu_j(\sigma_j,\sigma_{-j},\ex_j,\eps_j,\vartheta)$ denote the expected payoff associated with the mixed strategy profile $\sigma=(\sigma_1,\sigma_2)$.
For a given realization $(x,e)$ of $(\ex,\eps)$ and a given value of $\vartheta\in\Theta$, the set of mixed strategy Nash equilibria is
\begin{multline*}
  S_\vartheta(x,e) 
  =\bigg\{\sigma \in [0,1]^2:\;
  \bu_j(\sigma_j,\sigma_{-j},x_j,e_j;\vartheta)
  \geq \bu_j(\tilde{\sigma}_j,\sigma_{-j},x_j,e_j;\vartheta)\;\,
  \forall \tilde{\sigma}_j\in [0,1]\; j=1,2\bigg\}.
\end{multline*}
\cite{ber:mol:mol11} show that $\eS_\vartheta\equiv S_\vartheta(\ex,\eps)$ is a random closed set in $[0,1]^2$.
Its realizations are illustrated in Panel (a) of Figure \ref{fig:set_valued_pred:MSNE} as a function of $(\eps_1,\eps_2)$.\footnote{This figure is based on Figure 1 in \cite{ber:mol:mol11}.} 
\begin{figure}[tp]
\centering
\includegraphics[scale=1]{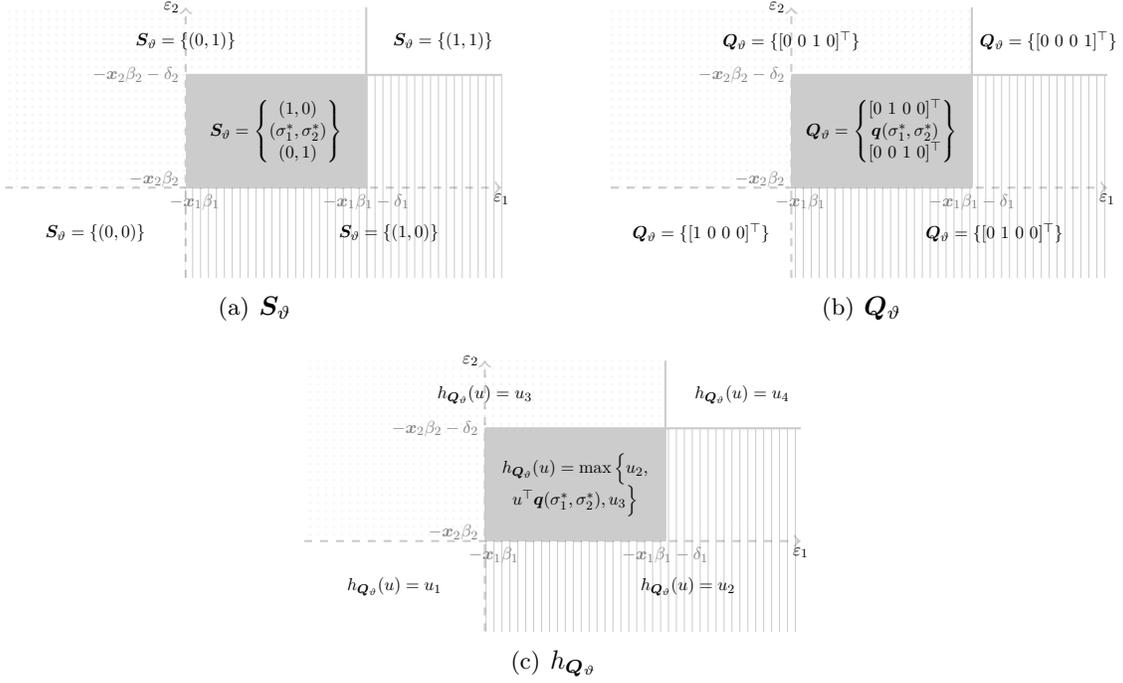}		
\caption{\small{MSNE strategies ($\eS_\vartheta$), set of multinomial distributions over outcomes of the game ($\eQ_\vartheta$), and its support function ($h_{\eQ_\vartheta}$), as a function of $(\eps_1,\eps_2)$, where $\sigma_1^*\equiv\frac{-\eps_2-\ex_2\beta_2}{\vartheta_2},\sigma_2^*\equiv\frac{-\eps_1-\ex_1\beta_1}{\vartheta_1}$.}}
\label{fig:set_valued_pred:MSNE}
\end{figure}

Define the set of possible multinomial distributions over outcomes of the game associated with the selections $\sigma$ of each possible realization of $\eS_{\vartheta}$ as
\begin{equation}
\label{eq:Q-set}
  \eQ_\vartheta=\left\{\eq(\sigma)\equiv \begin{bmatrix}
  (1-\sigma_1)(1-\sigma_2)\\
  \sigma_1(1-\sigma_2)\\
  (1-\sigma_1)\sigma_2\\
  \sigma_1\sigma_2
  \end{bmatrix}:\, \sigma \in \eS_\vartheta\right\}.
\end{equation}
As $\eQ_\vartheta$ is the image of a continuous map applied to the random compact set $\eS_\vartheta$, it is a random compact set.
Its realizations are plotted in Panel (b) of Figure \ref{fig:set_valued_pred:MSNE} as a function of $(\eps_1,\eps_2)$.

The multinomial distribution over outcomes of the game determined by a given $\sigma\in\eS_\vartheta$ is a function of $\eps$.
To obtain the predicted distribution over outcomes of the game conditional on observed payoff shifters only, one needs to integrate out the unobservable payoff shifters $\eps$.
Doing so requires care, as it needs to be done for each $\eq(\sigma)\in\eQ_\vartheta$.
First, observe that all the $\eq(\sigma)\in\eQ_\vartheta$ are contained in the $3$ dimensional unit simplex, and are therefore integrable. 
Next, define the conditional selection expectation (see Definition \ref{def:sel-exp}) of $\eQ_\vartheta$ as
\begin{displaymath}
  \E_{\Phi_r}(\eQ_\vartheta|\ex)=\Big\{\E_{\Phi_r}(\eq(\sigma)|\ex):\;
  \sigma \in \Sel(\eS_\vartheta)\Big\},
\end{displaymath}
where $\Sel(\eS_\vartheta)$ is the set of all measurable selections from $\eS_\vartheta$, see Definition \ref{def:selection}.
By construction, $\E_{\Phi_r}(\eQ_\vartheta|\ex)$ is the set of probability distributions over action profiles conditional on $\ex$ which are
consistent with the maintained modeling assumptions, i.e., with \emph{all} the model's implications (including the assumption that $\eps\sim\Phi_r$). 
If the model is correctly specified, there exists at least one vector $\theta\in\Theta$ such that the observed conditional distribution $\cp(\ex)\equiv[\sP(\ey=y^1|\ex),\dots,\sP(\ey=y^4|\ex)]^\top$ almost surely belongs to the set $\E_{\Phi_\rho}(\eQ_\theta|\ex)$.  
Indeed, by the definition of $\E_{\Phi_\rho}(\eQ_\theta|\ex)$, $\cp(\ex)\in \E_{\Phi_\rho}(\eQ_\theta|\ex)$ almost surely if and only if there exists $\eq\in \Sel(\eQ_\theta)$ such that $\E_{\Phi_\rho}(\eq|\ex)=\cp(\ex)$ almost surely, with $\Sel(\eQ_\theta)$ the set of all measurable selections from $\eQ_\theta$. 
Hence, the collection of parameter vectors $\vartheta\in\Theta$ that are observationally equivalent to the data generating value $\theta$ is given by the ones that satisfy $\cp(\ex)\in \E_{\Phi_r}(\eQ_\vartheta|\ex)$ almost surely.
In turn, observing that by Theorem \ref{thr:exp-supp} the set $\E_{\Phi_r}(\eQ_\vartheta|\ex)$ is convex, we have that $\cp(\ex)\in \E_{\Phi_r}(\eQ_\vartheta|\ex)$ if and only if $u^\top \cp(\ex)\leq h_{\E_{\Phi_r}(\eQ_\vartheta|\ex)}(u)$ for all $u$ in the unit ball \citep[see, e.g.,][Theorem 13.1]{roc70}, where $h_{\E_{\Phi_r}(\eQ_\vartheta|\ex)}(u)$ is the support function of $\E_{\Phi_r}(\eQ_\vartheta|\ex)$, see Definition \ref{def:sup-fun}.
\begin{SIR}[Structural Parameters in Static, Simultaneous Move Finite Games of Complete Information with MSNE]
  \label{SIR:sharpness_mixed}
  Under the assumptions in Identification Problem \ref{IP:entry_game}, allowing for mixed strategies and with the observed outcomes of the game resulting from mixed strategy Nash play, the sharp identification region for $\theta$ is
  \begin{align}
    \idr{\theta} &=\bigg\{\vartheta\in \Theta:\; \max_{u\in\mathbb{B}^{|\cY|}}\left( u^\top \cp(\ex)
    -\E_{\Phi_r}[h_{\eQ_\vartheta}(u)|\ex]\right)=0,\, \ex\text{-a.s.}\bigg\}  
    \label{eq:SIR_sharp_mixed_sup}\\
    &=\bigg\{\vartheta \in \Theta:\; \int_{\mathbb{B}^{|\cY|}} (u^\top \cp(\ex)
    -\E_{\Phi_r}[h_{\eQ_\vartheta}(u)|\ex])_+ \mathrm{d}\mu(u)=0,\, \ex\text{-a.s.}\bigg\}
    \label{eq:SIR_sharp_mixed_int},
  \end{align}
  where $\mu$ is any probability measure on $\mathbb{B}^{|\cY|}$, and $|\cY|=4$ in this case.
\end{SIR}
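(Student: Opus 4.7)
The plan is to show that observational equivalence of $\vartheta$ to the data generating $\theta$ is equivalent to the containment $\cp(\ex)\in\E_{\Phi_r}(\eQ_\vartheta|\ex)$ almost surely, then to convert this containment into the two support-function conditions in \eqref{eq:SIR_sharp_mixed_sup} and \eqref{eq:SIR_sharp_mixed_int} using convexity of the conditional Aumann expectation.

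First I would establish the ``if and only if'' with the containment. For the forward direction, if $\vartheta$ is observationally equivalent, then there is an admissible MSNE selection mechanism, i.e., a measurable $\sigma\in\Sel(\eS_\vartheta)$, such that the induced multinomial $\eq(\sigma)\in\Sel(\eQ_\vartheta)$ obtained from \eqref{eq:Q-set} satisfies $\E_{\Phi_r}[\eq(\sigma)|\ex]=\cp(\ex)$ a.s.; by Definition \ref{def:sel-exp} this means $\cp(\ex)\in\E_{\Phi_r}(\eQ_\vartheta|\ex)$ a.s. For the converse, if $\cp(\ex)\in\E_{\Phi_r}(\eQ_\vartheta|\ex)$ a.s., the defining property of the conditional Aumann expectation yields a measurable selection $\eq\in\Sel(\eQ_\vartheta)$ whose conditional expectation given $\ex$ is $\cp(\ex)$. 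Inverting the continuous (and on $[0,1]^2$ injective, via $\sigma_1=\eq_{(1,0)}+\eq_{(1,1)}$, $\sigma_2=\eq_{(0,1)}+\eq_{(1,1)}$) parametrization $\sigma\mapsto\eq(\sigma)$ from \eqref{eq:Q-set} recovers a measurable $\sigma\in\Sel(\eS_\vartheta)$ that completes the model and reproduces $\sP(\ey|\ex)$.

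Next I would convert the containment into the support-function inequalities. Because $\E_{\Phi_r}(\eQ_\vartheta|\ex)$ is convex by Theorem \ref{thr:exp-supp} (the probability space being nonatomic), $\cp(\ex)\in\E_{\Phi_r}(\eQ_\vartheta|\ex)$ is equivalent by a standard separation argument to $u^\top\cp(\ex)\le h_{\E_{\Phi_r}(\eQ_\vartheta|\ex)}(u)$ for every $u\in\mathbb{B}^{|\cY|}$. Theorem \ref{thr:exp-supp} also delivers the commutation $h_{\E_{\Phi_r}(\eQ_\vartheta|\ex)}(u)=\E_{\Phi_r}[h_{\eQ_\vartheta}(u)|\ex]$, so observational equivalence becomes $u^\top\cp(\ex)-\E_{\Phi_r}[h_{\eQ_\vartheta}(u)|\ex]\le 0$ for all $u\in\mathbb{B}^{|\cY|}$, almost surely. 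Since $u=0$ attains the value zero, the supremum over $u\in\mathbb{B}^{|\cY|}$ is nonnegative, so the bound ``for all $u$'' is equivalent to the maximum being zero, which is \eqref{eq:SIR_sharp_mixed_sup}. The integral characterization \eqref{eq:SIR_sharp_mixed_int} follows because the integrand is nonnegative and continuous in $u$ (continuity of support functions of compact sets, preserved under conditional expectation), so its integral against $\mu$ vanishes iff it vanishes on $\mathrm{supp}(\mu)$, which under appropriate support of $\mu$ is equivalent to the maximum condition.

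The step I expect to be most delicate is the measurability bookkeeping in the first part: verifying that both $\eS_\vartheta$ and $\eQ_\vartheta$ are random closed sets in the sense of Definition \ref{def:rcs} so that the Fundamental Measurable Selection Theorem applies on both sides, and that the inversion of $\sigma\mapsto\eq(\sigma)$ preserves measurability on the graph of the MSNE correspondence (which is not single-valued and need not vary continuously in $(\ex,\eps)$). A secondary subtlety is the interchange of the ``$\ex$-almost surely'' and ``for all $u\in\mathbb{B}^{|\cY|}$'' quantifiers in moving to \eqref{eq:SIR_sharp_mixed_sup}, which is legitimate because $u\mapsto u^\top\cp(\ex)-\E_{\Phi_r}[h_{\eQ_\vartheta}(u)|\ex]$ is continuous and $\mathbb{B}^{|\cY|}$ is separable, reducing the uncountable universal quantifier to a countable one.
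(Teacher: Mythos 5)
Your proposal is correct and follows essentially the same route as the paper: observational equivalence is reduced to the containment $\cp(\ex)\in\E_{\Phi_r}(\eQ_\vartheta|\ex)$ a.s.\ via the definition of the conditional Aumann expectation, convexity of that set (Theorem \ref{thr:exp-supp}) plus the commutation $h_{\E_{\Phi_r}(\eQ_\vartheta|\ex)}(u)=\E_{\Phi_r}[h_{\eQ_\vartheta}(u)|\ex]$ yields \eqref{eq:SIR_sharp_mixed_sup}, and continuity plus nonnegativity of the integrand yields \eqref{eq:SIR_sharp_mixed_int}. The extra details you supply (injectivity of $\sigma\mapsto\eq(\sigma)$ to recover the selection mechanism, and the measurability bookkeeping) are fill-ins the paper leaves implicit, not a different argument.
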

\begin{proof}
Theorem \ref{thr:exp-supp} (equation \eqref{eq:dom_Aumann:cond}) yields \eqref{eq:SIR_sharp_mixed_sup}, because by the arguments given before the theorem, $\idr{\theta}=\{\vartheta \in \Theta:\;\cp(\ex)\in \E_{\Phi_r}(\eQ_\vartheta|\ex),~\ex\text{-a.s.}\}$.
The result in \eqref{eq:SIR_sharp_mixed_int} follows because the integrand in \eqref{eq:SIR_sharp_mixed_int} is continuous in $u$ and both conditions inside the curly brackets are satisfied if and only if $u^\top \cp(\ex)-\E_{\Phi_r}[h_{\eQ_\vartheta}(u)|\ex]\leq 0$ $\forall u\in \mathbb{B}^{|\cY|}$ $\ex$-a.s.
\end{proof}
For a fixed $u\in\mathbb{B}^4$, the possible realizations of $h_{\eQ_\vartheta}(u)$ are plotted in Panel (c) of Figure \ref{fig:set_valued_pred:MSNE} as a function of $(\eps_1,\eps_2)$.
The expectation of $h_{\eQ_\vartheta}(u)$ is quite straightforward to compute, whereas calculating the set $\E_{\Phi_r}(\eQ_\vartheta|\ex)$ is computationally prohibitive in many cases. 
Hence, the characterization in \eqref{eq:SIR_sharp_mixed_sup} is computationally attractive, because for each $\vartheta\in\Theta$ it requires to maximize an easy-to-compute superlinear, hence concave, function over a convex set, and check if the resulting objective value vanishes.  
Several efficient algorithms in convex programming are available to solve this problem, see for example the MatLab software for disciplined convex programming CVX \citep{gra:boy10}.
Nonetheless, $\idr{\theta}$ itself is not necessarily convex, hence tracing out its boundary is non-trivial.
I return to computational challenges in partial identification in Section \ref{sec:computations}.\medskip

\begin{BI}[Random set theory and partial identification -- continued]
\cite{ber:mol:mol11} provide a general characterization of sharp identification regions for models with \emph{convex moment predictions}.
These are models that for a given $\vartheta\in\Theta$ and realization of observable variables, predict a set of values for a vector of variables
of interest. 
This set is \emph{not} necessarily convex, as exemplified by $\eY_\vartheta$ and $\eQ_\vartheta$, which are finite.
No restriction is placed on the manner in which, in the DGP, a specific model prediction is selected from this set. 
When the researcher takes conditional expectations of the resulting elements of this set, the unrestricted process of selection yields a convex set of moments for the model variables (all possible mixtures).
This is the model's convex set of moment predictions. 
If this set were almost surely single valued, the researcher would learn (features of) $\theta$ by solving moment equality conditions involving the observed variables and predicted ones.
The approach reviewed in this section is a set-valued method of moments that extends the singleton-valued one commonly used in econometrics.
\end{BI}
I conclude this section discussing the case of static, simultaneous move finite games of incomplete information, using the results in \cite[Supplementary Appendix C]{ber:mol:mol11}.\footnote{See \cite[Section 3]{ber:tam06} and \cite{pau13} for a thorough discussion of the literature on identification problems in games of incomplete information with multiple Bayesian Nash equilibria (BNE).
\cite{ber:tam06} explain how to extend the approach proposed by \cite{cil:tam09} to obtain outer regions on $\theta$ when no restrictions are imposed on the equilibrium selection mechanism that chooses among the multiple BNE.}
For clarity, I formalize the maintained assumptions.
\begin{IP}[Structural Parameters in Static, Simultaneous Move Finite Games of Incomplete Information with multiple BNE]
\label{IP:entry_game:incomplete}
Impose the same structure on payoffs, entry decision rule, outcome space, parameter space, and observable variables as in Identification Problem \ref{IP:entry_game}.
Assume that the observed outcome of the game results from simultaneous move, pure strategy Bayesian Nash play.
Both players and the researcher observe $(\ex_1,\ex_2)$. 
However, $\eps_j$ is private information to player $j=1,2$ and unobservable to the researcher, with $\eps_1\independent\eps_2|(\ex_1,\ex_2)$.
Assume that players have correct common prior $\sF_\gamma$ on the distribution of $(\eps_1,\eps_2)$ and the researcher knows this distribution up to $\gamma$, a finite dimensional parameter vector.
Under these assumptions, multiple Bayesian Nash equilibria (BNE) may result.\footnote{Both the independence assumption and the correct common prior assumption are maintained here to simplify exposition.
Both could be relaxed with no conceptual difficulty, though computation of the set of Bayesian Nash equilibria, for example, would become more cumbersome.} 
In the absence of additional information, what can the researcher learn about $\theta=[\delta_1~\delta_2~\beta_1~\beta_2~\gamma]$?
	\qedex
\end{IP}

With incomplete information, players' strategies are decision rules that map the support of $(\eps,\ex)$ into $\{0,1\}$. 
The non-negativity condition on expected payoffs that determines each player's decision to enter the market results in equilibrium mappings (decision rules) that are step functions determined by a threshold: $y_j(\eps_j) =\one(\eps_j\geq t_j), j=1,2$.
As a result, player $j$'s beliefs about player $3-j$'s probability of entry under the common prior assumption is $\int y_{3-j}(\eps_{3-j}) d\sF_\gamma(\eps_{3-j}|\ex) =1-\sF_\gamma(t_{3-j}|\ex)$, and therefore player $j$'s best response cutoff is
\begin{align*}
t_j^b(t_{3-j},\ex;\theta)=-\ex_j\beta_j-\delta_j(1-\sF_\gamma(t_{3-j}|\ex)).
\end{align*}
Hence, the set of equilibria can be defined as the set of cutoff rules:
\begin{equation*}
\eT_{\theta}(\ex)=\left\{(t_1,t_2\right):t_j=t_j^b(t_{3-j},\ex;\theta),~j=1,2\}.
\end{equation*}%
The equilibrium thresholds are functions of $\ex$ and $\theta$ only.
The set $\eT_{\theta}(\ex)$ might contain a finite number of equilibria (e.g., if the common prior is the Normal distribution), or a continuum of equilibria. 
For ease of notation I suppress its dependence on $\ex$ in what follows.

Given the equilibrium decision rules (the selections of the set $\eT_\theta$), it is possible to determine their associated action profiles.
Because in the simple two-player entry game that I consider actions and outcomes coincide, I denote the set of admissible action profiles by $\eY_\theta$:
\begin{align}
\eY_\theta=\left\{
\ey(\et)\equiv 
\begin{bmatrix}
\one(\eps_1<\et_1,\eps_2<\et_2)\\
\one(\eps_1\ge\et_1,\eps_2<\et_2)\\
\one(\eps_1<\et_1,\eps_2\ge\et_2)\\
\one(\eps_1\ge\et_1,\eps_2\ge\et_2)
\end{bmatrix}
:\et\in\Sel(\eT_\theta) 
\right\},\label{eq:q_incomplete}
\end{align}
with $\Sel(\eT_\theta)$ the set of all measurable selections from $\eT_\theta$, see Definition \ref{def:selection}.
To obtain the predicted set of multinomial distributions for the outcomes of the game, one needs to integrate out $\eps$ conditional on $\ex$.
Again this can be done by using the conditional Aumann expectation: 
\begin{equation*}
\E_{\sF_\gamma}(\eY_\theta|\ex)=\{\E_{\sF_\gamma}(\ey(\et)|\ex):\et\in\Sel(\eT_\theta)\}.
\end{equation*}
This set is closed and convex.
Regardless of whether $\eT_\theta$ contains a finite number of equilibria or a continuum, $\eY_\theta$ can take on only a finite number of
realizations corresponding to each of the vertices of the three dimensional simplex, because the vectors $\ey(\et)$ in \eqref{eq:q_incomplete} collect threshold decision rules.
This implies that $\E_{\sF_\gamma}(\eY_\theta|\ex)$ is a closed convex polytope $\ex$-a.s., fully characterized by a finite number of supporting hyperplanes. 
Hence, it is possible to determine whether $\vartheta\in\idr{\theta}$ using efficient algorithms in linear programming.
\begin{SIR}[Structural Parameters in Static, Simultaneous Move Finite Games of Incomplete Information with BNE]
  \label{SIR:incomplete_info}
Under the assumptions in Identification Problem \ref{IP:entry_game:incomplete}, the sharp identification region for $\theta$ is 
  \begin{align}
    \idr{\theta} &=\bigg\{\vartheta\in \Theta:\; \max_{u\in\mathbb{B}^{|\cY|}} u^\top \cp(\ex)
    -\E_{\sF_{\tilde\gamma}}[h_{\eY_\vartheta}(u)|\ex]=0,\, \ex\text{-a.s.}\bigg\} \label{eq:SIR:incomplete_info:1} \\
    &=\bigg\{\vartheta\in \Theta:\; u^\top \cp(\ex)
   \le \E_{\sF_{\tilde\gamma}}[h_{\eY_\vartheta}(u)|\ex],\,\forall u\in D, \ex\text{-a.s.}\bigg\},\label{eq:SIR:incomplete_info:2} \\
   &=  \bigg\{\vartheta\in \Theta:\; \sP(\ey\in K|\ex)\le \sT_{\eY_{\vartheta}(\ex,\eps)}(K;\sF_{\tilde\gamma})\,\forall K\subset\cY,\, \ex\text{-a.s.}\bigg\} \label{eq:SIR:incomplete_info:0}, 
  \end{align}
with $D=\{u=[u_1,\dots,u_{|\cY|}]^\top:u_i\in\{0,1\},i=1,...,|\cY|\} $, $\vartheta=[d_1,d_2,b_1,b_2,\tilde\gamma]$, and $\sT_{\eY_{\vartheta}(\ex,\eps)}(K;\sF_{\tilde\gamma})$ the probability that $\{\eY_\vartheta(\ex,\eps)\cap K\neq \emptyset\}$ implied when $\eps\sim\sF_{\tilde\gamma}$, $\ex$-a.s.
\end{SIR}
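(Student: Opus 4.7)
My plan is to establish the three characterizations by first reducing observational equivalence to a membership statement in the conditional Aumann expectation, and then rewriting that membership in the three equivalent forms.

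First, I would prove the key reduction: $\vartheta \in \idr{\theta}$ if and only if $\cp(\ex)\in\E_{\sF_{\tilde\gamma}}(\eY_\vartheta|\ex)$, $\ex$-a.s. The \emph{only if} direction proceeds as in the proof of Theorem SIR-\ref{SIR:entry_game}: if a selection mechanism completes the model to reproduce $\sP(\ey|\ex)$, then, conditional on $\ex$, the selected equilibrium threshold $\et \in \Sel(\eT_\vartheta)$ induces an action-profile selection $\ey(\et)\in\Sel(\eY_\vartheta)$ whose conditional expectation is exactly $\cp(\ex)$, and this places $\cp(\ex)$ in the selection expectation by Definition \ref{def:sel-exp}. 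The \emph{if} direction runs the construction in reverse: membership gives a measurable selection $\et^\star\in\Sel(\eT_\vartheta)$ whose action profile averages to $\cp(\ex)$, and using this $\et^\star$ as the equilibrium selection mechanism completes the model so that $\sM(\ey|\ex)=\sP(\ey|\ex)$.

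Second, I would derive \eqref{eq:SIR:incomplete_info:1} from this reduction. By Theorem \ref{thr:exp-supp}, $\E_{\sF_{\tilde\gamma}}(\eY_\vartheta|\ex)$ is closed and convex and has conditional support function $u\mapsto\E_{\sF_{\tilde\gamma}}[h_{\eY_\vartheta}(u)|\ex]$. A point lies in a closed convex set if and only if its inner product with any unit-norm direction does not exceed the support function in that direction, which yields exactly the $\max_{u\in\mathbb{B}^{|\cY|}}$-formulation in \eqref{eq:SIR:incomplete_info:1}. This form is attractive computationally because $\E_{\sF_{\tilde\gamma}}[h_{\eY_\vartheta}(u)|\ex]$ is superlinear and concave in $u$, so the inner maximization is a tractable convex program.

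Third, I would show \eqref{eq:SIR:incomplete_info:1} reduces to \eqref{eq:SIR:incomplete_info:2}. The key structural fact is that $\eY_\vartheta$ takes its realizations in the vertex set $\{e_1,\dots,e_{|\cY|}\}$ of the unit simplex in $\R^{|\cY|}$, since each $\ey(\et)$ in \eqref{eq:q_incomplete} is a standard basis vector indicating the realized action profile. Consequently, for any $u\in\R^{|\cY|}$ the support function satisfies $h_{\eY_\vartheta}(u)=\max\{u_i: e_i\in\eY_\vartheta\}$, and because shifting $u$ by a constant multiple of the all-ones vector shifts $u^\top\cp(\ex)$ and $\E_{\sF_{\tilde\gamma}}[h_{\eY_\vartheta}(u)|\ex]$ by the same amount (as $\cp(\ex)$ sums to one and $e_i$ has sum one), the supporting inequalities are invariant under such translations; combined with the vertex structure, this lets one restrict attention to directions $u$ whose components are in $\{0,1\}$, i.e.\ to $u\in D$. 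This reduction is the step I expect to require the most care—making explicit the polytope-in-simplex duality that turns a continuum of half-space constraints into a finite collection indexed by subsets of $\cY$.

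Finally, I would identify \eqref{eq:SIR:incomplete_info:2} with \eqref{eq:SIR:incomplete_info:0}. For $u=\one_K$ with $K\subset\cY$, one has $u^\top\cp(\ex)=\sP(\ey\in K|\ex)$ and $h_{\eY_\vartheta}(\one_K)=\one\{\eY_\vartheta\cap K\neq\emptyset\}$ (again using that $\eY_\vartheta$ is vertex-valued), so $\E_{\sF_{\tilde\gamma}}[h_{\eY_\vartheta}(\one_K)|\ex]=\sT_{\eY_\vartheta(\ex,\eps)}(K;\sF_{\tilde\gamma})$. Thus each inequality in \eqref{eq:SIR:incomplete_info:2} is exactly a capacity-functional inequality in \eqref{eq:SIR:incomplete_info:0}, and conversely. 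As a consistency check, \eqref{eq:SIR:incomplete_info:0} is also the direct output of applying Artstein's Theorem \ref{thr:artstein} to the inclusion $\ey\in\eY_\vartheta$ implied by Bayesian Nash play, which gives an independent route to the same sharp region.
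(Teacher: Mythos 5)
Your proposal is correct and follows essentially the same route as the paper: reduce observational equivalence to $\cp(\ex)\in\E_{\sF_{\tilde\gamma}}(\eY_\vartheta|\ex)$, obtain \eqref{eq:SIR:incomplete_info:1} from Theorem \ref{thr:exp-supp} and the support-function characterization of membership in a convex set, exploit the vertex-valued realizations of $\eY_\vartheta$ to reduce to $u\in D$, and identify directions $\one_K$ with capacity-functional inequalities to get \eqref{eq:SIR:incomplete_info:0}.

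The one step you flag as delicate --- passing from all $u$ in the ball to the finite set $D$ --- is where your sketch is thinner than the paper's: normalizing by positive homogeneity and translation by the all-ones vector only gets you to $u\in[0,1]^{|\cY|}$, not to $u\in\{0,1\}^{|\cY|}$, and since $u\mapsto u^\top\cp(\ex)-\E_{\sF_{\tilde\gamma}}[h_{\eY_\vartheta}(u)|\ex]$ is concave its maximum over the cube need not sit at a vertex, so the ``vertex structure'' does not finish the job by itself. The paper closes this by writing $\E_{\sF_{\tilde\gamma}}(\eY_\vartheta|\ex)=\sum_{j}\varpi_j(\ex)Y_j$, a weighted Minkowski sum of polytopes $Y_j$ whose vertices are vertices of the $(|\cY|-1)$-simplex, so that the supporting hyperplanes of $\E_{\sF_{\tilde\gamma}}(\eY_\vartheta|\ex)$ are among those of the simplex, indexed by $u\in D$. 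Note, however, that your own closing ``consistency check'' already repairs the gap: Artstein's Theorem applied to $\ey\in\eY_\vartheta$ establishes that \eqref{eq:SIR:incomplete_info:0} is the sharp region, your direct computation shows \eqref{eq:SIR:incomplete_info:0} and \eqref{eq:SIR:incomplete_info:2} coincide, and your Aumann-expectation reduction shows \eqref{eq:SIR:incomplete_info:1} is also the sharp region, so all three sets are equal without ever invoking the polytope-in-simplex duality. One small slip worth fixing: $\E_{\sF_{\tilde\gamma}}[h_{\eY_\vartheta}(u)|\ex]$ is sublinear and convex in $u$ (support functions always are); it is the objective $u^\top\cp(\ex)-\E_{\sF_{\tilde\gamma}}[h_{\eY_\vartheta}(u)|\ex]$ that is concave and makes the inner maximization tractable.
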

\begin{proof}
The result in \eqref{eq:SIR:incomplete_info:1} follows by the same argument as in the proof of Theorem SIR-\ref{SIR:sharpness_mixed}.
Next I show equivalence of the conditions
\begin{align*}
(i)~&u^\top\cp(\ex)\le\E_{\sF_{\tilde\gamma}}[h_{\eY_\vartheta}(u)|\ex]~\forall u\in\mathbb{B}^{|\cY|}, \\
(ii)~&u^\top\cp(\ex)\le\E_{\sF_{\tilde\gamma}}[h_{\eY_\vartheta}(u)|\ex]~\forall u\in D.
\end{align*}%
By the positive homogeneity of the support function, condition $(i)$ is equivalent to $\cp(\ex)\le\E_{\sF_{\tilde\gamma}}[h_{\eY_\vartheta}(u)|\ex]~\forall u\in\R^{|\cY|}$, which implies condition $(ii)$. 
Next I show that condition $(ii)$ implies condition $(i)$.
As explained before, the set $\eY_\theta$, and hence also its convex hull $\conv(\eY_\theta)$, can take on only a finite number of realizations.
Let $Y_1,\dots,Y_m$ be convex compact sets in the simplex of dimension $|\cY|-1$ equal to the possible realizations of $\conv(\eY_\theta)$, and let $\varpi_1(\ex),\dots,\varpi_m(\ex)$ denote the probability of each of these realizations conditional on $\ex$.
Then by Theorem 2.1.34 in \cite{mo1}, $\E_{\sF_{\tilde\gamma}}(\eY_\theta|\ex)=\sum_{j=1}^m Y_j\varpi_j(\ex)$.
By the properties of the support function \citep[see, e.g.,][Theorem 1.7.5]{sch93}, $h_{\E_{\sF_{\tilde\gamma}}(\eY_\theta|\ex)}(u) =\sum_{j=1}^m \varpi_j(\ex)h_{Y_j}(u)$.
For each $j=1,...,m,$ the vertices of $Y_j$ are a subset of the vertices of the $(|\cY|-1)$-dimensional simplex. 
Hence the supporting hyperplanes of $Y_j,j=1,...,m$, are a subset of the supporting hyperplanes of that simplex, which in turn are obtained through its support function evaluated in directions $u\in D$.
Finally, I show equivalence with the result in \eqref{eq:SIR:incomplete_info:0}.
Because the vertices of $Y_j$ are a subset of the vertices of the $(|\cY|-1)$-dimensional simplex, each direction $u\in D$ determines a set $K_u\subset \cY$.
Given the choice of $u$, the value of $u^\top\ey(\et)$ equals one if $\ey(\et)\in K_u$ and zero otherwise.
Hence, condition \eqref{eq:SIR:incomplete_info:2} reduces to
\begin{align*}
\sP(\ey\in K_u|\ex) = u^\top \cp(\ex) &\le \E_{\sF_{\tilde\gamma}}[h_{\eY_\vartheta}(u)|\ex] = \E_{\sF_{\tilde\gamma}}\left[\sup_{\ey(\et)\in\eY_\vartheta}u^\top\ey(\et)|\ex\right] \\
&= \E_{\sF_{\tilde\gamma}}[\one(\eY_\vartheta\cap K_u\neq \emptyset)|\ex]=\sT_{\eY_{\vartheta}(\ex,\eps)}(K_u;\sF_{\tilde\gamma}).
\end{align*}
Observing that the collection $D$ comprises the $2^{|\cY|}$ vectors with entries equal to either 1 or 0, and that these determine all possible subsets $K_u$ of $\cY$, yields condition \eqref{eq:SIR:incomplete_info:0}.
\end{proof}
One can use the same argument as in the proof of Theorem SIR-\ref{SIR:incomplete_info}, to show that the Aumann expectation/support function characterization of the sharp identification region in Theorem SIR-\ref{SIR:sharpness_mixed} coincides with the characterization based on the capacity functional in Theorem SIR-\ref{SIR:entry_game}, when only pure strategies are allowed for.
This shows that in this class of models, the capacity functional based characterization is a special case of the Aumann expectation/support function based one.

\cite{ara:tam08} study what is the identification power of equilibrium also in the case of static entry games with incomplete information.
They show that in the presence of multiple equilibria, assuming Bayesian Nash behavior yields more informative regions for the parameter vector $\theta$ than assuming only rational behavior, but at the price of a higher computational cost. 

\cite{pau:tan12} propose a procedure to test for the sign of the interaction effects (which here I have assumed to be non-positive) in discrete simultaneous games with incomplete information and (possibly) multiple equilibria. 
As a by-product of this procedure, they also provide a test for the presence of multiple equilibria in the DGP.
The test does not require parametric specifications of players' payoffs, the distributions of their private signals, or the equilibrium selection mechanism. 
Rather, the test builds on the commonly invoked assumption that players' private signals are independent conditional on observed states. 

\cite{gri14} introduces an important class of models with flexible information structure.
Each player is assumed to have a vector of payoff shifters unobservable by the researcher composed of elements that are private information to the player, and elements that are known to all players. 
The results of \cite{ber:mol:mol11} reported in this section apply to this set-up as well.

\subsection{Auction Models with Independent Private Values}
\label{subsec:auctions}
\subsubsection{An Inference Approach Robust to Bidding Behavior Assumptions}
\label{subsubsec:HT}
\cite{hai:tam03} study what can be learned about the distribution of valuations in an open outcry English auction where symmetric bidders have independent private values for the object being auctioned.
The standard theoretical model \citep{mil:web82}, called ``button auction" model, posits that each bidder holds down a button while the object's price rises continuously and exogenously, releasing it (in the dominant strategy equilibrium) when it reaches her valuation or all her opponents have left.
In this case, the distribution of bidder's valuation can be learned exactly.
\cite{hai:tam03} show that much can be learned about the distribution of valuations, even allowing for the fact that real-life auctions may depart from this stylized framework, as in the following identification problem.\footnote{Examples of departures from the standard model include the case where active bidding by a player's opponents may eliminate her incentives to bid close to her valuation or at all; the econometrician does not precisely observe the point at which each bidder drops out; there are discrete bid increments; etc.
}
\begin{IP}[Incomplete Auction Model with Independent Private Values]\label{IP:auction}
For a given auction with $n<\infty$ participating bidders, let $\ev_i\sim\sQ,i=1,\dots,n,$ be bidder $i$'s valuation for the object being auctioned and assume that $\ev_i\independent \ev_j$ for all $i\neq j$.
Assume that the support of $\sQ$ is $[\underline{v},\bar{v}]$ and that each bidder knows her own valuation but not that of her opponents.
Let the auctioneer set a minimum bid increment $\delta\in [0,\bar{v})$, and for simplicity suppose there is no reserve price.\footnote{If there is a reserve price $r>\underline{v}$, nothing can be learned about $\sQ(\ev\in [\underline{v},v])$ for any $v<r$. 
In that case, one can learn features of the truncated distribution of valuations using the same insights summarized here.}
Suppose the researcher observes  order statistics of the bids, $\vec{\eb}_n\equiv(\eb_{1:n},\dots,\eb_{n:n})\sim\sP$ in $\R^n_+$, with $\eb_{i:n}$ the $i$-th lowest of the $n$ bids.
Assume that: (1) Bidders do not bid more than they are willing to pay; (2) Bidders do not allow an opponent to win at a price they are willing to beat.
In the absence of additional information, what can the researcher learn about $\sQ$?
	\qedex
\end{IP}
\begin{figure}[tp]
\centering
\includegraphics[scale=1]{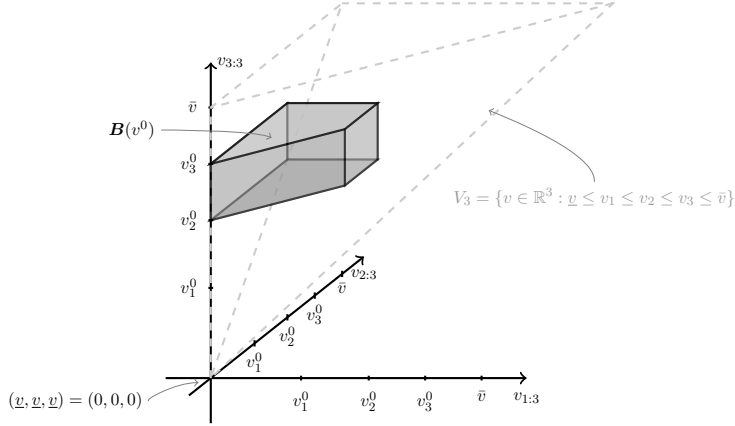}
		\caption{\small{A realization of the model predicted ordered bids $\eB(\vec{\ev}_n)$ in \eqref{eq:RCS_auction} for $n=3,\vec{\ev}_n=v^0,\delta=0$.}}
\label{fig:auction}
\end{figure}
The model in Identification Problem \ref{IP:auction} delivers set valued predictions because given valuations $(\ev_1,\dots,\ev_n)$, the two fundamental assumptions about bidder's behavior yield
\begin{align}
\vec{\eb}_n \in \eB(\vec{\ev}_n)\equiv\left[\left\{\prod_{i=1}^{n-1}[\underline{v},\ev_{i:n}]\right\}\times [\ev_{n-1:n}-\delta,\ev_{n:n}]\right]\cap V_n,\label{eq:RCS_auction}
\end{align}
where $\vec{\ev}_n\equiv(\ev_{1:n},\dots,\ev_{n:n})$ denotes the vector of order statistics of the valuations, and $V_n=\{v\in\R^n:\underline{v}\le v_1\le v_2\le\dots\le v_n\le \bar{v}\}$.\footnote{Using the same convention as for the bids, $\ev_{i:n}$ denotes the $i$-th lowest of the $n$ valuations.}
Figure \ref{fig:auction} provides a stylized depiction of a realization of this set for $\vec{\ev}_n=v^0$ when there are three bidders ($n=3$), $\underline{v}=0$, and $\delta=0$.
In words, $\eB(\vec{\ev}_n)$ collects the model predicted values of ordered bids.
The fact that $\eb_{i:n}\le \ev_{i:n}$ for all $i$ results from assumption (1): since each bidder bids at most an amount equal to her valuation, the $i$-th highest bid cannot exceed the $i$-th highest valuation \citep[][Lemma 1]{hai:tam03}.\footnote{Note that $\eb_{i:n}$ needs not be the bid made by the bidder with valuation $\ev_{i:n}$.}
The fact that $\eb_{n:n}\ge \ev_{n-1,n}-\delta$ follows immediately from assumption (2) \citep[][Lemma 3]{hai:tam03}.
The fact that $\vec{\eb}_n$ has to lie in $V_n$ follows because it is a vector of \emph{ordered} bids.

Why does this set-valued prediction hinder point identification?
The reason is that the distribution of the observable data relates to the model structure in an \emph{incomplete} manner.\footnote{\cite[Appendix D]{hai:tam03} provide the discussion summarized here. Additionally, in their Appendix B, they give a simple example of a two-bidder auction satisfying all assumptions in Identification Problem \ref{IP:auction}, where two different distributions $\sQ$ and $\tilde{\sQ}$ yield the same distribution of ordered bids.} 
Define a bidding rule $\sB(\eb_{1:n},\dots,\eb_{n:n}|\ev_{1:n},\dots,\ev_{n:n})$ to be a conditional joint distribution for the order statistics of the bids conditional on the order statistics of the valuations.
Then, for a given realization of the valuations $\ev_{1:n}=v_1,\dots,\ev_{n:n}=v_n$, the model requires that the support of $\sB(\cdot|v_1,\dots,v_n)$ is in $B(\vec{v})$ as defined in \eqref{eq:RCS_auction} with $\ev_{1:n}=v_1,\dots,\ev_{n:n}=v_n$, but imposes no other restriction on it. 
Hence, the model implied joint distribution of ordered bids is 
\begin{align}
\sM_{1,\dots,n:n}(\cdot;\sB,\sQ)\equiv\int \sB(\cdot|v_1,\dots,v_n)\sQ_{1,\dots,n:n}(dv_1,\dots,dv_n),\label{eq:model:impl_sel_mech_auction} 
\end{align}
where $\sQ_{1,\dots,n:n}$ is the joint distribution of order statistics of the valuations implied by $\sQ$.
Since the bidding rule $\sB$ is left completely unspecified (other than requiring it to be a valid joint conditional probability distribution with support in $\eB$), one can find multiple pairs $(\sB ,\sQ)$ satisfying the assumptions of Identification Problem \ref{IP:auction}, such that $\sM_{1,\dots,n:n}(\cdot;\sB,\sQ)=\sG_{1,\dots,n:n}(\cdot)$, with $\sG_{1,\dots,n:n}$ the observed joint CDF of the order statistics of the bids associated with $\sP$.

\cite{hai:tam03} propose to use simple and tractable implications of the model to learn features of $\sQ$.
Recall that with i.i.d. valuations, the distribution of each order statistic uniquely determines $\sQ(v)$, with $\sQ(v)\equiv\sQ(\ev\le v)$ for any $v\ge\underline{v}$, through:
\begin{align}
\sQ(v)=\sq_{\cB}(\sQ_{i:n}(v);i,n-i+1),\label{eq:HT:beta}
\end{align}
where $\sQ_{i:n}$ is the CDF of $\ev_{i:n}$ and $\sq_{\cB}(\cdot;i,n-i+1)$ is the quantile function of a Beta-distributed random variable with parameters $i$ and $n-i+1$.
Using this, their Lemmas 1 and 3 yield, respectively,
\begin{align}
\sQ(v) &\le \min_{n,i}\sq_{\cB}(\sG_{i:n}(v);i,n-i+1),~\forall v\in[\underline{v},\bar{v}],\label{eq:HT_upper}\\
\sQ(v) &\ge \max_{n}\sq_{\cB}(\sG_{n:n}(v-\delta);i,n-i+1),~\forall v\in[\underline{v},\bar{v}],\label{eq:HT_lower}
\end{align}
where, for any $v\ge\underline{v}$, $\sG_{i:n}(v)\equiv\sP(\eb_{i:n}\le v)$ denotes the observed CDF of $\eb_{i:n}$ for $i=1,\dots,n$.

\begin{BI}
The model and analysis put forward by \cite{hai:tam03} trade point identification of the distribution of valuation under stringent assumptions on the bidding rule, for a \emph{robust} inference approach that yields informative bounds under weak and widely credible assumptions on bidding behavior.
Remarkably, ``nothing is lost" due to the use of their robust approach: point identification is recovered when the standard assumptions of the button auction model hold.\footnote{
The button auction model yields bidding behavior consistent with Identification Problem \ref{IP:auction}.}
This is because in the dominant strategy equilibrium the top losing bidder exits at her valuation, followed immediately by the winning bidder.
Hence, $\eb_{n-1:n}=\ev_{n-1:n}=\eb_{n:n}$ and $\delta=0$, so that the upper and the lower bound in \eqref{eq:HT_upper}-\eqref{eq:HT_lower} coincide and point identify the distribution of valuations.
\end{BI}
\cite{hai:tam03} also provide sharp bounds on the optimal reserve price, which I do not discuss here.
However, they leave open the question of whether the collection of CDFs satisfying \eqref{eq:HT_upper}-\eqref{eq:HT_lower} yields the sharp identification region for $\sQ$.
As discussed in Sections \ref{subsec:missing_data}-\ref{subsec:interval_data}, pointwise bounds on the CDF deliver tubes of admissible CDFs that in general yield outer regions on the CDF of interest.
But in this identification problem, the issue of sharpness is even more subtle, and therefore addressed in the following subsection.\smallskip

Before moving on to that discussion, I note that the work of \cite{hai:tam03} spurred a rich literature applying partial identification analysis to the study of auction models.
\cite{tan11} studies first price sealed bid auctions with equilibrium behavior, where affiliated valuations prevent --in the absence of parametric restrictions on the distribution of the model primitives-- point identification of the model.
He derives bounds on seller revenue under various counterfactual scenarios on reserve prices and auction formats.
\cite{arm13} also studies first price sealed bid auctions with equilibrium behavior, but relaxes the independence assumptions on symmetric valuations by requiring it to hold only conditional on unobserved heterogeneity.
He derives bounds on various functionals of the distributions of interest, including the mean bid and mean valuation.
\cite{ara:gan:qui13} analyze second price auctions with correlated private values.
In this case, the distribution of valuations is not point identified even under the assumptions of the button auction model \citep[][Theorem 4]{ath:hai02}.
Nonetheless, \cite{ara:gan:qui13} show that interesting functionals of it (seller profits and bidder surplus) can be bounded, if one assumes that transaction prices are determined by the second highest valuation and imposes some restrictions on the joint distribution of the number of bidders and distribution of the valuations.
\cite{kom13} studies a related model of second-price ascending auctions with arbitrary dependence in bidders' private values.
She provides partial identification results for the joint distribution of values for any subset of bidders under various assumptions about what data the researcher observes.
While in her framework the highest bid is never observed, she considers the case where only the winner's identity and the winning price are observed, and the case where all the identities and all the bids except for the highest bid are known.
She also investigates the informational content of assuming positive dependence in bidders' values.
\cite{gen:li14} are concerned with nonparametric identification of a two-stage entry and bidding game.
Potential bidders are assumed to have private valuations and observe private signals before deciding whether to enter the auction.
The dependence between signals and valuations is only minimally restricted.
Hence, even with some excluded instruments that affect selection into the auction, the model primitives are only partially identified.
The authors derive bounds on these primitives, and provide conditions under which point identification is restored. 
\cite{syr:tam:zia18} provide partial identification results in private value and common value auctions under weak restrictions on the information available to the bidders.
Their approach leverages a result in \cite{ber:mor16} yielding an equivalence between distributions of valuations that obey the restrictions imposed by a Bayesian Correlated Equilibrium and those that obey the restrictions imposed by Bayesian Nash Equilibrium under some information structure.
Such equivalence is particularly helpful because the set of Bayesian Correlated Equilibria can be characterized through linear programming, so that the sharp identification region provided by \citeauthor{syr:tam:zia18} is given by the collection of parameter vectors $\vartheta$ for which a linear program is feasible.
Related results leveraging the linear structure of correlated equilibria in the context of entry games include \cite{yan06}, \cite[Supplementary Appendix E.2]{ber:mol:mol11}, and \cite{mag:ron17}.

\subsubsection{Characterization of Sharpness through Random Set Theory}\label{subsubsec:sharp:auction}
\possessivecite{hai:tam03} bounds exploit the information contained in the \emph{marginal} CDFs $\sG_{i:n}$ for each $i$ and $n$.
However, in Identification Problem \ref{IP:auction} additional information can be extracted from the \emph{joint} distribution of ordered bids.
\cite{che:ros17} obtain the sharp identification region $\idr{\sQ}$ using random set methods (Artstein's characterization in Theorem \ref{thr:artstein}) applied to a quantile function representation of the order statistics.
Here I provide an equivalent characterization that uses equation \eqref{eq:RCS_auction} directly, and which has not appeared in the literature before.
Let $\cT$ denote the space of probability distributions with support on $[\underline{v},\bar{v}]$, so that $\sQ\in\cT$.
For a candidate distribution $\tilde{\sQ}\in\cT$, let $\tilde{\sQ}_{1,\dots,n:n}$ denote the implied distribution of order statistics of $n$ i.i.d. random variables distributed $\tilde{\sQ}$.
Let $\tilde{\eB}$ be a random closed set defined as in \eqref{eq:RCS_auction} with respect to order statistics of i.i.d. random variables with distribution $\tilde{\sQ}$. 
For a given set $K\in\cK$, with $\cK$ the collection of compact subsets of $\R^n$, let $\sT_{\tilde\eB}(K;\tilde{\sQ})$ denote the probability of the event $\{\tilde\eB\cap K\neq \emptyset\}$ implied by $\tilde{\sQ}$.
\begin{SIR}[Distribution of Valuations in Incomplete Auction Model with Independent Private Values]\label{SIR:auction}
Under the assumptions of Identification Problem \ref{IP:auction}, the sharp identification region for $\sQ$ is
\begin{align}
\label{eq:SIR:auction}
\idr{\sQ}= \left\{\tilde{\sQ}\in\cT: \sP(\vec{\eb}_n\in K) \le \sT_{\tilde\eB}(K;\tilde{\sQ})~ \forall K\in\cK \right\}.
\end{align}
\end{SIR}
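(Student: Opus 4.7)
The plan is to argue that a candidate $\tilde\sQ\in\cT$ is observationally equivalent to the true distribution of valuations if and only if the observed random vector $\vec{\eb}_n\sim\sP$ can be realized as a measurable selection of the random closed set $\tilde\eB$ obtained by plugging i.i.d.\ draws from $\tilde\sQ$ into the right-hand side of~\eqref{eq:RCS_auction}, and then invoke Theorem~\ref{thr:artstein} to convert this selection requirement into the system of dominance inequalities in~\eqref{eq:SIR:auction}.

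First I would verify that $\tilde\eB$ is a random compact set in the sense of Definition~\ref{def:rcs}: its realizations are intersections of a continuous image of $\tilde{\vec\ev}_n$ with the fixed compact simplex $V_n$, and measurability of the set-valued map follows componentwise. Next I would restate the modeling content of Identification Problem~\ref{IP:auction}: through~\eqref{eq:model:impl_sel_mech_auction}, the behavioral assumptions (1) and (2) together with the i.i.d.\ valuations are equivalent to the existence of (i) i.i.d.\ $\vec{\ev}_n$ with marginal $\tilde\sQ$ and (ii) a conditional distribution $\sB(\cdot\mid\vec{\ev}_n)$ supported in $\eB(\vec{\ev}_n)$ such that the induced law of $\vec{\eb}_n$ equals $\sP$. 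Equivalently, there exists a random vector $\vec{\eb}_n'\edis\vec{\eb}_n$ that is a measurable selection of $\tilde\eB$ in the sense of Definition~\ref{def:selection}.

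For the ``only if'' direction, if $\tilde\sQ\in\idr{\sQ}$ then such a selection exists, and for every $K\in\cK$, $\sP(\vec{\eb}_n\in K)=\Prob{\vec{\eb}_n'\in K}\le \Prob{\tilde\eB\cap K\neq\emptyset}=\sT_{\tilde\eB}(K;\tilde\sQ)$, which is one direction of Theorem~\ref{thr:artstein}. For the ``if'' direction, suppose the inequalities in~\eqref{eq:SIR:auction} hold. Applying the converse half of Theorem~\ref{thr:artstein} produces, on a (possibly enlarged) nonatomic probability space, a joint realization of a random vector $\vec{\eb}_n'$ with law $\sP$, an i.i.d.\ sample $\vec{\ev}_n'$ with marginal $\tilde\sQ$, and the induced $\tilde\eB$, such that $\vec{\eb}_n'\in\tilde\eB$ almost surely. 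Taking a regular conditional distribution of $\vec{\eb}_n'$ given $(\ev_{1:n}',\dots,\ev_{n:n}')$ produces a bidding rule $\sB(\cdot\mid v_1,\dots,v_n)$ supported in $\eB(v_1,\dots,v_n)$ which completes the model and reproduces $\sP$, certifying $\tilde\sQ\in\idr{\sQ}$.

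The delicate step, and the one I expect to be the main obstacle, is the sufficiency direction: one must upgrade the a.s.\ inclusion $\vec{\eb}_n'\in\tilde\eB$ delivered by Artstein into a \emph{pointwise} support condition $\sB(\cdot\mid v_1,\dots,v_n)(\eB(v_1,\dots,v_n))=1$ for $\tilde\sQ_{1,\dots,n:n}$-almost every $(v_1,\dots,v_n)$, so that the constructed $\sB$ is admissible in the sense of Identification Problem~\ref{IP:auction}. This rests on the nonatomicity of $(\Omega,\salg,\P)$ in Table~\ref{tab:notation} (ensuring existence of regular conditional probabilities on the Polish space $V_n$) and on the fact that $\tilde\eB$ is, by construction, a measurable function of the valuations, so that the a.s.\ inclusion descends to the conditional distribution. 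As a final, optional refinement one could, mirroring the reduction in Theorem SIR-\ref{SIR:CDF_id} and~\eqref{eq:sharp_id_P_interval_2}, cut down the class of test sets $K$ to a core-determining subfamily of $V_n$; this does not affect sharpness but is useful for computation.
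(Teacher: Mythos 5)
Your proposal is correct and follows essentially the same route as the paper's own proof: both directions rest on Theorem \ref{thr:artstein} applied to the random set $\tilde\eB$ from \eqref{eq:RCS_auction}, with the sufficiency step completing the model by a bidding rule that puts probability one on the coupled selection $\vec{\eb}_n'$ (your regular-conditional-distribution construction is just a more explicit rendering of that step). The core-determining reduction you mention as optional is likewise noted by the paper immediately after the theorem.
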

\begin{proof}
The sharp identification region for $\sQ$ is given by the collection of probability distributions $\tilde{\sQ}\in\cT$ for which one can find a bidding rule $\sB(\cdot|\cdot)$ with support in $\tilde{\eB}$ a.s. such that $\sG_{1,\dots,n:n}(\cdot)=\sM_{1,\dots,n:n}(\cdot;\sB,\tilde{\sQ})$.
Here $\sM_{1,\dots,n:n}(\cdot;\sB,\tilde{\sQ})$ is defined as in \eqref{eq:model:impl_sel_mech_auction} with $\tilde{\sQ}$ replacing $\sQ$.
Take a distribution $\tilde{\sQ}$ satisfying this definition of sharpness.
Then there exists a selection of $\tilde{\eB}$ determined by the bidding rule associated with $\tilde{\sQ}$, such that its distribution matches that of $\vec{\eb}_n$.
But then Theorem \ref{thr:artstein} implies that the inequalities in \eqref{eq:SIR:auction} hold.
Conversely, take $\tilde{\sQ}$ satisfying the inequalities in \eqref{eq:SIR:auction}.
Then, by Theorem \ref{thr:artstein}, $\vec{\eb}_n$ and $\tilde{\eB}$ can be realized on the same probability space as random elements $\vec{\eb}_n^\prime$ and $\tilde{\eB}^\prime$, $\vec{\eb}_n\edis \vec{\eb}_n^\prime$, $\tilde{\eB}\edis\tilde{\eB}^\prime$, such that $\vec{\eb}_n^\prime \in \tilde{\eB}^\prime$ a.s.
One can then complete the auction model with a bidding rule that picks $\vec{\eb}_n^\prime$ with probability $1$, and the result follows.
\end{proof}
In \eqref{eq:SIR:auction}, $\sP(\vec{\eb}_n\in K)$ is determined by the joint distribution of the ordered bids and hence can be learned from the data.
On the other side, $\sT_{\tilde\eB}(K;\tilde{\sQ})$ is a function of the model and $\tilde{\sQ}\in\cT$. 
Hence, it can be computed using \eqref{eq:RCS_auction}, with $\tilde\eB$ defined with respect to order statistics of i.i.d. random variables with distribution $\tilde{\sQ}\in\sT$.
To gain insights in the characterization of $\idr{\sQ}$, consider for example the set $K=\{\prod_{i=1}^{n-1}(-\infty,+\infty)\}\times(-\infty,v]$.
Plugging it in the inequalities in \eqref{eq:SIR:auction}, one obtains
\begin{align*}
\sG_{n:n}(v) \le \sQ_{n-1,n}(v),~\text{for all } n, 
\end{align*}
which, using \eqref{eq:HT:beta}, yields \eqref{eq:HT_lower}.
Similarly, plugging in the sets $K_j=\{\prod_{i=1}^{j-1}(-\infty,+\infty)\}\times[v,\infty)\times\{\prod_{j+1}^n(-\infty,+\infty)\}$, $j=1,\dots,n$, yields \eqref{eq:HT_upper}.
So the inequalities proposed by \cite{hai:tam03} are a subset of the inequalities yielding the sharp identification region in Theorem SIR-\ref{SIR:auction}.
More information can be obtained by using additional sets $K$.
For instance, the set $K=[v_1,\infty)\times[v_2,\infty)\times\{\prod_{i=1}^{n}(-\infty,+\infty)\}$, $v_2\ge v_1$, yields $\sP(\eb_{1:n}\ge v_1,\eb_{2:n}\ge v_2)\le \sQ_{1,2:n}([v_1,\infty)\times[v_2,\infty))$, which further restricts $\sQ$.
Numerous examples can be given.

Characterization \eqref{eq:SIR:auction} is stated using inequality \eqref{eq:domin-t} for the collection of compact subsets of $\R^n$.
One can instead use the (equivalent) inequality \eqref{eq:dom-c}, and show that in fact it suffices to check it for a much smaller collection of sets, as shown by \cite{che:ros17} \citep[see also][Section 2.2]{mol:mol18}. 
Nonetheless, this collection remains extremely large.

\begin{BI}[Random set theory and partial identification -- continued]
As stated in the Introduction, constructing the (random) set of model predictions delivered by the maintained assumptions is an exercise typically carried out in identification analysis, regardless of whether random set theory is applied.
Indeed, for the problem studied in this section, \cite[equation D1]{hai:tam03} put forward the set of admissible bids in \eqref{eq:RCS_auction}.\footnote{Equations D1 in \citeauthor{hai:tam03} and \eqref{eq:RCS_auction} here differ in that the latter also requires bids to be ordered. This observation was besides the point in \possessivecite{hai:tam03} discussion that led to equation D1.}
With this set in hand, the tools of random set theory (in this case, Theorem \ref{thr:artstein}) immediately deliver the sharp identification region of interest.
\end{BI}
\cite{che:ros17auction} further generalize the analysis in this section by dropping the requirement of independent private values.
This allows them, for example, to consider affiliated private values.
They show that even in this significantly more complex context, the key behavioral restrictions imposed by \cite{hai:tam03} to relate bids to valuations can be coupled with the use of random set theory, to characterize sharp identification regions.

%

\subsection{Network Formation Models}
\label{subsubsec:networks}
Strategic models of network formation generalize the frameworks of single agents and multiple agents discrete choice models reviewed in Sections \ref{subsec:single:ag:RUM} and \ref{subsec:multiple:eq}.
They posit that pairs of agents (nodes) form, maintain, or sever connections (links) according to an explicit equilibrium notion and utility structure.
Each individual's utility depends on the links formed by others (the network) and on utility shifters that may be pair-specific.

One may conjecture that the results reported in Sections \ref{subsec:single:ag:RUM}-\ref{subsec:multiple:eq} apply in this more general context too.
While of course lessons can be carried over, network formation models present challenges that combined cannot be overcome without the development of new tools.
These include the issue of equilibrium existence and the possibility of multiple equilibria when they exist, due to the interdependence in agents' choices (this problem was already discussed in Section \ref{subsec:multiple:eq}).
Another challenge is the degree of correlation between linking decisions, which interacts with how the observable data is generated: one may observe a growing number of independent networks, or a growing number of agents on a single network.
Yet another challenge, which substantially increases the difficulties associated with the previous two, is the combinatoric complexity of network formation problems.
The purpose of this section is exclusively to discuss some recent papers that have made important progress to address these specific challenges and carry out partial identification analysis.
For a thorough treatment of the literature on network formation, I refer to the reviews in \cite{gra15}, \cite{cha16}, \cite{pau17}, and \cite[Chapter XXX in this Volume]{gra19}.\footnote{For a review of the literature on peer group effect analysis, see, e.g., \cite{bro:dur01hoe}, \cite{blu:bro:dur:ioa11}, \cite{pau17}, and \cite{gra19}.}

Depending on whether the researcher observes data from a single network or multiple independent networks, the underlying population of agents may be represented as a continuum or as a countably infinite set in the first case, or as a finite set in the second case.
Henceforth, I denote generic agents as $i$, $j$, $k$, and $m$.
I consider static models of undirected network formation with non-transferable utility.\footnote{\emph{Undirected} means that if a link  from node $i$ to node $j$ exists, then the link from $j$ to $i$ exists.
The discussion that follows can be generalized to the case of models with transferable utility.}
The collection of all links among nodes forms the network, denoted $\ey$.
For any pair $(i,j)$ with $i\neq j$, $\ey_{ij}=1$ if they are linked, and $\ey_{ij}=0$ otherwise ($\ey_{ii}=0$ for all $i$ by convention).
The notation $\ey-\{ij\}$ denotes the network that results if a link present between nodes $i$ and $j$ is deleted, while $\ey+\{ij\}$ denotes the network that results if a link absent between nodes $i$ and $j$ is added.
Denote agent $i$'s payoff by $\bu_i(\ey,\ex,\epsilon)$.
This payoff depends on the network $\ey$ and the payoff shifters $(\ex,\epsilon)$, with $\ex$ observable both to the agents and to the researcher, $\epsilon$ only to the agents, and $(\ex,\epsilon)$ collecting $(\ex_{ij},\epsilon_{ij})$ for all $i$ and $j$.\footnote{Here I consider a framework where the agents have complete information.}

Following much of the literature, I employ \emph{pairwise stability} \citep{jac:wol96} as equilibrium notion: $\ey$ is a pairwise stable network if all linked agents prefer not to sever their links, and all non-existing links are damaging to at least one agent.
Formally,
\begin{align*}
\forall(i,j):\ey_{ij}&=1,~\bu_i(\ey,\ex,\epsilon)\ge \bu_i(\ey-\{ij\},\ex,\epsilon)~\mathrm{and}~\bu_j(\ey,\ex,\epsilon)\ge \bu_j(\ey-\{ij\},\ex,\epsilon),\\
\forall(i,j):\ey_{ij}&=0,~\mathrm{if}~\bu_i(\ey+\{ij\},\ex,\epsilon)> \bu_i(\ey,\ex,\epsilon)~\mathrm{then}~\bu_j(\ey+\{ij\},\ex,\epsilon)< \bu_j(\ey,\ex,\epsilon).
\end{align*}
Under this equilibrium notion, if equilibria exist multiplicity is likely; see, among others, the examples in \cite[p. 475]{gra15}, \cite[p. 301]{pau17}, and \cite[example 3.1]{she18}.
The model is therefore \emph{incomplete}, because it does not specify how an equilibrium is selected in the region of multiplicity.
For the same reasons as discussed in the context of finite games in Section \ref{subsec:multiple:eq}, partial identification results (unless one is willing to impose restrictions on the equilibrium selection mechanism).
However, as I explain below, an immediate application of the identification analysis carried out there presents enormous practical challenges because there are $2^{n(n-1)/2}$ possible network configurations to be checked for stability (and the dimensionality of the space of unobservables is also very large).

In what follows I consider two distinct frameworks that make different assumptions about the utility function and how the data is generated, and discuss what can be learned about the parameters of interest in these cases.

\subsubsection{Data from Multiple Independent Networks}
\label{subsubsec:networks:1}
I first consider the case that the researcher observes data from multiple independent networks.
I follow the set-up put forward by \cite{she18}.
\begin{IP}[Network Formation Model with Multiple Independent Networks]
\label{IP:networks:multiple:indep}
Let there be $n\in\{2,3,\dots\},n<\infty$ agents, and let $(\ex,\ey)\sim\sP$ be observable random variables in $\times_{j=1}^n\R^d\times\{0,1\}^{n(n-1)/2}$, $d<\infty$.
Suppose that $\ey$ is a pairwise stable network.
For each agent $i$, let the utility function be known up to finite dimensional parameter vector $\delta\in\Delta\subset\R^p$, and given by
\begin{multline}
\bu_i(\ey,\ex,\epsilon;\delta)=\sum_{j=1}^n \ey_{ij}(f(\ex_i,\ex_j;\delta_1)+\epsilon_{ij})\\
+\delta_2\frac{\sum_{j=1}^n\sum_{k\neq i,k=1}^n\ey_{ij}\ey_{jk}}{n-2}+\delta_3\frac{\sum_{j=1}^n\sum_{k=j+1}^n\ey_{ij}\ey_{ik}\ey_{jk}}{n-2}\label{eq:utility:network:1}
\end{multline}
with $f(\cdot,\cdot;\cdot)$ a continuous function of its arguments.\footnote{The effects of having friends in common and of friends of friends in \eqref{eq:utility:network:1} are normalized by $n-2$. This enforces that the marginal utility that $i$ receives from linking with $j$ is affected by $j$ having an additional link with $k$ to a smaller degree as $n$ grows. This does not result in diminishing network effects.}
Suppose that $\epsilon_{ij}$ are independent for all $i\neq j$ and identically distributed with CDF known up to parameter vector $\gamma\in\Gamma\subset\R^m$, denoted $\sF_\gamma$.
Assume that the support of $\sF_\gamma$ is $\R$, that $\sF_\gamma$ is absolutely continuous with respect to Lebesgue measure, and continuously differentiable with respect to $\gamma\in\Gamma$.
Let $\Theta=\Delta\times\Gamma$.
Assume that the researcher observes a random sample of networks and observable payoff shifters drawn from $\sP$.
In the absence of additional information, what can the researcher learn about $\theta\equiv[\delta_1~\delta_2~\delta_3~\gamma]$?
	\qedex
\end{IP}
\cite{she18} analyzes this problem.
She establishes equilibrium existence provided that $\delta_2\ge 0$ and $\delta_3\ge 0$ \citep[Proposition 2.2]{she18}.\footnote{With transferable utility, \cite[Proposition 2.1]{she18} establishes existence for any $\delta_2,\delta_3\in\R$.
See \cite{hel13} for an earlier analysis of existence and uniqueness of pairwise stable networks.}
Given payoff shifters $(\ex,\epsilon)$ and parameters $\vartheta\equiv[\tilde\delta_1~\tilde\delta_2~\tilde\delta_3~\tilde\gamma]\in\Theta$, let $\eY_\vartheta(\ex,\epsilon)$ denote the collection of pairwise stable networks implied by the model.
It is easy to show that $\eY_\vartheta(\ex,\epsilon)$ is a random closed set as in Definition \ref{def:rcs}.
The networks in $\eY_\vartheta(\ex,\epsilon)$ are $n\times n$ symmetric adjacency matrices with diagonal elements equal to zero and off diagonal elements in $\{0,1\}$.
To ease notation, I omit $\eY_\vartheta$'s dependence on $(\ex,\epsilon)$ in what follows.
Under the assumption that $\ey$ is a pairwise stable network, at the true data generating value of $\theta\in\Theta$, one has
\begin{align}
\ey\in\eY_\theta~\mathrm{a.s.} \label{eq:y_in_Y_network_multiple}
\end{align}
Equation \eqref{eq:y_in_Y_network_multiple} exhausts the modeling content of Identification Problem \ref{IP:networks:multiple:indep}.
Theorem \ref{thr:artstein} can be leveraged to extract its empirical content from the observed distribution $\sP(\ey,\ex)$.
Let $\cY$ be the collection of $n\times n$ symmetric matrices with diagonal elements equal to zero and all other entries in $\{0,1\}$, so that $|\cY|=2^{n(n-1)/2}$.
For a given set $K\subset\cY$, let $\sT_{\eY_{\vartheta}}(K;\sF_\gamma)$ denote the probability of the event $\{\eY_\vartheta\cap K\neq \emptyset\}$ implied when $\epsilon\sim\sF_\gamma$, $\ex$-a.s.
\begin{SIR}[Structural Parameters in Network Formation Models with Multiple Independent Networks]
\label{SIR:networks:1}
Under the assumptions of Identification Problem \ref{IP:networks:multiple:indep}, the sharp identification region for $\theta$ is
\begin{align}
\idr{\theta}=\{\vartheta\in\Theta:\sP(\ey\in K|\ex)\le \sT_{\eY_{\vartheta}}(K;\sF_{\tilde\gamma})\,\forall K\subset\cY, \, \ex\text{-a.s.}\}.\label{eq:SIR:networks:1}
\end{align} 
\end{SIR}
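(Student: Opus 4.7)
The proof would closely follow the template of Theorem SIR-\ref{SIR:entry_game} for complete-information entry games with PSNE, because Identification Problem \ref{IP:networks:multiple:indep} shares the same abstract structure: a parametric model with set-valued predictions $\eY_\vartheta(\ex,\epsilon)$ (here, the random set of pairwise stable networks), together with the requirement that the observed outcome $\ey$ be a selection of $\eY_\theta$ at the true parameter value. The inputs already in hand are that $\eY_\vartheta(\ex,\epsilon)$ is a random closed set, that the existence result of \cite{she18} guarantees $\eY_\vartheta$ is nonempty $(\ex,\epsilon)$-a.s.\ for the parameter configurations under consideration, that equation \eqref{eq:y_in_Y_network_multiple} exhausts the modeling content, and that Theorem \ref{thr:artstein} characterizes distributions of selections of a random closed set via its capacity functional.

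First I would reduce sharpness to the existence of an admissible completion of the model. Specifically, $\vartheta\in\idr{\theta}$ if and only if there exists an equilibrium selection mechanism $\sR(\cdot\mid\ex,\epsilon)$, supported on $\eY_\vartheta(\ex,\epsilon)$ $(\ex,\epsilon)$-a.s., such that the implied model distribution $\sM_\vartheta(\cdot\mid\ex)=\int \sR(\cdot\mid\ex,\epsilon)\,d\sF_{\tilde\gamma}(\epsilon)$ equals $\sP(\ey\in\cdot\mid\ex)$, $\ex$-a.s. This formalizes observational equivalence in the same manner as in \eqref{eq:games:sel:mec:1}--\eqref{eq:games_model:pred}, and is the correct notion because the model leaves the selection among pairwise stable networks entirely unrestricted.

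Next I would prove the two inclusions using Artstein's theorem in its conditional form (Theorem 2.33 in \cite{mol:mol18}). For the easy direction, if an admissible $\sR$ exists, then the network it selects is a measurable selection of $\eY_\vartheta$ whose conditional distribution given $\ex$ equals $\sP(\ey\in\cdot\mid\ex)$; applying equation \eqref{eq:domin-t} conditionally on $\ex$ delivers the inequalities $\sP(\ey\in K\mid\ex)\le \sT_{\eY_\vartheta}(K;\sF_{\tilde\gamma})$ for all $K\subset\cY$, $\ex$-a.s., which is exactly the right-hand side of \eqref{eq:SIR:networks:1}. For the converse, given any $\vartheta$ satisfying these conditional capacity inequalities, the conditional Artstein theorem yields random elements $\ey'$ and $\eY'_\vartheta$ on a common probability space, having the same conditional distributions given $\ex$ as $\ey$ and $\eY_\vartheta$ respectively, with $\ey'\in\eY'_\vartheta$ a.s. One then completes the model by the (degenerate) selection mechanism that picks $\ey'$ with probability one, producing an admissible completion matching $\sP(\ey\mid\ex)$, so $\vartheta\in\idr{\theta}$.

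The main obstacle is conceptual rather than technical: one must recognize that, although $|\cY|=2^{n(n-1)/2}$ so that the system in \eqref{eq:SIR:networks:1} contains astronomically many inequalities and evaluating $\sT_{\eY_\vartheta}(K;\sF_{\tilde\gamma})$ is combinatorially hard, sharpness is established at the abstract level by coupling the set-valued model prediction with Artstein's theorem, exactly as in Sections \ref{subsec:multiple:eq}--\ref{subsec:auctions}. The genuinely difficult problems of identifying a much smaller \emph{core-determining} collection of sets $K$, and of dealing with the computational burden of the capacity functional when $n$ is moderately large, lie outside the proof of sharpness itself. A minor subtlety worth checking carefully is that the i.i.d.\ assumption on $\{\epsilon_{ij}\}$ (combined with $\epsilon\independent\ex$, implicit in the setup) makes $\sT_{\eY_\vartheta}(K;\sF_{\tilde\gamma})$ a well-defined function of $(\ex,\vartheta)$ alone, so that the conditional version of Theorem \ref{thr:artstein} applies verbatim and no additional measurability issues arise in the coupling construction.
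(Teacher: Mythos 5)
Your proposal is correct and follows essentially the same route as the paper: the paper's own proof of this theorem is a one-line reference back to the argument for Theorem SIR-\ref{SIR:entry_game}, which proceeds exactly as you describe — reduce sharpness to the existence of an admissible (unrestricted) selection mechanism supported on $\eY_\vartheta$, use the necessity half of Theorem \ref{thr:artstein} to get the capacity-functional inequalities from an existing selection, and use the sufficiency/coupling half to construct a degenerate selection mechanism from a $\vartheta$ satisfying the inequalities. Your closing remarks correctly separate the sharpness argument from the (orthogonal) computational issue of core-determining classes, which the paper likewise relegates to a footnote.
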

\begin{proof}
Follows from similar arguments as for the proof of Theorem \ref{SIR:entry_game} on p.~\pageref{SIR:entry_game}.
\end{proof}
The characterization of $\idr{\theta}$ in Theorem SIR-\ref{SIR:networks:1} is new to this chapter.\footnote{
\cite{gua19} has previously used Theorem D.1 in \cite{ber:mol:mol11}, as I do here, to characterize sharp identification regions in unilateral and bilateral directed network formation games.}
While technically it entails a finite number of conditional moment inequalities, in practice their number can be prohibitive as it can be as large as $2^{2^{n(n-1)/2}}-2$.\footnote{This number may be reduced drastically using the notion of \emph{core determining class} of sets, see Definition \ref{def:core-det} and the discussion on p.~\pageref{def:core-det}.
Nonetheless, even with relatively few agents, the number of inequalities in \eqref{eq:SIR:networks:1} may remain overwhelming.}
Even using only a subset of the inequalities in \eqref{eq:SIR:networks:1} to obtain an outer region, for example applying the insights in \cite{cil:tam09}, may not be practical (with $n=20$, $|\cY|\approx 10^{57}$).
Moreover, computation of $\sT_{\eY_{\vartheta}}(K;\sF_\gamma)$ may require (depending on the set $K$) evaluation of rather complex integrals.

To circumvent these challenges, \cite{she18} proposes to analyze network formation through \emph{subnetworks}.
A subnetwork is the restriction of a network to a subset of the agents (i.e., a subset of nodes and the links between them).
For given $A\subseteq\{1,2,\dots,n\}$, let $\ey^A=\{\ey_{ij}\}_{i,j\in A, i\neq j}$ be the submatrix in $\ey$ with rows and columns in $A$, and let $\ey^{-A}$ be the remaining elements of $\ey$ after $\ey^A$ is deleted.
With some abuse of notation, let $(\ey^A,\ey^{-A})$ denote the composition of $\ey^A$ and $\ey^{-A}$ that returns $\ey$.
Recall that $\eY_\vartheta\equiv\eY_\vartheta(\ex,\epsilon)$, and let
\begin{align*}
\eY_{\vartheta}^A=\{\ey^A\in\{0,1\}^{|A|}:\exists \ey^{-A}\in\{0,1\}^{|-A|}~\mathrm{such~that}~(\ey^A,\ey^{-A})\in\eY_{\vartheta}\}
\end{align*}
be the collection of subnetworks with rows and columns in $A$ that can be part of a pairwise stable network in $\eY_\vartheta$.
Let $\ex^A$ denote the subset of $\ex$ collecting $\ex_{ij}$ for $i,j\in A$.
For a given $y^A\in\{0,1\}^{|A|}$, let $\sC_{\eY_{\vartheta}^A}(y^A;\sF_\gamma)$ and $\sT_{\eY_{\vartheta}^A}(y^A;\sF_\gamma)$ denote, respectively, the probability of the events $\{\eY_\vartheta^A=\{y^A\}\}$ and $\{\{y^A\}\in\eY_\vartheta^A\}$ implied when $\epsilon\sim\sF_\gamma$, $\ex$-a.s.
The first event means that only the subnetwork $y^A$ is part of a pairwise stable network, while the second event means that $y^A$ is a possible subnetwork that is part of a pairwise stable network but other subnetworks may be part of it too.
\cite[Proposition 4.1]{she18} provides the following outer region for $\theta$ by adapting the insight in \cite{cil:tam09} to subnetworks.
In the theorem I abuse notation compared to Table \ref{tab:notation} by introducing a superscript, $A$, to make explicit the dependence of the outer region on it.
\begin{OR}[Subnetworks-based Outer Region on Structural Parameters in Network Formation Models with Multiple Independent Networks]
\label{OR:networks:1}
Under the assumptions of Identification Problem \ref{IP:networks:multiple:indep}, for any $A\subseteq\{1,2,\dots,n\}$, an $A$-dependent outer region for $\theta$ is
\begin{align}
\mathcal{O}^A_\sP[\theta]=\{\vartheta\in\Theta:\sC_{\eY_{\vartheta}^A}(y^A;\sF_{\tilde\gamma})\le\sP(\ey^A=y^A|\ex^A)\le \sT_{\eY_{\vartheta}^A}(y^A;\sF_{\tilde\gamma})\,\forall y^A\subset\cY^A, \, \ex^A\text{-a.s.}\},\label{eq:OR:networks:1}
\end{align} 
where $\cY^A$ is the collection of $|A|\times|A|$ symmetric matrices with diagonal elements equal to zero and all other elements in $\{0,1\}$ so that $|\cY^A|=2^{|A|(|A|-1)/2}$. 
\end{OR}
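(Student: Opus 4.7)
The plan is to obtain the outer region by applying Artstein's inequalities (Theorem \ref{thr:artstein}) to the \emph{subnetwork} random closed set $\eY_\vartheta^A$, exploiting the fact that pairwise stability of $\ey$ yields almost sure containment of each of its submatrices in the corresponding subnetwork prediction set.

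First, I would observe that under the maintained assumptions, at the data generating $\theta\in\Theta$ one has $\ey\in\eY_\theta$ almost surely, as already used in \eqref{eq:y_in_Y_network_multiple}. By the very definition of $\eY_\vartheta^A$ as the projection of $\eY_\vartheta$ onto coordinates indexed by $A$, this implies
\begin{align*}
\ey^A\in\eY_\theta^A~~\text{a.s.}
\end{align*}
Conditioning on $\ex^A$ (rather than on the full vector $\ex$) will then be legitimate because $\eY_\theta^A$ depends on $(\ex,\epsilon)$ only through coordinates that involve $A$-indexed players in the utility comparisons relevant for a pairwise stable subnetwork, and because $\epsilon_{ij}\sim\sF_\gamma$ i.i.d. across pairs; any $\ex^{-A}$-dependence can be integrated out without altering the bound.

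Next, I would apply Theorem \ref{thr:artstein} to the random closed set $\eY_\theta^A$ taking values in the finite space $\cY^A$. Since $\ey^A$ is a singleton-valued selection of $\eY_\theta^A$, for every $y^A\in\cY^A$ the capacity inequality \eqref{eq:domin-t} specialized to $K=\{y^A\}$ gives
\begin{align*}
\sP(\ey^A=y^A\mid\ex^A)\le\sP(\{y^A\}\in\eY_\theta^A\mid\ex^A)=\sT_{\eY_\theta^A}(y^A;\sF_\gamma),~~\ex^A\text{-a.s.},
\end{align*}
while the dual containment inequality \eqref{eq:dom-c} applied with $F=\{y^A\}$ yields
\begin{align*}
\sP(\ey^A=y^A\mid\ex^A)\ge\sP(\eY_\theta^A=\{y^A\}\mid\ex^A)=\sC_{\eY_\theta^A}(y^A;\sF_\gamma),~~\ex^A\text{-a.s.}
\end{align*}
Hence $\theta\in\mathcal{O}^A_\sP[\theta]$ for every $A\subseteq\{1,\dots,n\}$, establishing the outer region claim.

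The step requiring most care is the reduction of the conditioning from $\ex$ to $\ex^A$: one must argue that the inequalities obtained from Artstein's theorem conditional on the full $\ex$ continue to hold after integrating out $\ex^{-A}$, and that the capacity and containment functionals $\sT_{\eY_\vartheta^A}(y^A;\sF_{\tilde\gamma})$ and $\sC_{\eY_\vartheta^A}(y^A;\sF_{\tilde\gamma})$ are well defined as functions of $\ex^A$ alone. This is not automatic because whether $y^A$ can be completed to a pairwise stable network in principle depends on $\ex^{-A}$ through the utilities of non-$A$ players; one has to verify that the relevant ``there exists $\ey^{-A}$'' statement, when averaged against the conditional law of $(\ex^{-A},\epsilon)$ given $\ex^A$, delivers an object that still upper- (resp. lower-) bounds $\sP(\ey^A=y^A\mid\ex^A)$. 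The remaining non-sharpness of the region, which I would flag but not prove, stems from using only the univariate marginals $\sP(\ey^A=y^A\mid\ex^A)$ for a single $A$ rather than the joint law of $\ey$, so that restrictions imposed by pairwise stability across overlapping subnetworks are discarded.
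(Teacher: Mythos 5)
Your proof is correct and arrives at exactly the inequalities in \eqref{eq:OR:networks:1}, but by a genuinely different route than the paper. The paper derives the bounds \`{a} la \cite{cil:tam09}: it introduces an explicit selection mechanism $\eu(\tilde\ey|\eY_\vartheta)$, writes the model-implied probability of the full network as an integral of $\eu$ against $\sF_\gamma$ split over the regions where the prediction is unique versus multi-valued, marginalizes over $\ey^{-A}$, and then replaces $\eu$ by zero and by one to obtain the containment and capacity functionals as lower and upper bounds. You instead use $\ey^A\in\eY_\theta^A$ a.s.\ and invoke Theorem \ref{thr:artstein} for the subnetwork random set, specializing \eqref{eq:domin-t} and \eqref{eq:dom-c} to the singleton $K=F=\{y^A\}$ (and using that $\eY_\theta^A\ne\emptyset$ a.s., so the containment probability of $\{y^A\}$ equals $\sP(\eY_\theta^A=\{y^A\})$). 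The two derivations yield identical inequalities --- as the paper notes in Section \ref{subsubsec:sharp:games}, the \citeauthor{cil:tam09}-type bounds are exactly the Artstein inequalities for singletons and their complements --- but your framing makes transparent a second source of slack beyond the one you flag: restricting Artstein's inequalities to singleton $K$ rather than all $K\subset\cY^A$ is precisely the refinement the paper mentions in the footnote following the theorem (adapting Theorem D.1 of \cite{ber:mol:mol11} to subnetworks), whereas the paper's route makes explicit the role of the unrestricted equilibrium selection. One caveat: your parenthetical claim that $\eY_\theta^A$ depends on $(\ex,\epsilon)$ only through $A$-indexed coordinates is not right, since completability of $y^A$ to a pairwise stable network depends on non-$A$ players' payoffs --- you correctly retract this in your final paragraph. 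What saves the argument is the tower-property step you describe: the inequalities conditional on the full $\ex$ integrate to inequalities conditional on $\ex^A$, with $\sT_{\eY_\vartheta^A}$ and $\sC_{\eY_\vartheta^A}$ then computed under the joint law of $(\epsilon,\ex^{-A})$ given $\ex^A$. The paper's own proof glosses over this same point, so no gap is introduced relative to the original argument.
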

\begin{proof}
Let $\eu(\tilde\ey|\eY_\vartheta)$ be a random variable in the unit simplex in $\R^{n(n-1)/2}$ which assigns to each possible pairwise stable network $\tilde\ey$ that may realize given $(\ex,\epsilon)$ and $\vartheta\in\Theta$ the probability that it is selected from $\eY_\vartheta$.
Given $y\in\cY$, denote by $\sM(y|\ex)$ the model predicted probability that the network realizes equal to $y$.
Then the model yields
\begin{align}
\sM(y|\ex)&=\int\eu(y| Y_\vartheta)d\sF_\gamma=\int_{y\in Y_\vartheta,| Y_\vartheta|=1}d\sF_\gamma+\int_{y\in Y_\vartheta,| Y_\vartheta|\ge 2}\eu( y| Y_\vartheta)d\sF_\gamma.\label{eq:model:distrib:network:1}
\end{align}
The model implied distribution for subnetwork $\tilde\ey^A$ is obtained by taking the marginal of expression \eqref{eq:model:distrib:network:1} with respect to $\tilde\ey^{-A}$
\begin{align}
\sM(y^A|\ex)&=\sum_{y^{-A}}\sM((y^A,y^{-A})|\ex)=
\int_{y^A\in Y_\vartheta^A,| Y_\vartheta^A|=1}d\sF_\gamma+\int_{y^A\in Y_\vartheta^A,| Y_\vartheta^A|\ge 2}\sum_{y^{-A}}\eu((y^A,y^{-A})| Y_\vartheta)d\sF_\gamma.\label{eq:model:distrib:subnetwork:1}
\end{align}
Replacing $\eu$ in \eqref{eq:model:distrib:subnetwork:1} with zero and one yields the bounds in \eqref{eq:OR:networks:1}. 
\end{proof}
\cite[Section 4.2]{she18} further assumes that the selection mechanism $\eu(\tilde\ey|\eY_\vartheta)$ is invariant to permutations of the labels of the players.
Under this condition and the maintained assumptions on $\epsilon$, she shows that the inequalities in \eqref{eq:OR:networks:1} are invariant under permutations of labels, so subnetworks in any two subsets $A,A'\subseteq\{1,2,\dots,n\}$ with $|A|=|A'|$ and $\ex^A=\ex^{A'}$ yield the same inequalities for all $y^A=y^{A'}$.
It is therefore sufficient to consider subnetwork $A$ and the inequalities in \eqref{eq:OR:networks:1} associated with it.
Leveraging this result, \citeauthor{she18} proposes an outer region obtained by looking at unlabeled subnetworks of size $|A|\le\bar{a}$ and given by
\begin{align*}
\outr{\theta}=\bigcap_{|A|\le\bar{a}}\mathcal{O}^A_\sP[\theta].
\end{align*}
As long as the subnetworks are chosen to be small, e.g., $|A|\le 2,3,4$, the inequalities in \eqref{eq:OR:networks:1} can be computed even if the network is large.
\cite{she18} shows that the inequalities in \eqref{eq:OR:networks:1} remain informative as $n$ grows.
This fact highlights the importance of working with subnetworks.
One could have applied the insight of \cite{cil:tam09} directly to the full network by setting $\eu$ equal to zero and to one in \eqref{eq:model:distrib:network:1}.
The resulting bounds, however, would vanish to zero as $n$ grows and become uninformative for $\theta$.
The characterization in Theorem OR-\ref{OR:networks:1} can be refined to obtain a smaller region, adapting the results in \cite[Supplementary Appendix Theorem D.1]{ber:mol:mol11} to subnetworks.
The size of this refined region is weakly decreasing in $|A|$.\footnote{The idea of using random set methods on subnetworks to obtain the refined region was put forward in an earlier version of \cite{she18}. She provided a proof that the refined region's size decreases weakly in $|A|$.}
However, the refinement does not yield $\idr{\theta}$ because it is applied only to subnetworks.
\begin{BI}
At the beginning of this section I highlighted some key challenges to inference in network formation models.
Identification Problem \ref{IP:networks:multiple:indep} bypasses the concern on the dependence among linking decisions through the independence assumption on $\epsilon_{ij}$ and the presumption that the researcher observes data from multiple independent networks, which allows for identification of $\sP(\ey,\ex)$.
\cite{she18} takes on the remaining challenges by formally establishing equilibrium existence and allowing for unrestricted selection among multiple equilibria.
In order to overcome the computational complexity of the problem, she puts forward the important idea of inference based on subnetworks.
While of course information is left on the table, the approach remains feasible even with large networks.
\end{BI}
\cite{miy16} considers a framework similar to the one laid out in Identification Problem \ref{IP:networks:multiple:indep}.
He assumes non-negative externalities, and shows that in this case the set of pairwise stable equilibria is a complete lattice with a smallest and a largest equilibrium.\footnote{This approach exploits supermodularity, and is related to \cite{jia08} and \cite{ech05}.}
He then uses moment functions that are monotone in the pairwise stable network (so that they take their extreme values at the smallest and largest equilibria), to obtain moment conditions that restrict $\theta$.
Examples of the moment functions used include the proportion of pairs with a link, the proportion of links belonging to traingles, and many more \citep[see][Table 1]{miy16}.

\cite{gua19} considers unilateral and bilateral directed network formation games, still under a sampling framework where the researcher observes many independent networks.
The equilibrium notion that she uses is pure strategy Nash.
She assumes that the payoff that player $i$ receives from forming link $ij$ is allowed to depend on the number of additional players forming a link pointing to $j$, but rules out other spillover effects.
Under this assumption and some regularity conditions, \citeauthor{gua19} shows that the network formation game can be decomposed into local games (i.e., games whose sets of players and strategy profiles are subsets of the network formation game's ones), so that the network formation game is in equilibrium if and only if each local game is in equilibrium.
She then obtains a characterization of $\idr{\theta}$ using elements of random set theory.

\subsubsection{Data From a Single Network}
\label{subsubsec:networks:2}
When the researcher observes data from a single network, extra care has to be taken to restrict the dependence among linking decisions.
This can be done in various ways \citep[see, e.g.,][for some examples]{cha16}.
Here I consider a framework proposed by \cite{pau:shu:tam18}.
\begin{IP}[Network Formation Model with a Single Network]
\label{IP:networks:single}
Let there be a continuum of agents $j\in\cI=[0,\mu]$, with $\mu>0$ their total measure, who choose whom to link to based on a utility function specified below.\footnote{This is an approximation to a framework with a large but finite number of agents.
The utility function can be less restrictive than the one considered here \citep[see Assumptions 1 and 2 in][]{pau:shu:tam18}.}
Let $y:\cI\times\cI\to\{0,1\}$ be an adjacency mapping with $y_{jk}=1$ if nodes $j$ and $k$ are linked, and $y_{jk}=0$ otherwise.
Assume that only connections up to distance $\bar{d}$ affect utility and that preferences are such that agents never choose to form more than a total of $\bar{l}$ links.\footnote{The distance measure used here is the shortest path between two nodes.}
To simplify exposition, let $\bar{d}=2$.
Let each agent $j$ be endowed with characteristics $\ex_j\in\cX$, with $\cX$ a finite set in $\R^p$, that are observable to the researcher.
Additionally, let each agent $j$ be endowed with $\bar{l}\times|\cX|$ preference shocks $\epsilon_{j\ell}(x)\in\R,\ell=1,\dots,\bar{l},x\in\cX$, that are unobservable to the researcher and correspond to the possible direct connections and their characteristics.\footnote{Under this assumption, the preference shocks do not depend on the individual identities of the agents.
Hence, it agents $k$ and $m$ have the same observable characteristics, then $j$ is indifferent between them.}
Suppose that the vector of preference shocks is independent of $\ex$ and has a distribution known up to parameter vector $\gamma\in\Gamma\subset\R^m$, denoted $\sQ_\gamma$.
Let $\cI(j)=\{k:y_{jk}=1\}$.
Assume that agents with characteristics and preference shocks $(x,e)$ value links according to the utility function
\begin{multline}
\bu_j(y,x,e)=\sum_{k\in\cI(j)}(f(x_j,x_k)+e_{j\ell(k)}(x_k))\\
+\delta_1\left|\bigcup_{k\in\cI(j)}\cI(k)-\cI(j)-\{j\}\right|
+\delta_2\sum_{k\in\cI(j)}\sum_{m\in\cI(j):m>k}y_{km}-\infty\one(|\cI(k)|>\bar{l})\label{eq:utility:network:2}
\end{multline}
Assume that the network $\ey$ formed by agents with characteristics and shocks $(\ex,\epsilon)$ is pairwise stable.
Let $\Theta\equiv\Upsilon\times\Delta\times\Gamma$, with $\Upsilon$ the parameter space for $\cf\equiv\{f(x,w):x\in\cX,w\in\cX\}$.
In the absence of additional information, what can the researcher learn about $\theta\equiv[\cf~\delta_1~\delta_2~\gamma]$?
	\qedex
\end{IP}
Identification Problem \ref{IP:networks:single} enforces dimension reduction through the restrictions on depth and degree (the bounds $\bar{d}$ and $\bar{l}$), so that it is applicable to frameworks with networks that have limited degree distribution (e.g., close friendships network, but not Facebook network).
It also requires that individual identities are irrelevant.
This substantially reduces the richness of unobserved heterogeneity allowed for and the dimensionality of the space of unobservables.
While the latter feature narrows the domain of applicability of the model, it is very beneficial to obtain a tractable characterization of what can be learned about $\theta$, and yields equilibria that may include isolated nodes, a feature often encountered in networks data.

\cite{pau:shu:tam18} study Identification Problem \ref{IP:networks:single} focusing on the payoff-relevant local subnetworks that result from the maintained assumptions.
These are distinct from the subnetworks used by \cite{she18}: whereas \citeauthor{she18} looks at subnetworks formed by arbitrary individuals and whose size is chosen by the researcher on the base of computational tractability, \citeauthor{pau:shu:tam18} look at subnetworks among individuals that are within a certain distance of each other, as determined by the structure of the preferences.
On the other hand, \possessivecite{she18} analysis does not require that agents have a finite number of types nor bounds the number of links that they may form.

To characterize the local subnetworks relevant for identification analysis in their framework, \cite{pau:shu:tam18} propose the concepts of \emph{network type} and \emph{preference class}.
A network type $t=(a,v)$ describes the local network up to distance $\bar{d}$ from the reference node.
Here $a$ is a square matrix of size $1+\bar{l}\sum_{d=1}^{\bar{d}}(\bar{l}-1)^{d-1}$ that describes the local subnetwork that is utility relevant for an agent of type $t$.
It consists of the reference node, its direct potential neighbors ($\bar{l}$ elements), its second order neighbors ($\bar{l}(\bar{l}-1)$ elements), through its $\bar{d}$-th order neighbors ($\bar{l}(\bar{l}-1)^{\bar{d}-1}$ elements).
The other component of the type, $v$, is a vector of length equal to the size of $a$ that contains the observable characteristics of the reference node and her alters.
The bounds $\bar{d}$ and $\bar{l}$ enforce dimension reduction by bounding the number of network types.
The partial identification approach of \citeauthor{pau:shu:tam18} depends on this number, rather than on the number of agents.
For example, the number of moment inequalities is determined by the number of network types, not by the number of agents.
As such, the approach yields its highest dividends for dimension reduction in large networks.

Let $\cT$ denote the collection of network types generated from a preference structure $\bu$ and set of characteristics $\cX$.
For given realization $(x,e)$ of the observable characteristics and preference shocks of a reference agent, and for given $\vartheta\in\Theta$, define the collection of network types for which no agent wants to drop a link by 
\begin{align*}
H_\vartheta(x,e)=\{(a,v)\in\cT:v_1=x~\mathrm{and}~\bu(a,v,e)\ge \bu(a_{-\ell},v,e)~\forall\ell=1,\dots,\bar{l}\},
\end{align*}
where $a_{-\ell}$ is equal to the local adjacency matrix $a$ but with the $\ell$-th link removed (that is, it sets the $(1,\ell+1)$ and $(\ell+1,1)$ elements of $a$ equal to zero).
Because $(\ex,\epsilon)$ are random vectors, $\eH_\vartheta\equiv H_\vartheta(\ex,\epsilon)$ is a random closed set as per Definition \ref{def:rcs}.
This random set takes on a finite number of realizations (equal to the possible subsets of $\cT$), so that its distribution is completely determined by the probability with which it takes on each of these realizations.
A \emph{preference class} $H\subset\cT$ is one of the possible realizations of $\eH_\vartheta$ for some $\vartheta\in\Theta$. 
The model implied probability that $\eH_\vartheta=H$ is given by
\begin{align}
\sM(H|\ex;\vartheta)\equiv\sQ_{\tilde\gamma}(\epsilon:\eH_\vartheta=H|\ex).\label{eq:model:prediction:network:class}
\end{align}
Observation of data from one network allows the researcher, under suitable restrictions on the sampling process, to learn the distribution of network types in the data (type shares), denoted $\sP(t)$.\footnote{Full observation of the network is not required (and in practice it often does not occur). Sampling uncertainty results from it because in this model there is a continuum of agents.}
For example, in a network of best friends with $\bar{l}=1$ and $\bar{d}=2$, and $\cX=\{x^1,x^2\}$ (e.g., a simplified framework with only two possible races), agents are either isolated or in a pair.
Network types are pairs for the agents' race and the best friend's race (with second element equal zero if the agent is isolated).
Type shares are the fraction of isolated blacks, the fraction of isolated whites, the fraction of blacks with a black best friend, the fraction of whites with a black best friend, and the fraction of whites with a white best friend.
The preference classes for a black agent are $H^1(b,e)=\{(b,0)\}$, $H^2(b,e)=\{(b,0),(b,b)\}$, $H^3(b,e)=\{(b,0),(b,w)\}$, $H^4(b,e)=\{(b,0),(b,w),(b,b)\}$ (and similarly for whites).
In each case, being alone is part of the preference class, as there are no links to sever.
In the second class the agent has a preference for having a black friend, in the third class for a white friend, and in the last class for a friend of either race.
It is easy to see that the model is \emph{incomplete}, as for a given realization of $\epsilon$ it makes multiple predictions on the agent's preference type.

\citeauthor{pau:shu:tam18} propose to map the distribution of preference classes into the observed distribution of preference types in the data through the use of \emph{allocation parameters}, denoted $\alpha_H(t)\in[0,1]$.
These are distinct from but play the same role as a selection mechanism, and they represent a candidate distribution for $t$ given $\eH_\vartheta=H$.
The model, augmented with them, implies a probability that an agent is of network type $t$:
\begin{align}
\sM(t;\vartheta,\alpha)=\frac{1}{\mu}\sum_{H\subset\cT}\mu_{v_1(t)}\sM(H|v_1(t);\vartheta)\alpha_H(t),\label{eq:model:prediction:network:2}
\end{align}
where $\mu_{v_1(t)}$ is the measure of reference agents with characteristics equal to the second component of the preference type $t$, $\ex=v_1(t)$, and $\alpha\equiv\{\alpha_H(t):t\in \cT, H\subset\cT\}$.


\citeauthor{pau:shu:tam18} provide a characterization of an outer region for $\theta$ based on two key implications of pairwise stability that deliver  restrictions on $\alpha$.
They also show that under some additional assumptions, this characterization yields $\idr{\theta}$ \citep[Appendix B]{pau:shu:tam18}.
Here I focus on their more general result.

The first implication that they use is that existing links should not be dropped:
\begin{align}
t\notin H\Rightarrow\alpha_H(t)=0.\label{eq:networks:2:PS1}
\end{align}
The condition in \eqref{eq:networks:2:PS1} is embodied in $\bar\alpha\equiv\{\alpha_H(t):t\in H, H\subset\cT\}$.

The second implication is that it should not be possible to establish mutually beneficial links among nodes that are far from each other.
Let $t^\prime$ and $s^\prime$ denote the network types that are generated if one adds a link in networks of types $t$ and $s$ among two nodes that are at distance at least $2\bar{d}$ from each other and each have less than $\bar{l}$ links.
Then the requirement is
\begin{align}
\left(\sum_{H\subset\cT}\mu_{v_1(t)}\sM(H|v_1(t);\vartheta)\alpha_H(t)\one(t^\prime\in H)\right)\left(\sum_{H\subset\cT}\mu_{v_1(s)}\sM(H|v_1(s);\vartheta)\alpha_H(s)\one(s^\prime\in H)\right)=0\label{eq:networks:2:PS2}
\end{align}
In words, if a positive measure of agents of type $t$ prefer $t^\prime$ (i.e., $\alpha_H(t)>0$ for some $H$ such that $t^\prime\in H$), there must be zero measure of type $s$ individuals who prefer $s^\prime$, because otherwise the network is unstable.
\citeauthor{pau:shu:tam18} show that the conditions in \eqref{eq:networks:2:PS2} can be embodied in a square matrix $q$ of size equal to the length of $\bar{\alpha}$.
The entries of $q$ are constructed as follows. 
Let $H$ and $\tilde{H}$ be two preference classes with $t\in H$ and $s\in\tilde{H}$.
With some abuse of notation, let $q_{\alpha_H(t),\alpha_{\tilde{H}}(s)}$ denote the element of $q$ corresponding to the index of the entry in $\bar\alpha$ equal to $\alpha_H(t)$ for the row, and to $\alpha_{\tilde{H}}(s)$ for the column.
Then set $q_{\alpha_H(t),\alpha_{\tilde{H}}(s)}(\vartheta)=\one(t^\prime\in H)\one(s^\prime\in\tilde{H})$.
It follows that this element yields the term $\big(\alpha_H(t)\one(t^\prime\in H)\big)\big(\alpha_{\tilde{H}}(s)\one(s^\prime\in \tilde{H})\big)$ in the quadratic form $\bar{\alpha}^\top q \bar{\alpha}$.
As long as $\mu_{v_1(\cdot)}$ and $\sM(\cdot|\ex;\vartheta)$ in \eqref{eq:model:prediction:network:class} are strictly positive, this term is equal to zero if and only if condition \eqref{eq:networks:2:PS2} holds for types $t$ and $s$.\footnote{The possibility that $\mu_{v_1(\cdot)}$ or $\sM(\cdot|\ex;\vartheta)$ are equal to zero can be accommodated by setting $q_{\alpha_H(t),\alpha_{\tilde{H}}(s)}(\vartheta)=(\mu_{v_1(t)}\sM(H|v_1(t);\vartheta)\one(t^\prime\in H))(\mu_{v_1(s)}\sM(H|v_1(s);\vartheta)\one(s^\prime\in\tilde{H}))$. However, in that case $q$ depends on $\vartheta$ and its computational cost increases.}

With this background, Theorem OR-\ref{OR:networks:2} below provides an outer region for $\theta$.
The proof of this result follows from the arguments laid out above \citep[see][Theorems 1 and 2, for the full details]{pau:shu:tam18}.
\begin{OR}[Outer Region on Parameters of a Network Formation Model with a Single Network]
\label{OR:networks:2}
Under the assumptions of Identification Problem \ref{IP:networks:single},
\begin{align}
\outr{\theta}=\left\{\vartheta\in\Theta:
\left(\begin{tabular}{rl}
 $\min_{\bar{\alpha}} \bar{\alpha}^\top q \bar{\alpha}$ &  \\ 
 s.t. & $\sM(t;\vartheta,\bar{\alpha})=\sP(t)~\forall~t\in\cT$ \\ 
  & $\sum_{t\in H}\bar\alpha_H(t)=1~\forall H\subset \cT$ \\ 
  & $\alpha_H(t)\ge 0~\forall t\in H,\forall H\subset \cT$  
 \end{tabular}  \right)=0
\right\}.\label{eq:OR:networks:2}
\end{align}
\end{OR}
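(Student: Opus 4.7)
The plan is to establish $\outr{\theta} \supseteq \idr{\theta}$ by showing that for the data-generating $\theta$, we can exhibit an allocation $\bar\alpha$ that is feasible for the program in \eqref{eq:OR:networks:2} and achieves objective value zero. The starting point is that, under the maintained assumption that the observed network is pairwise stable, there exists a selection/allocation mechanism $\alpha_H(t)\in[0,1]$ that describes, for the true $\theta$, the conditional distribution of the realized network type $t$ given that the preference class of the reference agent is $\eH_\theta=H$. By construction, this mechanism satisfies $\alpha_H(t)\ge 0$ and $\sum_{t\in H}\alpha_H(t)=1$ for each $H\subset\cT$, delivering the last two families of constraints. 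Moreover, the equality $\sM(t;\theta,\alpha)=\sP(t)$ for all $t\in\cT$ holds by the definition of the model-implied type probability in \eqref{eq:model:prediction:network:2}, because the observed type shares are, by assumption, generated by a mixture of the preference-class distributions $\sM(H|v_1(t);\theta)$ weighted by a valid allocation.

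Next, I would use the first implication of pairwise stability in \eqref{eq:networks:2:PS1}: existing links are not severed, so $\alpha_H(t)=0$ whenever $t\notin H$. This allows one to restrict the decision variable in the optimization to $\bar\alpha=\{\alpha_H(t):\, t\in H,\, H\subset\cT\}$ without losing any of the feasibility properties just established. The remaining task is to verify that this $\bar\alpha$ makes the quadratic form $\bar\alpha^\top q\bar\alpha$ vanish. Expanding the quadratic, one has
\[
\bar\alpha^\top q\bar\alpha \;=\; \sum_{H,\tilde H\subset\cT}\,\sum_{t\in H,\, s\in\tilde H}\; q_{\alpha_H(t),\alpha_{\tilde H}(s)}(\theta)\,\alpha_H(t)\,\alpha_{\tilde H}(s).
\]
By the construction of $q$ described just before the theorem, each entry equals $\one(t'\in H)\one(s'\in\tilde H)$, where $t',s'$ are the network types generated by adding a link between two agents of types $t$ and $s$ at distance at least $2\bar d$, each having fewer than $\bar l$ links. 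Consequently, each summand is nonnegative, and a given summand is strictly positive only when a positive mass of type-$t$ agents in class $H$ would prefer $t'$ and, simultaneously, a positive mass of type-$s$ agents in class $\tilde H$ would prefer $s'$.

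The second implication of pairwise stability, encoded in \eqref{eq:networks:2:PS2}, rules out exactly this configuration: whenever a positive measure of type-$t$ agents would benefit from forming a link that yields $t'$, there can be no positive measure of type-$s$ agents who would symmetrically benefit from the corresponding $s'$, since otherwise they would link and the network would not be stable. Therefore every summand in the quadratic form is zero at the true $\theta$ paired with its pairwise-stable allocation, so $\bar\alpha^\top q\bar\alpha=0$. Combined with nonnegativity of the quadratic on the feasible set (all variables and matrix entries are nonnegative), this $\bar\alpha$ achieves the global minimum value of zero, which places $\theta$ in $\outr{\theta}$.

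I would expect the main subtlety to lie in the bookkeeping that connects the abstract allocation $\alpha_H(t)$ arising from pairwise stability to the vectorized variable $\bar\alpha$ and the matrix $q(\theta)$, in particular handling the cases where $\mu_{v_1(\cdot)}$ or $\sM(\cdot|\ex;\theta)$ can vanish (as noted in the footnote suggesting an alternative, $\vartheta$-dependent definition of $q$). The argument also produces only an outer, rather than sharp, region because it uses only the two pairwise-stability implications \eqref{eq:networks:2:PS1}--\eqref{eq:networks:2:PS2}; additional restrictions embedded in pairwise stability (for example, mutually beneficial links at distance less than $2\bar d$, or joint restrictions across multiple potential deviations) are not encoded in the quadratic form, so $\outr{\theta}$ may strictly contain $\idr{\theta}$ unless supplementary assumptions of the kind discussed in Paula, Shum, and Tamer's Appendix B are imposed.
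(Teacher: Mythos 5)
Your proposal is correct and follows essentially the same route as the paper, which establishes the result by exhibiting, at the data-generating $\theta$, a feasible allocation $\bar\alpha$ that matches the observed type shares via \eqref{eq:model:prediction:network:2}, satisfies the adding-up and nonnegativity constraints, respects \eqref{eq:networks:2:PS1} by restricting to $t\in H$, and annihilates the quadratic form because \eqref{eq:networks:2:PS2} forces every cross term $\alpha_H(t)\one(t'\in H)\,\alpha_{\tilde H}(s)\one(s'\in\tilde H)$ to vanish. Your caveats about the degenerate case where $\mu_{v_1(\cdot)}$ or $\sM(\cdot|\ex;\vartheta)$ vanish, and about why only an outer (not sharp) region obtains, match the paper's own qualifications.
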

The set in \eqref{eq:OR:networks:2} does not equal $\idr{\theta}$ in all models allowed for in Identification Problem \ref{IP:networks:single} because condition \eqref{eq:networks:2:PS2} does not embody all implications of pairwise stability on non-existing links.
While the optimization problem in \eqref{eq:OR:networks:2} is quadratic, it is not necessarily convex because $q$ may not be positive definite.
Nonetheless, the simulations reported by \citeauthor{pau:shu:tam18} suggest that $\outr{\theta}$ can be computed rapidly, as least for the examples they considered.

\begin{BI}
At the beginning of this section I highlighted some key challenges to inference in network formation models.
When data is observed from a single network, as in Identification Problem \ref{IP:networks:single}, \possessivecite{pau:shu:tam18} proposal to base inference on local networks achieves two main benefits.
First, it delivers consistently estimable features of the game, namely the probability that an agent belongs to one of a finite collection of network types.
Second, it achieves dimension reduction, so that computation of outer regions on $\theta$ remains feasible even with large networks and allowing for unrestricted selection among multiple equilibria.
\end{BI}

\subsection{Further Theoretical Advances and Empirical Applications}
\label{subsec:applications:struct}
In order to discuss the partial identification approach to learning structural parameters of economic models in some level of detail while keeping this chapter to a manageable length, I have focused on a selection of papers.
In this section I briefly mention several other excellent theoretical contributions that could be discussed more closely, as well as several empirical papers that have applied partial identification analysis of structural models to answer a wide array of questions of substantive economic importance.\medskip

\cite{pak10} and \cite{pak:por:ho:ish15} propose to embed revealed preference-based inequalities into structural models of both demand and supply in markets where firms face discrete choices of product configuration or of location.
Revealed preference arguments are a trademark of the literature on discrete choice analysis.
\cite{pak10} and \cite{pak:por:ho:ish15} use these arguments to leverage a subset of the model's implications to obtain easy-to-compute moment inequalities.
For example, in the context of entry games such as the ones discussed in Section \ref{subsec:multiple:eq}, they propose to base inference on the implication that a player enters the market if and only if (s)he expects to make non-negative profits.
This condition can be exploited even when players have heterogeneous (unobserved to the researcher) information sets, and it implies that the expected profits for entrants should be non-negative.
Nonetheless, the condition does not suffice to obtain moment inequalities that include only observed payoff shifters and preference parameters.
This is because the expected value of unobserved payoff shifters for entrants is not equal to zero, as the group of entrants is selected.
The authors require the availability of valid (monotone) instrumental variables to solve this problem (see Section \ref{subsec:programme:eval} for uses of instrumental variables and monotone instrumental variables in partial identification analysis of treatment effects).
Interesting features of their approach include that the researcher does not need to solve for the set of equilibria, nor to require that the distribution of unobservable payoff shifters is known up to finite dimensional parameter vector.
Moreover, the same basic ideas can be applied to single agent models (with or without heterogeneous information sets).
A shortcoming of the method is that the set of parameter vectors satisfying the moment inequalities may be wider than the sharp identification region under the maintained assumptions.

The breadth of applications of the approach proposed by \cite{pak10} and \cite{pak:por:ho:ish15} is vast.\footnote{Statistical inference in these papers is often carried out using the methods proposed by \cite{che:hon:tam07}, \cite{ber:mol08}, and \cite{and:soa10}. Model specification tests, if carried out, are based on the method proposed by \cite{bug:can:shi15}. See Sections \ref{subsec:CS} and \ref{sec:misspec}, respectively, for a discussion of confidence sets and specification tests.}
For example, \cite{ho09} uses it to model the formation of the hospital networks offered by US health insurers, and \cite{ho:ho:mor12} and \cite{lee13} use it to obtain bounds on firm fixed costs as an input to modeling product choices in the movie industry and in the US video game industry, respectively.
\cite{hol11} estimates the effects of Wal-Mart's strategy of creating a high density network of stores.
While the close proximity of stores implies cannibalization in sales, Wal-Mart is willing to bear it to achieve density economies, which in turn yield savings in distribution costs.
His results suggest that Wal-Mart substantially benefits from high store density.
\cite{ell:hou:tim13} measure the effects of chain economies, business stealing, and heterogeneous firms' comparative advantages in the discount retail industry.
\cite{kaw:wat13} estimate a model of strategic voting and quantify the impact it has on election outcomes. 
As in other models analyzed in this section, the one they study yields multiple predicted outcomes, so that partial identification methods are required to carry out the empirical analysis if one does not assume a specific selection mechanism to resolve the multiplicity.
They estimate their model on Japanese general-election data, and uncover a sizable fraction of strategic voters.
They also estimate that only a small fraction of voters are misaligned (voting for a candidate other than their most preferred one).
\cite{eiz14} studies whether the rapid removal from the market for personal computers of existing central processing units upon creation of new ones through innovation reduces surplus.
He finds that a limited group of price-insensitive consumers enjoys the largest share of the welfare gains from innovation.
A policy that kept older technologies on the shelf would allow for the benefits from innovation to reach price-sensitive consumers thanks to improved access to mobile computing, but total welfare would not increase because consumer welfare gains would be largely offset by producer losses.
\cite{ho:pak14} analyze hospital referrals for labor and birth episodes in California in 2003, for patients enrolled with six health insurers that use, to a different extent, incentives to referring physicians groups to reduce hospital costs (capitation contracts).
The aim is to learn whether enrollees with high-capitation insurers tend to be referred to lower-priced hospitals (ceteris paribus) compared to other patients with same-severity conditions, and whether quality of care was affected.
Their model allows for an insurer-specific preference function that is additively separable in the hospital price paid by the insurer (which is allowed to be measured with error), the distance traveled, and plan and severity-specific hospital fixed effects.
Importantly, unobserved heterogeneity entering the preference function is not assumed to be drawn from a distribution known up to finite dimensional parameter vector.
The results of the empirical analysis indicate that the price paid by insurers to hospitals has an impact on referrals, with higher elasticity to price for insurers whose physicians groups are more highly capitated.
\cite{dic:mor18} study how the information that potential exporters have to predict the profits they will earn when serving a foreign market influences their decisions to export.
They propose a model where the researcher specifies and observes a subset of the variables that agents use to form their expectations, but may not observe other variables that affect firms' expectations heterogeneously (across firms and markets, and over time).
Because only a subset of the variables entering the firms' information set is observed, partial identification results.
They show that, under rational expectations, they can test whether potential exporters know and use specific variables to predict their export profits. 
They also use their model's estimates to quantify the value of information.
\cite{wol18} studies the implications of the \$85 billion automotive industry bailout in 2009 on the commercial vehicle segment.
He finds that had Chrysler and GM been liquidated (or aquired by a major competitor) rather than bailed out, the surviving firms would have experienced a rise in profits high enough to induce them to introduce new products.

A different use of revealed preference arguments appears in the contributions of \cite{blu:bro:cra08}, \cite{blu:kri:mat14}, \cite{hod:sto14,hod:sto15}, \cite{man14}, \cite{bar:mol:tei16}, \cite{hau:new16}, \cite{ada19}, and many others.
For example, \cite{man14} proposes a method to partially identify income-leisure preferences and to evaluate the associated effects of tax policies. 
He starts from basic revealed-preference analysis performed under the assumption that individuals prefer more income and leisure, and no other restriction. 
The analysis shows that observing an individual's time allocation under a status quo tax policy yields bounds on his allocation that may or may not be informative, depending on how the person allocates his time under the status quo policy and on the tax schedules.
He then explores what more can be learned if one additionally imposes restrictions on the distribution of income-leisure preferences, using the method put forward by \cite{man07b}.
One assumption restricts groups of individuals facing different choice sets to have the same distribution of preferences.
The other assumption restricts this distribution to a parametric family.
\cite{kli:tar16} build on and expand \cite{man14}'s framework to evaluate the effect of Connecticut's Jobs First welfare reform experiment on women' labor supply and welfare participation decisions.

\cite{bar:mol:tei16} propose a method to learn features of households' risk preferences in a random utility model that nests expected utility theory plus a range of non-expected utility models.\footnote{Their model is based on the one  put forward by \cite{bar:mol:odo:tei13}. See \cite{bar:mol:odo:tei18} for a review of these and other non-expected utility models in the context of estimation of risk preferences.}
They allow for unobserved heterogeneity in preferences (that may enter the utility function non-separably) and leave completely unspecified their distribution.
The authors use revealed preference arguments to infer, for each household, a set of values for its unobserved heterogeneity terms that are consistent with the household's choices in the three lines of insurance coverage.
As their core restriction, they assume that each household's preferences are \emph{stable} across contexts: the household's utility function is the same when facing distinct but closely related choice problems.
This allows them to use the inferred set valued data to partially identify features of the distribution of preferences, and to classify households into preference types.
They apply their proposed method to analyze data on households' deductible choices across three lines of insurance coverage (home all perils, auto collision, and auto comprehensive).\footnote{Auto collision coverage pays for damage to the insured vehicle caused by a collision with another vehicle or object, without regard to fault. 
Auto comprehensive coverage pays for damage to the insured vehicle from all other causes, without regard to fault.
Home all perils (or simply home) coverage pays for damage to the insured home from all causes, except those that are specifically excluded (e.g., flood, earthquake, or war).} 
Their results show that between 70 and 80 percent of the households make choices that can be rationalized by a model with linear utility and monotone, quadratic, or even linear probability distortions. 
These probability distortions substantially overweight small probabilities.
By contrast, fewer than 40 percent can be rationalized by a model with concave utility but no probability distortions.

\cite{hau:new16} propose a method to carry out demand analysis while allowing for general forms of unobserved heterogeneity.
Preferences and linear budget sets are assumed to be statistically independent (conditional on covariates and control functions).
\citeauthor{hau:new16} show that for continuous demand, average surplus
is generally not identified from the distribution of demand for a given
price and income, and therefore propose a partial identification approach.
They use bounds on income effects to derive bounds on average surplus. They apply the bounds to gasoline demand, using data from the
2001 U.S. National Household Transportation Survey.

Another strand of empirical applications pertains to the analysis of discrete games.
\cite{cil:tam09} use the method they develop, described in Section \ref{subsubsec:tam03:cil:tam09}, to study market structure in the US airline industry and the role that firm heterogeneity plays in shaping it.
Their findings suggest that the competitive effects of each carrier increase in that carrier's airport presence, but also that the competitive effects of large carriers (American, Delta, United) are different from those of low cost ones (Southwest).
They also evaluate the effect of a counterfactual policy repealing the Wright Amendment, and find that doing so would see an increase in the number of markets served out of Dallas Love.

\cite{gri14} proposes a model of static entry that extends the one in Section \ref{subsec:multiple:eq} by allowing individuals to have flexible information structures, where players's payoffs depend on both a common-knowledge unobservable payoff shifter, and a private-information one.
His characterization of $\idr{\theta}$ is based on using an unrestricted selection mechanism, as in \cite{ber:tam06} and \cite{cil:tam09}.
He applies the model to study the impact of supercenters such as Wal-Mart, that sell both food and groceries, on the profitability of rural grocery stores. 
He finds that entry by a supercenter outside, but within 20 miles, of a local monopolist's market has a smaller impact on firm profits than entry by a local grocer.  
Their entrance has a small negative effect on the number of grocery stores in surrounding markets as well as on their profits.
The results suggest that location and format-based differentiation  partially insulate rural stores from competition with supercenters.  

A larger class of information structures is considered in the analysis of static discrete games carried out by \cite{mag:ron17}.
They allow for all information structures consistent with the players knowing their own payoffs and the distribution of opponents' payoffs. 
As solution concept they adopt the Bayes Correlated Equilibrium recently developed by \cite{ber:mor16}.
Also with this solution concept multiple equilibria are possible.
The authors leave completely unspecified the selection mechanism picking the equilibrium played in the regions of multiplicity, so that partial identification attains. 
\citeauthor{mag:ron17} use the random sets approach to characterize $\idr{\theta}$.
They apply the method to estimate a model of entry in the Italian supermarket industry and quantify the effect of large malls on local
grocery stores.
\cite{nor:tan14} provide partial identification results (and Bayesian inference methods) for semiparametric dynamic binary choice models without imposing distributional assumptions on the unobserved state variables.
They carry out an empirical application using \cite{rus87}'s model of bus engine replacement. 
Their results suggest that parametric assumptions about the distribution of the unobserved states can have a considerable effect on the estimates of per-period payoffs, but not a noticeable one on the counterfactual conditional choice probabilities.
\cite{ber:com19} use the random sets approach to partially identify and estimate dynamic discrete choice models with serially correlated unobservables, under instrumental variables restrictions.
They extend two-step dynamic estimation methods to characterize a set of structural parameters that are consistent with the dynamic model, the instrumental variables restrictions, and the data.\footnote{Statistical inference on $\theta$ is carried out using \cite{che:che:kat18}'s method.}
\cite{gua19} uses the random sets approach and a network formation model, to learn about Italian firms' incentives for having their executive directors sitting on the board of their competitors.

\cite{bar:cou:mol:tei18} use the method described in Section \ref{subsubsec:BCMT} to partially identify the distribution of risk preferences using data on deductible choices in auto collision insurance.\footnote{Statistical inference on projections of $\theta$ is carried out using \cite{kai:mol:sto19}'s method.}
They posit an expected utility theory model and allow for unobserved heterogeneity in households' risk aversion and choice sets, with unrestricted dependence between them.
Motivation for why unobserved heterogeneity in choice sets might be an important factor in this empirical framework comes from the earlier analysis of \cite{bar:mol:tei16} and novel findings that are part of \possessivecite{bar:cou:mol:tei18} contribution.
They show that commonly used models that make strong assumptions about choice sets (e.g., the mixed logit model with each individual's choice set assumed equal to the feasible set, and various models of choice set formation) can be rejected in their data.
With regard to risk aversion, their key finding is that their estimated lower bounds are significantly smaller than the point estimates obtained in the related literature. 
This suggests that the data can be explained by expected utility theory with lower and more homogeneous levels of risk aversion than it had been uncovered before. 
This provides new evidence on the importance of developing models that differ in their specification of \emph{which} alternatives agents evaluate (rather than or in addition to models focusing on \emph{how} they evaluate them), and to data collection efforts that seek to directly measure agents' heterogeneous choice sets \citep{cap16}.

\cite{iar:shi:shu18} study the effect of pre-vote deliberation on the decisions of US appellate courts.
The question of interest is weather deliberation increases or reduces the probability of an incorrect decision.
They use a model where communication equilibrium is the solution concept, and only observed heterogeneity in payoffs is allowed for. 
In the model, multiple equilibria are again possible, and the authors leave the selection mechanism completely unspecified.
They characterize $\idr{\theta}$ through an optimization problem, and structurally estimate the model on US Courts of Appeal data.
\citeauthor{iar:shi:shu18} compare the probability of making incorrect decisions under the pre-vote deliberation mechanism, to that in a counterfactual environment where no deliberation occurs.
The results suggest that there is a range of parameters in $\idr{\theta}$, for which judges have ex-ante disagreement of imprecise prior information, for which deliberation is beneficial.
Otherwise deliberation leads to lower effectiveness for the court.

\cite{dha:gai:mau18} propose a test for the hypothesis of rational expectations for the case that one observes only the marginal distributions of realizations and subjective beliefs, but not their joint distribution (e.g., when subjective beliefs are observed in one dataset, and realizations in a different one, and the two cannot be matched).
They establish that the hypothesis of rational expectations can be expressed as testing that a continuum of moment inequalities is satisfied, and they leverage the results in \cite{and:shi17} to provide a simple-to-compute test for this hypothesis.
They apply their method to test for and quantify deviations from rational expectations about future earnings, and examine the consequences of such departures in the context of a life-cycle model of consumption.

\cite{teb:tor:yan19} estimate the demand for health insurance under the Affordable Care Act using data from California.
Methodologically, they use a discrete choice model that allows for endogeneity in insurance premiums (which enter as explanatory variables in the model) and dispenses with parametric assumptions about the unobserved components of utility leveraging the availability of instrumental variables, similarly to the framework presented in Section \ref{subsubsec:CRS}. 
The authors provide a characterization of sharp bounds on the effects of changing premium subsidies on coverage choices, consumer surplus, and government spending, as solutions to linear programming problems, rendering their method computationally attractive.

Another important strand of theoretical literature is concerned with partial identification of panel data models.
\cite{hon:tam06} consider a dynamic random effects probit model, and use partial identification analysis to obtain bounds on the model parameters that circumvent the initial conditions problem.
\cite{ros12} considers a fixed effect panel data model where he imposes a conditional quantile restriction on time varying unobserved heterogeneity.
Differencing out inequalities resulting from the conditional quantile restriction delivers inequalities that depend only on observable variables and parameters to be estimated, but not on the fixed effects, so that they can be used for estimation.
\cite{che:fer:hah:new13} obtain bounds on average and quantile treatment effects in nonparametric and semiparametric nonseparable panel data models.
\cite{kha:pon:tam16} provide partial identification results in linear panel data models when censored outcomes, with unrestricted dependence between censoring and observable and unobservable variables. 
Their results are derived for two classes of models, one where the unobserved heterogeneity terms satisfy a stationarity restriction, and one where they are nonstationary but satisfy a conditional independence restriction. 
\cite{tor19} provides a method to partially identify state dependence in panel data models where individual unobserved heterogeneity needs not be time invariant.
\cite{pak:por16} study semiparametric multinomial choice panel models with fixed effects where the random utility function is assumed additively separable in unobserved heterogeneity, fixed effects, and a linear covariate index.
The key semiparametric assumption is a group stationarity condition on the disturbances which places no restrictions on either the joint distribution of the disturbances across choices or the correlation of disturbances across time. \citeauthor{pak:por16} propose a within-group comparison that delivers a collection of conditional moment inequalities that they use to provide point and partial identification results.
\cite{ari19} proposes a related method, where partial identification relies on the observation of individuals whose outcome changes in two consecutive time periods, and leverages shape restrictions to reduce the number of between alternatives comparisons needed to determine the optimal choice.

\section{Estimation and Inference}
\label{sec:inference}

\subsection{Framework and Scope of the Discussion}
\label{subsec:framework:inference}
The identification analysis carried out in Sections \ref{sec:prob:distr}-\ref{sec:structural} presumes knowledge of the joint distribution $\sP$ of the observable variables.
That is, it presumes that $\sP$ can be learned with certainty from observation of the entire population.
In practice, one observes a sample of size $n$ drawn from $\sP$.
For simplicity I assume it to be a random sample.\footnote{This assumption is often maintained in the literature. See, e.g., \cite{and:soa10} for a treatment of inference with dependent observations. \cite{eps:kai:seo16} study inference in games of complete information as in Identification Problem \ref{IP:entry_game}, imposing the i.i.d. assumption on the unobserved payoff shifters $\{\eps_{i1},\eps_{i2}\}_{i=1}^n$. The authors note that because the selection mechanism picking the equilibrium played in the regions of multiplicity (see Section \ref{subsec:multiple:eq}) is left completely unspecified and may be arbitrarily correlated across markets, the resulting observed variables $\{\ew_i\}_{i=1}^n$ may not be independent and identically distributed, and they propose an inference method to address this issue.}

Statistical inference on $\idr{\theta}$ needs to be conducted using knowledge of $\sP_n$, the empirical distribution of the observable outcomes and covariates.
Because $\idr{\theta}$ is not a singleton, this task is particularly delicate. 
To start, care is required to choose a proper notion of consistency for a set estimator $\idrn{\theta}$ and to obtain palatable conditions under which such consistency attains.
Next, the asymptotic behavior of statistics designed to test hypothesis or build confidence sets for $\idr{\theta}$ or for $\vartheta\in\idr{\theta}$ might change with $\vartheta$, creating technical challenges for the construction of confidence sets that are not encountered when $\theta$ is point identified.
Many of the sharp identification regions derived in Sections \ref{sec:prob:distr}-\ref{sec:structural} can be written as collections of vectors $\vartheta\in\Theta$ that satisfy conditional or unconditional moment (in)equalities.
For simplicity, I assume that $\Theta$ is a compact and convex subset of $\R^d$, and I use the formalization for the case of a finite number of unconditional moment (in)equalities:
\begin{align}
\idr{\theta}=\{\vartheta\in\Theta: \E_\sP(m_j(\ew_i;\vartheta))&\le 0~\forall j\in\cJ_1,~
\E_\sP(m_j(\ew_i;\vartheta))=0~\forall j\in\cJ_2\}.\label{eq:sharp_id_for_inference}
\end{align}
In \eqref{eq:sharp_id_for_inference}, $\ew_i\in\cW\subseteq\R^{d_\cW}$ is a random vector collecting all observable variables, with $\ew\sim\sP$; $m_j:\cW\times\Theta\to\R$, $j\in\cJ\equiv\cJ_1\cup\cJ_2$, are known measurable functions characterizing the model; and $\cJ$ is a finite set equal to $\{1,\dots,|\cJ|\}$.\footnote{Examples where the set $\cJ$ is a compact set (e.g., a unit ball) rather than a finite set include the case of best linear prediction with interval outcome and covariate data, see characterization \eqref{eq:ThetaI:BLP} on p.~\pageref{eq:ThetaI:BLP}, and the case of entry games with multiple mixed strategy Nash equilibria, see characterization \eqref{eq:SIR_sharp_mixed_sup} on p.~\pageref{eq:SIR_sharp_mixed_sup}.
A more general continuum of inequalities is also possible, as in the case of discrete choice with endogenous explanatory variables, see characterization \eqref{eq:SIR:discrete:choice:endogenous} on p.~\pageref{eq:SIR:discrete:choice:endogenous}.
I refer to \cite{and:shi17} and \cite[Supplementary Appendix B]{ber:mol:mol11} for inference methods in the presence of a continuum of conditional moment (in)equalities.}
Instances where $\idr{\theta}$ is characterized through a finite number of conditional moment (in)equalities and the conditioning variables have finite support can easily be recast as in \eqref{eq:sharp_id_for_inference}.\footnote{I refer to \cite{kha:tam09}, \cite{and:shi13}, \cite{che:lee:ros13}, \cite{lee:son:wha13}, \cite{arm14b,arm15}, \cite{arm:cha16}, \cite{che:che:kat18}, and \cite{che18}, for inference methods in the case that the conditioning variables have a continuous distribution.}
Consider, for example, the two player entry game model in Identification Problem \ref{IP:entry_game} on p.~\pageref{IP:entry_game}, where $\ew=(\ey_1,\ey_2,\ex_1,\ex_2)$.
Using (in)equalities \eqref{eq:CT_00}-\eqref{eq:CT_01L} and assuming that the distribution of $(\ex_1,\ex_2)$ has $\bar{k}$ points of support, denoted $(x_{1,k},x_{2,k}),k=1,\dots,\bar{k}$, we have $|\cJ|=4\bar{k}$ and for $k=1,\dots,\bar{k}$,\footnote{In these expressions an index of the form $jk$ not separated by a comma equals the product of $j$ with $k$.}
\begin{align*}
m_{4k-3}(\ew_i;\vartheta)&=[\one((\ey_1,\ey_2)=(0,0))-\Phi((-\infty,-\ex_1b_1),(-\infty,-\ex_2b_2);r)]\one((\ex_1,\ex_2)=(x_{1,k},x_{2,k}))\\
m_{4k-2}(\ew_i;\vartheta)&=[\one((\ey_1,\ey_2)=(1,1))-\Phi([-\ex_1b_1-d_1,\infty),[-\ex_2b_2-d_2,\infty);r)]\one((\ex_1,\ex_2)=(x_{1,k},x_{2,k}))\\
m_{4k-1}(\ew_i;\vartheta)&=[\one((\ey_1,\ey_2)=(0,1))-\Phi((-\infty,-\ex_1b_1-d_1),(-\ex_2b_2,\infty);r)]\one((\ex_1,\ex_2)=(x_{1,k},x_{2,k}))\\
m_{4k}(\ew_i;\vartheta)&=\Big[\one((\ey_1,\ey_2)=(0,1))-\Big\{\Phi((-\infty,-\ex_1b_1-d_1),(-\ex_2b_2,\infty);r)\notag\\
&\quad\quad-\Phi((-\ex_1b_1,-\ex_1b_1-d_1),(-\ex_2b_2,-\ex_2b_2-d_2);r)\Big\}\Big]\one((\ex_1,\ex_2)=(x_{1,k},x_{2,k})).
\end{align*}

In point identified moment equality models it has been common to conduct estimation and inference using a criterion function that aggregates moment violations \citep{han82}.
\cite{man:tam02} adapt this idea to the partially identified case, through a criterion function $\crit_\sP:\Theta\to\R_+$ such that $\crit_\sP(\vartheta)=0$ if and only if $\vartheta\in\idr{\theta}$.
Many criterion functions can be used \citep[see, e.g.][]{man:tam02,che:hon:tam07,rom:sha08,ros08,gal:hen09,and:gug09b,and:soa10,can10,rom:sha10}.
Some simple and commonly employed ones include
\begin{align}
\crit_{\sP,\mathrm{sum}}(\vartheta) &= \sum_{j\in\cJ_1}\left[\frac{\E_\sP(m_j(\ew_i;\vartheta))}{\sigma_{\sP,j}(\vartheta)}\right]_+^2 + \sum_{j\in\cJ_2}\left[\frac{\E_\sP(m_j(\ew_i;\vartheta))}{\sigma_{\sP,j}(\vartheta)}\right]^2,\label{eq:criterion_fn_sum}\\
\crit_{\sP,\mathrm{max}}(\vartheta) &= \max\left\{\max_{j\in\cJ_1}\left[\frac{\E_\sP(m_j(\ew_i;\vartheta))}{\sigma_{\sP,j}(\vartheta)}\right]_+,\max_{j\in\cJ_2}\left|\frac{\E_\sP(m_j(\ew_i;\vartheta))}{\sigma_{\sP,j}(\vartheta)}\right|\right\}^2,\label{eq:criterion_fn_max}
\end{align}
where $[x]_+=\max\{x,0\}$ and $\sigma_{\sP,j}(\vartheta)$ is the population standard deviation of $m_j(\ew_i;\vartheta)$.
In \eqref{eq:criterion_fn_sum}-\eqref{eq:criterion_fn_max} the moment functions are standardized, as doing so is important for statistical power \citep[see, e.g.,][p. 127]{and:soa10}.
To simplify notation, I omit the label and simply use $\crit_\sP(\vartheta)$.
Given the criterion function, one can rewrite \eqref{eq:sharp_id_for_inference} as
\begin{align}
\label{eq:define:idr}
\idr{\theta}=\{\vartheta\in\Theta:\crit_\sP(\vartheta)=0\}.
\end{align}

To keep this chapter to a manageable length, I focus my discussion of statistical inference \emph{exclusively} on consistent estimation and on different notions of coverage that a confidence set may be required to satisfy and that have proven useful in the literature.\footnote{Using the well known duality between tests of hypotheses and confidence sets, the discussion could be re-framed in terms of size of the test.}
The topics of test of hypotheses and construction of confidence sets in partially identified models are covered in \cite{can:sha17}, who provide a comprehensive survey devoted entirely to them in the context of moment inequality models.
\cite[Chapters 4 and 5]{mol:mol18} provide a thorough discussion of related methods based on the use of random set theory.

\subsection{Consistent Estimation}
\label{subsec:consistent}
When the identified object is a set, it is natural that its estimator is also a set.
In order to discuss statistical properties of a set-valued estimator $\idrn{\theta}$ (to be defined below), and in particular its consistency, one needs to specify how to measure the distance between $\idrn{\theta}$ and $\idr{\theta}$.
Several distance measures among sets exist \citep[see, e.g.,][Appendix D]{mo1}.
A natural generalization of the commonly used Euclidean distance is the \emph{Hausdorff distance}, see Definition \ref{def:hausdorff}, which for given $A,B\subset\R^d$ can be written as
\begin{align*}
\dist_H(A,B) = \inf\Big\{r>0:\; A\subseteq B^r,\; B\subseteq A^r\Big\}=\max\left\{\sup_{a \in A} \dist(a,B), \sup_{b \in B} \dist(b,A) \right\},
\end{align*}
with $\dist(a,B)\equiv\inf_{b\in B}\Vert a-b\Vert$.\footnote{The definition of the Hausdorff distance can be generalized to an arbitrary metric space by replacing the Euclidean metric by the metric specified on that space.}
In words, the Hausdorff distance between two sets measures the furthest distance from an arbitrary point in one of the sets to its closest neighbor in the other set.
It is easy to verify that $\dist_H$ metrizes the family of non-empty compact sets; in particular, given non-empty compact sets $A,B\subset\R^d$, $\dist_H(A,B) =0$ if and only if $A=B$.
If either $A$ or $B$ is empty, $\dist_H(A,B) =\infty$.

The use of the Hausdorff distance to conceptualize consistency of set valued estimators in econometrics was proposed by \cite[Section 2.4]{han:hea:lut95} and \cite[Section 3.2]{man:tam02}.\footnote{It was previously used in the mathematical literature on random set theory, for example to formalize laws of large numbers and central limit theorems for random sets such as the ones in Theorems \ref{thr:SLLN-basic} and \ref{thr:clt} \citep{art:vit75,gin:hah:zin83}.}
\begin{definition}[Hausdorff Consistency]
\label{def:consistent_estimator}
An estimator $\idrn{\theta}$ is consistent for $\idr{\theta}$ if 
\begin{align*}
\dist_H(\idrn{\theta},\idr{\theta}) \stackrel{p}{\rightarrow} 0 ~\text{as } n\to \infty.
\end{align*}
\end{definition}
\cite{mol98} establishes Hausdorff consistency of a plug-in estimator of the set $\{\vartheta\in\Theta:g_\sP(\vartheta)\le 0\}$, with $g_\sP:\cW\times\Theta \to \R$ a lower semicontinuous function of $\vartheta\in\Theta$ that can be consistently estimated by a lower semicontinuous function $g_n$ uniformly over $\Theta$.
The set estimator is $\{\vartheta\in\Theta:g_n(\vartheta)\le 0\}$.
The fundamental assumption in \cite{mol98} is that $\{\vartheta\in\Theta:g_\sP(\vartheta)\le 0\}\subseteq\cl(\{\vartheta\in\Theta:g_\sP(\vartheta)< 0\})$, see \cite[Section 5.2]{mol:mol18} for a discussion.
There are important applications where this condition holds.
\cite{che:koc:men15} provide results related to \cite{mol98}, as well as important extensions for the construction of confidence sets, and show that these can be applied to carry out statistical inference on the Hansen–Jagannathan sets of admissible stochastic discount factors \citep{han:jag91}, the Markowitz–Fama mean–variance sets for asset portfolio returns \citep{mar52}, and the set of structural elasticities in \cite{che12b}'s analysis of demand with optimization frictions.
However, these methods are not broadly applicable in the general moment (in)equalities framework of this section, as \citeauthor{mol98}'s key condition generally fails for the set $\idr{\theta}$ in \eqref{eq:define:idr}.\medskip

\subsubsection{Criterion Function Based Estimators}
\cite{man:tam02} extend the standard theory of extremum estimation of point identified parameters to partial identification, and propose to estimate $\idr{\theta}$ using the collection of values $\vartheta\in\Theta$ that approximately minimize a sample analog of $\crit_\sP$:
\begin{align}
\idrn{\theta}=\left\{\vartheta\in\Theta:\crit_n(\vartheta)\le \inf_{\tilde\vartheta\in\Theta}\crit_n(\tilde\vartheta)+\tau_n\right\},\label{eq:define:idrn}
\end{align} 
with $\tau_n$ a sequence of non-negative random variables such that $\tau_n\stackrel{p}{\rightarrow} 0$.
In \eqref{eq:define:idrn}, $\crit_n(\vartheta)$ is a sample analog of $\crit_\sP(\vartheta)$ that replaces $\E_\sP(m_j(\ew_i;\vartheta))$ and $\sigma_{\sP,j}(\vartheta)$ in \eqref{eq:criterion_fn_sum}-\eqref{eq:criterion_fn_max} with properly chosen estimators, e.g.,
\begin{align*}
\bar m_{n,j}(\vartheta) &\equiv {\frac{1}{n}\sum_{i=1}^n m_j(\ew_i,\vartheta)},~j=1,\dots, |\cJ| 
\\
\hat{\sigma}_{n,j}(\vartheta) &\equiv {\left(\frac{1}{n}\sum_{i=1}^n [m_j(\ew_i,\vartheta)]^2-[\bar m_{n,j}(\vartheta)]^2\right)^{1/2}},~j=1,\dots, |\cJ|. 
\end{align*}

It can be shown that as long as $\tau_n=o_p(1)$, under the same assumptions used to prove consistency of extremum estimators of point identified parameters (e.g., with uniform convergence of $\crit_n$ to $\crit_\sP$ and continuity of $\crit_\sP$ on $\Theta$),
\begin{align}
\sup_{\vartheta \in \idrn{\theta}} \inf_{\tilde\vartheta \in \idr{\theta}} \Vert \vartheta-\tilde\vartheta \Vert\stackrel{p}{\rightarrow} 0~\text{as } n\to \infty.\label{eq:inner_consistent}
\end{align}
This yields that asymptotically each point in $\idrn{\theta}$ is arbitrarily close to a point in $\idr{\theta}$, or more formally that $\sP(\idrn{\theta}\subseteq\idr{\theta})\to 1$.
I refer to \eqref{eq:inner_consistent} as \emph{inner consistency} henceforth.\footnote{See \cite[Theorem 1]{ble15} for a pedagogically helpful proof for a semiparametric binary model.}
\cite{red81} provides an early contribution establishing this type of inner consistency for maximum likelihood estimators when the true parameter is not point identified.

However, Hausdorff consistency requires also that
\begin{align*}
\sup_{\vartheta \in \idr{\theta}} \inf_{\tilde\vartheta \in \idrn{\theta}} \Vert \vartheta-\tilde\vartheta \Vert\stackrel{p}{\rightarrow} 0~\text{as } n\to \infty,
\end{align*}
i.e., that each point in $\idr{\theta}$ is arbitrarily close to a point in $\idrn{\theta}$, or more formally that $\sP(\idr{\theta}\subseteq\idrn{\theta})\to 1$.
To establish this result for the sharp identification regions in Theorem SIR-\ref{SIR:man:tam02_param} (parametric regression with interval covariate) and Theorem SIR-\ref{SIR:man:tam02_binary} (semiparametric binary model with interval covariate), \cite[Propositions 3 and 5]{man:tam02} require the rate at which $\tau_n\stackrel{p}{\rightarrow} 0$ to be slower than the rate at which $\crit_n$ converges uniformly to $\crit_\sP$ over $\Theta$.

What might go wrong in the absence of such a restriction?
A simple example can help understand the issue.
Consider a model with linear inequalities of the form
\begin{align*}
\theta_1 &\le \E_\sP(\ew_1),\\
-\theta_1 &\le \E_\sP(\ew_2),\\
\theta_2 &\le \E_\sP(\ew_3)+ \E_\sP(\ew_4)\theta_1,\\
-\theta_2 &\le \E_\sP(\ew_5)+ \E_\sP(\ew_6)\theta_1.
\end{align*}
Suppose $\ew\equiv(\ew_1,\dots,\ew_6)$ is distributed multivariate normal, with $\E_\sP(\ew)=[6~0~2~0~{-2}~0]^\top$ and $\cov_\sP(\ew)$ equal to the identity matrix.
Then $\idr{\theta}=\{\vartheta=[\vartheta_1~\vartheta_2]^\top\in\Theta:\vartheta_1\in[0,6]~\text{and}~\vartheta_2=2\}$.
However, with positive probability in any finite sample $\crit_n(\vartheta)=0$ for $\vartheta$ in a random region (e.g., a triangle if $\crit_n$ is the sample analog of \eqref{eq:criterion_fn_max}) that only includes points that are close to a subset of the points in $\idr{\theta}$.
Hence, with positive probability the minimizer of $\crit_n$ cycles between consistent estimators of subsets of $\idr{\theta}$, but does not estimate the entire set.
Enlarging the estimator to include all points that are close to minimizing $\crit_n$ up to a tolerance that converges to zero sufficiently slowly removes this problem.\medskip

\cite{che:hon:tam07} significantly generalize the consistency results in \cite{man:tam02}.
They work with a normalized criterion function equal to $\crit_n(\vartheta)-\inf_{\tilde\vartheta\in\Theta}\crit_n(\tilde\vartheta)$, but to keep notation light I simply refer to it as $\crit_n$.\footnote{Using this normalized criterion function is especially important in light of possible model misspecification, see Section \ref{sec:misspec}.}
Under suitable regularity conditions, they establish consistency of an estimator that can be a smaller set than the one proposed by \cite{man:tam02}, and derive its convergence rate.
Some of the key conditions required by \cite[Conditions C1 and C2]{che:hon:tam07} to study convergence rates include that $\crit_n$ is lower semicontinuous in $\vartheta$, satisfies various convergence properties among which $\sup_{\vartheta\in\idr{\theta}}\crit_n=O_p(1/a_n)$ for a sequence of normalizing constants $a_n\to\infty$, that $\tau_n\ge \sup_{\vartheta\in\idr{\theta}}\crit_n(\vartheta)$ with probability approaching one, and that $\tau_n\to 0$.
They also require that there exist positive constants $(\delta,\kappa,\gamma)$ such that for any $\epsilon\in(0,1)$ there are $(d_\epsilon,n_\epsilon)$ such that
\begin{align*}
\forall n\ge n_\epsilon, \, \crit_n(\vartheta)\ge\kappa[\min\{\delta,\dist(\vartheta,\idr{\theta})\}]^\gamma
\end{align*} 
uniformly on $\{\vartheta\in\Theta:\dist(\vartheta,\idr{\theta})\ge(d_\epsilon/a_n)^{1/\gamma}\}$ with probability at least $1-\epsilon$.
%
In words, the assumption, referred to as \emph{polynomial minorant} condition, rules out that $\crit_n$ can be arbitrarily close to zero outside $\idr{\theta}$.
It posits that $\crit_n$ changes as at least a polynomial of degree $\gamma$ in the distance of $\vartheta$ from $\idr{\theta}$. 
Under some additional regularity conditions, \cite{che:hon:tam07} establish that 
\begin{align}
\dist_H(\idrn{\theta},\idr{\theta})=O_p(\max\{1/a_n,\tau_n\})^{1/\gamma}.\label{eq:CHT_rate}
\end{align}

What is the role played by the polynomial minorant condition for the result in \eqref{eq:CHT_rate}?
Under the maintained assumptions $\tau_n\ge \sup_{\vartheta\in\idr{\theta}}\crit_n(\vartheta)\ge\kappa[\min\{\delta,\dist(\vartheta,\idr{\theta})\}]^\gamma$, and the latter part of the inequality is used to obtain \eqref{eq:CHT_rate}.
When could the polynomial minorant condition be violated?
In moment (in)equalities models, \citeauthor{che:hon:tam07} require $\gamma=2$.\footnote{\cite[equation (4.1) and equation (4.6)]{che:hon:tam07} set $\gamma=1$ because they report the assumption for a criterion function that does not square the moment violations.}
Consider a simple stylized example with (in)equalities of the form
\begin{align*}
-\theta_1 &\le \E_\sP(\ew_1),\\
-\theta_2 &\le \E_\sP(\ew_2),\\
\theta_1\theta_2 &= \E_\sP(\ew_3),
\end{align*}
with $\E_\sP(\ew_1)=\E_\sP(\ew_2)=\E_\sP(\ew_3)=0$, and note that the sample means $(\bar{\ew}_1,\bar{\ew}_2,\bar{\ew}_3)$ are $\sqrt{n}$-consistent estimators of $(\E_\sP(\ew_1),\E_\sP(\ew_2),\E_\sP(\ew_3))$.
Suppose $(\ew_1,\ew_2,\ew_3)$ are distributed multivariate standard normal.
Consider a sequence $\vartheta_n=[\vartheta_{1n}~\vartheta_{2n}]^\top=[n^{-1/4}~n^{-1/4}]^\top$.
Then $[\dist(\vartheta_n,\idr{\theta})]^\gamma=O_p(n^{-1/2})$.
On the other hand, with positive probability $\crit_n(\vartheta_n)=(\bar{\ew}_3-\vartheta_{1n}\vartheta_{2n})^2=O_p\left(n^{-1}\right)$, so that for $n$ large enough $\crit_n(\vartheta_n)<[\dist(\vartheta_n,\idr{\theta})]^\gamma$, violating the assumption.
This occurs because the gradient of the moment equality vanishes as $\vartheta$ approaches zero, rendering the criterion function flat in a neighborhood of $\idr{\theta}$.
As intuition would suggest, rates of convergence are slower the flatter $\crit_n$ is outside $\idr{\theta}$.

\cite{kai:mol:sto19CQ} show that in moment inequality models with smooth moment conditions, the polynomial minorant assumption with $\gamma=2$ implies the Abadie constraint qualification (ACQ); see, e.g., \cite[Chapter 5]{baz:she:she06} for a definition and discussion of ACQ.
The example just given to discuss failures of the polynomial minorant condition is in fact a known example where ACQ fails at $\vartheta=[0~0]^\top$.

\cite[Condition C.3, referred to as \emph{degeneracy}]{che:hon:tam07} also consider the case that $\crit_n$ vanishes on subsets of $\Theta$ that converge in Hausdorff distance to $\idr{\theta}$ at rate $a_n^{-1/\gamma}$.
While degeneracy might be difficult to verify in practice, \citeauthor{che:hon:tam07} show that if it holds, $\tau_n$ can be set to zero.
\cite{yil12} provides conditions on the moment functions, which are closely related to constraint qualifications \citep[as discussed in][]{kai:mol:sto19CQ} under which it is possible to set $\tau_n=0$.

\cite{men14} studies estimation of $\idr{\theta}$ when the number of moment inequalities is large relative to sample size (possibly infinite).
He provides a consistency result for criterion-based estimators that use a number of unconditional moment inequalities that grows with sample size.
He also considers estimators based on conditional moment inequalities, and derives the fastest possible rate for estimating $\idr{\theta}$ under smoothness conditions on the conditional moment functions. 
He shows that the rates achieved by the procedures in \cite{arm14b,arm15} are (minimax) optimal, and cannot be improved upon.

\begin{BI}
\cite{man:tam02} extend the notion of extremum estimation from point identified to partially identified models.
They do so by putting forward a generalized criterion function whose zero-level set can be used to define $\idr{\theta}$ in partially identified structural semiparametric models.
It is then natural to define the set valued estimator $\idrn{\theta}$ as the collection of approximate minimizers of the sample analog of this criterion function.
\citeauthor{man:tam02}'s analysis of statistical inference focuses exclusively on providing consistent estimators.
\cite{che:hon:tam07} substantially generalize the analysis of consistency of criterion function-based set estimators.
They provide a comprehensive study of convergence rates in partially identified models.
Their work highlights the challenges a researcher faces in this context, and puts forward possible solutions in the form of assumptions under which specific rates of convergence attain.
\end{BI}

\subsubsection{Support Function Based Estimators}
\cite{ber:mol08} introduce to the econometrics literature inference methods for set valued estimators based on random set theory.
They study the class of models where $\idr{\theta}$ is convex and can be written as the Aumann (or selection) expectation of a properly defined random closed set.\footnote{By Theorem \ref{thr:exp-supp}, the Aumann expectation of a random closed set defined on a nonatomic probability space is convex. In this chapter I am assuming nonatomicity of the probability space. Even if I did not make this assumption, however, when working with a random sample the relevant probability space is the product space with $n\to\infty$, hence nonatomic \citep{art:vit75}. If $\idr{\theta}$ is not convex, \citeauthor{ber:mol08}'s analysis applies to its convex hull.}
They propose to carry out estimation and inference leveraging the representation of convex sets through their \emph{support function} (given in Definition \ref{def:sup-fun}), as it is done in random set theory; see \cite[Chapter 3]{mo1} and \cite[Chapter 4]{mol:mol18}.
Because the support function fully characterizes the boundary of $\idr{\theta}$, it allows for a simple sample analog estimator, and for inference procedures with desirable properties.

An example of a framework where the approach of \citeauthor{ber:mol08} can be applied is that of best linear prediction with interval outcome data in Identification Problem \ref{IP:param_pred_interval}.\footnote{\cite[Supplementary Appendix F]{kai:mol:sto19} establish that if $\ex$ has finite support, $\idr{\theta}$ in Theorem SIR-\ref{SIR:BLP_intervalY} can be written as the collection of $\vartheta\in\Theta$ that satisfy a finite number of moment inequalities, as posited in this section.}
Recall that in that case, the researcher observes random variables $(\yL,\yU,\ex)$ and wishes to learn the best linear predictor of $\ey|\ex$, with $\ey$ unobserved and $\sR(\yL\le\ey\le\yU)=1$. 
For simplicity let $\ex$ be a scalar.
Given a random sample $\{\yLi,\yUi,\ex_i\}_{i=1}^n$ from $\sP$, the researcher can construct a random segment $\eG_i$ for each $i$ and a consistent estimator $\hat{\Sigma}_n$ of the random matrix $\Sigma_\sP$ in \eqref{eq:G_and_Sigma} as
\begin{align*}
  \eG_i=\left\{
    \begin{pmatrix}
      \ey_i\\ \ey_i\ex_i 
    \end{pmatrix}
    :\; \ey_i \in \Sel(\eY_i)\right\}\subset\R^2,
   ~~\text{and}~~  \hat\Sigma_n=
  \begin{pmatrix}
    1 & \overline\ex\\ \overline\ex & \overline{\ex^2}
  \end{pmatrix},
\end{align*}
where $\eY_i=[\yLi,\yUi]$ and $\overline\ex,\overline{\ex^2}$ are the sample means of $\ex_i$ and $\ex^2_i$ respectively.
Because in this problem $\idr{\theta}=\Sigma_\sP^{-1}\E_\sP\eG$ (see Theorem SIR-\ref{SIR:BLP_intervalY} on p.~\pageref{SIR:BLP_intervalY}), a natural sample analog estimator replaces $\Sigma_\sP$ with $\hat{\Sigma}_n$, and $\E_\sP\eG$ with a Minkowski average of $\eG_i$ (see Appendix \ref{app:RCS}, p.~\pageref{def:mink:sum} for a formal definition), yielding
\begin{align}
\idrn{\theta}=\hat\Sigma_n^{-1}\frac{1}{n}\sum_{i=1}^n\eG_i.\label{eq:BLP_estimator}
\end{align}
The support function of $\idrn{\theta}$ is the sample analog of that of $\idr{\theta}$ provided in \eqref{eq:supfun:BLP}:
\begin{align*}
h_{\idrn{\theta}}(u)=\frac{1}{n}\sum_{i=1}^n[(\yLi\one(f(\ex_i,u)<0)+\yUi\one(f(\ex_i,u)\ge 0))f(\ex_i,u)],~~u\in\mathbb{S},
\end{align*}
where $f(\ex_i,u)=[1~\ex_i]\hat\Sigma_n^{-1}u$.
\cite{ber:mol08} use the Law of Large Numbers for random sets reported in Theorem \ref{thr:SLLN-basic} to show that $\idrn{\theta}$ in \eqref{eq:BLP_estimator} is $\sqrt{n}$-consistent under standard conditions on the moments of $(\yLi,\yUi,\ex_i)$.

\cite{bon:mag:mau12} and \cite{cha:che:mol:sch18} significantly expand the applicability of \possessivecite{ber:mol08} estimator.
\citeauthor{bon:mag:mau12} show that it can be used in a large class of partially identified linear models, including ones that allow for the availability of instrumental variables.
\citeauthor{cha:che:mol:sch18} show that it can be used for best linear approximation of any function $f(x)$ that is known to lie within two identified bounding functions. 
The lower and upper functions defining the band are allowed to be any functions, including ones carrying an index, and can be estimated parametrically or nonparametrically. 
The method allows for estimation of the parameters of the best linear approximations to the set identified functions in many of the identification problems described in Section \ref{sec:prob:distr}.
It can also be used to estimate the sharp identification region for the parameters of a binary choice model with interval or discrete regressors under the assumptions of \cite{mag:mau08}, characterized in \eqref{eq:SIR:mag:mau} in Section \ref{subsubsec:man:tam02}. 

\cite{kai:san14} develop a theory of efficiency for estimators of sets $\idr{\theta}$ as in \eqref{eq:sharp_id_for_inference} under the additional requirements that the inequalities $\E_\sP(m_j(\ew,\vartheta))$ are convex in $\vartheta\in\Theta$ and smooth as functionals of the distribution of the data.
Because of the convexity of the moment inequalities, $\idr{\theta}$ is convex and can be represented through its support function.  
Using the classic results in \cite{bic:kla:rit:wel93}, \citeauthor{kai:san14} show that under suitable regularity conditions, the support function admits for $\sqrt{n}$-consistent regular estimation. 
They also show that a simple plug-in estimator based on the support function attains the semiparametric efficiency bound, and the corresponding estimator of $\idr{\theta}$ minimizes a wide class of asymptotic loss functions based on the Hausdorff distance. 
As they establish, this efficiency result applies to the estimators proposed by \cite{ber:mol08}, including that in \eqref{eq:BLP_estimator}, and by \cite{bon:mag:mau12}.

\cite{kai16} further enlarges the applicability of the support function approach by establishing its duality with the criterion function approach, for the case that $\crit_\sP$ is a convex function and $\crit_n$ is a convex function almost surely.
This allows one to use the support function approach also when a representation of $\idr{\theta}$ as the Aumann expectation of a random closed set is not readily available.
\citeauthor{kai16} considers $\idr{\theta}$ and its level set estimator $\idrn{\theta}$ as defined, respectively, in \eqref{eq:define:idr} and \eqref{eq:define:idrn}, with $\Theta$ a convex subset of $\R^d$.
Because $\crit_\sP$ and $\crit_n$ are convex functions, $\idr{\theta}$ and $\idrn{\theta}$ are convex sets.
Under the same assumptions as in \cite{che:hon:tam07}, including the polynomial minorant and the degeneracy conditions, one can set $\tau_n=0$ and have $\dist_H(\idrn{\theta},\idr{\theta})=O_p(a_n^{-1/\gamma})$.
Moreover, due to its convexity, $\idr{\theta}$ is fully characterized by its support function, which in turn can be consistently estimated (at the same rate as $\idr{\theta}$) using sample analogs as $h_{\idrn{\theta}}(u)=\max_{a_n\crit_n(\vartheta)\le 0}u^\top\vartheta$.
The latter can be computed via convex programming.\medskip

\cite{kit:gia18} consider consistent estimation of $\idr{\theta}$ in the context of Bayesian inference.
They focus on partially identified models where $\idr{\theta}$ depends on a ``reduced form" parameter $\phi$ (e.g., a vector of moments of observable random variables).
They recognize that while a prior on $\phi$ can be revised in light of the data, a prior on $\theta$ cannot, due to the lack of point identification.
As such they propose to choose a single prior for the revisable parameters, and a set of priors for the unrevisable ones.
The latter is the collection of priors such that the distribution of $\theta|\phi$ places probability one on $\idr{\theta}$. 
A crucial observation in \citeauthor{kit:gia18} is that once $\phi$ is viewed as a random vector, as in the Bayesian paradigm, under mild regularity conditions $\idr{\theta}$ is a random closed set, and Bayesian inference on it can be carried out using elements of random set theory.
In particular, they show that the set of posterior means of $\theta|\ew$ equals the Aumann expectation of $\idr{\theta}$ (with the underlying probability measure of $\phi|\ew$).
They also show that this Aumann expectation converges in Hausdorff distance to the ``true" identified set if the latter is convex, or otherwise to its convex hull. 
They apply their method to analyze impulse-response in set-identified Structural Vector Autoregressions, where standard Bayesian inference is otherwise sensitive to the choice of an unrevisable prior.\footnote{There is a large literature in macro-econometrics, pioneered by \cite{fau98}, \cite{can:den02}, and \cite{uhl05}, concerned with Bayesian inference with a non-informative prior for non-identified parameters. I refer to \cite[Chapter 13]{kil:lut17} for a thorough review. 
Frequentist inference for impulse response functions in Structural Vector Autoregression models is carried out, e.g., in \cite{gra:moo:sch18} and \cite{gaf:mei:mon18}.
}

\begin{BI}
\cite{ber:mol08} show that elements of random set theory can be employed to obtain inference methods for partially identified models that are easy to implement and have desirable statistical properties.
Whereas they apply their findings to a specific class of models based on the Aumann expectation, the ensuing literature demonstrates that random set methods are widely applicable to obtain estimators of sharp identification regions and establish their consistency.
\end{BI}

\cite{che:lee:ros13} propose an alternative to the notion of consistent estimator.
Rather than asking that $\idrn{\theta}$ satisfies the requirement in Definition \ref{def:consistent_estimator}, they propose the notion of \emph{half-median-unbiased} estimator.
This notion is easiest to explain in the case of interval identified scalar parameters.
Take, e.g., the bound in Theorem SIR-\ref{SIR:prob:E:md} for the conditional expectation of selectively observed data.
Then an estimator of that interval is half-median-unbiased if the estimated upper bound exceeds the true upper bound, and the estimated lower bound falls below the true lower bound, each with probability at least $1/2$ asymptotically.
More generally, one can obtain a half-median-unbiased estimator as
\begin{align}
\idrn{\theta}=\left\{\vartheta\in\Theta:a_n\crit_n(\vartheta)\le c_{1/2}(\vartheta)\right\},\label{eq:idrn:half:med:unb}
\end{align}
where $c_{1/2}(\vartheta)$ is a critical value chosen so that $\idrn{\theta}$ asymptotically contains $\idr{\theta}$ (or any fixed element in $\idr{\theta}$; see the discussion in Section \ref{subsub:set:or:point:inference} below) with at least probability $1/2$.
As discussed in the next section, $c_{1/2}(\vartheta)$ can be further chosen so that this probability is uniform over $\sP\in\cP$.

The requirement of half-median unbiasedness has the virtue that, by construction, an estimator such as \eqref{eq:idrn:half:med:unb} is a subset of a $1-\alpha$ confidence set as defined in \eqref{eq:CS} below for any $\alpha<1/2$, provided $c_{1-\alpha}(\vartheta)$ is chosen using the same criterion for all $\alpha\in(0,1)$.
In contrast, a consistent estimator satisfying the requirement in Definition \ref{def:consistent_estimator} needs not be a subset of a confidence set.
This is because the sequence $\tau_n$ in \eqref{eq:define:idrn} may be larger than the critical value used to obtain the confidence set, see equation \eqref{eq:CS} below, unless regularity conditions such as degeneracy or others allow one to set $\tau_n$ equal to zero.
Moreover, choice of the sequence $\tau_n$ is not data driven, and hence can be viewed as arbitrary.
This raises a concern for the scope of consistent estimation in general settings.

However, reporting a set estimator together with a confidence set is arguably important to shed light on how much of the volume of the confidence set is due to statistical uncertainty and how much is due to a large identified set.
One can do so by either using a half-median unbiased estimator as in \eqref{eq:idrn:half:med:unb}, or the set of minimizers of the criterion function in \eqref{eq:define:idrn} with $\tau_n=0$ (which, as previously discussed, satisfies the inner consistency requirement in \eqref{eq:inner_consistent} under weak conditions, and is Hausdorff consistent in some well behaved cases).

\subsection{Confidence Sets Satisfying Various Coverage Notions}
\label{subsec:CS}
\subsubsection{Coverage of $\idr{\theta}$ vs. Coverage of $\theta$}
\label{subsub:set:or:point:inference}
I first discuss confidence sets $\CS\subset\R^d$ defined as level sets of a criterion function.
To simplify notation, henceforth I assume $a_n=n$.
\begin{align}
\CS=\left\{\vartheta\in\Theta:n\crit_n(\vartheta)\le c_{1-\alpha}(\vartheta)\right\}.\label{eq:CS}
\end{align}
In \eqref{eq:CS}, $c_{1-\alpha}(\vartheta)$ may be constant or vary in $\vartheta\in\Theta$.
It is chosen to that $\CS$ satisfies (asymptotically) a certain coverage property with respect to either $\idr{\theta}$ or each $\vartheta\in\idr{\theta}$.
Correspondingly, different appearances of $c_{1-\alpha}(\vartheta)$ may refer to different critical values associated with different coverage notions.
The challenging theoretical aspect of inference in partial identification is the determination of $c_{1-\alpha}$ and of methods to approximate it.\medskip

A first classification of coverage notions pertains to whether the confidence set should cover $\idr{\theta}$ or each of its elements with a prespecified asymptotic probability.
Early on, within the study of interval-identified parameters, \cite{hor:man98,hor:man00} put forward a confidence interval that expands each of the sample analogs of the extreme points of the population bounds by an amount designed so that the confidence interval asymptotically covers the population bounds with prespecified probability.

\cite{che:hon:tam07} study the general problem of inference for a set $\idr{\theta}$ defined as the zero-level set of a criterion function.
The coverage notion that they propose is \emph{pointwise coverage of the set}, whereby $c_{1-\alpha}$ is chosen so that:
\begin{align}
\liminf_{n\to\infty}\sP(\idr{\theta}\subseteq\CS)\ge 1-\alpha~\text{for all}~\sP\in\cP.\label{eq:CS_coverage:set:pw}
\end{align}
\cite{che:hon:tam07} provide conditions under which $\CS$ satisfies \eqref{eq:CS_coverage:set:pw} with $c_{1-\alpha}$ constant in $\vartheta$, yielding the so called \emph{criterion function approach} to statistical inference in partial identification.
Under the same coverage requirement, \cite{bug10} and \cite{gal:hen13} introduce novel bootstrap methods for inference in moment inequality models.
\cite{hen:mea:que15} propose an inference method for finite games of complete information that exploits the structure of these models.

\cite{ber:mol08} propose a method to test hypotheses and build confidence sets satisfying \eqref{eq:CS_coverage:set:pw} based on random set theory, the so called \emph{support function approach}, which yields simple to compute confidence sets with asymptotic coverage equal to $1-\alpha$ when $\idr{\theta}$ is strictly convex.
The reason for the strict convexity requirement is that in its absence, the support function of $\idr{\theta}$ is not fully differentiable, but only directionally differentiable, complicating inference.
Indeed, \cite{fan:san18} show that standard bootstrap methods are consistent if and only if full differentiability holds, and they provide modified bootstrap methods that remain valid when only directional differentiability holds.
\cite{cha:che:mol:sch18} propose a data jittering method that enforces full differentiability at the price of a small conservative distortion.
\cite{kai:san14} extend the applicability of the support function approach to other moment inequality models and establish efficiency results.
\cite{che:koc:men15} show that an Hausdorff distance-based test statistic can be weighted to enforce either exact or first-order equivariance to transformations of parameters.
\cite{adu:ots16} provide empirical likelihood based inference methods for the support function approach.
The test statistics employed in the criterion function approach and in the support function approach are asymptotically equivalent in specific moment inequality models \citep{ber:mol08,kai16}, but the criterion function approach is more broadly applicable.
\medskip

The field's interest changed to a different notion of coverage when \cite{imb:man04} pointed out that often there is one ``true" data generating $\theta$, even if it is only partially identified.
Hence, they proposed confidence sets that cover each $\vartheta\in\idr{\theta}$ with a prespecified probability.
For pointwise coverage, this leads to choosing $c_{1-\alpha}$ so that:
\begin{align}
\liminf_{n\to\infty}\sP(\vartheta\in\CS)\ge 1-\alpha~\text{for all}~\sP\in\cP~\text{and}~\vartheta\in\idr{\theta}.\label{eq:CS_coverage:point:pw}
\end{align}
If $\idr{\theta}$ is a singleton then \eqref{eq:CS_coverage:set:pw} and \eqref{eq:CS_coverage:point:pw} both coincide with the pointwise coverage requirement employed for point identified parameters.
However, as shown in \cite[Lemma 1]{imb:man04}, if $\idr{\theta}$ contains more than one element, the two notions differ, with confidence sets satisfying \eqref{eq:CS_coverage:point:pw} being weakly smaller than ones satisfying \eqref{eq:CS_coverage:set:pw}.
\cite{ros08} provides confidence sets for general moment (in)equalities models that satisfy \eqref{eq:CS_coverage:point:pw} and are easy to compute.

Although confidence sets that take each $\vartheta\in\idr{\theta}$ as the object of interest (and which satisfy the \emph{uniform coverage} requirements described in Section \ref{subsub:uniform:inference} below) have received the most attention in the literature on inference in partially identified models, this choice merits some words of caution.
First, \cite{hen:ona12} point out that if confidence sets are to be used for decision making, a policymaker concerned with robust decisions might prefer ones satisfying \eqref{eq:CS_coverage:set:pw} (respectively, \eqref{eq:CS_coverage:set} below once uniformity is taken into account) to ones satisfying \eqref{eq:CS_coverage:point:pw} (respectively, \eqref{eq:CS_coverage:point} below with uniformity).
Second, while in many applications a ``true" data generating $\theta$ exists, in others it does not.
For example, \cite{man:mol10} and \cite{giu:man:mol19} query survey respondents (in the American Life Panel and in the Health and Retirement Study, respectively) about their subjective beliefs on the probability chance of future events.
A large fraction of these respondents, when given the possibility to do so, report imprecise beliefs in the form of intervals. 
In this case, there is no ``true" point-valued belief: the ``truth" is interval-valued.
If one is interested in (say) average beliefs, the sharp identification region is the (Aumann) expectation of the reported intervals, and the appropriate coverage requirement for a confidence set is that in \eqref{eq:CS_coverage:set:pw} (respectively, \eqref{eq:CS_coverage:set} below with uniformity).

\subsubsection{Pointwise vs. Uniform Coverage}
\label{subsub:uniform:inference}
In the context of interval identified parameters, such as, e.g., the mean with missing data in Theorem SIR-\ref{SIR:prob:E:md} with $\theta\in\R$, \cite{imb:man04} pointed out that extra care should be taken in the construction of confidence sets for partially identified parameters, as otherwise they may be asymptotically valid only pointwise (in the distribution of the observed data) over relevant classes of distributions.\footnote{This discussion draws on many conversations with J\"{o}rg Stoye, as well as on notes that he shared with me, for which I thank him.} 
%
For example, consider a confidence interval that expands each of the sample analogs of the extreme points of the population bounds by a one-sided critical value.
This confidence interval controls the asymptotic coverage probability pointwise for any DGP at which the width of the population bounds is positive.
This is because the sampling variation becomes asymptotically negligible relative to the (fixed) width of the bounds, making the inference problem essentially one-sided. 
However, for every $n$ one can find a distribution $\sP\in\cP$ and a parameter $\vartheta\in\idr{\theta}$ such that the width of the population bounds (under $\sP$) is small relative to $n$ and the coverage probability for $\vartheta$ is below $1-\alpha$.
This happens because the proposed confidence interval does not take into account the fact that for some $\sP\in\cP$ the problem has a two-sided nature. 

This observation naturally leads to a more stringent requirement of \emph{uniform coverage}, whereby \eqref{eq:CS_coverage:set:pw}-\eqref{eq:CS_coverage:point:pw} are replaced, respectively, by
\begin{align}
\liminf_{n\to\infty}\inf_{\sP\in\cP}\sP(\idr{\theta}\subseteq\CS)&\ge 1-\alpha,\label{eq:CS_coverage:set}\\
\liminf_{n\to\infty}\inf_{\sP\in\cP}\inf_{\vartheta\in\idr{\theta}}\sP(\vartheta\in\CS)&\ge 1-\alpha,\label{eq:CS_coverage:point}
\end{align}
and $c_{1-\alpha}$ is chosen accordingly, to obtain either \eqref{eq:CS_coverage:set} or \eqref{eq:CS_coverage:point}.
Sets satisfying \eqref{eq:CS_coverage:set} are referred to as confidence regions for $\idr{\theta}$ that are uniformly consistent in level (over $\sP\in\cP$).
\cite{rom:sha10} propose such confidence regions, study their properties, and provide a step-down procedure to obtain them.

\cite{che:chr:tam18} propose confidence sets that are contour sets of criterion functions using cutoffs that are computed via Monte Carlo simulations from the quasi‐posterior distribution of the criterion and satisfy the coverage requirement in \eqref{eq:CS_coverage:set}.
They recommend the use of a Sequential Monte Carlo algorithm that works well also when the quasi-posterior is irregular and multi-modal.
They establish exact asymptotic coverage, non-trivial local power, and validity of their procedure in point identified and partially identified regular models, and validity in irregular models (e.g., in models where the reduced form parameters are on the boundary of the parameter space).
They also establish efficiency of their procedure in regular models that happen to be point identified.

Sets satisfying \eqref{eq:CS_coverage:point} are referred to as confidence regions for points in $\idr{\theta}$ that are uniformly consistent in level (over $\sP\in\cP$).
Within the framework of \cite{imb:man04}, \cite{sto09} shows that one can obtain a confidence interval satisfying \eqref{eq:CS_coverage:point} by pre-testing whether the lower and upper population bounds are sufficiently close to each other.
If so, the confidence interval expands each of the sample analogs of the extreme points of the population bounds by a two-sided critical value; otherwise, by a one-sided.
\citeauthor{sto09} provides important insights clarifying the connection between superefficient (i.e., faster than $O_p(1/\sqrt{n})$) estimation of the width of the population bounds when it equals zero, and certain challenges in \citeauthor{imb:man04}'s proposed method.\footnote{Indeed, the confidence interval proposed by \cite{sto09} can be thought of as using a Hodges-type shrinkage estimator \citep[see, e.g.,][]{van97} for the width of the population bounds.}
\cite{bon:mag:mau12} leverage \cite{sto09}'s results to obtain confidence sets satisfying \eqref{eq:CS_coverage:point} using the support function approach for set identified linear models.

Obtaining confidence sets that satisfy the requirement in \eqref{eq:CS_coverage:point} becomes substantially more complex in the context of general moment (in)equalities models.
One of the key challenges to uniform inference stems from the fact that the behavior of the limit distribution of the test statistic depends on $\sqrt{n}\E_\sP(m_j(\ew_i;\vartheta)),~j=1,\dots,|\cJ|$, which cannot be consistently estimated.
\cite{rom:sha08,and:gug09b,and:soa10,can10,and:bar12,rom:sha:wol14}, among others, make significant contributions to circumvent these difficulties in the context of a finite number of unconditional moment (in)equalities.
\cite{and:shi13,che:lee:ros13,lee:son:wha13,arm14b,arm15,arm:cha16,che18}, among others, make significant contributions to circumvent these difficulties in the context of a finite number of conditional moment (in)equalities (with continuously distributed conditioning variables).
\cite{che:che:kat18} and \cite{and:shi17} study, respectively, the challenging frameworks where the number of moment inequalities grows with sample size and where there is a continuum of conditional moment inequalities.

I refer to \cite[Section 4]{can:sha17} for a thorough discussion of these methods and a comparison of their relative (de)merits \citep[see also][]{bug:can:gug12,bug16}.

\subsubsection{Coverage of the Vector $\theta$ vs. Coverage of a Component of $\theta$}
\label{subsubsec:proj:inference}
The coverage requirements in \eqref{eq:CS_coverage:set}-\eqref{eq:CS_coverage:point} refer to confidence sets in $\R^d$ for the entire $\theta$ or $\idr{\theta}$.
Often empirical researchers are interested in inference on a specific component or (smooth) function of $\theta$ (e.g., the returns to education; the effect of market size on the probability of entry; the elasticity of demand for insurance to price, etc.).
For simplicity, here I focus on the case of a component of $\theta$, which I represent as $u^\top\theta$, with $u$ a standard basis vector in $\R^d$.
In this case, the (sharp) identification region of interest is
\begin{align*}
\idr{u^\top\theta}=\{s\in[-h_\Theta(-u),h_\Theta(u)]:s=u^\top\vartheta~\text{and}~\vartheta\in\idr{\theta}\}.
\end{align*}
One could report as confidence interval for $u^\top\theta$ the projection of $\CS$ in direction $\pm u$. 
The resulting confidence interval is asymptotically valid but typically conservative. 
The extent of the conservatism increases with the dimension of $\theta$ and is easily appreciated in the case of a point identified parameter.
Consider, for example, a linear regression in $\R^{10}$, and suppose for simplicity that the limiting covariance matrix of the estimator is the identity matrix. 
Then a 95\% confidence interval for $u^\top\theta$ is obtained by adding and subtracting $1.96$ to that component's estimate. 
In contrast, projection of a 95\% confidence ellipsoid for $\theta$ on each component amounts to adding and subtracting $4.28$ to that component's estimate. 

It is therefore desirable to provide confidence intervals $\CI$ specifically designed to cover $u^\top\theta$ rather then the entire $\theta$.
Natural counterparts to \eqref{eq:CS_coverage:set}-\eqref{eq:CS_coverage:point} are
\begin{align}
	\liminf_{n\to\infty}\inf_{\sP\in\cP}\sP(\idr{u^\top\theta} \subseteq \CI)&\ge 1-\alpha,\label{eq:CS_coverage:set:proj}\\
	\liminf_{n\to\infty}\inf_{\sP\in\cP}\inf_{\vartheta\in\idr{\theta}}\sP(u^\top\vartheta\in \CI)&\ge 1-\alpha. \label{eq:CS_coverage:point:proj}
\end{align}
As shown in \cite{ber:mol08} and \cite{kai16} for the case of pointwise coverage, obtaining asymptotically valid confidence intervals is simple if the identified set is convex and one uses the support function approach.
This is because it suffices to base the test statistic on the support function in direction $u$, and it is often possible to easily characterize the limiting distribution of this test statistic.
See \cite[Chapters 4 and 5]{mol:mol18} for details.

The task is significantly more complex in general moment inequality models when $\idr{\theta}$ is non-convex and one wants to satisfy the criterion in \eqref{eq:CS_coverage:set:proj} or that in \eqref{eq:CS_coverage:point:proj}.
\cite{rom:sha08} and \cite{bug:can:shi17} propose confidence intervals of the form
\begin{align}
\CI = \left\{s\in[-h_\Theta(-u),h_\Theta(u)]:\inf_{\vartheta\in\Theta(s)}n\crit_n(\vartheta)\le c_{1-\alpha}(s)\right\},\label{eq:CI:BCS}
\end{align} 
where $\Theta(s)=\{\vartheta\in\Theta:u^\top\vartheta=s\}$ and $c_{1-\alpha}$ is such that \eqref{eq:CS_coverage:point:proj} holds.
An important idea in this proposal is that of \emph{profiling} the test statistic $n\crit_n(\vartheta)$ by minimizing it over all $\vartheta$s such that $u^\top\vartheta=s$.
One then includes in the confidence interval all values $s$ for which the profiled test statistic's value is not too large.
\cite{rom:sha08} propose the use of subsampling to obtain the critical value $c_{1-\alpha}(s)$ and provide high-level conditions ensuring that \eqref{eq:CS_coverage:point:proj} holds.
\cite{bug:can:shi17} substantially extend and improve the \emph{profiling approach} by providing a bootstrap-based method to obtain $c_{1-\alpha}$ so that \eqref{eq:CS_coverage:point:proj} holds.
Their method is more powerful than subsampling (for reasonable choices of subsample size).
\cite{bel:bug:che18} further enlarge the domain of applicability of the profiling approach by proposing a method based on this approach that is asymptotically uniformly valid when the number of moment conditions is large, and can grow with the sample size, possibly at exponential rates.

\cite{kai:mol:sto19} propose a bootstrap-based \emph{calibrated projection approach} where
\begin{align}
\CI= [-h_{\eC_n(c_{1-\alpha})}(-u),h_{\eC_n(c_{1-\alpha})}(u)],\label{eq:def:CI}
\end{align}
with
\begin{align}
h_{\eC_n(c_{1-\alpha})}(u)\equiv\sup_{\vartheta\in\Theta}~u^\top\vartheta~\text{s.t.}~\frac{\sqrt{n}\bar{m}_{n,j}(\vartheta)}{\hat{\sigma}_{n,j}(\vartheta)}\leq c_{1-\alpha}(\vartheta),~j=1,\dots,|\cJ|\label{eq:KMS:proj}
\end{align}
and $c_{1-\alpha}$ a critical level function calibrated so that \eqref{eq:CS_coverage:point:proj} holds.
Compared to the simple projection of $\CS$ mentioned at the beginning of Section \ref{subsubsec:proj:inference}, calibrated projection (weakly) reduces the value of $c_{1-\alpha}$ so that the projection of $\theta$, rather than $\theta$ itself, is asymptotically covered with the desired probability uniformly.

\cite{che:chr:tam18} provide methods to build confidence intervals and confidence sets on projections of $\idr{\theta}$ as contour sets of criterion functions using cutoffs that are computed via Monte Carlo simulations from the quasi‐posterior distribution of the criterion, and that satisfy the coverage requirement in \eqref{eq:CS_coverage:set:proj}.
One of their procedures, designed specifically for scalar projections, delivers a confidence interval as the contour set of a profiled quasi-likelihood ratio with critical value equal to a quantile of the Chi-squared distribution with one degree of freedom.

\subsubsection{A Brief Note on Bayesian Methods}
The confidence sets discussed in this section are based on the frequentist approach to inference.
It is natural to ask whether in partially identified models, as in well behaved point identified models, one can build Bayesian credible sets that at least asymptotically coincide with frequentist confidence sets.
This question was first addressed by \cite{moo:sch12}, with a negative answer for the case that the coverage in \eqref{eq:CS_coverage:point} is sought out.
In particular, they showed that the resulting Bayesian credible sets are a subset of $\idr{\theta}$, and hence too narrow from the frequentist perspective.

This discrepancy can be ameliorated when inference is sought out for $\idr{\theta}$ rather than for each $\vartheta\in\idr{\theta}$.
\cite{nor:tan14}, \cite{kli:tam16}, \cite{kit:gia18}, and \cite{lia:sim19} propose Bayesian credible regions that are valid for frequentist inference in the sense of \eqref{eq:CS_coverage:set:pw}, where the first two build on the criterion function approach and the second two on the support function approach.
All these contributions rely on the model being separable, in the sense that it yields moment inequalities that can be written as the sum of a function of the data only, and a function of the model parameters only (as in, e.g., \eqref{eq:CT_00}-\eqref{eq:CT_01L}).
In these models, the function of the data only (the \emph{reduced form parameter}) is point identified, it is related to the structural parameters $\theta$ through a known mapping, and under standard regularity conditions it can be $\sqrt{n}$-consistently estimated.
The resulting estimator has an asymptotically Normal distribution.
The various approaches place a prior on the reduced form parameter, and standard tools in Bayesian analysis are used to obtain a posterior.
The known mapping from reduced form to structural parameters is then applied to this posterior to obtain a credible set for $\idr{\theta}$.

\section{Misspecification in Partially Identified Models}
\label{sec:misspec}


Although partial identification often results from reducing the number of assumptions maintained in counterpart point identified models, care still needs to be taken in assessing the possible consequences of misspecification.
This section's goal is to discuss the existing literature on the topic, and to provide some additional observations.
To keep the notation light, I refer to the functional of interest as $\theta$ throughout, without explicitly distinguishing whether it belongs to an infinite dimensional parameter space (as in the nonparametric analysis in Section \ref{sec:prob:distr}), or to a finite dimensional one (as in the semiparametric analysis in Section \ref{sec:structural}).

The original nonparametric ``worst-case" bounds proposed by \cite{man89} for the analysis of selectively observed data and discussed in Section \ref{sec:prob:distr} are not subject to the risk of misspecification, because they are based on the empirical evidence alone.
However, often researchers are willing and eager to maintain additional assumptions that can help shrink the bounds, so that one can learn more from the available data.
Indeed, early on \cite{man90} proposed the use of exclusion restrictions in the form of mean independence assumptions.
Section \ref{subsec:programme:eval} discusses related ideas within the context of nonparametric bounds on treatment effects, and \cite[Chapter 2]{man03} provides a thorough treatment of other types of exclusion restriction.
The literature reviewed throughout this chapter provides many more examples of assumptions that have proven useful for empirical research.

Broadly speaking, assumptions can be classified in two types \citep[Chapter 2]{man03}.
The first type is \emph{non-refutable}: it may reduce the size of $\idr{\theta}$, but cannot lead to it being empty.
An example in the context of selectively observed data is that of exogenous selection, or data missing at random conditional on covariates and instruments (see Section \ref{subsec:missing_data}, p.~\pageref{subsec:missing_data}): under this assumption $\idr{\theta}$ is a singleton, but the assumption cannot be refuted because it poses a distributional (independence) assumption on unobservables.

The second type is \emph{refutable}: it may reduce the size of $\idr{\theta}$, and it may result in $\idr{\theta}=\emptyset$ if it does not hold in the DGP.
An example in the context of treatment effects is the assumption of mean independence between response function at treatment $t$ and instrumental variable $\ez$, see \eqref{eq:ass:MI} in Section \ref{subsec:programme:eval}.
There the sharp bounds on $\E_\sQ(\ey(t)|\ex=x)$ are intersection bounds as in \eqref{eq:intersection:bounds}.
If the instrument is invalid, the bounds can be empty.

\cite{pon:tam11} consider the impact of misspecification on semiparametric partially identified models.
One of their examples concerns a linear regression model of the form $\E_\sQ(\ey|\ex)=\theta^\top\ex$ when only interval data is available for $\ey$ (as in Section \ref{subsec:interval_data}).
In this context, $\idr{\theta}=\{\vartheta\in\Theta:\E_\sP(\yL|\ex)\le \vartheta^\top\ex \le\E_\sP(\yU|\ex),~\ex\text{-a.s.}\}$.
The concern is that the conditional expectation might not be linear.
\citeauthor{pon:tam11} make two important observations.
First, they argue that the set $\idr{\theta}$ is of difficult interpretation when the model is misspecified.
When $\ey$ is perfectly observed, if the conditional expectation is not linear, the output of ordinary least squares can be readily interpreted as the best linear approximation to $\E_\sQ(\ey|\ex)$.
This is not the case for $\idr{\theta}$ when only the interval data $[\yL,\yU]$ is observed.
They therefore propose to work with the set of best linear predictors for $\ey|\ex$ even in the partially identified case (rather than fully exploit the linearity assumption).
The resulting set is the one derived by \cite{ber:mol08} and reported in Theorem SIR-\ref{SIR:BLP_intervalY}.
\citeauthor{pon:tam11} work with projections of this set, which coincide with the bounds in \cite{sto07}.

\citeauthor{pon:tam11} also point out that depending on the DGP, misspecification can cause $\idr{\theta}$ to be spuriously tight.
This can happen, for example, if $\E_\sP(\yL|\ex)$ and $\E_\sP(\yU|\ex)$ are sufficiently nonlinear, even if they are relatively far from each other \citep[e.g.,][Figure 1]{pon:tam11}.
Hence, caution should be taken when interpreting very tight partial identification results as indicative of a highly informative model and empirical evidence, as the possibility of model misspecification has to be taken into account.
These observations naturally lead to the questions of how to test for model misspecification in the presence of partial identification, and of what are the consequences of misspecification for the confidence sets discussed in Section \ref{subsec:CS}.

With partial identification, a null hypothesis of correct model specification (and its alternative) can be expressed as
\begin{align*}
H_0:\idr{\theta}\neq\emptyset;\quad H_1:\idr{\theta}=\emptyset.
\end{align*}
Tests for this hypothesis have been proposed both for the case of nonparametric as well as semiparametric partially identified models.
I refer to \cite{san12} for specification tests in a partially identified nonparametric instrumental variable model; to \cite{kit:sto18} for a nonparametric test in random utility models that checks whether a repeated cross section of demand data might have been generated by a population of rational consumers (thereby testing for the Axiom of Revealed Stochastic Preference); and to \cite{gug:hah:kim08} and \cite{bon:mag:mau12} for specification tests in linear moment (in)equality models. 

For the general class of moment inequality models discussed in Section \ref{sec:inference}, \cite{rom:sha08}, \cite{and:gug09b}, \cite{gal:hen09}, and \cite{and:soa10} propose a specification test that rejects the model if $\CS$ in \eqref{eq:CS} is empty, where $\CS$ is defined with $c_{1-\alpha}(\vartheta)$ determined so as to satisfy \eqref{eq:CS_coverage:point} and approximated according to the methods proposed in the respective papers.
The resulting test, commonly referred to as \emph{by-product} test because obtained as a by-product to the construction of a confidence set, takes the form
\begin{align*}
\phi=\one(\CS=\emptyset)=\one\left(\inf_{\vartheta\in\Theta}n\crit_n(\vartheta)> c_{1-\alpha}(\vartheta)\right).
\end{align*}
Denoting by $\cP_0$ the collection of $\sP\in\cP$ such that $\idr{\theta}\neq\emptyset$, one has that the by-product test achieves uniform size control \citep[Theorem C.2]{bug:can:shi15}:
\begin{align}
\limsup_{n\to\infty}\sup_{\sP\in\cP_0}\E_\sP(\phi)\le\alpha.\label{eq:misp:test:uniform:size}
\end{align}

An important feature of the by-product test is that the critical value $c_{1-\alpha}(\vartheta)$ is not obtained to test for model misspecification, but it is obtained to insure the coverage requirement in \eqref{eq:CS_coverage:point}; hence, it is obtained by working with the asymptotic distribution of $n\crit_n(\vartheta)$.
\cite{bug:can:shi15} propose more powerful model specification tests, using a critical value $c_{1-\alpha}$ that they obtain to ensure that \eqref{eq:misp:test:uniform:size}, rather than \eqref{eq:CS_coverage:point}, holds.
In particular, they show that their tests dominate the by-product test in terms of power in any finite sample and in the asymptotic limit.
Their critical value is obtained by working with the asymptotic distribution of $\inf_{\vartheta\in\Theta}n\crit_n(\vartheta)$.
As such, their proposal resembles the classic approach to model specification testing ($J$-test) in point identified generalized method of moments models.\medskip

While it is possible to test for misspecification also in partially identified models, a word of caution is due on what might be the effects of misspecification on confidence sets constructed as in \eqref{eq:CS} with $c_{1-\alpha}$ determined to insure \eqref{eq:CS_coverage:point}, as it is often done in empirical work.
\cite{bug:can:gug12} show that in the presence of local misspecification, confidence sets $\CS$ designed to satisfy \eqref{eq:CS_coverage:point} fail to do so.
In practice, the concern is that when the model is misspecified $\CS$ might be spuriously small.
Indeed, we have seen that it can be empty if the misspecification is sufficiently severe.
If it is less severe but still present, it may lead to inference that is erroneously interpreted as precise.

It is natural to wonder how this compares to the effect of misspecification on inference in point identified models.\footnote{The considerations that I report here are based on conversations with Joachim Freyberger and notes that he shared with me, for which I thank him.}
In that case, the rich set of tools available for inference allows one to avoid this problem.
Consider for example a point identified generalized method of moments model with moment conditions $\E_\sP(m_j(\ew;\theta))=0$, $j=1,\dots,|\cJ|$, and $|\cJ|>d$.
Let $m$ denote the vector that stacks each of the $m_j$ functions, and let the estimator of $\theta$ be
\begin{align}
\hat{\theta}_n=\argmin_{\vartheta\in\Theta}n\bar{m}_n(\vartheta)^\top\hat\Xi^{-1}\bar{m}_n(\vartheta),\label{eq:GMM:estimator}
\end{align}
with $\hat\Xi$ a consistent estimator of $\Xi=\E_\sP[m(\ew;\theta) m(\ew;\theta)^\top]$ and $\bar{m}_n(\vartheta)$ the sample analog of $\E_\sP(m(\ew;\vartheta))$.
As shown by \cite{han82} for correctly specified models, the distribution of $\sqrt{n}(\hat{\theta}_n-\theta)$ converges to a Normal with mean vector equal to zero and covariance matrix $\Sigma$.
\cite{hal:ino03} show that when the model is subject to non-local misspecification, $\sqrt{n}(\hat{\theta}_n-\theta_*)$ converges to a Normal with mean vector equal to zero and covariance matrix $\Sigma_*$, where $\theta_*$ is the pseudo-true vector (the probability limit of \eqref{eq:GMM:estimator}) and where $\Sigma_*$ equals $\Sigma$ if the model is correctly specified, and differs from it otherwise.
Let $\hat{\Sigma}_*$ be a consistent estimator of $\Sigma_*$ as in \cite{hal:ino03}.
Define the Wald-statistic based confidence ellipsoid
\begin{align}
\{\vartheta\in\Theta:n(\hat{\theta}_n-\vartheta)^\top\hat{\Sigma}_*^{-1}(\hat{\theta}_n-\vartheta)\le c_{d,1-\alpha}\},\label{eq:CS:Wald:point:id}
\end{align}
with $c_{d,1-\alpha}$ the $1-\alpha$ critical value of a $\chi_d^2$ (chi-squared random variable with $d$ degrees of freedom).
Under standard regularity conditions \citep[see][]{hal:ino03} the confidence set in \eqref{eq:CS:Wald:point:id} covers with asymptotic probability $1-\alpha$ the true vector $\theta$ if the model is correctly specified, and the pseudo-true vector $\theta_*$ if the model is incorrectly specified.
In either case, \eqref{eq:CS:Wald:point:id} is never empty and its volume depends on $\hat{\Sigma}_*$.\footnote{The effect of misspecification for maximum likelihood, least squares, and GMM estimators in ``point identified" models (by which I mean models where the population criterion function has a unique optimizer) has been studied in the literature; see, e.g., \cite{whi82}, \cite{gal:whi88}, \cite{hal:ino03}, \cite{han:lee19}, and references therein. These estimators have been shown to converge in probability to pseudo-true values, and it has been established that tests of hypotheses and confidence sets based on these estimators have correct asymptotic level with respect to the pseudo-true parameters, provided standard errors are computed appropriately. In the specific case of GMM discussed here, the pseudo-true value $\theta_*$ depends on the choice of weighting matrix in \eqref{eq:GMM:estimator}: I have used $\hat\Xi$, but other choices are possible. I do not discuss this aspect of the problem here, but refer to \cite{hal:ino03}.}

Even in the point identified case a confidence set constructed similarly to \eqref{eq:CS}, i.e.,
\begin{align}
\{\vartheta\in\Theta:n\bar{m}_n(\vartheta)\hat\Xi^{-1}\bar{m}_n(\vartheta)\le c_{|\cJ|,1-\alpha}\},\label{eq:CS:AR:point:id}
\end{align}
where $c_{|\cJ|,1-\alpha}$ is the $1-\alpha$ critical value of a $\chi^2_{|\cJ|}$,
incurs the same problems as its partial identification counterpart.
Under standard regularity conditions, if the model is correctly specified, the confidence set in \eqref{eq:CS:AR:point:id} covers $\theta$ with asymptotic probability $1-\alpha$, because $n\bar{m}_n(\vartheta)\hat\Xi^{-1}\bar{m}_n(\vartheta)\Rightarrow\chi^2_{|\cJ|}$.
However, this confidence set is empty with asymptotic probability $\P(\chi^2_{|\cJ|-d}>c_{|\cJ|,1-\alpha})$, due to the facts that $\P(\CS=\emptyset)=\P(\hat{\theta}_n\notin\CS)$ and that $n\bar{m}_n(\hat{\theta}_n)\hat\Xi^{-1}\bar{m}_n(\hat{\theta}_n)\Rightarrow\chi^2_{|\cJ|-d}$.
Hence, it can be arbitrarily small.

In the very special case of a linear regression model with interval outcome data studied by \cite{pon:tam11}, the procedure proposed by \cite{ber:mol08} yields confidence sets that are always non-empty and whose volume depends on a covariance function that they derive \citep[see][Theorem 4.3]{ber:mol08}.
If the linear regression model is correctly specified, and hence $\{\vartheta\in\Theta:\E_\sP(\yL|\ex)\le \vartheta^\top\ex \le\E_\sP(\yU|\ex),~\ex\text{-a.s.}\}\neq\emptyset$, these confidence sets cover $\{\vartheta\in\Theta:\E_\sP(\yL|\ex)\le \vartheta^\top\ex \le\E_\sP(\yU|\ex),~\ex\text{-a.s.}\}$ with asymptotic probability at least equal to $1-\alpha$, as in \eqref{eq:CS_coverage:set:pw}.
Even if the model is misspecified and $\{\vartheta\in\Theta:\E_\sP(\yL|\ex)\le \vartheta^\top\ex \le\E_\sP(\yU|\ex),~\ex\text{-a.s.}\}=\emptyset$, the confidence sets cover the sharp identification region for the parameters of the best linear predictor of $\ey|\ex$, which can be viewed as a pseudo-true set, with probability exactly equal to $1-\alpha$. 
The test statistic that \citeauthor{ber:mol08} use is based on the Hausdorff distance between the estimator and the hypothesized set, and as such is a generalization of the standard Wald-statistic to the set-valued case.
These considerations can be extended to other models.
For example, \cite{lee:bha19} study empirical measurement of Hicksian consumer welfare with interval data on income.
When the model is misspecified, they provide a best parametric approximation to demand and welfare based on the support function method, and inference procedures for this approximation.
For other moment inequality models, \cite{kai:whi13} propose to build a pseudo-true set $\mathcal{H}_\sP^*[\theta]$ that is obtained through a two-step procedure.
In the first step one obtains a nonparametric estimator of the function(s) for which the researcher wants to impose a parametric structure.
In the second step one obtains the set $\mathcal{H}_\sP^*[\theta]$ as the collection of least squares projections of the set in the first step, on the parametric class imposed.
\citeauthor{kai:whi13} show that under regularity conditions the pseudo-true set can be consistently estimated, and derive rates of convergence for the estimator; however, they do not provide methods to obtain confidence sets.
While conceptually valuable, their construction appears to be computationally difficult.
\cite{mas:poi18} propose that when a model is falsified (in the sense that $\idr{\theta}$ is empty) one should report the \emph{falsification frontier}: the boundary between the set of assumptions which falsify the model and those which do not, obtained through continuous relaxations of the baseline assumptions of concern. 
The researcher can then present the set $\idr{\theta}$ that results if the true model lies somewhere on this frontier. 
This set can be interpreted as a pseudo-true set.
However, \citeauthor{mas:poi18} do not provide methods for inference.

The implications of misspecification in partially identified models remain an open and important question in the literature.
For example, it would be useful to have notions of pseudo-true set that parallel those of pseudo-true value in the point identified case.
It would also be important to provide methods for the construction of confidence sets in general moment inequality models that do not exhibit spurious precision (i.e., are arbitrarily small) when the model is misspecified.
Recent work by \cite{and:kwo19} addresses some of these questions.

\section{Computational Challenges}
\label{sec:computations}
As a rule of thumb, the difficulty in computing estimators of identification regions and confidence sets depends on whether a closed form expression is available for the boundary of the set.
For example, often nonparametric bounds on functionals of a partially identified distribution are known functionals of observed conditional distributions, as in Section \ref{sec:prob:distr}.
Then ``plug in" estimation is possible, and the computational cost is the same as for estimation and construction of confidence intervals (or confidence bands) for point-identified nonparametric regressions (incurred twice, once for the lower bound and once for the upper bound).

Similarly, support function based inference is easy to implement when $\idr{\theta}$ is convex.
Sometimes the extreme points of $\idr{\theta}$ can be expressed as known functionals of observed distributions.
Even if not, level sets of convex functions are easy to compute.

But as it was shown in Section \ref{sec:structural}, many problems of interest yield a set $\idr{\theta}$ that is \emph{not} convex.
In this case, $\idr{\theta}$ is obtained as a level set of a criterion function.
Because $\idr{\theta}$ (or its associated confidence set) is often a subset of $\R^d$ (rather than $\R$), even a moderate value for $d$, e.g., 8 or 10, can lead to extremely challenging computational problems.
This is because if one wants to compute $\idr{\theta}$ or a set that covers it or its elements with a prespecified asymptotic probability (possibly uniformly over $\sP\in\cP$), one has to map out a level set in $\R^d$.
If one is interested in confidence intervals for scalar projections or other smooth functions of $\vartheta\in\idr{\theta}$, one needs to solve complex nonlinear optimization problems, as for example in \eqref{eq:CI:BCS} and \eqref{eq:KMS:proj}.
This can be difficult to do, especially because $c_{1-\alpha}(\vartheta)$ is typically an unknown function of $\vartheta$ for which gradients are not available in closed form.

Mirroring the fact that computation is easier when the boundary of $\idr{\theta}$ is a known function of observed conditional distributions, several portable software packages are available to carry out estimation and inference in this case.
For example, \cite{ber:man00} provide STATA and MatLab packages implementing the methods proposed by \cite{man89,man90,man94,man95,man97:monotone}, \cite{hor:man98,hor:man00}, and \cite{man:pep00}.
\cite{tau14} provides a STATA package to implement the bounds proposed by \cite{lee09}.
\cite{mcc:mil:roy15} provide a STATA package to implement bounds on treatment effects with endogenous and misreported treatment assignment and under the  assumptions of monotone treatment selection, monotone treatment response, and monotone instrumental variables as in \cite{man97:monotone}, \cite{man:pep00}, \cite{kre:pep07}, \cite{gun:kre:pep12}, and  \cite{kre:pep:gun:jol12}.
The code computes the confidence intervals proposed by \cite{imb:man04}.
In the more general context of inference for a one-dimensional parameter defined by intersection bounds, as for example the one in \eqref{eq:intersection:bounds}, \cite{che:kim:lee:ros15} and \cite{and:kim:shi17} provide portable STATA code implementing, respectively, methods to test hypotheses and build confidence intervals in \cite{che:lee:ros13} and in \cite{and:shi13}.

\cite{ber:mol:mor10} provide portable STATA code implementing \cite{ber:mol08}'s method for estimation and inference for best linear prediction with interval outcome data as in Identification Problem \ref{IP:param_pred_interval}.
\cite{cha:che:mol:sch12_code} provide R code implementing \cite{cha:che:mol:sch18}'s method for estimation and inference for best linear approximations of set identified functions.\medskip

On the other hand, there is a paucity of portable software implementing the theoretical methods for inference in structural partially identified models discussed in Section \ref{sec:inference}.
\cite{cil:tam09} compute \cite{che:hon:tam07} confidence sets for a parameter vector in $\R^d$ in an entry game with six players, with $d$ in the order of $20$ and with tens of thousands of inequalities, through a ``guess and verify" algorithm based on simulated annealing (with no cooling) that visits many candidate values $\vartheta\in\Theta$, evaluates $\crit_n(\vartheta)$, and builds $\CS$ by retaining the visited values $\vartheta$ that satisfy $n\crit_n(\vartheta)\le c_{1-\alpha}(\vartheta)$ with $c_{1-\alpha}$ defined to satisfy \eqref{eq:CS_coverage:point:pw}.
Given the computational resources commonly available at this point in time, this is a tremendously hard task, due to the dimension of $\theta$ and the number of moment inequalities employed.

As explained in Section \ref{subsubsec:tam03:cil:tam09}, these inequalities, which in a game of entry with $J$ players and discrete observable payoff shifters are $2^J|\cX|$ (with $\cX$ the support of the observable payoff shifters), yield an outer region $\outr{\theta}$.
It is natural to wonder what are the additional challenges faced to compute $\idr{\theta}$ as described in Section \ref{subsubsec:sharp:games}.
A definitive answer to this question is hard to obtain.
If one employs \emph{all} inequalities listed in Theorem \ref{thr:artstein}, the number of inequalities jumps to $(2^{2^J}-2)|\cX|$, increasing the computational cost.
However, as suggested by \cite{gal:hen06} and extended by other authors \citep[e.g.,][]{ber:mol:mol08,ber:mol:mol11,che:ros:smo13,che:ros17}, often many moment inequalities are redundant, substantially reducing the number of inequalities to be checked.
Specifically, \cite{gal:hen06} propose the notion of \emph{core determining sets}, a collection of compact sets such that if the inequality in Theorem \ref{thr:artstein} holds for these sets, it holds for all sets in $\cK$, see Definition \ref{def:core-det} and the surrounding discussion in Appendix \ref{app:RCS}.
This often yields a number of restrictions similar to the one incurred to obtain outer regions.
For example, \cite[Section 4.2]{ber:mol:mol08} analyze a four player, two type entry game with pure strategy Nash equilibrium as solution concept, originally proposed by \cite{ber:tam06}, and show that while a direct application of Theorem \ref{thr:artstein} entails $512|\cX|$ inequality restrictions, $26|\cX|$ suffice.
In this example, \cite{cil:tam09}'s outer region is based on checking $18|\cX|$ inequalities.

A related but separate question is how to \emph{best} allocate the computational effort.
As one moves from partial identification analysis to finite sample considerations, one may face a trade-off between sharpness of the identification region and statistical efficiency.
This is because inequalities that are redundant from the perspective of identification analysis might nonetheless be estimated with high precision, and hence improve the finite sample statistical properties of a confidence set or of a test of hypothesis.
Recent contributions by \cite{and:shi17}, \cite{che:che:kat18} and \cite{bel:bug:che18}, provide methods to build confidence set, respectively, with a continuum of conditional moment inequalities, and with a number of moment inequalities that may exceed sample size.
These contributions, however, do not yet answer the question of how to optimally select inequalities to yield confidence sets with best finite sample properties according to some specified notion of ``best".

A different approach proposed by \cite{che:chr:tam18} uses directly a quasi-likelihood criterion function. 
In the context, e.g., of entry games, this entails assuming that the selection mechanism depends only on observable payoff shifters, using it to obtain the exact model implied distribution as in \eqref{eq:games_model:pred}, and partially identifying an enlarged parameter vector that includes $\theta$ and the selection mechanism.
In an empirical application with discrete covariates, \cite{che:chr:tam18} apply their method to a two player entry game with correlated errors, where $\theta\in\R^9$ and the selection mechanism is a vector in $\R^8$, for a total of 17 parameters. 
In another application to the analysis of trade flows, their empirical application includes 46 parameters.
\medskip

In terms of general purpose portable code that can be employed in moment inequality models, I am only aware of the MatLab package provided by \cite{kai:mol:sto:thi17} to implement the inference method of \cite{kai:mol:sto19} for projections and smooth functions of parameter vectors in models defined by a finite number of unconditional moment (in)equalities.
More broadly, their method can be used to compute confidence intervals for optimal values of optimization problems with estimated constraints.
Here I summarize their approach to further highlight why the computational task is challenging even in the case of projections.

The confidence interval in \eqref{eq:def:CI}-\eqref{eq:KMS:proj} requires solving two nonlinear programs, each with a linear objective and nonlinear constraints involving a critical value which in general is an unknown function of $\vartheta$, with unknown gradient.
When the dimension of the parameter vector is large, directly solving optimization problems with such constraints can be expensive even if evaluating the critical value at each $\vartheta$ is cheap.\footnote{\cite{kai:mol:sto19} propose a linearization method whereby $c_{1-\alpha}$ is calibrated through repeatedly solving bootstrap linear programs, hence it is reasonably cheap to compute.}
Hence, \citeauthor{kai:mol:sto19} propose to use an algorithm (called E-A-M for Evaluation-Approximation-Maximization) to solve these nonlinear programs, which belongs to the family of \emph{expected improvement algorithms} \citep[see e.g.][and references therein]{jon:sch:wel98,sch:wel:jon98,jon01}.
Given a constrained optimization problem of the form
\begin{align*}
\max_{\vartheta \in \Theta}u^\top\vartheta~\text{s.t. }g_j(\vartheta)\le c(\vartheta),j=1,\dots,J,
\end{align*} 
to which \eqref{eq:KMS:proj} belongs,\footnote{To see this it suffices to set $g_j(\vartheta)=\frac{\sqrt{n}\bar{m}_{n,j}(\vartheta)}{\hat{\sigma}_{n,j}(\vartheta)}$ and $c(\vartheta)= c_{1-\alpha}(\vartheta)$.} the algorithm attempts to solve it by cycling over three steps: 
\begin{enumerate}
\item The \emph{true} critical level function $c$ is evaluated at an initial (uniformly randomly drawn from $\Theta$) set of points $\vartheta^1,\dots,\vartheta^k$.
These values are used to compute a current guess for the optimal value, $u^\top\vartheta^{*,k}=\max\{u^\top\vartheta:~\vartheta\in\{\vartheta^1,\dots,\vartheta^k\}\text{ and }\bar g(\vartheta)\le c(\vartheta)\}$, where $\bar g(\vartheta)=\max_{j=1,\dots,J}g_j(\vartheta)$. 
The ``training data" $(\vartheta^{\ell},c(\vartheta^{\ell})_{\ell=1}^k$ is used to compute an \emph{approximating surface} $c_k$ through a Gaussian-process regression model (kriging), as described in \cite[Section 4.1.3]{san:wil:not13}; 
\item For $L\ge k+1$, with probability $1-\epsilon$ the next evaluation point $\theta^L$ for the \emph{true} critical level function $c$ is chosen by finding the point that maximizes \emph{expected improvement} with respect to the \emph{approximating surface}, $\mathbb{EI}_{L-1}(\vartheta)=(u^\top\vartheta-u^\top\vartheta^{*,L-1})_+\{1-\Phi([\bar g(\vartheta)-c_{L-1}(\vartheta)]/[\hat\varsigma s_{L-1}(\vartheta)])\}$.
Here $c_{L-1}(\vartheta)$ and $\hat\varsigma^2 s_{L-1}^2(\vartheta)$ are estimators of the posterior mean and variance of the approximating surface. 
To aim for global search, with probability $\epsilon$, $\vartheta^L$ is drawn uniformly from $\Theta$. 
The approximating surface is then recomputed using $(\vartheta^{\ell},c(\vartheta^{\ell})_{\ell=1}^L)$. 
Steps 1 and 2 are repeated until a convergence criterion is met.
\item The extreme point of $CI_n$ is reported as the value $u^\top\vartheta^{*,L}$ that maximizes $u^\top\vartheta$ among the evaluation points that satisfy the \emph{true} constraints, i.e. $u^\top\vartheta^{*,L}=\max\{u^\top\vartheta:~\vartheta\in\{\vartheta^1,\dots,\vartheta^L\}\text{ and }\bar g(\vartheta)\le c(\vartheta)\}$. 
\end{enumerate}
The only place where the approximating surface is used is in Step 2, to choose a new evaluation point. 
In particular, the reported extreme points of $\CI$ in \eqref{eq:def:CI} are the extreme values of $u^\top\vartheta$ that are consistent with the true surface where this surface was computed, \emph{not}  with the approximating surface. 
\cite{kai:mol:sto19} establish convergence of their algorithm and obtain a convergence rate, as the number of evaluation points increases, for constrained optimization problems in which the constraints are sufficiently smooth ``black box" functions, building on an earlier contribution of \cite{bul11}.
\citeauthor{bul11} establishes convergence of an expected improvement algorithm for unconstrained optimization problems where the objective is a ``black box" function.
The rate of convergence that \citeauthor{bul11} derives depends on the smoothness of the black box objective function. 
The rate of convergence obtained by \citeauthor{kai:mol:sto19} depends on the smoothness of the black box constraints, and is slightly slower than \citeauthor{bul11}'s rate. 
\citeauthor{kai:mol:sto19}'s Monte Carlo experiments suggest that the E-A-M algorithm is fast and accurate at computing their confidence intervals.
The E-A-M algorithm also allows for very rapid computation of projections of the confidence set proposed by \cite{and:soa10}, and for a substantial improvement in the computational time of the profiling-based confidence intervals proposed by \cite{bug:can:shi17}.\footnote{\cite{bug:can:shi17}'s method does not require solving a nonlinear program such as the one in \eqref{eq:KMS:proj}.
Rather it obtains $\CI$ as in \eqref{eq:CI:BCS}. However, it approximates $c_{1-\alpha}$ by repeatedly solving bootstrap nonlinear programs, thereby incurring a very high computational cost at that stage.}
In all cases, the speed improvement results from a reduced number of evaluation points required to approximate the optimum.
In an application to a point identified setting, \cite[Supplement Section S.3]{fre:rev17} use \cite{kai:mol:sto19}'s E-A-M method to construct uniform confidence bands for an unknown function of interest under (nonparametric) shape restrictions. 
They benchmark it against gridding and find it to be accurate at considerably improved speed.

\section{Conclusions}
\label{sec:future}
This chapter provides a discussion of the econometrics literature on partial identification.
It first reviews what can be learned about (functionals of) probability distributions in the absence of parametric restrictions, under various scenarios of \emph{data incompleteness}.
It then reviews what can be learned about functionals characterizing semiparametric structural economic models, under various scenarios of \emph{model incompleteness}.
Finally, it discusses finite sample inference, the consequences of misspecification, and the computational challenges that a researcher needs to face when implementing partial identification methods.

Taking stock, I argue that several areas emerge where more progress is needed to bring the partial identification approach to empirical research to full fruition.
Whereas the last twenty years have seen the development of a burgeoning  theoretical literature on the topic, empirical applications of the methods still lag behind.
I conjecture that part of the reason for this discrepancy is due to the lack of easy-to-implement procedures for computation of estimators and confidence sets (or intervals) in complex structural models.
While the literature so far has aimed at developing methods that have desirable asymptotic properties for very general classes of models, there is arguably scope for more problem-specific methods that exploit the particularities of a certain model to obtain easy to implement statistical procedures.
It would also seem desirable that portable software accompanies the proposed methodologies, perhaps more in line with the current practice in the Statistics literature.

However, computational concerns cannot be the cause of the relative paucity of applications of partial identification methods as the ones reviewed in Section \ref{sec:prob:distr}, e.g., bounds on treatment effects.
These bounds are extremely easy to estimate and confidence intervals covering them can readily be computed.
I therefore conjecture that the lack of applications might be due to a misconception, whereby nonparametric bounds are perceived as ``always too wide to learn anything".
While it is true that, for example, worst-case nonparametric bounds on the average treatment effect cover zero by construction, the partial identification approach to empirical research proposes a wide array of assumptions that can be brought to bear to augment the empirical evidence and tighten the bounds.
The philosophy of the method is that the systematic reporting of bounds obtained under an increasingly strong set of assumptions illuminates the relative role played by assumptions and data in shaping the conclusions that the researcher draws.
Point identification is the limit of this process, and carefully assessing how this limit is reached is key to learning about the quantities of interest.

In Sections \ref{sec:prob:distr} and \ref{sec:structural}, special attention is devoted to characterizing \emph{sharp} identification regions.
Sharpness often requires \emph{many} moment inequalities, the number of which can exceed the available sample size.
Hence, there is a need of appropriate statistical inference methods.
As briefly mentioned in Sections \ref{sec:inference} and \ref{sec:computations}, methods designed to provide valid test of hypotheses and confidence sets in this scenario already exist.
However, I would argue that there is a need to better understand the trade-off between sharpness of the population identification region, and statistical efficiency, especially in the context of conditional moment inequalities where instrument functions are needed to transform the inequalities in unconditional ones.
Similarly, there is a need of more research on data driven procedures for the choice of tuning parameters for the construction of confidence sets, in particular in the case of projection inference where the question has not yet been addressed.
Another open and arguably important question in the literature, is how to build confidence sets for general moment inequality models that do not exhibit spurious precision (i.e., are arbitrarily small) when the model is misspecified.

\newpage
\appendix
\section{Basic Definitions and Facts from Random Set Theory}
\label{app:RCS}
This appendix provides basic definitions and results from random set theory that are used throughout this chapter.\footnote{The treatment here summarizes a few of the topics presented in \cite{mol:mol18}.}
I refer to \cite{mo1} for a textbook presentation of random set theory, and to \cite{mol:mol18} for a discussion focusing on its applications in econometrics.

The theory of random closed sets generally applies to the space of closed subsets of a locally compact Hausdorff second countable topological space $\carrier$, see \citet{mo1}. 
In this chapter I let $\carrier = \R^d$ to simplify the exposition.
Closedness is a property satisfied by random points (singleton sets), so that the theory of random closed sets includes the classical case of random points or random vectors as a special case. 
A random closed set is a measurable map $\eX:\Omega\mapsto\cF$, where measurability is defined by specifying the family of functionals of $\eX$ that are random variables.
\begin{definition}[Random closed set]
  \label{def:rcs}
  A map $\eX$ from a probability space $(\Omega,\salg,\P)$ to the family $\cF$ of closed subsets of $\R^d$ is   called a \emph{random closed set} if
  \begin{equation}
    \label{eq:X-}
    \eX^-(K)=\{\omega\in\Omega:\; \eX(\omega)\cap K\neq\emptyset\}
  \end{equation}
  belongs to the $\sigma$-algebra $\salg$ on $\Omega$ for each compact set $K$ in $\R^d$.
\end{definition}
A random \emph{compact} set is a random closed set which is compact with probability one, so that almost all values of $\eX$ are compact sets. 
A random \emph{convex} closed set is defined similarly, so that $\eX(\omega)$ is a convex closed set for almost all $\omega$.

Definition~\ref{def:rcs} means that $\eX$ is explored by its hitting events, i.e., the events where $\eX$ hits a compact set $K$. The corresponding hitting probabilities are very important in random set theory, because they uniquely determine the probability distribution of a random closed set $\eX$, see \cite[Section 1.1.3]{mo1}. The formal definition of the hitting probabilities, and the closely related containment probabilities, follows.
\begin{definition}[Capacity functional and containment functional]
  \label{def:capacity}
  {\color{white}0}
  \begin{enumerate}
  \item A functional $\sT_\eX(K):\cK\mapsto[0,1]$ given by 
  \begin{displaymath}
    \sT_\eX(K)=\Prob{\eX\cap K\neq\emptyset},\quad K\in\cK,
  \end{displaymath}
  is called \emph{capacity (or hitting) functional} of $\eX$. 
  \item A functional $\sC_\eX(F):\cF\mapsto[0,1]$ given by
  \begin{displaymath}
    \sC_\eX(F)=\Prob{\eX\subset F},\quad F\in\cF,
    \end{displaymath}
    is called the \emph{containment functional} of $\eX$.
  \end{enumerate}
  I write $\sT(K)$ instead of $\sT_{\eX}(K)$ and $\sC(K)$ instead of $\sC_{\eX}(K)$ where no ambiguity occurs.
\end{definition}

Ever since the seminal work of \citet{aum65}, it has been common to think of random sets as bundles of random variables -- the selections of the random sets.
\begin{definition}[Measurable selection]
   \label{def:selection}
   For any random set $\eX$, a (measurable) \emph{selection} of $\eX$ is a random element $\ex$ with values in $\R^d$ such that $\ex(\omega)\in\eX(\omega)$ almost surely. I denote by $\Sel(\eX)$ the set of all selections from $\eX$.
\end{definition}
The space of closed sets is not linear, which causes substantial difficulties in defining the expectation of a random set. 
One approach, inspired by \citet{aum65} and pioneered by \citet{art:vit75}, relies on representing a random set using the family of its selections, and considering the set formed by their expectations. 
If $\eX$ possesses at least one integrable selection, then $\eX$ is called \emph{integrable}. The family of all integrable selections of $\eX$ is denoted by $\Sel^1(\eX)$.
\begin{definition}[Unconditional and conditional Aumann --or selection-- expectation]
  \label{def:sel-exp}
  The \emph{(selection or) Aumann expectation} of an integrable random closed set $\eX$ is given by
  \[ \E \eX = \cl \left\{ \int_\Omega \ex d\P: \; \ex \in \Sel^1(\eX) \right\}.  \]
  For each sub-$\sigma$-algebra $\ssalg \subset \salg$, the \emph{conditional (selection or) Aumann expectation} of $\eX$ given $\ssalg$ is the $\ssalg$-measurable random closed set $\eY=\E(\eX|\ssalg)$ such that the family of $\ssalg$-measurable integrable selections of $\eY$, denoted $\Sel^1_\ssalg(\eY)$, satisfies
  \begin{equation*}
    \Sel^1_\ssalg(\eY)=\cl\Big\{\E(\ex|\ssalg): \, \ex \in \Sel^1(\eX)\Big\},
  \end{equation*}
  where the closure in the right-hand side is taken in $\mathbf{L}^1$.
\end{definition}
If $\eX$ is almost surely non-empty and its norm $\|\eX\|=\sup\{\|\ex\|:\; \ex\in \eX\}$ is an integrable random variable, then $\eX$ is said to be \emph{integrably bounded} and all its selections are integrable. 
In this case, since $\eX$ takes its realizations in $\R^d$, the family of expectations of these integrable selections is already closed and there is no need to take an additional closure as required in Definition~\ref{def:sel-exp}, see \cite[Theorem~2.1.37]{mo1}.
The selection expectation depends on the probability space used to define $\eX$, see \cite[Section 2.1.2]{mo1} and \cite[Section 3.1]{mol:mol18}. 
In particular, if the probability space is non-atomic and $\eX$ is integrably bounded, the selection expectation $\E \eX$ is a convex set regardless of whether or not $\eX$ might be non-convex itself \cite[Theorem 3.4]{mol:mol18}. 
This convexification property of the selection expectation implies that the expectation of the closed convex hull of $\eX$ equals the closed convex hull of $\E \eX$, which in turn equals $\E \eX$. 
It is then natural to describe the Aumann expectation through its support function, because this function traces out a convex set's boundary and therefore knowing the support function is equivalent to knowing the set itself, see equation (\ref{eq:rocka}) below.
\begin{definition}[Support function]
  \label{def:sup-fun}
  Let $K$ be a convex set. The \emph{support function} of $K$ is
    \begin{displaymath}
     h_K(u)=\sup\{k^\top u:\; k\in K\}\,, \qquad u\in\R^d\,,
    \end{displaymath}
  where $k^\top u$ denotes the scalar product. 
\end{definition}
The support function is finite for all $u$ if $K$ is bounded, and is sublinear (positively homogeneous and subadditive) in $u$. 
Hence, it can be considered only for $u \in \Ball$ or $u \in \Sphere$. 
Moreover, one has
\begin{equation}
\label{eq:rocka}
K=\cap_{u \in \Ball}\{k: k^\top u \leq h_K(u) \} =\cap_{u \in \Sphere}\{k: k^\top u \leq h_K(u)\}.
\end{equation}

Next, I define the Hausdorff metric, a distance on the family $\cK$ of compact sets:
\begin{definition}[Hausdorff metric]
\label{def:hausdorff}
Let $K,L\in\cK$. The \emph{Hausdorff distance} between $K$ and $L$ is
\begin{displaymath}
  \rhoH(K,L)=\inf\Big\{r>0:\; K\subseteq L^r,\; L\subseteq K^r\Big\},
\end{displaymath}
where $K^r=\{x: \dist(x,K)\le r\}$ is the $r$-envelope of $K$. 
\end{definition}
Since $K\subseteq L$ if and only if $h_K(u)\leq h_L(u)$ for all $u\in\Sphere$ and $h_{K^r}(u)=h_K(u)+r$, the uniform metric for support functions on the sphere turns into the Hausdorff distance between compact convex sets. Namely,
\begin{align}
  \rhoH(K,L)=\sup\Big\{|h_K(u)-h_L(u)|:\; \|u\|=1\Big\}.
  \label{eq:Hormander}
\end{align}
It follows that
\begin{displaymath}
  \|K\|=\rhoH(K,\{0\})=\sup\big\{|h_K(u)|:\; \|u\|=1\big\}.
\end{displaymath}
Finally, I define independently and identically distributed random closed sets \citep[see][Proposition 1.1.40 and Theorem 1.3.20, respectively]{mo1}:
\begin{definition}[i.i.d. random closed sets]
Random closed sets $\eX_1,\dots,\eX_n$ in $\R^d$ are independent if and only if $\Prob{\eX_1\cap K_1 \ne \emptyset,\dots,\eX_n\cap K_n \ne \emptyset}=\prod_{i=1}^n\sT_{\eX_i}(K_i)$ for all $K_1,\dots,K_n \in \cK$.
They are identically distributed if and only if for each open set $G$, $\Prob{\eX_1\cap G \ne \emptyset}=\Prob{\eX_2\cap G \ne \emptyset}= \dots =\Prob{\eX_n\cap G \ne \emptyset}$.
\end{definition}

With these definitions in hand, I can state the theorems used throughout the chapter.
The first is a dominance condition due to \citet{art83} \citep[and][]{nor92} that characterizes probability distributions of selections \citep[see][Section 2.2]{mol:mol18}:
\begin{theorem}[Artstein]
  \label{thr:artstein}
  A probability distribution $\mu$ on $\R^d$ is the distribution of a selection of a random closed set $\eX$ in $\R^d$ if and only if
  \begin{equation}
    \label{eq:domin-t}
    \mu(K)\leq \sT(K)=\Prob{\eX\cap K\neq\emptyset}
  \end{equation}
  for all compact sets $K\subseteq\R^d$. Equivalently, if and only if 
  \begin{equation}
    \label{eq:dom-c}
    \mu(F)\geq \sC(F)=\Prob{\eX\subset F}
  \end{equation}
  for all closed sets $F\subset\R^d$. If $\eX$ is a compact random closed set, it suffices to check \eqref{eq:dom-c} for compact sets $F$ only.
\end{theorem}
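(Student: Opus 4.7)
Proof proposal for Theorem \ref{thr:artstein}.

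The plan is to first establish the equivalence of the two dominance conditions \eqref{eq:domin-t} and \eqref{eq:dom-c}, then prove necessity (the easy direction), and finally prove sufficiency (the hard direction) by reduction to a finite combinatorial problem.

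For the equivalence of \eqref{eq:domin-t} and \eqref{eq:dom-c}: for any closed $F$, note that $\{\eX \subseteq F\} = \{\eX \cap F^c = \emptyset\}$. If $F^c$ were compact this would give the equivalence immediately by complementation; in general I would approximate $F^c$ from inside by an increasing sequence of compact sets $K_n \uparrow F^c$, use monotone convergence on the hitting events, and invoke inner regularity of $\mu$ (tightness on $\R^d$) to pass to the limit. The last statement of the theorem, restricting to compact $F$ when $\eX$ is compact, follows similarly since tightness of the law of $\eX$ then renders the enlarged class of closed sets redundant.

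Necessity is a one-line argument: if $\ex \in \Sel(\eX)$ with law $\mu$, then $\{\ex \in K\} \subseteq \{\eX \cap K \neq \emptyset\}$ pointwise, so $\mu(K) = \P(\ex \in K) \le \sT(K)$.

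For sufficiency I would proceed in two steps. First, treat the \emph{finite case} where $\eX$ takes values in a finite family $\{F_1,\dots,F_m\}$ with weights $p_j = \P(\eX = F_j)$ and $\mu$ is concentrated on a finite set $\{x_1,\dots,x_n\}$ with masses $q_i$. Constructing a selection with law $\mu$ amounts to finding a nonnegative $n \times m$ matrix $(\pi_{ij})$ with row sums $q_i$, column sums $p_j$, and $\pi_{ij} = 0$ whenever $x_i \notin F_j$; this would be the joint law of $(\ex,\eX)$. The existence of such a transportation plan is equivalent, by a bipartite max-flow / Hall-type argument, to the condition that for every subset $I \subseteq \{1,\dots,n\}$, the total $\mu$-mass on $\{x_i : i \in I\}$ is at most the total $\P$-mass on $\{F_j : F_j \cap \{x_i : i \in I\} \neq \emptyset\}$. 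Setting $K = \{x_i : i \in I\}$, this is exactly \eqref{eq:domin-t}. Once $(\pi_{ij})$ is produced, one defines $\ex$ on a possibly enlarged probability space as the first coordinate of a draw from $\pi$, and $\ex \in \eX$ holds by construction.

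Second, extend to the \emph{general case} by discretization. Partition $\R^d$ into a sequence of finer and finer grids of half-open cubes $\{Q_k^{(n)}\}_k$ with diameters shrinking to zero. Define a finite-valued upper approximation $\eX_n$ by replacing $\eX$ with the union of cubes that it hits, and a finite approximation $\mu_n$ that assigns to a representative point $x_k^{(n)} \in Q_k^{(n)}$ the mass $\mu(Q_k^{(n)})$. The dominance inequality passes to these approximations because $\mu_n(\bigcup_{k \in I} \{x_k^{(n)}\}) = \mu(\bigcup_{k \in I} Q_k^{(n)}) \le \sT_\eX(\bigcup_{k \in I} \overline{Q_k^{(n)}}) = \sT_{\eX_n}(\bigcup_{k \in I} \{x_k^{(n)}\})$ up to technicalities at cube boundaries that can be absorbed by choosing the grids so that $\mu$ charges no boundary. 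By the finite case there exist couplings $\pi_n$ of $(\ex_n,\eX_n)$ with $\ex_n \in \eX_n$ a.s.\ and $\ex_n \sim \mu_n$. The marginals $(\mu_n, \mathrm{Law}(\eX_n))$ are tight (tightness of $\mu$ gives tightness of $\mu_n$; tightness of $\mathrm{Law}(\eX)$ in the Fell topology transfers to $\eX_n$), hence $\{\pi_n\}$ is tight and admits a weakly convergent subsequence with limit $\pi$. The set $\{(x,F) : x \in F\}$ is closed in $\R^d \times \cF$, so the inclusion $\ex \in \eX$ is preserved in the limit, while the marginals of $\pi$ are $\mu$ and $\mathrm{Law}(\eX)$. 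This yields, on an enlarged probability space, a measurable selection of (a copy of) $\eX$ with distribution $\mu$.

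The main obstacle is the second step: verifying that the approximating random sets $\eX_n$ can be chosen to remain measurable and to satisfy the dominance inequality \emph{uniformly} in $n$, and that the weak limit of the couplings genuinely concentrates on the graph $\{(x,F) : x \in F\}$ rather than on its closed envelope (where $x$ lies on the boundary of $F$). Handling this requires a careful choice of the discretizing grids so that both $\mu$ and the capacity functional of $\eX$ have no mass on boundaries, together with use of the Portmanteau theorem applied to the closed set $\{(x,F) : x \in F\}$ in $\R^d \times \cF$ equipped with the Fell topology.
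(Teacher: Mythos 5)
The paper states Theorem \ref{thr:artstein} without proof, importing it from \cite{art83}, \cite{nor92}, and \cite[Section 2.2]{mol:mol18}, so there is no in-paper argument to compare against. Your proposal reconstructs the classical proof strategy used in those references: necessity via the trivial inclusion of events; the finite case via the feasibility condition for a transportation problem with forbidden cells (equivalently, a Hall/max-flow--min-cut argument), which is precisely \eqref{eq:domin-t} evaluated at subsets of the support of $\mu$; and the general case by discretization on cube grids followed by tightness and weak convergence of the couplings on $\R^d\times\cF$. The overall structure is sound and is essentially the proof found in the cited literature.

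Two technical remarks on the points you flag as obstacles. First, your concern that the weak limit of the couplings might charge only a ``closed envelope'' of the graph is unfounded: the set $\{(x,F):x\in F\}$ is itself closed in $\R^d\times\cF$ under the Fell topology (the upper Painlev\'e--Kuratowski limit gives $x\in F$ whenever $x_n\to x$, $F_n\to F$ and $x_n\in F_n$), so the Portmanteau theorem applies directly and no extra care is needed. Second, your proposed fix for the cube boundaries --- choosing grids that the capacity functional of $\eX$ does not charge --- is problematic, because $\sT$ is not additive and can assign positive mass to uncountably many parallel hyperplanes (take $\eX=\R^d$ a.s.). The cleaner route is to note that \eqref{eq:domin-t} extends from compact sets to every Borel set $B$: by inner regularity of $\mu$,
\begin{equation*}
\mu(B)=\sup\{\mu(K):K\subseteq B,\ K\text{ compact}\}\le\sup\{\sT(K):K\subseteq B,\ K\text{ compact}\}\le\sT(B),
\end{equation*}
using only monotonicity of $\sT$ in the last step. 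Working with half-open cubes and representative points in their interiors, the dominance inequality for the discretized pair $(\mu_n,\eX_n)$ then follows with no boundary bookkeeping, and the rest of your limiting argument goes through.
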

If $\mu$ from Theorem~\ref{thr:artstein} is the distribution of some random vector $\ex$, then it is not guaranteed that $\ex\in \eX$ a.s., e.g. $\ex$ can be independent of $\eX$.
 Theorem~\ref{thr:artstein} means that for each such $\mu$, it is possible to construct $\ex$ with distribution $\mu$ that belongs to $\eX$ almost surely. In other words, $\ex$ and $\eX$ can be realized on the same probability space (coupled) as random elements $\ex^\prime$ and $\eX^\prime$ such that $\ex\edis\ex^\prime$ and $\eX\edis\eX^\prime$ with $\ex^\prime \in \eX^\prime$ a.s.

The definition of the distribution of a random closed set (Definition \ref{def:capacity}) and the characterization results for its selections in Theorem \ref{thr:artstein} require working with functionals defined on the family of all compact sets, which in general is very rich.  
It is therefore important to reduce the family of all compact sets required to describe the distribution of the random closed set or to characterize its selections.
\begin{definition}
  \label{def:core-det}
  A family of compact sets $\cM$ is said to be a \emph{core determining class} for a random closed set $\eX$ if any probability
  measure $\mu$ satisfying the inequalities
  \begin{equation}
    \label{eq:cdclass}
    \mu(K)\leq \Prob{\eX\cap K\neq\emptyset}
  \end{equation}
  for all $K\in\cM$, is the distribution of a selection of $\eX$, implying that \eqref{eq:cdclass} holds for all compact sets $K$.
\end{definition}
The notion of a core determining class was introduced by \cite{gal:hen06}.
A simple and general, but still mostly too rich, core determining class is obtained as a subfamily of all compact sets that is dense in a certain sense in the family $\cK$. 
For instance, in the Euclidean space, it suffices to consider compact sets obtained as finite unions of closed balls with rational centers and radii \citep[e.g.,][Theorem 3c]{gal:hen06}.
For the case that $\eX$ is a subset of a finite space, \cite[Algorithm 5.1]{ber:mol:mol08} propose a simple algorithm to compute core determining classes.
\cite{che:ros12} provide a related algorithm.
Throughout this chapter, several results are mentioned where the class of sets over which \eqref{eq:domin-t} is verified is reduced from the class of compact subsets of the carrier space, to a (significantly) smaller collection.

The next result characterizes a dominance condition that can be used to verify the existence of selections of $\eX$ with specific properties for their means \citep[see][Sections 3.2-3.3]{mol:mol18}
\begin{theorem}[Convexification in $\R^d$]
  \label{thr:exp-supp}
  Let $\eX$ be an integrable random set. If $\eX$ is defined on a non-atomic probability space, or if $\eX$ is almost surely convex, then $\E \eX=\E \conv\eX$ and
  \begin{equation}
    \E h_\eX(u)=h_{\E \eX}(u),\quad u\in\R^d.
    \label{eq:supf}
  \end{equation}
  If $\P$ is atomless over $\ssalg$,\footnote{An event $A'\in\ssalg$ is called a $\ssalg$-atom if $\Prob{0<\P(A|\ssalg)<\P(A'|\ssalg)}=0$ for all $A\subset A'$ such that $A\in\salg$.} then $\E(\eX|\ssalg)$ is convex and  
  \begin{equation}
    \E(h_\eX(u)|\ssalg)=h_{\E(\eX|\ssalg)}(u),\quad u\in\R^d.
    \label{eq:supf:cond}
  \end{equation}
  Hence, for any vector $b\in\R^d$, it holds that
  \begin{align}
  b \in \E \eX &\Leftrightarrow b^\top u \le \E h_\eX(u)~~\forall u\in\Sphere,\label{eq:dom_Aumann}\\
  b \in \E(\eX|\ssalg) &\Leftrightarrow b^\top u \le \E(h_\eX(u)|\ssalg)~~\forall u\in\Sphere.\label{eq:dom_Aumann:cond}
  \end{align}
\end{theorem}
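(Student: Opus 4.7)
The plan is to establish the theorem in four stages: first the set equality $\E \eX = \E \conv \eX$, then the support-function identity \eqref{eq:supf}, then the conditional analogues \eqref{eq:supf:cond}, and finally the vector-membership characterizations \eqref{eq:dom_Aumann}--\eqref{eq:dom_Aumann:cond}.

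For the first stage the inclusion $\E \eX \subseteq \E \conv \eX$ is immediate because $\Sel^1(\eX) \subseteq \Sel^1(\conv \eX)$. The reverse inclusion is the heart of the matter when $\eX$ is not almost surely convex, and it is where non-atomicity of $(\Omega,\salg,\P)$ enters. I would proceed by first invoking Carathéodory's theorem to write every $x \in \conv \eX(\omega)$ as a convex combination of at most $d+1$ points of $\eX(\omega)$, and then applying Lyapunov's convexity theorem (or equivalently the Richter--Aumann identity) to decompose $\Omega$ into measurable pieces on which finitely many selections of $\eX$ can be spliced together to realise any prescribed element of $\E \conv \eX$ as the expectation of a genuine selection of $\eX$. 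When $\eX$ is a.s. convex there is nothing to prove at this stage.

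For the support-function identity, the easy direction $h_{\E \eX}(u) \le \E h_\eX(u)$ follows because for every $\ex \in \Sel^1(\eX)$ one has $u^\top \ex \le h_\eX(u)$ almost surely, hence $u^\top \E \ex \le \E h_\eX(u)$, and taking the supremum over selections gives the bound. For the reverse inequality I would produce a measurable selection $\ex_u \in \Sel^1(\eX)$ with $u^\top \ex_u(\omega) = h_\eX(u,\omega)$ a.s.\ by applying a Kuratowski--Ryll-Nardzewski/Aumann measurable-selection theorem to the closed-graph map $\omega \mapsto \argmax\{u^\top x : x \in \eX(\omega)\}$; integrability of $\ex_u$ is inherited from the integrability of $\eX$ since $\|\ex_u\|\le\|\eX\|$. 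Then $\E h_\eX(u) = u^\top \E \ex_u \le h_{\E \eX}(u)$ gives the matching bound. For the conditional version the same argument applies verbatim, with Lyapunov's theorem replaced by its conditional version (valid whenever $\P$ is atomless over $\ssalg$) to ensure convexity of $\E(\eX|\ssalg)$, and with the maximising selection $\ex_u$ conditioned on $\ssalg$ via standard properties of conditional expectation to deliver \eqref{eq:supf:cond}.

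Finally, \eqref{eq:dom_Aumann}--\eqref{eq:dom_Aumann:cond} follow by pure convex-analytic book-keeping: under the hypotheses $\E \eX$ and $\E(\eX|\ssalg)$ are closed convex subsets of $\R^d$, so by the support-function representation recorded in \eqref{eq:rocka} one has $b \in \E \eX$ if and only if $u^\top b \le h_{\E \eX}(u)$ for every $u \in \Sphere$, and substituting the identities \eqref{eq:supf} and \eqref{eq:supf:cond} yields the stated equivalences. The main obstacle I anticipate is the first stage: rigorously establishing $\E \conv \eX \subseteq \E \eX$ via Lyapunov's theorem together with a measurable Carathéodory decomposition, since this step genuinely requires non-atomicity and is the only place where the hypothesis on $(\Omega,\salg,\P)$ is used; the remaining steps reduce to a careful appeal to a measurable selection theorem and to standard separating-hyperplane arguments.
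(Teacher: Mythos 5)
The paper does not actually prove this theorem: it is quoted in Appendix \ref{app:RCS} as a background fact, with the reader referred to \cite[Sections 3.2--3.3]{mol:mol18} and \cite{mo1} for the argument. Your proposal is, in outline, exactly the proof given in those sources: Lyapunov/Richter--Aumann convexification for $\E \eX=\E\conv\eX$, a measurable-selection argument for the support-function identity, conditional analogues via the conditional Lyapunov theorem, and the representation \eqref{eq:rocka} of a closed convex set by its support function for the membership criteria \eqref{eq:dom_Aumann}--\eqref{eq:dom_Aumann:cond}. So the architecture is right and matches the standard treatment.

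The one step that does not go through as written under the stated hypotheses is your construction of the exact maximizing selection $\ex_u$. The theorem assumes only that $\eX$ is \emph{integrable} (it admits at least one integrable selection), not that it is compact or integrably bounded. For an unbounded closed realization $\eX(\omega)$ the set $\argmax\{u^\top x:\, x\in\eX(\omega)\}$ can be empty --- the supremum $h_\eX(u,\omega)$ need not be attained and may equal $+\infty$ --- so the Kuratowski--Ryll-Nardzewski theorem has nothing to select from; and your integrability bound $\|\ex_u\|\le\|\eX\|$ is only useful when $\|\eX\|$ is itself integrable, which is the strictly stronger integrable-boundedness condition that the paper is careful to distinguish from integrability in Definition \ref{def:sel-exp}. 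The standard repair is to use measurable $\epsilon$-maximizing selections and splice them with a fixed $\ex_0\in\Sel^1(\eX)$: with $A_n=\{\|\ex_\epsilon\|\le n\}$, the selections $\ex_\epsilon\one_{A_n}+\ex_0\one_{A_n^{\mathrm{c}}}$ are integrable and their expectations show $\E h_\eX(u)\le h_{\E\eX}(u)$ in the limit (including the case $\E h_\eX(u)=+\infty$), with the same modification in the conditional case. With that substitution your proof is complete; the easy inequality, the Carath\'eodory--Lyapunov convexification (which is indeed the only place non-atomicity is used), and the final convex-analytic step are all correct as stated.
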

An important consequence of Theorem \ref{thr:exp-supp} is that it allows one to verify whether $b \in \E \eX$ without having to compute $\E \eX$ but only $\E h_\eX(u)$ (and similarly for the conditional case), a substantially easier task.

Finally, i.i.d. random closed sets satisfy a law of large numbers and a central limit theorem that are similar to the ones for random singletons.
Recall that the \emph{Minkowski sum}\label{def:mink:sum} of two sets $K$ and $L$ in a linear space (which in this chapter I assume to be the Euclidean space $\R^d$) is obtained by adding each point from $K$ to each point from $L$, formally,
\begin{displaymath}
  K+L=\big\{x+y:\; x\in K,\;y\in L\big\}.
\end{displaymath}
Below, $\eX_1+\cdots+\eX_n$ denotes the Minkowski sum of the random closed sets $\eX_1,\dots,\eX_n$, and $(\eX_1+\cdots+\eX_n)/n$ denotes their \emph{Minkowski average}.
\begin{theorem}[Law of large numbers for integrably bounded random
  sets]
  \label{thr:SLLN-basic}
  Let $\eX,\eX_1,\eX_2,\ldots$ be i.i.d. integrably bounded random compact sets. Define $\eS_n=\eX_1+\cdots+\eX_n$. Then
  \begin{align}
  \label{eq:LLN}
    \rhoH\left(\frac{\eS_n}{n},\E \eX\right)\to 0 \quad
    \text{a.s. as }\ n\to\infty.
  \end{align}
\end{theorem}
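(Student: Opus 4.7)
The plan is to prove the result by passing to support functions via the Hörmander identity \eqref{eq:Hormander}, applying the classical scalar strong law to each direction, and then upgrading pointwise to uniform convergence on $\Sphere$ by exploiting the Lipschitz structure of support functions. A separate Shapley--Folkman-type step will handle the possible non-convexity of $\eS_n/n$.

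First I would decompose the Hausdorff error through the convex hull:
\begin{align*}
\rhoH\!\left(\tfrac{\eS_n}{n},\E\eX\right)
\le \rhoH\!\left(\tfrac{\eS_n}{n},\conv\tfrac{\eS_n}{n}\right)
+ \rhoH\!\left(\conv\tfrac{\eS_n}{n},\E\eX\right).
\end{align*}
Since $\E\eX$ is convex (the chapter works on a nonatomic probability space, so Theorem~\ref{thr:exp-supp} applies), the second term equals $\sup_{u\in\Sphere}|h_{\eS_n/n}(u)-h_{\E\eX}(u)|$ by \eqref{eq:Hormander}. Using the sublinearity of the support function under Minkowski addition, $h_{\eS_n/n}(u)=\frac{1}{n}\sum_{i=1}^n h_{\eX_i}(u)$, so for each fixed $u$ the sequence is an average of i.i.d.\ real random variables dominated in absolute value by $\|\eX\|\in L^1$. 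Kolmogorov's SLLN therefore yields, for each $u\in\Sphere$,
\begin{align*}
\tfrac{1}{n}\sum_{i=1}^n h_{\eX_i}(u)\stackrel{\text{a.s.}}{\longrightarrow}\E h_{\eX}(u)=h_{\E\eX}(u),
\end{align*}
where the last equality uses \eqref{eq:supf} from Theorem~\ref{thr:exp-supp}.

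Next I would upgrade this pointwise convergence to uniform convergence on $\Sphere$. The support function $h_K$ is $\|K\|$-Lipschitz on $\Sphere$, so $h_{\eS_n/n}(\cdot)$ and $h_{\E\eX}(\cdot)$ are Lipschitz with constants $\frac{1}{n}\sum_{i=1}^n\|\eX_i\|$ and $\E\|\eX\|$, respectively. The scalar SLLN gives $\frac{1}{n}\sum_{i=1}^n\|\eX_i\|\to\E\|\eX\|<\infty$ a.s., so the family of difference functions is asymptotically equi-Lipschitz. Pick a countable dense subset $U\subset\Sphere$; on a single event of full probability the pointwise convergence holds for all $u\in U$ simultaneously. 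Combining equicontinuity with compactness of $\Sphere$ (a standard Arzel\`a--Ascoli-style argument) then promotes this to uniform convergence:
\begin{align*}
\sup_{u\in\Sphere}\left|\tfrac{1}{n}\sum_{i=1}^n h_{\eX_i}(u)-h_{\E\eX}(u)\right|\stackrel{\text{a.s.}}{\longrightarrow}0.
\end{align*}

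Finally, I would control the convexification gap $\rhoH(\eS_n/n,\conv\eS_n/n)$. The Shapley--Folkman--Starr inequality gives a deterministic bound of the form $\rhoH(\eS_n,\conv\eS_n)\le\sqrt{d}\,\max_{i\le n}\|\eX_i\|$, hence
\begin{align*}
\rhoH\!\left(\tfrac{\eS_n}{n},\conv\tfrac{\eS_n}{n}\right)\le\tfrac{\sqrt{d}}{n}\max_{i\le n}\|\eX_i\|.
\end{align*}
Integrability of $\|\eX\|$ implies $\frac{1}{n}\max_{i\le n}\|\eX_i\|\to 0$ a.s.\ (a standard consequence of the Borel--Cantelli lemma applied to $\{\|\eX_n\|>\eps n\}$), so this term vanishes almost surely. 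Combining the two pieces in the initial decomposition yields \eqref{eq:LLN}. The main obstacle, in my view, is not any single step in isolation but the uniform-in-$u$ SLLN for the random support functions: the pointwise SLLN and the equi-Lipschitz bound need to be orchestrated on a common probability-one event, and the Shapley--Folkman step is then needed precisely because the support function identity only controls Hausdorff distance between the convex hulls, not between the original sets.
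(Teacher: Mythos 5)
Your proof is correct. Note that the paper does not actually prove Theorem \ref{thr:SLLN-basic}: it is stated as a background result in Appendix \ref{app:RCS} and attributed to \cite{art:vit75} (and \cite{gin:hah:zin83}), with textbook treatments in \cite{mo1} and \cite{mol:mol18}. Your argument --- passing to support functions via \eqref{eq:Hormander}, applying the scalar SLLN direction by direction, upgrading to uniform convergence on $\Sphere$ through the equi-Lipschitz property of support functions, and absorbing the non-convexity of $\eS_n/n$ with the Shapley--Folkman--Starr bound $\rhoH(\eS_n,\conv\eS_n)\le\sqrt{d}\,\max_{i\le n}\|\eX_i\|$ --- is essentially the classical Artstein--Vitale proof of the cited result, and your appeal to nonatomicity of $(\Omega,\salg,\P)$ to identify the limit with the convex set $\E\eX$ is consistent with the paper's standing assumption (see Table \ref{tab:notation} and the footnote in Section \ref{subsec:consistent}).
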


The support function of a random closed set $\eX$ such that $\E\|\eX\|^2<\infty$, is a random continuous function $h_\eX(u)$ on $\Sphere$ with square integrable values. 
Define its covariance function as 
\begin{align}
  \Gamma_\eX(u,v)\equiv\E\left[(h_\eX(u)-h_{\E \eX}(u))(h_\eX(v)-h_{\E \eX}(v))\right],
  ~~u,v\in\Sphere. 
  \label{eq:cov-Gamma}
\end{align}
Let $\zeta(u)$ be a centered Gaussian random field on $\Sphere$ with the same covariance structure as $\eX$, i.e. $\E\big[\zeta(u)\zeta(v)\big]=\Gamma_\eX(u,v),~u,v\in\Sphere$.
Since the support function of a compact set is Lipschitz, it is easy to show that the random field $\zeta$ has a continuous modification by bounding the moments of $|\zeta(u)-\zeta(v)|$. 
\begin{theorem}[Central limit theorem] 
  \label{thr:clt}
  Let $\eX_1,\eX_2,\dots$ be i.i.d. copies of a random closed set $\eX$ in $\R^d$ such that $\E \|\eX\|^2<\infty$, and let $\eS_n=\eX_1+\cdots+\eX_n$. Then as $n\to\infty$,
  \begin{equation}
    \label{eq:h-weak}
    \sqrt{n}\Big(h_{\frac{\eS_n}{n}}(u)-h_{\E\eX}(u)\Big)\Rightarrow \zeta
  \end{equation}
  in the space of continuous functions on the unit sphere with the uniform metric. Furthermore, 
  \begin{equation}
    \label{eq:clt-basic}
    \sqrt{n}\rhoH\left(\frac{\eS_n}{n},\E \eX\right)\Rightarrow 
    \|\zeta\|_\infty=\sup\big\{|\zeta(u)|:\; u\in\Sphere\big\}.
  \end{equation}
\end{theorem}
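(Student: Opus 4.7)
The plan is to reduce the statement to a standard functional central limit theorem for i.i.d. random elements of the Banach space $C(\Sphere)$ equipped with the uniform norm, and then pass back to the Hausdorff distance via the H\"ormander formula \eqref{eq:Hormander}. The crucial algebraic input is that the support function is additive and positively homogeneous under Minkowski operations, so that
\begin{equation*}
h_{\eS_n/n}(u) \;=\; \frac{1}{n}\sum_{i=1}^n h_{\eX_i}(u), \qquad u\in \Sphere,
\end{equation*}
and, by Theorem \ref{thr:exp-supp}, $\E h_\eX(u) = h_{\E \eX}(u)$ for every $u\in\Sphere$. Hence the left-hand side of \eqref{eq:h-weak} is exactly the normalized sum of the i.i.d.\ centered random elements $h_{\eX_i} - h_{\E\eX}$ of $C(\Sphere)$.

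First I would verify that $h_\eX$ is a bona fide random element of $C(\Sphere)$ with a second moment. Since $K\mapsto h_K$ is continuous from $\cK$ (Hausdorff metric) to $C(\Sphere)$ by \eqref{eq:Hormander}, measurability of $\eX$ transfers to $h_\eX$. The identity $\|h_\eX\|_\infty = \|\eX\|$ together with the assumption $\E\|\eX\|^2 < \infty$ gives $\E\|h_\eX\|_\infty^2 <\infty$, which in particular yields integrability in the Bochner sense and the covariance structure $\Gamma_\eX$ in \eqref{eq:cov-Gamma}. The pre-Gaussian nature of $h_\eX - h_{\E\eX}$ then identifies the candidate limit $\zeta$ uniquely as the centered Gaussian process on $\Sphere$ with covariance $\Gamma_\eX$.

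Second, I would invoke a Jain--Marcus type central limit theorem in $C(\Sphere)$ to obtain \eqref{eq:h-weak}. Finite-dimensional convergence is immediate from the multivariate classical CLT applied to $(h_{\eX_i}(u_1),\dots,h_{\eX_i}(u_k))$. The nontrivial step, and the main obstacle, is asymptotic tightness (stochastic equicontinuity) of the partial-sum process in the uniform metric. Here I would exploit the fact that every support function is Lipschitz on $\Sphere$ with Lipschitz constant dominated by $\|\eX\|$, i.e.\ $|h_\eX(u)-h_\eX(v)| \le \|\eX\|\,\|u-v\|$. Combined with $\E\|\eX\|^2<\infty$ and the compactness (hence total boundedness) of $\Sphere$, this Lipschitz envelope condition is exactly what is needed to conclude that $\{h_{\eX_i}-h_{\E\eX}\}$ satisfies the central limit theorem in $C(\Sphere)$ via a chaining/bracketing argument. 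A continuous version of $\zeta$ exists for the same Lipschitz reason, as noted just before the theorem statement.

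Third, to obtain \eqref{eq:clt-basic} I would apply the continuous mapping theorem with the sup-norm functional $\phi(f)=\|f\|_\infty$ on $C(\Sphere)$, which is continuous. When $\eX$ is almost surely convex, each $\eS_n/n$ is convex and the H\"ormander identity \eqref{eq:Hormander} gives
\begin{equation*}
\rhoH\!\left(\tfrac{\eS_n}{n},\E\eX\right) \;=\; \sup_{u\in\Sphere} \left| h_{\eS_n/n}(u) - h_{\E\eX}(u) \right|,
\end{equation*}
so \eqref{eq:clt-basic} follows from \eqref{eq:h-weak} by continuous mapping. In the general (possibly nonconvex) case I would in addition use a Shapley--Folkman--Starr bound to control $\rhoH(\eS_n/n, \conv(\eS_n/n))$ by a deterministic constant times $\max_{i\le n}\|\eX_i\|/n$; under $\E\|\eX\|^2<\infty$, this residual is $o_p(n^{-1/2})$, so replacing $\eS_n/n$ by its convex hull (which shares the same support function) is asymptotically negligible and the previous argument applies. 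The bulk of the technical work concentrates in the tightness step of the Banach-space CLT; the remaining pieces are essentially bookkeeping via the support-function representation.
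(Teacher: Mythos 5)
Your plan is correct, and it is the standard route to this result. The paper itself states Theorem \ref{thr:clt} without proof, as a background fact from random set theory (the relevant references being the literature cited alongside Theorem \ref{thr:SLLN-basic}, e.g.\ Gin\'e, Hahn and Zinn, and Weil's CLT for random sets); your argument --- embedding via $K\mapsto h_K$ into the space of continuous functions on $\Sphere$ using additivity and positive homogeneity of support functions under Minkowski operations, identifying $\E h_\eX(u)=h_{\E\eX}(u)$ from Theorem \ref{thr:exp-supp}, obtaining tightness from the Lipschitz bound $|h_\eX(u)-h_\eX(v)|\le\|\eX\|\,\|u-v\|$ with $\E\|\eX\|^2<\infty$ via a Jain--Marcus type CLT, and then passing to \eqref{eq:clt-basic} by continuous mapping together with the H\"ormander identity \eqref{eq:Hormander} and a Shapley--Folkman--Starr bound of order $\max_{i\le n}\|\eX_i\|/n=o_p(n^{-1/2})$ to dispose of non-convexity --- is precisely how the result is established there. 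No gaps.
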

\newpage
\bibliography{EwPI_biblio3} 
\label{biblio}
\end{document}